\documentclass[12pt]{article}

\usepackage[margin=0.75in]{geometry}
\usepackage{color}
\usepackage{graphicx}
\usepackage{epsfig}
\graphicspath{{./eps/}}
\usepackage{epstopdf}
\usepackage{subfig}

\usepackage{amsmath}
\usepackage{amssymb}
\usepackage{amsthm}

\usepackage{rotating}
\usepackage{verbatim}
\usepackage{hyperref}
\usepackage{bm}
\usepackage[normalem]{ulem}
\usepackage[authoryear]{natbib}
\usepackage{bbm}
\usepackage{array}
\usepackage{enumerate}
\usepackage{float}
\usepackage{authblk}
\usepackage{algorithm,algorithmic}

\usepackage[flushleft]{threeparttable}
\usepackage{tablefootnote}

\usepackage[perpage]{footmisc}

\newtheorem{theorem}{Theorem}[section]
\newtheorem{lemma}[theorem]{Lemma}

\newtheorem{proposition}[theorem]{Proposition}

\theoremstyle{definition}
\newtheorem{definition}{Definition}[section]

\usepackage{thmtools}
\declaretheorem[style=definition]{remark}
\declaretheorem[style=definition]{example}%
\renewcommand{\P}{\mathbb{P}}
\newcommand{\pr}[1]{\mbox{$\mathbb{P}\left(#1\right)$}}
\newcommand{\prs}[2]{\mbox{$\mathbb{P}_{#1}\left(#2\right)$}}
\newcommand{\E}[1]{\mbox{$\mathbb{E}\left[#1\right]$}}
\newcommand{\Pc}[2]{\mbox{$\mathbb{P}\left(\left.#1 \ \right| #2\right)$}}
\newcommand{\Pcr}[2]{\mbox{$\mathbb{P}\left(#1 \left|\ #2\right.\right)$}}
\newcommand{\Pcrs}[3]{\mbox{$\mathbb{P}_{#1}\left(#2 \left|\ #3\right.\right)$}}
\newcommand{\Ec}[2]{\mbox{$\mathbb{E}\left[\left.#1 \ \right| #2\right]$}}
\newcommand{\iid}{\stackrel{\emph{i.i.d.}}{\sim}}
\def\independenT#1#2{\mathrel{\rlap{$#1#2$}\mkern2mu{#1#2}}}
\newcommand{\indp}{\protect\mathpalette{\protect\independenT}{\perp}}
\def\eqd{\,{\buildrel \mathcal{D} \over =}\,}

\newcommand{\mrc}[1]{\mbox{$\varrho\left(#1\right)$}}

\makeatletter
\newcommand*{\indep}{%
  \mathbin{%
    \mathpalette{\@indep}{}%
  }%
}
\newcommand*{\nindp}{%
  \mathbin{
    \mathpalette{\@indep}{\not}
  }%
}
\newcommand*{\@indep}[2]{
  \sbox0{$#1\perp\m@th$}
  \sbox2{$#1=$}
  \sbox4{$#1\vcenter{}$}
  \rlap{\copy0}
  \dimen@=\dimexpr\ht2-\ht4-.2pt\relax
  \kern\dimen@
  {#2}%
  \kern\dimen@
  \copy0 
} 
\makeatother

\newcommand{\bs}{\boldsymbol}

\newcommand{\ovec}{\ensuremath{\operatorname{vec}}}

\newcommand{\diag}[1]{\mbox{$\mathrm{diag}\left\{#1\right\}$}}
\newcommand{\one}[1]{\mbox{$\mathbf{1}_{\left\{#1\right\}}$}}

\def\R{\mathbb{R}}
\def\Z{\mathbb{Z}}

\newcommand{\dhmargin}[2]{{\color{red}#1}\marginpar{\color{red}\raggedright\tiny[DH]:#2}}
\newcommand{\rev}[1]{{\color{black}#1}}
\newcommand{\pkg}[1]{{\normalfont\fontseries{b}\selectfont #1}}


\newcommand{\N}{\mathcal{N}}
\newcommand{\bX}{\bs{X}}
\newcommand{\Xk}{\tilde{X}}
\newcommand{\bZ}{\bs{Z}}
\newcommand{\bz}{\bs{z}}
\newcommand{\bS}{\bs{\Sigma}}
\newcommand{\bO}{\bs{\Omega}}
\newcommand{\hbS}{\hat{\bs{\Sigma}}}
\newcommand{\bXk}{\tilde{\bs{X}}}

\newcommand{\bsfx}{\bs{\mathsf{x}}}
\newcommand{\bsfX}{\bs{\mathsf{X}}}
\newcommand{\bsfXk}{\tilde{\bs{\mathsf{X}}}}
\newcommand{\bsfXsw}{[\bsfXk_1,\,\bsfX_{\text{-}1}]}
\newcommand{\bsfXksw}{[\bsfX_1,\,\bsfXk_{\text{-}1}]}

\newcommand{\bsfV}{\bs{\mathsf{V}}}
\newcommand{\bsfW}{\bs{\mathsf{W}}}
\newcommand{\bsfWk}{\bs{\tilde{\mathsf{W}}}}

\newcommand{\by}{\bs{y}}
\newcommand{\bx}{\bs{x}}
\newcommand{\St}{ {K }}
\newcommand{\st}{ k }
\newcommand{\bst}{ \bs{k} }
\newcommand{\bSt}{ \bs{K} }

\newcommand{\bb}{\bs{\beta}}

\newcommand{\bmu}{\bs{\mu}}
\newcommand{\hbmu}{\hat{\bs{\mu}}}

\newcommand{\bW}{\bs{W}}
\newcommand{\bw}{\bs{w}}
\newcommand{\bQ}{\bs{Q}}
\newcommand{\bL}{\bs{L}}

\newcommand{\bU}{\bs{U}}
\newcommand{\bV}{\bs{V}}

\newcommand{\bC}{\bs{C}}
\newcommand{\bG}{\bs{G}}
\newcommand{\bss}{\bs{s}} 

\newcommand{\Fxy}{F_{Y,X}}
\newcommand{\Fyx}{F_{Y|X}}
\newcommand{\Fx}{F_X}
\newcommand{\noj}{{\text{-}j}}
\newcommand{\swap}[1]{{\text{swap}(#1)}}
\newcommand{\tp}{^{\top}}
\newcommand{\oKL}{\widehat{\text{KL}}}
\newcommand{\nullset}{\mathcal{H}_{0}}
\newcommand{\Xnoj}{X_{\noj}}

\newcommand{\floor}[1]{\lfloor#1\rfloor}

\newcommand{\nex}{three } 

\def\ct{ \bs{H} }  
\def\smallcell{ \bs{\mathsf{H}} } 
\def\icc{k} 
\def\ncc{\ell}
\def \ncg {m}   
\def\setE{\mathcal{M}_{}}  

\graphicspath{{./fig/}}

\numberwithin{equation}{section}

\title{Relaxing the Assumptions of Knockoffs by Conditioning}
\author[]{Dongming Huang}
\author[]{Lucas Janson}
\date{}
\affil[]{Department of Statistics, Harvard University}

\begin{document}
\maketitle

\begin{abstract}
The recent paper \citet{EC-ea:2018} introduced model-X knockoffs, a method for variable selection that provably and non-asymptotically controls the false discovery rate with no restrictions or assumptions on the dimensionality of the data or the conditional distribution of the response given the covariates. The one requirement for the procedure is that the covariate samples are drawn independently and identically from a precisely-known (but arbitrary) distribution. The present paper shows that the exact same guarantees can be made \emph{without} knowing the covariate distribution fully, but instead knowing it only up to a parametric model
with as many as $\Omega(n^{*}p)$ 
parameters, where $p$ is the dimension and $n^{*}$ is the number of covariate samples (which may exceed the usual sample size $n$ of labeled samples when unlabeled samples are also available). 
The key is to treat the covariates as if they are drawn conditionally on their observed value for a sufficient statistic of the model. 
Although this idea is simple, even in Gaussian models conditioning on a sufficient statistic leads to a distribution supported on a set of zero Lebesgue measure, requiring techniques from topological measure theory to establish valid algorithms. We demonstrate how to do this for three models of interest, with simulations showing the new approach remains powerful under the weaker assumptions. 

\bigskip
\noindent \textbf{Keywords.} High-dimensional inference, knockoffs, model-X, sufficient statistic, false discovery rate (FDR), topological measure, graphical model
\end{abstract}


\section{Introduction}
\subsection{Problem statement}
In this paper we consider random variables $(Y,X_1,\dots,X_p)$ where $Y$ is a response or outcome variable, each $X_j$ is a potential explanatory variable (also known as a covariate or feature) and $p$ is the dimensionality, or number of covariates. For instance, $Y$ could be the binary indicator of whether a patient has a disease or not, and $X_j$ could be the number of minor alleles at a specific location (indexed by $j$) on the genome, also known as a single nucleotide polymorphism (SNP). A common question of interest is which of the $X_j$ are important for determining $Y$, with importance defined in terms of conditional independence. That is, $X_j$ is considered \emph{un}important (or \emph{null}) if 
\[Y \indp X_j \mid \Xnoj,\]
where $\Xnoj = \{X_1,\dots,X_p\}\setminus \{X_j\}$; stated another way, $X_j$ is unimportant exactly when $Y$'s conditional distribution does not depend on $X_j$. 
Denote by $\nullset$ the set of all $j$ such that $X_j$ is unimportant.
As discussed in \citet{EC-ea:2018}, under very mild conditions the complement of  the set of unimportant variables, i.e., the \emph{important} (or \emph{non-null}) variables, constitutes the Markov blanket $S$ of $Y$, namely, the unique smallest set $S$ such that $Y\indp X_{S}\mid X_{\text{-}S}$. Note that when $Y\,|\, X_1,\dots,X_p$ follows a generalized linear model (GLM) with no redundant covariates, the set of important variables exactly equals the set of variables with nonzero coefficients, as usual \citep{EC-ea:2018}.

In our search for the Markov blanket we usually cannot possibly hope for perfect recovery, so we instead attempt to maximize the number of important variables discovered while probabilistically controlling the number of false discoveries. In this paper, as with most others in the knockoffs literature,\footnote{\citet{LJ-WS:2016} show how the last step of knockoffs can easily be modified to control other error rates such as the $k$-familywise error rate.} we consider the false discovery rate (FDR) \citep{YB-YH:1995}, defined for a (random) selected subset of variables $\hat{S}$ as 
\[\text{FDR} := \E{\frac{|\hat{S}\,\cap \,\nullset|}{|\hat{S}|}},\]
i.e., the expected fraction of discoveries that are not in the Markov blanket (false discoveries), where we use the convention that $0/0 = 0$. Controlling the FDR at, say, $10\%$ is powerful as compared to controlling more classical error rates like the familywise error rate, while still being interpretable, allowing a statistician to report a conclusion such as ``here is a set of covariates $\hat{S}$, 90\% of which I expect to be important.'' 

\subsection{Our contribution}\label{sec:contribution}
In our discussion of approaches to this problem, we will draw on a fundamental decomposition of the joint distribution $\Fxy$ of $(Y,X_1,\dots,X_p)$ into the product of the conditional distribution $\Fyx$ of $Y\,|\, X_1,\dots,X_p$ and the joint distribution $\Fx$ of $X_1,\dots,X_p$. The canonical approach to inference, which we refer to as the `fixed-X' approach, assumes $\Fyx$ is a member of a parametric family of conditional distributions (e.g., a GLM), while placing weak or no assumptions on $\Fx$. In fact, the fixed-X approach usually treats the observed values of $X_{i,1},\dots,X_{i,p}$ for $i=1,\dots,n$ as fixed; that is, it performs inference \emph{conditionally} on the observed values of $X_1,\dots,X_p$ in the data, which also allows the covariate rows to be drawn from different distributions or even be deterministic (fixed). The approach proposed in \citet{EC-ea:2018}, referred to therein as the `model-X' approach, assumes the observations $(Y_i,X_{i,1},\dots,X_{i,p})\iid\Fxy$ and places no restrictions on $\Fx$ but assumes it is known exactly, while assuming nothing about $\Fyx$. So, to summarize slightly imprecisely, the canonical, fixed-X approach to inference places all assumptions on $\Fyx$ and none on $\Fx$, while the model-X approach does the opposite by placing all assumptions on $\Fx$ and none on $\Fyx$. 

Note that both $\Fyx$ and $\Fx$ are exponentially complex in $p$: in the simple case where each element of $(Y,X_1,\dots,X_p)$ is categorical with $k$ categories, i.e., $(Y,X_1,\dots,X_p)\in\{1,\dots,k\}^{p+1}$, it is easily seen that a fully general model for $\Fyx$ has $(k-1)k^p$ free parameters while $\Fx$ has only slightly fewer with $k^p-1$. So both fixed-X and model-X approaches astronomically reduce an exponentially large (in $p$) space of distributions in order to make inference feasible, highlighting the importance of robustness, assumption-checking, and domain knowledge for justifying the resulting inference;
see \citet[Chapter 1]{LJ:2017} for a detailed discussion of the role of fixed-X and model-X\footnote{Therein referred to as `model-based' and `model-free', respectively.} assumptions in high-dimensional inference. With that said, one apparent advantage of the fixed-X approach is that it does not require \emph{exact} knowledge of $\Fyx$, while the model-X approach of \cite{EC-ea:2018} does require $\Fx$ be known exactly. 

The present paper removes this apparent advantage by showing that model-X knockoffs can still provide powerful and exact, finite-sample inference even when the covariate distribution is only known up to a parameterized family of distributions (also known as a model), as opposed to known exactly. In fact, \rev{in Section~\ref{sec:ex}} we will show examples in which the number of parameters we allow for $\Fx$'s model is $\Omega(n^{*}p)$, where $n^{*}$ is the total number of samples of $X$ (including unlabeled samples), which is always at least as large as the number of labeled samples $n$, and can be much larger in some applications. This is much greater than the number of parameters allowed in the model for $\Fyx$ in fixed-X inference (see Section~\ref{sec:relwork}). Table~\ref{table:modeldim} provides a summarized comparison of the model flexibility allowed in the fixed-X and model-X approaches.
\begin{table}[H]
\begin{tabular}{rcccc}
                                                  & Model for $\Fyx$                                        & Model for $\Fx$                                                                     &  &  \\ \cline{2-3}
\multicolumn{1}{r|}{Fixed-X}                      & \multicolumn{1}{c|}{$o(n)$ parameters\tablefootnote{In the exceptional case of Gaussian linear regression, $n$ parameters are allowed.}\textsuperscript{,}\tablefootnote{Except for Gaussian linear regression, fixed-X inferential guarantees are only asymptotic.}} & \multicolumn{1}{c|}{arbitrary}          &  &  \\ \cline{2-3}
\multicolumn{1}{r|}{Model-X \citep{EC-ea:2018}}             & \multicolumn{1}{c|}{arbitrary}                   & \multicolumn{1}{c|}{$0$ parameters}      &  &  \\ \cline{2-3}
\multicolumn{1}{r|}{Model-X (this paper)} & \multicolumn{1}{c|}{arbitrary}                   & \multicolumn{1}{c|}{$\Omega( n^{*} p)$ parameters} &  &  \\ \cline{2-3}

\end{tabular}
\caption{
\rev{Maximum complexity of models allowed by existing methods (see Section~\ref{sec:relwork}) and our proposal (see the list in Section~\ref{sec:ck} and also Section~\ref{sec:unlabel} for the explanation for $\Omega( n^{*} p)$) for controlled variable selection.}
Note that without assuming a model, $\Fyx$ and $\Fx$ are of similar complexity (exponentially large in $p$).}
\label{table:modeldim}
\end{table}
Of course the above discussion and table refer only to the \emph{mathematical} complexity of models allowed by the fixed-X and model-X approaches. An analyst's decision between them should depend on how well domain knowledge and/or auxiliary data support their (very different) assumptions. But in light of Table~\ref{table:modeldim}, it seems the conditional model-X approach is easiest to justify unless substantially more is known about $\Fyx$ than $\Fx$.


\subsection{Related work}\label{sec:relwork}
By far the most common fixed-X approaches to inference rely on GLMs with $p$ parameters, reducing model complexity from exponential to linear in $p$. 
When $p$ is smaller than the number of observations $n$, inference for GLMs other than Gaussian linear models relies on large-sample approximation by assuming at least $p/n \rightarrow 0$ [Huber1973, Portnoy1985]. 
\rev{Note that the commonly studied problem of inference for a single parameter can generally be translated to FDR control using the Benjamini--Hochberg \citep{YB-YH:1995} or Benjamini--Yekutieli \citep{YB-DY:2001} procedures (see, e.g., \citet{AJ-HJ:2018}), so that it makes sense to compare such inference with our paper that is focused on multiple testing.} 
In high dimensions, i.e., when $p>n$, 
even reducing the complexity of $\Fyx$ to $p$ parameters with a GLM is insufficient for fixed-X inference, as GLMs become unidentifiable in this regime due to the design matrix columns being linearly dependent. Early solutions for fixed-X inference in high-dimensional GLMs relied on $\beta$-min conditions that lower-bound the magnitude of nonzero coefficients to obtain asymptotically-valid p-values for individual variables (see, e.g., \citet{AC-SL:2013}). More recent work removes the $\beta$-min condition in favor of strong sparsity assumptions on the coefficient vector, usually $o(\sqrt{n}/\log(p))$ nonzeros, with notable examples including the debiased Lasso (see, e.g., \citet{CZ-SZ:2014,AJ-AM:2014,vdG-ea:2014}) and the extended score statistic (see, e.g., \citet{AB-VC-CH:2014,AB-VC-KK:2015,VC-CH-MS:2015,YN-HL:2017}), both of which provide asymptotically-valid p-values for GLMs with some additional assumptions on the `compatibility' of the design matrix. In recent work that seems to straddle the fixed-X and model-X paradigms, \citet{YZ-JB:2018} and \citet{YZ-JB:2018b} 
compute asymptotically-valid p-values for the Gaussian linear model without any extra restrictions like sparsity or $\beta$-min on $\Fyx$, but with added assumptions on $\Fx$ about the sparsity of conditional linear dependence among covariates.  

Another branch of recent research called post-selection inference can be viewed as a different approach to high-dimensional inference: it aims to test random hypotheses selected by a high-dimensional regression and provide valid p-values by conditioning on the selection event (see, e.g., \citet{WF-DS-JT:2014,JL-DS-YS-JT:2016} for foundational contributions, and \citet[Appendix A]{EC-ea:2018} for more about the difference between post-selection inference and our approach).


The method of knockoffs was first introduced by \citet{RB-EC:2015} for low-dimensional homoscedastic linear regression with fixed design. 
The model-X knockoffs framework proposed by \citet{EC-ea:2018} read this idea from a different perspective, providing valid finite-sample inference with no assumptions on $\Fyx$ but assuming full knowledge of $\Fx$. Exact knockoff generation methods have been found for $\Fx$ following a multivariate Gaussian \citep{EC-ea:2018}, a Markov chain or hidden Markov models \citep{MS-CS-EC:2017}, \rev{a graphical model \citep{SB-EC-LJ-WW:2019}, and certain latent variable models \citep{JG-AG-JZ:2018}}. In the case that $\Fx$ is only known approximately, the robustness of model-X knockoffs is studied by \citet{BR-CE-SR:2018}. When $\Fx$ is completely unknown some recent works have proposed methods to generate approximate knockoffs \citep{JJ-JY-MS:2019,YR-MS-EC:2018,YL-CZ:2018} which have shown promising empirical results, particularly in low-dimensional problems, but come with no theoretical guarantees. \rev{In contrast, the current paper proposes to construct valid knockoffs that provide exact finite sample error control.} 

This paper is based on the idea of performing inference conditional on a sufficient statistic for $\Fx$'s model so as to make that inference parameter-free. In low-dimensional inference, likely the simplest example of such an idea is a permutation test for independence, which can be thought of as a randomization test performed conditional on the order statistics of an observed i.i.d. vector of scalar $X$ with unknown distribution (the order statistics are sufficient for the family of all one-dimensional distributions). Although permutation tests can only test marginal independence, not conditional independence as addressed in the present paper, \citet{RP:1984} constructs a conditional permutation test that does test conditional independence assuming a logistic regression model for $X_j\mid X_{\noj}$, and allows the parameters of the logistic regression model to be unknown by conditioning on that model's sufficient statistic. However that sufficient statistic is composed of inner products between the vector of observed $X_j$'s and each of the vectors of observed values of the other covariates $X_{\noj}$, precluding inference except in the case of covariates with a very small set of discrete values, and almost entirely precluding inference in a high-dimensional setting.\footnote{See the paragraph preceding \citet[Theorem 1]{RP:1984} for a description of the test's limitations.} A different conditional permutation test was recently proposed by \cite{BT-WY-BR-SR:2018} to test conditional independence in the model-X framework, but while their conditioning improves robustness, they still require the same assumptions as the original conditional randomization test \citep{EC-ea:2018}, namely, that $X_j\mid X_{\noj}$ is known exactly. To our knowledge, the present paper is the first to use the idea of conditioning on sufficient statistics for high-dimensional inference, enabling powerful and exact FDR-controlled variable selection under arguably weaker assumptions than any existing work.



\subsection{Outline}
The rest of the paper is structured as follows: Section~\ref{sec:main} describes the main result and the proposed method of conditional knockoffs to generalize model-X knockoffs to the case when $\Fx$ is known only up to a distributional family, as opposed to exactly. 
Section~\ref{sec:ex} applies conditional knockoffs to \nex different models for $\Fx$, and 
provides explicit algorithms for constructing valid knockoffs. Simulations are also presented, showing that conditional knockoffs often loses almost no power in exchange for its increased generality over model-X knockoffs with exactly-known $\Fx$. 
Finally, Section~\ref{sec:disc} provides some synthesis of the ideas in this paper and directions for future work.

\section{Main Idea and General Principles}\label{sec:main}
Before going into more detail, we introduce some notation. Suppose we are given i.i.d. row vectors $(Y_i,X_{i,1},\dots,X_{i,p})\in\R^{p+1}$ for $i=1,\dots,n$. We then stack these vectors into a design matrix $ \bX\in\R^{n\times p}$ whose $i$th row is denoted by $\bx_i^{\top} = (X_{i,1},\dots,X_{i,p})\in\R^p$, and a column vector $\by\in\R^n$ 
whose $i$th entry is $Y_i$. We are about to define model-X knockoffs $(\Xk_{i,1},\dots,\Xk_{i,p})$, and $\bXk\in\R^{n\times p}$ will analogously denote these row vectors stacked to form a knockoff design matrix. 
A square bracket around matrices, such as $[\bX,\bXk]$, denotes the horizontal concatenation of these matrices.
We use $[p]$ for $\{1,2,\dots,p\}$, and $i:j$ for $\{i,i+1,\dots, j\}$ for any $i\leq j$; 
for a set $A\subseteq [p]$, let $\bX_A$ denote the matrix with columns given by the columns of $\bX$ whose indices are in $A$, and for singleton sets we streamline notation by writing $\bX_j$ instead of $\bX_{\{j\}}$. For sets $A_1, \dots, A_m$, denote by $\prod_{j=1}^{m}A_j$ their Cartesian product. \rev{For two disjoint sets $A$ and $B$, we denote their union by $A\uplus B$.} We will denote by $\mathbb{N}$ the set of strictly positive integers. 


\subsection {Model-X Knockoffs}\label{sec:mxk}
We begin with a short review of model-X knockoffs \citep{EC-ea:2018}. The authors define model-X knockoffs for a random vector $X\in\R^p$ of covariates as being a random vector $\Xk\in\R^p$ such that for any set $A\subseteq [p]$
\begin{equation}\label{eq:mxk1}
\Xk\indp Y\,|\,X, \mbox{ and } (X,\Xk)_{\text{swap}(A)} \eqd (X,\Xk),
\end{equation}
where the swap($A$) subscript on a $2p$-dimensional vector (or matrix with $2p$ columns) denotes that vector (matrix) with the $j$th and $(j+p)$th entries (columns) swapped, for all $j\in A$. To use knockoffs for variable selection, suppose some statistics $Z_j$ and $\tilde{Z}_{j}$ are used to measure the importance of $X_j$ and $\tilde{X}_{j}$, respectively, in the conditional distribution $Y\mid X_1,\dots,X_p,\Xk_1,\dots,\Xk_p$, with 
\[
(Z_1,\dots, Z_p, \tilde{Z}_{1},\dots, \tilde{Z}_{p})=z([\bX,\bXk],\by), 
\]
for some function $z$ such that swapping $\bX_j$ and $\tilde{\bX}_{j}$ swaps the components $Z_j$ and $\tilde{Z}_{j}$, i.e., for any $A\subseteq [p]$, 
\[
z([\bX,\bXk]_{\text{swap}(A)},\by)=z([\bX,\bXk],\by)_{\text{swap}(A)}. 
\]
For example, $z([\bX,\bXk],\by)$ could perform a cross-validated Lasso regression of $\by$ on $[\bX,\bXk]$ and return the absolute values of the $2p$-dimensional fitted coefficient vector. More generally the $Z_j$ can be almost any measure of variable importance one can think of, including measures derived from arbitrarily-complex machine learning methods or from Bayesian inference, and this flexibility allows model-X knockoffs to be powerful even when $\Fyx$ is quite complex.

The pairs $(Z_j,\tilde{Z}_j)$ of variable importance measures are then plugged into scalar-valued antisymmetric functions $f_j$ to produce $W_j=f_j(Z_j, \tilde{Z}_{j})$, which measures the \emph{relative} importance of $X_j$ to $\Xk_j$. Viewed as a function of all the data, $W_j = w_j([\bX,\bXk],\by)$ can be shown to satisfy the \emph{flip-sign} property, which dictates that for any $A\subseteq [p]$,
\[
w_j([\bX,\bXk]_{\text{swap}(A)},\by)=\left\{
\begin{array}{c l}	
   w_j([\bX,\bXk],\by) , &\text{ if }\; j \notin A,\\
   -w_j([\bX,\bXk],\by), &\text{ if }\; j \in A.
\end{array}\right.
\]
Taking $Z_j$ and $\tilde{Z}_{j}$ as the absolute values of Lasso coefficients as in the above example, one might choose $W_j=Z_j-\tilde{Z}_{j}$, referred to in \citet{EC-ea:2018} as the \emph{Lasso coefficient-difference} (LCD) statistic. 
Finally, given a target FDR level $q$, the knockoff filter selects the variables ${\hat{S}=\{j\; : \: W_j\geq T\}}$ where $T$ is either the \emph{knockoff threshold} $T_0$ or the \emph{knockoff+ threshold} $T_+$:
\[
T_{0} = \min\left\{  t>0 : \frac{ \#\{j\; : \: W_j\leq -t\}}{\#\{j\; : \: W_j\geq t\}} \leq q\right\}, \;\;\;\quad T_{+} = \min\left\{  t>0 : \frac{ 1+\#\{j\; : \: W_j\leq -t\}}{\#\{j\; : \: W_j\geq t\}} \leq q\right\}.
\]
\citet[Theorem 3.4]{EC-ea:2018} prove that $\hat{S}$ with $T_+$ exactly (non-asymptotically) controls the FDR at level $q$, and that $\hat{S}$ with $T_0$ exactly controls a modified FDR, $\E{\frac{|\hat{S}\,\cap \, \nullset |}{|\hat{S}|+1/q}}$, 
at level $q$. The key to the proof of exact control is the aforementioned flip-sign property of the $W_j$, and that property follows from the following crucial property of model-X knockoffs: for any subset $A\subseteq  \nullset$, 
\[
( [\bX,\bXk]_{\text{swap}(A)}, \by) \eqd ( [\bX,\bXk], \by),
\]
which is proved in \citet[Lemma 3.2]{EC-ea:2018} to hold for knockoffs satisfying Equation~\eqref{eq:mxk1}.

The proofs of exact control required just one assumption, that one could construct knockoffs satisfying Equation~\eqref{eq:mxk1}. To satisfy that assumption, \citet{EC-ea:2018} assumes throughout that $\Fx$ is known exactly. We will relax this assumption, but first slightly generalize the definition of valid knockoffs:
\begin{definition}[Model-X knockoff matrix]\label{def:mxk}
The random matrix $\bXk\in\R^{n\times p}$ is a \emph{model-X knockoff matrix} for  the random matrix $\bX\in\R^{n\times p}$ if
for any subset $A\subseteq [p]$,
\begin{equation}\label{eq:mxk}
\bXk\indp\by\,|\,\bX,\;\text{ and }[\bX,\,\bXk]_{\text{swap}(A)} \eqd [\bX,\,\bXk],
\end{equation}
\end{definition}
Note that Equation~\eqref{eq:mxk} is more general than Equation~\eqref{eq:mxk1}, and indeed \eqref{eq:mxk1} implies \eqref{eq:mxk} as long as the rows of $[\bX,\,\bXk]$ are independent. However, the proof of \citet{EC-ea:2018}'s crucial Lemma 3.2 and, ultimately, FDR control in the form of their Theorem 3.4 used only Equation~\eqref{eq:mxk}. Therefore Definition~\ref{def:mxk} is the `correct' definition, since the ability to generate knockoffs satisfying Definition~\ref{def:mxk} is all that is needed for the theoretical guarantees of knockoffs in \citet{EC-ea:2018} to hold, and it is well-defined for any matrix $\bX$, even when the rows are not independent. We will use this general definition because although we also assume samples are drawn i.i.d. from a distribution, those samples will no longer be independent when we condition on a sufficient statistic for the model for $\Fx$. Hereafter, \textit{model-X knockoffs} and \textit{knockoffs} will always refer to model-X knockoff matrices as defined by Definition~\ref{def:mxk} unless otherwise specified.

\rev{
For completeness, we restate the FDR control theorem in \citet{EC-ea:2018}. 
\begin{theorem}
	Suppose $\bXk$ is a knockoff matrix for $\bX$ and the statistics $W_j$'s satisfy the flip-sign property. For any $q\in [0,1]$, if $\hat{S}$ is selected by the knockoff method with threshold $T$ being either $T_+$ or $T_0$, then 
	\[
	\E{ \frac{|\hat{S} \, \cap \, \nullset | }{\max\left( |\hat{S}| , 1 \right)}} \leq q, \mbox{ for $T_+$  } ;~~	\E{\frac{|\hat{S}\,\cap \, \nullset |}{|\hat{S}|+1/q}} \leq q,
\mbox{ for $T_0  $ . }
	\]
\end{theorem}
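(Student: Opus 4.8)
The statement is a verbatim restatement of \citet[Theorem~3.4]{EC-ea:2018}, so the plan is to follow that proof; I would organize it in three steps. \textbf{Step 1: reduce to a statement about the $W_j$'s.} Fix $q\in(0,1]$, let $T$ be $T_+$ or $T_0$, and write $V^{+}(t)=\#\{j\in\nullset:W_j\ge t\}$ and $V^{-}(t)=\#\{j\in\nullset:W_j\le -t\}$, so that $|\hat S\cap\nullset|=V^{+}(T)$ and $V^{-}(t)\le\#\{j:W_j\le -t\}$. I would use the defining inequality of the threshold to peel off a factor of $q$: writing the false-discovery proportion at $T_+$ as a product of $V^{+}(T_+)\big/\bigl(1+\#\{j:W_j\le -T_+\}\bigr)$ and $\bigl(1+\#\{j:W_j\le -T_+\}\bigr)\big/\#\{j:W_j\ge T_+\}$, the second factor is $\le q$ by the definition of $T_+$ and the first is $\le V^{+}(T_+)\big/\bigl(1+V^{-}(T_+)\bigr)$; the case $T=T_0$ and its modified-FDR bound are handled by the same algebra. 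It therefore suffices to show $\E{V^{+}(T)\big/\bigl(1+V^{-}(T)\bigr)}\le 1$.

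\textbf{Step 2: conditional coin-flip structure of the null signs.} I would condition on $(|W_1|,\dots,|W_p|)$ and on $(\sign W_j)_{j\notin\nullset}$, and combine the crucial exchangeability property recalled in the excerpt, $([\bX,\bXk]_{\swap{A}},\by)\eqd([\bX,\bXk],\by)$ for every $A\subseteq\nullset$, with the flip-sign property: swapping the columns indexed by $A$ negates exactly $\{W_j:j\in A\}$ while leaving every other $W_k$ and every $|W_k|$ fixed. This yields $(W_1,\dots,W_p)\eqd(\epsilon_1 W_1,\dots,\epsilon_p W_p)$ for every sign pattern $\epsilon$ that equals $+1$ off $\nullset$, which is exactly the assertion that, conditionally on the above $\sigma$-algebra, $(\sign W_j)_{j\in\nullset}$ are i.i.d.\ uniform on $\{\pm1\}$ and independent of the conditioning variables (their Lemma~3.3).

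\textbf{Step 3: optional stopping.} This is where the real work is, and it is precisely the argument in \citet[proof of Theorem~1]{RB-EC:2015}. I would enumerate the null indices in decreasing order of $|W_j|$, let $\mathcal F_\ell$ be the $\sigma$-algebra of Step~2 enlarged by the first $\ell$ of these signs, and note that the number $\kappa$ of nulls with $|W_j|\ge T$ is a stopping time for $(\mathcal F_\ell)$ while $V^{+}(T)$ and $V^{-}(T)$ are the counts of $+$ and $-$ signs among those first $\kappa$ nulls. The delicate points — and the main obstacle — are that $T$ depends on \emph{all} the $W_j$, not just the null ones, so one must verify the stopping criterion is $\mathcal F_\ell$-measurable on each block where exactly $\ell$ nulls are active, and that ties and the $\ge$-versus-$>$ conventions are pinned down so that $\kappa$ really is a stopping time; granting this, the (super)martingale computation gives $\E{V^{+}(T)\big/\bigl(1+V^{-}(T)\bigr)\mid\mathcal F_0}\le 1$, and taking expectations closes Step~1. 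The only point of novelty relative to \citet{EC-ea:2018} is notational: their thresholds are stated for knockoffs satisfying Equation~\eqref{eq:mxk1}, whereas here we assume only Definition~\ref{def:mxk}, but as the excerpt already notes this changes nothing, since their Lemma~3.2 (hence the exchangeability property above) uses only Equation~\eqref{eq:mxk}.
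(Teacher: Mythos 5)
The paper does not prove this theorem — it is explicitly restated from \citet{EC-ea:2018} for the reader's convenience, with the proof left to that reference. Your proposal correctly reconstructs the argument from that source: the peeling-off of the factor $q$ via the threshold definition, the reduction to the coin-flip lemma on null signs (their Lemma 3.3, itself a consequence of the swap-exchangeability property and the flip-sign condition), and the Barber--Cand\`es optional-stopping/supermartingale argument are exactly the three ingredients, and you correctly flag the measurability of the stopping time as the delicate point. Since the paper offers no alternative route, there is nothing to contrast; your sketch is a faithful summary of the intended proof.
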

}
\rev{
It is worth mentioning that if $\bXk_{j}$ is identical to $\bX_{j}$, then $W_j=0$ and $j$ cannot be selected by the knockoff filter. Formally, we call such a column in the knockoff matrix \textit{trivial}. 
}

\subsection{Conditional Knockoffs }\label{sec:ck}
The main idea of this paper is that if $\Fx$ is known only up to a parametric model, and that parametric model has sufficient statistic (for $n$ i.i.d. observations drawn from $\Fx$) given by $T(\bX)$, then by definition of sufficiency the distribution of $\bX\,|\,T(\bX)$ does not depend on the model parameters and is thus known exactly a priori. To leverage this for knockoffs, consider the following definition.
\begin{definition}[Conditional model-X knockoff matrix]\label{def:cond-ko}
The random matrix $\bXk\in\R^{n\times p}$ is a \emph{conditional model-X knockoff matrix} for  the random matrix $\bX\in\R^{n\times p}$ if there is a statistic $T(\bX)$ such that for any subset $A\subseteq [p] $,
\begin{equation}\label{eq:cmxk}
\bXk\indp\by\,|\,\bX,\;\text{ and }\left.[\bX,\,\bXk]_{\text{swap}(A)} \eqd [\bX,\,\bXk]\;\right|\, T(\bX),
\end{equation}
\end{definition}

By the law of total probability, \eqref{eq:cmxk} implies \eqref{eq:mxk}, thus conditional model-X knockoffs are also model-X knockoffs:
\begin{proposition}\label{prop:cko}
	If $\bXk$ is a conditional model-X knockoff matrix for $\bX$, then it is also a model-X knockoff matrix. 
\end{proposition}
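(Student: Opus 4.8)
The plan is to derive the two conditions in Equation~\eqref{eq:mxk} from the two conditions in Equation~\eqref{eq:cmxk}, handling them separately. The first condition, $\bXk\indp\by\,|\,\bX$, appears verbatim in both Definition~\ref{def:mxk} and Definition~\ref{def:cond-ko}, so it carries over with nothing to prove. The entire content of the proposition is therefore the passage from the \emph{conditional} exchangeability $[\bX,\bXk]_{\text{swap}(A)} \eqd [\bX,\bXk]\mid T(\bX)$ to the \emph{unconditional} exchangeability $[\bX,\bXk]_{\text{swap}(A)} \eqd [\bX,\bXk]$, for each fixed $A\subseteq[p]$.

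For that step I would fix $A\subseteq[p]$ and an arbitrary bounded measurable function $g:\R^{n\times 2p}\to\R$, and apply the tower property twice:
\[
\E{g\big([\bX,\bXk]_{\text{swap}(A)}\big)} \;=\; \E{\,\Ec{g\big([\bX,\bXk]_{\text{swap}(A)}\big)}{T(\bX)}\,} \;=\; \E{\,\Ec{g\big([\bX,\bXk]\big)}{T(\bX)}\,} \;=\; \E{g\big([\bX,\bXk]\big)},
\]
where the middle equality is precisely the conditional equality in distribution asserted by Equation~\eqref{eq:cmxk} (the map $[\bX,\bXk]\mapsto[\bX,\bXk]_{\text{swap}(A)}$ is a fixed measurable permutation of coordinates, so composing $g$ with it leaves $g$ bounded and measurable). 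Since the resulting identity holds for every bounded measurable $g$, the two random matrices have the same law, i.e., $[\bX,\bXk]_{\text{swap}(A)}\eqd[\bX,\bXk]$ unconditionally. Combining this with the conditional independence condition already noted yields Equation~\eqref{eq:mxk}, which is what Definition~\ref{def:mxk} requires.

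I do not expect a substantive obstacle: the argument is just the law of total probability, exactly as the text preceding the proposition indicates. The only points meriting a word of care are measure-theoretic bookkeeping. First, $T(\bX)$ must be a genuine statistic so that the regular conditional distributions underlying the notation ``$\eqd\mid T(\bX)$'' are well defined; this is automatic here since all objects live in Euclidean, hence Polish, spaces. Second, one should note that $T(\bX)$ is a function of $\bX$ alone, so the \emph{same} conditioning variable appears on both sides of the middle equality above even though a column swap with $A\neq\varnothing$ generally changes the value obtained by evaluating $T$ on the first $p$ columns of $[\bX,\bXk]_{\text{swap}(A)}$; this is harmless, because \eqref{eq:cmxk} conditions on the original $T(\bX)$ and makes an assertion only about the conditional law of the swapped matrix given that quantity.
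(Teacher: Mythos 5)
Your proof is correct and takes the same route as the paper, which simply asserts that \eqref{eq:cmxk} implies \eqref{eq:mxk} ``by the law of total probability'' and offers no further detail; your tower-property computation is exactly the formalization of that one-line remark. The two points of care you raise (regular conditional distributions on Polish spaces, and that the conditioning variable is $T$ of the original $\bX$ on both sides) are both sound and correctly identified as nonissues.
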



Proposition~\ref{prop:cko} says that all the guarantees of model-X knockoffs \rev{(i.e., Theorem 2.1)}, such as exact FDR control and the flexibility in measuring variable importance, immediately hold more generally when $\bXk$ is a \emph{conditional} model-X knockoff matrix. Definition~\ref{def:cond-ko} is especially useful when the distribution of $\bX$ is known to be in a model $G_{\bs{\Theta}}=\{g_{\bs{\theta}}:\bs{\theta}\in\bs{\Theta}\}$ with parameter space $\bs{\Theta}$, and $T(\bX)$ is a sufficient statistic for $G_{\bs{\Theta}}$, because then the distribution of $\bX\mid T(\bX)$ is known exactly even though the unconditional distribution of $\bX$ is not. Exact knowledge of the distribution of $\bX\mid T(\bX)$ in principle allows us to construct knockoffs, similar to how exact knowledge of the unconditional distribution of $\bX$ has enabled all previous knockoff construction algorithms. 
As a simple example, when $G_{\Theta}$ is the set of all $p$-dimensional distributions with mutually-independent entries, the set of order statistics for each column of $\bX$ constitutes a sufficient statistic $T(\bX)$, and a conditional knockoff matrix $\bXk$ can be generated by randomly and independently permuting each column of $\bX$. 
Unfortunately for more interesting models that allow for dependence among the covariates, even for canonical $G_{\bs{\Theta}}$ like multivariate Gaussian, the distribution of $\bX\mid T(\bX)$ is often much more complex than those for which knockoff constructions already exist. Using novel methodological and theoretical tools, in Section~\ref{sec:ex} we provide efficient and exact algorithms for constructing nontrivial conditional knockoffs when $\Fx$ comes from each of the following \nex models:
\begin{enumerate}
	\item {\bf Low-dimensional Gaussian}: $$\Fx \in\left\{\N(\bmu, \bS): \bmu\in \R^p,\;\bS\in\R^{p\times p},\;\bS\succ\bs{0}\right\},$$ when $n > 2p$. In this case, the number of model parameters is $p+\frac{p(p+1)}{2} = \Omega(p^2)$, and also $\Omega(n p) $ in the most challenging case when $p=\Omega(n)$. 
	\item {\bf Gaussian graphical model}: $$\Fx\in\left\{\N(\bmu,\bS): \bmu\in \R^p,\;\bS\in\R^{p\times p},\; \bS \succ \bs{0},\;\left(\bS^{-1}\right)_{j,k} = 0\text{ for all } (j,k)\notin E\right\}$$ for some known sparsity pattern 
$E$. For example, $\bS^{-1}$ could be banded with bandwidth as large as $n/8-1$,\footnote{Here we assume $n/8\leq p$.} allowing a number of parameters as large as 
$p + \left(\frac{np}{8} - \frac{n(n-8)}{128}\right) = \Omega(np )$. Note that $p$ is not explicitly constrained, so this model allows both low- and high-dimensional data sets. 
	\item {\bf Discrete graphical model}: $$\Fx \in\left\{\text{distribution on } \prod_{j=1}^p [K_j]: X_j\indp X_{[p]\setminus N_E(j) } \mid X_{ N_E(j)\setminus\{j\} } \text{ for all } (j,k)\notin E\right\}$$ 
for some known positive integers $K_1,\dots,K_p$ and known sparsity pattern $E$, where $N_E(j)$ is the closed neighborhood of $j$. For example, $X$ could be a $K$-state (non-stationary) Markov chain whose $K-1+(p-1)K(K-1)$ parameters are the probability mass function of $X_1$ and the transition matrices $\Pc{X_j}{X_{j-1}}$ for each $j\in\{2,\dots,p\}$, where $K$ can be as large as $\sqrt{\frac{n-2}{2}}$, allowing a number of parameters as large as $\sqrt{\frac{n-2}{2}} -1 + (p-1)\left(\sqrt{\frac{n-2}{2}}\right)\left(\sqrt{\frac{n-2}{2}}-1\right) = \Omega(np)$. Again, $p$ is not explicitly constrained, so this model allows both low- and high-dimensional data sets. 
\end{enumerate}




\begin{remark}
It is worth mentioning that conditioning may shrink the set of nonnull hypotheses. For instance, if $\nullset=\emptyset$ and $T(\bX)$ is chosen to be $\bX$, then all variables are automatically null conditional on $T(\bX)$, and thus conditional knockoffs cannot select any nonnull variables. For a detailed discussion, see Appendix~\ref{app:cond-hypo}. 
\end{remark}

\rev{
\begin{remark}
Any algorithm that generates conditional knockoffs given one sufficient statistic $T(\bX)$ (i.e., satisfying Equation~(2.3) for $T(\bX)$) by definition is also a valid algorithm for generating conditional knockoffs given any sufficient statistic $S(\bX)$ that is a function of $T(\bX)$. This means that any valid conditional knockoff algorithm satisfies Equation~(2.3) for the minimal sufficient statistic, since by definition a minimal sufficient statistic is a function of any other sufficient statistic. So we could say that the minimal sufficient statistic is in some sense the optimal one to condition on, in that the choice to condition on the minimal sufficient statistic allows for the most general set of conditional knockoff algorithms of any sufficient statistic one could choose to condition on for a given model.
\end{remark}
}

\subsection{Integrating Unlabeled Data}\label{sec:unlabel}
In addition to the $n$ labeled pairs  $\left\{(Y_i, \bx_i)\right\}_{i=1}^{n}$, we might also have unlabeled 
data $\{\bx_i^{(u)}\}_{i=1}^{n^{(u)}}$, i.e., covariate samples without corresponding responses/labels. This extra data can be integrated seamlessly into the construction of conditional knockoffs: stack the labeled covariate matrix $\bX$ on top of the unlabeled covariate matrix $\bX^{(u)}$ to get $\bX^{*}\in \R^{n^{*}\times p}$, where $n^{*}=n+n^{(u)}$, then construct conditional knockoffs $\bXk^{*}$ for $\bX^{*}$, and finally take $\bXk$ to be the first $n$ rows of $\bXk^{*}$.
 
\begin{proposition}\label{prop:unsuper}
Suppose the rows of $\bX^{*}$ are i.i.d. covariate vectors and $\bX$ is the matrix composed of the first $n$ rows of $\bX^{*}$. Let $\by$ be the response vector for $\bX$. 
If for some statistic $T(\bX^{*})$ and any set $A\subseteq [p]$,
\[
\bXk^{*} \indp\by\,|\,\bX^{*},\text{ and }[\bX^{*},\,\bXk^{*}]_{\text{\emph{swap}}(A)} \eqd [\bX^{*},\,\bXk^{*}] \,\Big{|}\, T(\bX^{*}), 
\]
then if $\bXk$ is the matrix composed of the first $n$ rows of $\bXk^{*}$, then $\bXk$ is a model-X knockoff matrix for $\bX$.
\end{proposition}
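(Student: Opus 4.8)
The plan is to verify, directly from Definition~\ref{def:mxk}, the two properties that make $\bXk$ a model-X knockoff matrix for $\bX$: the exchangeability property $[\bX,\,\bXk]_{\text{swap}(A)} \eqd [\bX,\,\bXk]$ for every $A\subseteq[p]$, and the conditional independence $\bXk\indp\by\mid\bX$. Each will be reduced to the corresponding statement for the augmented pair $(\bX^{*},\bXk^{*})$ assumed in the hypothesis — the first by discarding the conditioning on $T(\bX^{*})$ and restricting to the first $n$ rows, the second by marginalizing out the unlabeled covariates and the extra knockoff rows, using that $\by$ is generated from the labeled covariates alone.

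For exchangeability, I would first apply the law of total probability to the assumed identity $[\bX^{*},\,\bXk^{*}]_{\text{swap}(A)} \eqd [\bX^{*},\,\bXk^{*}]\mid T(\bX^{*})$ — exactly as in the proof of Proposition~\ref{prop:cko} — to obtain the unconditional statement $[\bX^{*},\,\bXk^{*}]_{\text{swap}(A)} \eqd [\bX^{*},\,\bXk^{*}]$ for every $A\subseteq[p]$. Next I would observe that the measurable map ``keep the first $n$ rows'' commutes with the column operation $\text{swap}(A)$, so that this map sends $[\bX^{*},\,\bXk^{*}]_{\text{swap}(A)}$ to $[\bX,\,\bXk]_{\text{swap}(A)}$ and $[\bX^{*},\,\bXk^{*}]$ to $[\bX,\,\bXk]$. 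Since applying a fixed measurable map to two random objects that are equal in distribution preserves equality in distribution, we conclude $[\bX,\,\bXk]_{\text{swap}(A)} \eqd [\bX,\,\bXk]$, which is the second requirement in \eqref{eq:mxk}.

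For the conditional independence, I would make explicit the generative structure underlying the proposition: the rows of $\bX^{*}$ are i.i.d.\ covariate vectors, $\by$ is the response vector for the labeled rows $\bX$ only (hence $\by\indp\bX^{(u)}\mid\bX$, where $\bX^{(u)}$ denotes the unlabeled rows), and by hypothesis $\bXk^{*}\indp\by\mid\bX^{*}$. Combining these, the joint law of $(\bX,\bX^{(u)},\bXk^{*},\by)$ factors as the product of the law of $\bX$, the law of $\bX^{(u)}$, the conditional law of $\by$ given $\bX$, and the conditional law of $\bXk^{*}$ given $\bX^{*}$. Marginalizing out $\bX^{(u)}$ and the rows of $\bXk^{*}$ beyond the first $n$ then shows that, conditionally on $\bX$, the pair $(\by,\bXk)$ factors into the conditional law of $\by$ given $\bX$ times the conditional law of $\bXk$ given $\bX$; that is, $\bXk\indp\by\mid\bX$. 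With both properties of Definition~\ref{def:mxk} in hand, $\bXk$ is a model-X knockoff matrix for $\bX$.

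The main obstacle is this last step: one must pin down the (otherwise implicit) modeling assumptions — all rows of $\bX^{*}$ i.i.d., the knockoff-generating randomness independent of $\by$ given $\bX^{*}$, and $\by$ depending on $\bX$ alone — and then carry out the marginalization rigorously, ideally via the graphoid axioms for conditional independence rather than densities, since the conditional law of $\bX^{*}$ given $T(\bX^{*})$ may be supported on a Lebesgue-null set. The exchangeability step, by contrast, is routine once one notes that the conditioning on $T(\bX^{*})$ can be dropped and that restricting to the first $n$ rows commutes with column-swapping.
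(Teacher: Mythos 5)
Your proof is correct and takes essentially the same route as the paper's. For exchangeability, you first invoke the law of total probability to discard the conditioning on $T(\bX^*)$ and then push forward through the ``keep the first $n$ rows'' map, whereas the paper pushes forward first and drops the conditioning afterward---the order is immaterial since both operations commute with equality in distribution. For the conditional independence, the paper carries out precisely the graphoid-axiom refinement you flag as the cleaner route: it derives $\bX^{(u)}\indp\by\mid\bX$ from independence of the rows plus weak union, derives $\bXk\indp\by\mid(\bX,\bX^{(u)})$ from the hypothesis and the fact that $\bXk$ is a measurable function of $\bXk^*$, and then applies contraction (and decomposition) to conclude $\bXk\indp\by\mid\bX$. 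Your density-based factorization sketch reaches the same conclusion, and your closing remark that the argument is best done via the graphoid axioms is exactly what the paper does, so there is no gap---only a modest difference in how explicitly the conditional-independence bookkeeping is spelled out.
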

Note that by taking $T(\bX^{*})$ to be constant, the same result holds unconditionally: if ${\bXk^{*} \indp\by\,|\,\bX^{*}}$ and 
$
[\bX^{*},\,\bXk^{*}]_{\text{swap}(A)} \eqd [\bX^{*},\,\bXk^{*}] 
$
for any $A\subseteq [p]$, then $\bXk$ is a valid knockoff matrix for $\bX$. Thus constructing knockoffs for $\bX^{*}$, conditional or otherwise, produces valid knockoffs for $\bX$ automatically. Of course, if $\Fx$ is known and the rows of $\bX^*$ are i.i.d., it is natural to construct each row of $\bXk^{*}$ independently, in which case the presence of $\bX^{(u)}$ changes nothing about the construction of the relevant knockoffs $\bXk$. But as seen in Section~\ref{sec:ck}, when $\Fx$ is not known exactly the flexibility with which we can model it depends on the sample size, with the number of parameters allowed to be as large as $\Omega(np)$ in all the models in this paper. What Proposition~\ref{prop:unsuper} shows is that $n$ can be replaced with $n^{*}$, which can dramatically increase the modeling flexibility allowed by conditional knockoffs, especially in high dimensions. For example, our conditional knockoffs construction in Section \ref{sec:ldg} for arbitrary multivariate Gaussian distributions naively requires $n>2p$, but we now see it actually just requires $n^{*}>2p$, which is much easier to satisfy when $n^{(u)}$ is large, as it often is in, for instance, genomics or economics applications. Even when $n$ alone is large enough to construct nontrivial knockoffs for a desired model, constructing conditional knockoffs with unlabeled data as described in this section will tend to increase power.
 
\section{Conditional Knockoffs for Three Models of Interest}\label{sec:ex}

In this section, we provide efficient algorithms to generate exact conditional model-X knockoffs under three different models for $\Fx$, as well as numerical simulations comparing the variable selection power of the knockoffs thus constructed with those constructed by existing algorithms that require $\Fx$ be known exactly. 


All proofs are deferred to Appendix~\ref{app:main-proof}. Any sampling described in the algorithms is conducted independently of all previous sampling in the same algorithm, unless stated otherwise. All simulations use a Gaussian linear model for the response: $Y_i\mid \bx_{i}\sim \N(\frac{1}{\sqrt{n}}\bx_i\tp \bb , 1)$ where $\bb$ has 60 non-zero entries with random signs and equal amplitudes. \rev{Note the sparsity and magnitude equalities are simply chosen for convenience---we present additional simulations varying these choices in Appendix~\ref{app:vary}.} 
\rev{
We remind the reader that, although we use linear regression as an illustrative example in the simulations, our methods apply to more general regressions, and} all the same simulations are also rerun with a nonlinear model (logistic regression) with similar results, presented in Appendix~\ref{app:nonliner}. We use the LCD knockoff statistic \rev{with tuning parameter chosen by 10-fold cross-validation} and the knockoff+ threshold with target FDR $q=20$\%; see Section~\ref{sec:mxk} for details. Only power curves (power $=\E{ \frac{ | S\cap \hat{S}| }{|S|} }$) are shown because the FDR is always controlled (both theoretically and empirically). \rev{The procedure we compare to, unconditional knockoffs, refers to model-X knockoffs where $\Fx$ is taken to be known exactly (knockoff statistics and thresholds are chosen identically).}

\subsection{Low-Dimensional Multivariate Gaussian Model}\label{sec:ldg}
Despite the focus in variable selection on high-dimensional problems, we start with a low-dimensional example as it represents an interesting and instructive case. Suppose that
\begin{equation}\label{model:ldg}
\bx_i\iid\N(\bmu,\bS)	
\end{equation}
for some unknown $\bmu$ and positive definite $\bS$. Let $\hbmu := \bX^{\top}\bs{1}_n/n$ denote the vector of column means of $\bX$, 
and let $\hbS := (\bX-\bs{1}_{n}\hbmu\tp)^{\top}(\bX-\bs{1}_{n}\hbmu\tp)/n$ be the empirical covariance matrix of $\bX$. Then $T(\bX) = (\hbmu,\hbS)$ constitutes a (minimal, complete) sufficient statistic for the model \eqref{model:ldg} for $\bX$.

\subsubsection{Generating Conditional Knockoffs}
When $n> 2p$, we can construct knockoffs for $\bX$ conditional on $\hbmu$ and $\hbS$ via Algorithm~\ref{alg:ldg}. 
\begin{algorithm}[h]
\caption{Conditional Knockoffs for Low-Dimensional Gaussian Models} \label{alg:ldg}
\begin{algorithmic}[1]
\ENSURE$\bX\in \R^{n\times p}$.
\REQUIRE $n> 2p$.
\STATE Find $\bss\in\R^p$ such that $\bs{0}_{p\times p}\prec\diag{\bss}\prec 2 \hbS$.
\STATE Compute the Cholesky decomposition of $n \left(  2\diag{\bss} - \diag{\bss}\hbS^{-1}\diag{\bss}  \right) $ as $\bL^{\top}\bL$.
\STATE Generate $\bW$ a $n\times p$ matrix whose entries are i.i.d. $\N(0,1)$ and independent of $\bX$ and compute the Gram--Schmidt orthonormalization $[\!\!\underbrace{\bs{Q}}_{n\times(p+1)}\!\!,\underbrace{\bU}_{n\times p}\, ]$ of the columns of $\left[ \bs{1}_n,\, \bX,\, \bW \right]$. \label{line:ldg-W}
\STATE Set 
\begin{equation}\label{eq:ldg_ko_short}
\bXk = \bs{1}_{n}\hbmu\tp + (\bX-\bs{1}_{n}\hbmu\tp)(\bs{I}_p - \hbS^{-1}\diag{\bss}) + \bU\bL.
\end{equation}\vspace{-.5cm}
\RETURN $\bXk$.
\end{algorithmic}
\end{algorithm}

%

In Algorithm~\ref{alg:ldg}, $n>2p$ is needed because in Line~\ref{line:ldg-W} the $n\times (2p+1)$ matrix $\left[ \bs{1}_n,\, \bX,\, \bW \right]$ must have at least as many rows as columns to be a valid input to the Gram--Schmidt orthonormalization algorithm. 
The astute reader may notice a strong similarity between Equation~\eqref{eq:ldg_ko_short} and the fixed-X knockoff construction in \citet[Equation (1.4)]{RB-EC:2015}. 
Indeed nearly the same tools can be used to find a suitable $\bss$; in Appendix~\ref{app:detail-ldg} we slightly adapt three methods from \citet{RB-EC:2015} and \citet{EC-ea:2018} for computing suitable $\bss$. The computational complexity of Algorithm~\ref{alg:ldg} depends on the method used to find $\bss$, with the fastest option requiring $O\left( n p^2 \right)$ time.

\rev{
The differences between Equation~\eqref{eq:ldg_ko_short} and the fixed-X knockoff construction are the additional accounting for the mean by adding/subtracting $\hbmu$, the lack of requiring that $\bX$ have normalized columns, the ``$\prec$" relationships (as opposed to ``$\preceq$"), and most importantly the requirement that $\bU$ be random. 
Indeed, as can be seen in the proof of Theorem~\ref{thm:ldg_ko}, the precise uniform distribution of $\bU$ is crucial. 
And it bears repeating that unlike fixed-X knockoffs, Algorithm~\ref{alg:ldg} produces valid \emph{model-X} knockoffs and hence permits importance statistics without the ``sufficiency property'' and applies to \emph{any} $\Fyx$, not just homoscedastic linear regression. 
}


\begin{theorem}\label{thm:ldg_ko}
Algorithm~\ref{alg:ldg} generates valid knockoffs for model~\eqref{model:ldg}.
\end{theorem}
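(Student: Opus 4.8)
The plan is to show that $\bXk$ produced by Algorithm~\ref{alg:ldg} is a \emph{conditional} model-X knockoff matrix for $\bX$ with respect to the sufficient statistic $T(\bX)=(\hbmu,\hbS)$, i.e.\ that it satisfies Definition~\ref{def:cond-ko}; Proposition~\ref{prop:cko} then immediately promotes it to a model-X knockoff matrix in the sense of Definition~\ref{def:mxk}. The independence requirement $\bXk\indp\by\mid\bX$ is trivial, since $\bXk$ in Equation~\eqref{eq:ldg_ko_short} is a deterministic function of $\bX$ and the auxiliary matrix $\bW$, and $\bW$ is sampled independently of $(\bX,\by)$. So the entire content is the conditional exchangeability $[\bX,\bXk]_{\swap{A}}\eqd[\bX,\bXk]$ given $T(\bX)$, for every $A\subseteq[p]$. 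To set this up, write $\bar\bX:=\bX-\bs{1}_n\hbmu\tp$ and $\bar\bXk:=\bXk-\bs{1}_n\hbmu\tp$, so that Equation~\eqref{eq:ldg_ko_short} reads $\bar\bXk=\bar\bX(\bs{I}_p-\hbS^{-1}\diag{\bss})+\bU\bL$. Fix the Cholesky factor $\bs{R}$ of $n\hbS$ (invertible since $\hbS\succ0$), which is a deterministic function of $T(\bX)$, and let $\bsfU_1:=\bar\bX\bs{R}^{-1}$, which has orthonormal columns all orthogonal to $\bs{1}_n$ and satisfies $\bar\bX=\bsfU_1\bs{R}$. With $\bsfU:=[\bsfU_1,\bU]$ this gives the identity
\[
[\bX,\bXk]=\bs{1}_n[\hbmu\tp,\hbmu\tp]+\bsfU\,\bs{\Psi},\qquad \bs{\Psi}:=\begin{pmatrix}\bs{R} & \bs{R}(\bs{I}_p-\hbS^{-1}\diag{\bss})\\ \bs{0} & \bL\end{pmatrix}\in\R^{2p\times 2p},
\]
in which $\bs{\Psi}$ is constant given $T(\bX)$. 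By construction of the Gram--Schmidt step the columns of $\bU$ are orthonormal and orthogonal both to the columns of $\bsfU_1$ and to $\bs{1}_n$, so $\bsfU$ is a $2p$-frame inside the $(n-1)$-dimensional hyperplane $\bs{1}_n^\perp$; this is exactly where $n>2p$ enters, being precisely what is needed for such a frame to exist.

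The heart of the argument is the claim that, conditionally on $T(\bX)$, the matrix $\bsfU$ is Haar-uniform on the Stiefel manifold of $2p$-frames in $\bs{1}_n^\perp$. This rests on three ingredients: (i) conditionally on $T(\bX)$, $\bsfU_1$ is Haar-uniform on the Stiefel manifold of $p$-frames in $\bs{1}_n^\perp$, which follows from the invariance of an i.i.d.\ standard Gaussian matrix under left multiplication by orthogonal maps fixing $\bs{1}_n$ — under such maps $(\hbmu,\hbS)$ is unchanged while $\bsfU_1$ transforms equivariantly, forcing its fibre distribution to be Haar; (ii) conditionally on $\bX$, the residual matrix $(\bs{I}_n-\bs{Q}\bs{Q}\tp)\bW$ is an isotropic Gaussian on $\mathrm{span}(\bs{1}_n,\bX)^\perp$, hence its Gram--Schmidt orthonormalization $\bU$ is Haar-uniform on the Stiefel manifold of $p$-frames in that $(n-p-1)$-dimensional space, and this conditional law depends on $\bX$ only through $\mathrm{span}(\bs{1}_n,\bX)=\mathrm{span}(\bs{1}_n)\oplus\mathrm{span}(\bsfU_1)$; and (iii) the recursive description of Haar measure on a Stiefel manifold — sampling the first $p$ columns uniformly and then the last $p$ columns uniformly among frames orthogonal to the first block reproduces the uniform $2p$-frame. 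Combining (i)--(iii) conditionally on $T(\bX)$ (valid because $n-1\ge 2p$) yields the claim.

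Granting the claim, the conclusion is short. Swapping the $j$th and $(j+p)$th column-blocks is right multiplication by a permutation matrix $\bs{P}_A$, and $\bs{1}_n[\hbmu\tp,\hbmu\tp]$ is itself invariant under that permutation, so $[\bX,\bXk]_{\swap{A}}=\bs{1}_n[\hbmu\tp,\hbmu\tp]+\bsfU\,\bs{\Psi}\bs{P}_A$. For $\bsfU$ Haar-uniform on a Stiefel manifold and any fixed $\bs{B}$, the law of $\bsfU\bs{B}$ depends on $\bs{B}$ only through $\bs{B}\tp\bs{B}$: if $(\bs{B}')\tp\bs{B}'=\bs{B}\tp\bs{B}$ then $\bs{B}'=\bs{O}\bs{B}$ for some orthogonal $\bs{O}$, and $\bsfU\bs{O}\eqd\bsfU$ by right-orthogonal-invariance of Haar measure. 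Hence the conditional law of $[\bX,\bXk]_{\swap{A}}$ given $T(\bX)$ depends only on $(\bs{\Psi}\bs{P}_A)\tp\bs{\Psi}\bs{P}_A=\bs{P}_A\tp(\bs{\Psi}\tp\bs{\Psi})\bs{P}_A$, and a direct block computation using $\bs{R}\tp\bs{R}=n\hbS$ and $\bL\tp\bL=n(2\diag{\bss}-\diag{\bss}\hbS^{-1}\diag{\bss})$ gives
\[
\bs{\Psi}\tp\bs{\Psi}=\begin{pmatrix} n\hbS & n(\hbS-\diag{\bss})\\ n(\hbS-\diag{\bss}) & n\hbS\end{pmatrix}.
\]
The two diagonal blocks coincide, the off-diagonal block is symmetric, and it differs from the diagonal block only by the diagonal matrix $n\diag{\bss}$; this is precisely the condition under which $\bs{P}_A\tp(\bs{\Psi}\tp\bs{\Psi})\bs{P}_A=\bs{\Psi}\tp\bs{\Psi}$ for every $A\subseteq[p]$. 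Therefore $[\bX,\bXk]_{\swap{A}}\eqd[\bX,\bXk]$ conditionally on $T(\bX)$, which completes the verification.

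I expect the main obstacle to be ingredient (i): rigorously identifying the regular conditional distribution of $\bX$ given the sufficient statistic, which is supported on the measure-zero Stiefel manifold $\{\bar\bX=\bsfU_1\bs{R}\}$. This is exactly where the ``techniques from topological measure theory'' mentioned in the abstract are needed — one must disintegrate the law of $\bX$ along the map $T$ and check that the fibre measures are the appropriate Haar measures, and one must handle the conditional independence structure between $\bsfU_1$ and $\bU$ carefully. Ingredients (ii) and (iii) are standard facts about Gaussian matrices and Haar measure on Stiefel manifolds, and everything downstream of the Haar claim — the right-invariance reduction to Gram matrices and the computation of $\bs{\Psi}\tp\bs{\Psi}$ — is routine linear algebra.
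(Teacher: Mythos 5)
Your proposal is correct in outline but takes a genuinely different route from the paper's, and it is worth comparing them carefully. The paper's proof (Appendix~\ref{app:ldgprf}) does \emph{not} reduce to showing that a random Stiefel frame is Haar and then checking a Gram matrix; instead it proves three lemmas: (i) the conditional law of $[\bX,\bXk]$ is $\mathcal{G}$-invariant on the manifold $\mathcal{M}$, (ii) the law after swapping is \emph{also} $\mathcal{G}$-invariant, and (iii) the $\mathcal{G}$-invariant probability measure on $\mathcal{M}$ is unique, whence the two laws coincide. You replace step (ii)+(iii) of this scheme with the observation that the factor $\bs{\Psi}$ satisfies $\bs{P}_A\tp(\bs{\Psi}\tp\bs{\Psi})\bs{P}_A=\bs{\Psi}\tp\bs{\Psi}$, so $\bs{\Psi}\bs{P}_A=\bs{O}\bs{\Psi}$ for some orthogonal $\bs{O}$, and then invoke right-$\mathcal{O}_{2p}$-invariance of Haar. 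What makes this especially interesting is that Appendix~\ref{app:ldg-intutive} of the paper (``An Intuitive Proof That Does Not Quite Work'') explicitly discusses a route essentially identical to yours and \emph{dismisses} it, asserting that the only way to conclude from Haar swap-invariance would be to show $\bs{P}\bG\bs{P}=\bG$, i.e., that the square-root factor itself is swap-invariant---and then argues this fails. Your key point is precisely that one does \emph{not} need $\bG$ swap-invariant, only $\bG\tp\bG$; since $\bG\tp\bG$ is manifestly the swap-invariant matrix $n\left[\begin{smallmatrix}\hbS&\hbS-\diag{\bss}\\\hbS-\diag{\bss}&\hbS\end{smallmatrix}\right]$, the equality $\bG\bs{P}=\bs{O}\bG$ holds by the standard fact that square matrices with the same Gram matrix differ by a left-orthogonal factor, and then right-invariance finishes. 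As far as I can tell this argument is sound and supplies exactly the missing piece the authors say they could not see; you should present it with care, because the paper goes out of its way to claim this route fails. Two caveats. First, your ``ingredient (i)'' (that $\bsfU_1\mid T(\bX)$ is Haar on $p$-frames in $\bs{1}_n^\perp$) is exactly as hard as the paper's Lemma~\ref{lem:ldg-invariant}+Lemma~\ref{lem:ldg-unique}: you would need to disintegrate the Gaussian law along $T$ and identify the fiber measures, either via the $\mathcal{G}$-equivariance/uniqueness route of the paper, or via the classical polar-decomposition independence $\bar\bX\bs{R}^{-1}\indp \bs{R}$ (Eaton's Proposition~7.2 and relatives). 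Second, ingredient (iii), the recursive description of Haar on the $2p$-frame Stiefel manifold, also silently uses uniqueness of the invariant measure; so you are not escaping all of the topological-measure-theory machinery, only the specific Lemma~\ref{lem:ldg-swap-invariant} step, which is the genuine simplification.
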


The challenge in proving Theorem~\ref{thm:ldg_ko} is that the conditional distribution of $[\bX,\bXk]\mid T(\bX)$ is supported on an uncountable subset of zero Lebesgue measure, and its distribution is only defined through the distribution of $\bX\mid T(\bX)$ and the conditional distribution of $\bXk\mid\bX$. Although $\bX\mid T(\bX)$ and $\bXk\mid\bX$ are both conditionally uniform on their respective supports, and the latter's normalizing constant does not depend on $\bX$, these facts alone are not sufficient to conclude that $[\bX,\bXk]\mid T(\bX)$ is uniform on its support (see Appendix~\ref{app:counterexample} for a simple counterexample), which is what we need to prove. Although these distributions on zero-Lebesgue-measure manifolds can be characterized using geometric measure theory (as in, e.g., \citet{PD-SH-MS:2013}), we bypass this approach by directly using the concept of invariant measures from topological measure theory; see Appendix~\ref{app:ldgprf}. 

A useful consequence of Theorem~\ref{thm:ldg_ko} is the double robustness property that if knockoffs are constructed by Algorithm~\ref{alg:ldg} and knockoff statistics are used which obey the sufficiency property of \citet{RB-EC:2015} (that is, the knockoff statistics only depend on $\by$ and $[\bX,\,\bXk]$ through $[\bs{1}_{n},\bX,\,\bXk]^{\top}\by$ and $[\bs{1}_{n},\bX,\,\bXk]^{\top}[\bs{1}_{n},\bX,\,\bXk]$), then the resulting variable selection controls the FDR exactly as long \emph{as at least one of} the following holds:
\begin{itemize}
\item $\bx_i\iid\N(\bmu,\bS)$ for some $\bmu$ and $\bS$, both unknown (\emph{regardless of $\Fyx$}), or
\item $y_i\,|\,\bx_i\iid\N(\bx_i^{\top}\bb,\sigma^2)$ for some $\bb$ and $\sigma^2$, both unknown (\emph{regardless of $\Fx$}).
\end{itemize}

In Appendix~\ref{app:detail-ldg} we extend Algorithm~\ref{alg:ldg} to the case when the mean is known (Algorithm~\ref{alg:ldg-known-mean}) or a subset of columns of $\bX$ are additionally conditioned on (Algorithm~\ref{alg:ldg-block}). Both extensions may be of independent interest, but will also be used as subroutines when generating knockoffs for Gaussian graphical models in Section~\ref{sec:eg-ggm}.

\subsubsection{Numerical Examples}\label{sec:ldg-sims}
We present two simulations comparing the power of conditional knockoffs to the analogous unconditional construction that uses the exactly-known $\Fx$. \rev{
We remind the reader that the simulation setting is at the beginning of Section~\ref{sec:ex}. 
}
The vector $\bss$ in Algorithm~\ref{alg:ldg} is computed using the SDP method of Equation~\eqref{opt:sdp}, and the analogous vector for the unconditional construction is chosen by the analogous SDP method \citep{EC-ea:2018}. Although in both examples $n^{*}>2p$, the number of unknown parameters in the Gaussian model for $\Fx$ is $p+\frac{p(p+1)}{2}>500,000$, vastly larger than any of the sample sizes.

\begin{figure}[h]\centering
  \subfloat []{   \includegraphics[width=0.45\linewidth]{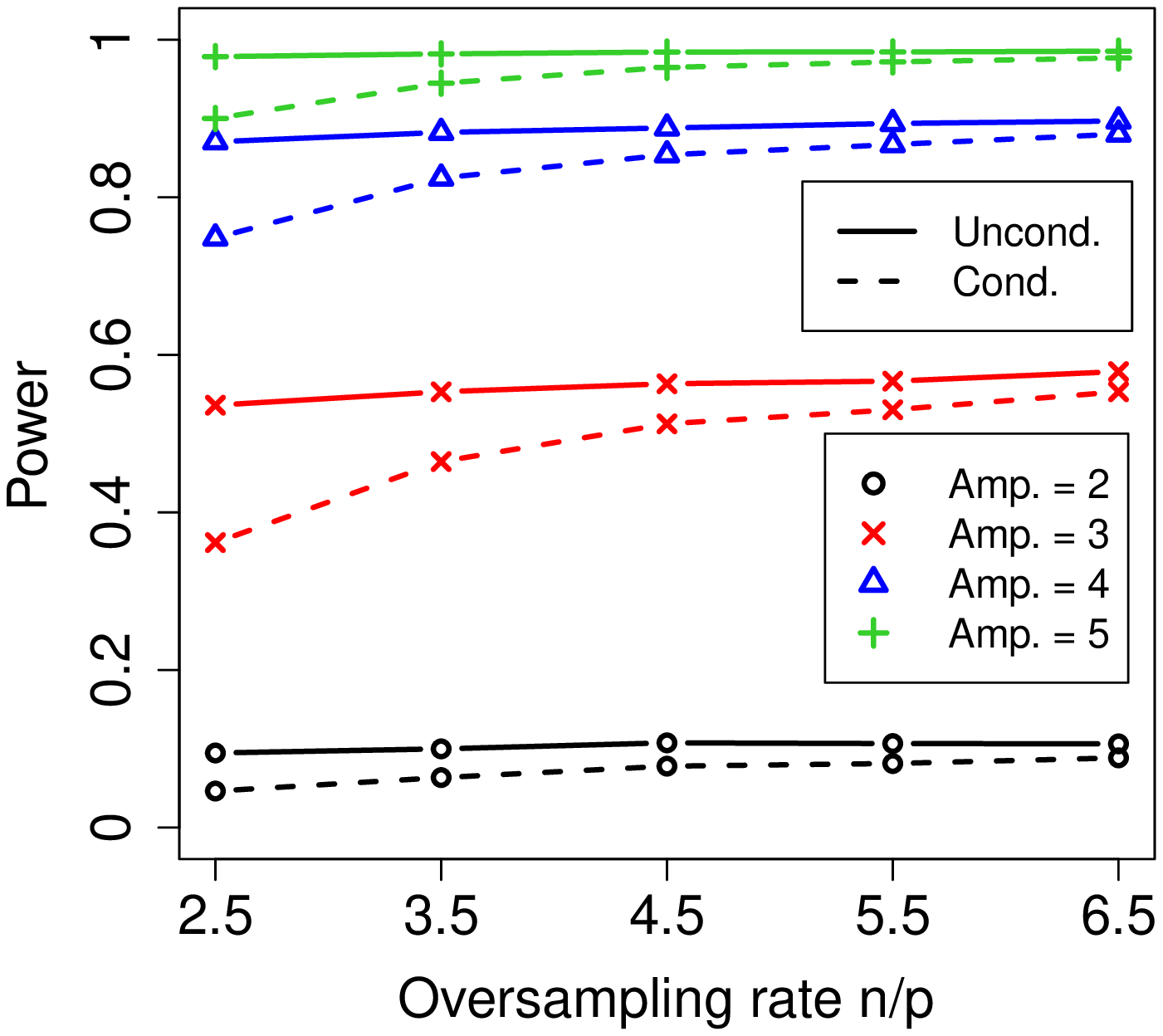}
  \label{fig:LDG1}}
  \subfloat[
  ]{     \includegraphics[width=0.45\linewidth]{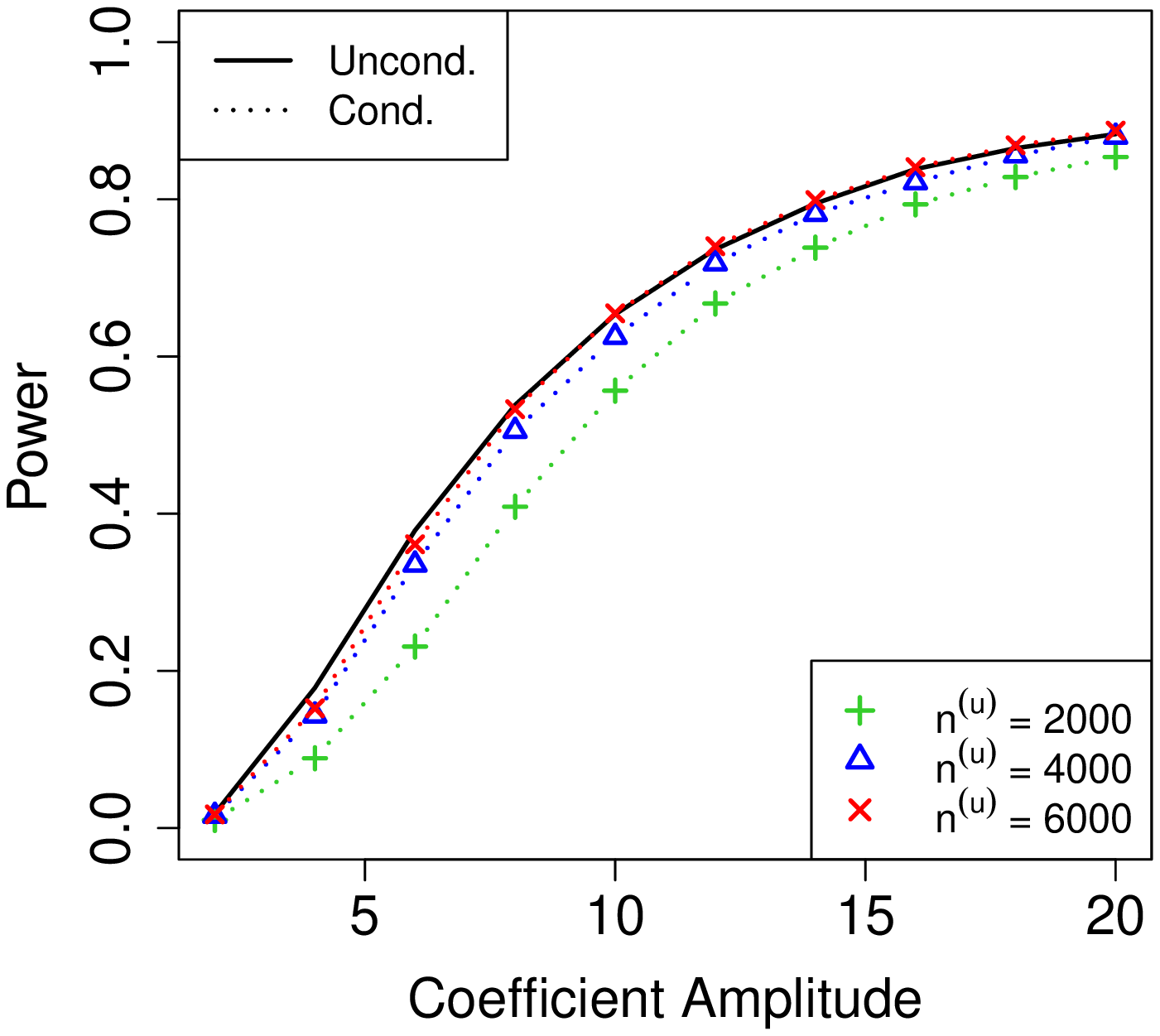}
  \label{fig:LDG2}
  }
    \caption{Power curves of conditional and unconditional knockoffs for an AR(1) model with $p=1000$ (a) as $n/p$ varies for various coefficient amplitudes and (b) as the coefficient amplitude varies for various values of $n^{(u)}$, with $n=300$ fixed. Standard errors are all below 0.008.}\label{fig:ldg}
 \end{figure}

Figure \ref{fig:LDG1} fixes $p=1000$ and plots the difference in power between unconditional and conditional knockoffs as $n> 2p$ increases for a few different signal amplitudes. The power of the conditional and unconditional constructions is quite close except when $n = 2.5p$ is just above its threshold of $2p$, and even then the power of the conditional construction is respectable. 

Figure \ref{fig:LDG2} shows how unlabeled samples improve the power of conditional knockoffs. The model is the same as the first example but the labeled sample size is fixed at $n=300$ and we vary the number of unlabeled samples. Again, the power of the conditional and unconditional constructions is extremely close except when $n^{*}=2.3p$ is just above its threshold, and again even in that setting the power of the conditional construction is respectable. Note that unlabeled samples here have enabled the \emph{low-dimensional} Gaussian construction to apply in a high-dimensional setting with $n<p$, since $n^{*}>2p$.


\subsection{Gaussian Graphical Model}
\label{sec:eg-ggm}
Ignoring unlabeled data, the method of the previous subsection is constrained to low-dimensional (or perhaps more accurately, medium-dimensional, since it allows $p=\Omega(n)$) settings and cannot be immediately extended to high dimensions. 
In many applications however, particularly in high dimensions, the covariates are modeled as multivariate Gaussian with \emph{sparse} precision matrix $\bS^{-1}$, and when the sparsity pattern is known a priori, we can condition on much less. For instance, time series models such as autoregressive models assume a banded precision matrix with known bandwidth, and the model used in this subsection would also allow for nonstationarity. Spatial models often assume a (known) neighborhood structure such that the only nonzero precision matrix entries are index pairs corresponding to spatial neighbors. 

Precisely, suppose $\bX$'s rows $\bx_i^\top$ are i.i.d. draws from a distribution known to be in the model 
\begin{equation}\label{eq:gauss_sparse}
\left\{\N(\bmu,\bS):\bmu\in \R^{p}, \, \left(\bS^{-1}\right)_{j,k} = 0\text{ for all } j\neq k\text{ and }(j,k)\notin E, \bS \succ \bs{0}\right\}
\end{equation}
where $E\subseteq [p]\times[p]$ is some symmetric set of integer pairs (i.e., $(j,k)\in E \Rightarrow (k,j)\in E$) with no self-loops. Then the undirected graph  $G\; :=\;  ([p], E)$ defines a Gaussian graphical model with vertex set $[p]$ and edge set $E$. For any $j\in [p]$, define $I_{j}=\{k:(j,k)\in E \}$ for the vertices that are adjacent to $j$.  
We will use the terms `vertex' ($j\in [p]$) and `variable' ($X_j$) interchangeably.
$\hbmu$ and $\hbS_E$ together constitute a sufficient statistic, where $\hbS_E:=\left\{\hbS_{j,k}: j=k\text{ or }(j,k)\in E\right\}$. We will show in this section how to generate conditional knockoffs, and we will characterize the sparsity patterns $E$ for which we can generate knockoffs with $\bXk_j\neq\bX_j$ for all $j\in [p]$.

\begin{remark}\label{rem:unknown-graph}
More generally, sparsity in the precision matrix, but with \emph{unknown} sparsity pattern, is a common assumption in Gaussian graphical models which are used to model many types of data in high dimensions such as gene expressions. Although the construction in this section no longer holds exactly when the sparsity pattern is unknown, approximate knockoffs could still be constructed by first using a method for estimating the sparsity pattern \citep[Chapter 13]{buhlmann2011statistics} and then treating it as known. Note that we only require the edge set $E$ to contain all non-zero entries of $\bS^{-1}$, which is no harder than the exact identification of the non-zero entries. 
\end{remark}

\subsubsection{Generating Conditional Knockoffs by Blocking}\label{sec:ggm-alg-proof}
First consider the ideal case when the graph $G$ separates into disjoint connected components whose respective vertex sets are $V_1,\dots, V_{\ncc}$. Then $X$ can be divided into independent subvectors, $X_{V_1}, \dots, X_{V_{\ncc}}$, 
and if each $|V_{\icc}|<n/2$, we can construct low-dimensional conditional knockoffs separately and independently for each $\bX_{V_{\icc}}$ as in Section~\ref{sec:ldg}. Moving to the general case when $G$ is connected, we can do something intuitively similar by conditioning on a subset of variables in addition to $\hbmu$ and $\hbS_E$. If there is a subset of vertices $B$ such that the subgraph $G_{B}$ induced by deleting $B$ separates into small disjoint connected components, then we should be able to construct conditional knockoffs as above for $\bX_{B^c}$ by conditioning on $\bX_B$. We think of the variables in $B$ as being \emph{blocked} to separate the graph into small disjoint parts, hence we refer to this $B$ as a \emph{blocking set}.

The following definition formalizes when we can apply the above procedure, and Algorithm~\ref{alg:sparse-gaussian-or} states that procedure precisely.
\begin{definition}\label{def:ggm-sep}
A graph $G$ is \emph{$n$-separated} by a set $B\subset [p]$ if the subgraph $G_{B}$ induced by deleting all vertices in $B$ has connected components whose respective vertex sets we denote by $V_1,\dots,V_{\ncc}$ such that for all $k\in [\ncc]$, 
\[
2 | V_{\icc} |+  | I_{V_{\icc}} \, \cap \, B | < n, 
\]
where $I_{V_{\icc}}\; :=\;  \bigcup\limits_{j\in V_{\icc}} I_{j}$ is the neighborhood of $V_{\icc}$ in $G$. 
\end{definition}

Note that when the $V_{\icc}$ separated $X$ into independent subvectors, we only needed $2|V_{\icc}|<n$; now that they only represent \emph{conditionally} independent subvectors, we must also account for $V_{\icc}$'s neighbors in $B$ that we condition on, resulting in the requirement that $2|V_{\icc}|+| I_{V_{\icc}} \, \cap \, B | <n$.

\begin{algorithm}[h]
\caption{Conditional Knockoffs for Gaussian Graphical Models} \label{alg:sparse-gaussian-or}
\begin{algorithmic}[1]
\ENSURE $\bX\in\R^{n\times p}$, $G =([p],E)$, $B\in [p]$.
\REQUIRE{For some $n'\leq n$, $G$ is $n'$-separated by $B$ into connected component vertex sets $V_1,\dots,V_{\ncc}$.}
\FOR{$\icc = 1,\dots, \ncc$}
\STATE Construct partial low-dimensional knockoffs $\bXk_{V_{\icc}}$ for $\bX_{V_{\icc}}$ conditional on $\bX_{I_{V_{\icc}}\cap B}$ via Algorithm~\ref{alg:ldg-block} (a slight modification of Algorithm~\ref{alg:ldg}).     \label{line:ggm-alg1}
\ENDFOR
\STATE Set $\bXk_B=\bX_B$.
\RETURN $\bXk$.
\end{algorithmic}
\end{algorithm}

Algorithm~\ref{alg:sparse-gaussian-or} constructs knockoffs for the model \eqref{eq:gauss_sparse} by first conditioning on $\bX_B$ and then running a slight modification of Algorithm~\ref{alg:ldg} (Algorithm~\ref{alg:ldg-block} in Appendix~\ref{app:detail-ldg-block}) on the variables/columns $V_{k}$ corresponding to the induced subgraphs. The computational complexity of Algorithm~\ref{alg:sparse-gaussian-or} is 
$O\left( n \sum_{\icc=1}^{\ncc}\left(  \left| I_{V_{\icc}} \, \cap \, B \right|^{2} \left| V_{\icc} \right|+  | V_{\icc} |^{2} \right)\right)$, 
which is upper-bounded by 
$O\left( \ell n n'^2 + n p \max_{\icc \in [\ncc]}  | I_{V_{\icc}} \, \cap \, B |^{2} \right) $
 (both complexities assume the most efficient construction of $\bss$ is used as a primitive in Algorithm~\ref{alg:ldg-block}). 

\begin{theorem}\label{prop:ggm-or}
Algorithm~\ref{alg:sparse-gaussian-or} generates valid knockoffs for model~\eqref{eq:gauss_sparse}.
\end{theorem}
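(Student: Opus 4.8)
The plan is to verify the defining property of a conditional model-X knockoff matrix (Definition~\ref{def:cond-ko}) for the output $\bXk$ of Algorithm~\ref{alg:sparse-gaussian-or}, using as the conditioning statistic $T(\bX) = (\hbmu, \hbS_E, \bX_B)$, i.e., the sufficient statistic for the sparse Gaussian model~\eqref{eq:gauss_sparse} augmented by the blocked columns. The property $\bXk \indp \by \mid \bX$ is immediate since the construction only uses $\bX$ and fresh randomness, so the substance is showing $[\bX,\bXk]_{\text{swap}(A)} \eqd [\bX,\bXk] \mid T(\bX)$ for every $A \subseteq [p]$. First I would record the key structural fact about the model: conditional on $\bX_B$, the columns $\bX_{V_1},\dots,\bX_{V_\ncc}$ are mutually independent, and moreover $\bX_{V_\icc}\mid \bX_B$ depends on $\bX_B$ only through $\bX_{I_{V_\icc}\cap B}$ (Markov property of the Gaussian graphical model applied to the induced components of $G_B$). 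This reduces the problem to the per-block constructions.

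Next I would argue that it suffices to establish the swap-exchangeability property separately within each block, i.e., that for each $\icc$, conditional on $(\bX_B, \hbmu, \hbS_E)$ or an appropriate per-block sufficient statistic, $[\bX_{V_\icc},\bXk_{V_\icc}]_{\text{swap}(A\cap V_\icc)} \eqd [\bX_{V_\icc},\bXk_{V_\icc}]$, together with the facts that (i) different blocks are conditionally independent and their knockoffs are constructed with independent randomness, and (ii) for $j\in B$ we have $\bXk_j=\bX_j$ and $\bX_B$ is conditioned on, so swapping any such coordinate acts trivially. Combining these, the joint conditional law of $[\bX,\bXk]$ factors as a product over blocks plus the deterministic $B$-part, and a swap of an arbitrary $A$ decomposes as a product of the within-block swaps, each of which is distribution-preserving. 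The main technical input here is Algorithm~\ref{alg:ldg-block}'s guarantee — the analogue of Theorem~\ref{thm:ldg_ko} for the "condition on an additional subset of columns and on the mean not being free" variant — which I would invoke as the per-block primitive; one must check that the $n'$-separation condition $2|V_\icc| + |I_{V_\icc}\cap B| < n$ is exactly the hypothesis that makes the relevant Gram--Schmidt step in Algorithm~\ref{alg:ldg-block} well-defined (the relevant matrix, with columns for the intercept/conditioned block neighbors $\bX_{I_{V_\icc}\cap B}$, the $|V_\icc|$ columns of $\bX_{V_\icc}$, and the $|V_\icc|$ Gaussian noise columns, must have at least as many rows as columns, i.e., roughly $|I_{V_\icc}\cap B| + 2|V_\icc| \le n$).

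The step I expect to be the main obstacle is making the "conditioning on $\bX_B$ reduces to independent per-block problems" argument fully rigorous in the presence of the zero-Lebesgue-measure support issue flagged for Theorem~\ref{thm:ldg_ko}: the joint conditional distribution of $[\bX,\bXk]$ given $T(\bX)$ lives on a lower-dimensional manifold, so one cannot simply multiply densities, and one must instead argue at the level of the regular conditional distributions / disintegrations, checking that the product structure across blocks and the swap-invariance survive the conditioning. I would handle this the same way the paper handles Theorem~\ref{thm:ldg_ko}, namely via invariant measures from topological measure theory (Appendix~\ref{app:ldgprf}): express $\bX_{V_\icc}\mid (\bX_B,\,\text{per-block stats})$ as a uniform (Haar-induced) measure on the appropriate orbit, note the block-conditional independence makes the joint a product of such invariant measures, and show the per-block knockoff construction intertwines with the relevant group action so that the swap maps push each factor to itself. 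Once the single-block statement (Algorithm~\ref{alg:ldg-block} produces valid conditional knockoffs, conditioning on the specified extra columns) is in hand from Appendix~\ref{app:detail-ldg-block}, the remaining work is the bookkeeping of assembling blocks and verifying that an arbitrary swap set $A$ decomposes compatibly, which is routine.
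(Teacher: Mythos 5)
Your proposal follows the paper's structure essentially exactly: reduce to per-block problems via the conditional independence of the $\bX_{V_\icc}$'s given $\bX_B$, invoke the validity of Algorithm~\ref{alg:ldg-block} (Theorem~\ref{thm:ldg-block}) as the per-block primitive, treat the $B$-columns trivially, and observe that the $n$-separation inequality $2|V_\icc| + |I_{V_\icc}\cap B| < n$ is exactly the dimension requirement $n > 2|V| + |B|$ that makes the Gram--Schmidt step in Algorithm~\ref{alg:ldg-block} well-posed. The one place where you overanticipate difficulty is the final paragraph: you propose re-running the invariant-measure machinery at the level of assembling blocks, but the paper needs none of that here. The two ingredients it uses instead are elementary: (i) $X_{V_\icc} \indp X_{B\setminus B_\icc}\mid X_{B_\icc}$ (so Theorem~\ref{thm:ldg-block}, whose conclusion conditions only on $\bX_{B_\icc}$, upgrades for free to a statement conditional on all of $\bX_B$), and (ii) the blocks $\bX_{V_1},\dots,\bX_{V_\ncc}$ are conditionally independent given $\bX_B$ with independent fresh randomness used in each per-block knockoff construction, so the joint conditional law of $([\bX_{V_\icc},\bXk_{V_\icc}])_\icc$ given $\bX_B$ factorizes and an arbitrary swap set $A$ decomposes as a product of within-block swaps, each distribution-preserving. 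All the zero-Lebesgue-measure delicacy is encapsulated inside Theorem~\ref{thm:ldg-block} (which is proved by reduction to Algorithm~\ref{alg:ldg-known-mean} and thence Theorem~\ref{thm:ldg_ko}); once that per-block statement is in hand, the assembly is pure conditional-independence bookkeeping, as you anticipated in your penultimate paragraph.
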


Algorithm~\ref{alg:sparse-gaussian-or} raises two key issues: how to find a suitable blocking set $B$, and how to address the fact that $\bXk_B=\bX_B$ are trivial knockoffs, so using conditional knockoffs from Algorithm~\ref{alg:sparse-gaussian-or} will have no power to select any of the variables in $B$. 

Algorithm~\ref{alg:ggm-greedy-search-graph} provides a simple greedy way to find a suitable $B$ or, given an initial blocking set $B$, can also be used to shrink $B$ (see Proposition~\ref{prop:ggm-perm-block}). 
%
%
%
The algorithm visits every vertex in $G$ once in the order $\pi$ and decides whether each vertex it visits is blocked or \emph{free} (not blocked). Meanwhile, it constructs a graph $\bar{G}$ from $G$, which gets expanded every time a vertex $j$ is determined to be free: all pairs of $j$'s neighbors in $\bar{G}$ get connected (if not already) and a new vertex $\tilde{j}$ that has the same neighborhood as $j$ in $\bar{G}$ is added to the graph. A vertex is blocked if, when it is visited, its degree in $\bar{G}$ is greater than $n'-3$.

\begin{algorithm}[h]
\caption{Greedy Search for a Blocking Set} \label{alg:ggm-greedy-search-graph}
\begin{algorithmic}[1]
\ENSURE  $\pi$ a permutation of $[p]$, $G=([p],E)$, $n'$.
\STATE Initialize a graph $\bar{G}=G$,  and $B=\emptyset$. 
\FOR{$t= 1,\dots,p$}
\STATE Let $j=\pi_t$, and $\bar{I}_{j}$ be the neighborhood of $j$ in the graph $\bar{G}$.
\IF{$n' \geq 3+ |\bar{I}_{j}|  $} \label{line:alg-search-graph}
\STATE Add edges between all pairs of vertices in $\bar{I}_{j}$.
\STATE Add a vertex $\tilde{j}$ to $\bar{G}$ and add edges between $\tilde{j}$ and all vertices in $\bar{I}_{j}$. 
\ELSE
\STATE $B\leftarrow B\cup \{j\}$.
\ENDIF
\ENDFOR
\RETURN $B$.
\end{algorithmic}
\end{algorithm}

\begin{proposition}\label{prop:ggm-block-comp}
If $B$ is the blocking set determined by Algorithm~\ref{alg:ggm-greedy-search-graph} with input $(\pi,n')$, then $G$ is $n$-separated by $B$ for any $n\ge n'$.
\end{proposition}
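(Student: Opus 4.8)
The plan is to prove the stronger quantitative bound: for the vertex set $V$ of every connected component of the subgraph $G_B$, $2|V| + |I_V\cap B| \le n' - 1$, where $I_V := \bigcup_{j\in V} I_j$. Since $n\ge n'$ this yields $2|V|+|I_V\cap B| < n$ for every component, which is exactly what Definition~\ref{def:ggm-sep} requires (the component vertex sets of $G_B$ partition $[p]\setminus B$). Fix such a $V$ and let $j^{*} = \pi_{t^{*}}$ be the vertex of $V$ occurring last in the order $\pi$; write $\bar G^{(t)}$ for the state of $\bar G$ after the first $t$ iterations of the loop in Algorithm~\ref{alg:ggm-greedy-search-graph}. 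Since $j^{*}\in V\subseteq [p]\setminus B$ it is declared free, so the test at iteration $t^{*}$ passed: $n'\ge 3 + |\bar I_{j^{*}}|$, where $\bar I_{j^{*}}$ denotes the neighborhood of $j^{*}$ in $\bar G^{(t^{*}-1)}$. It therefore suffices to exhibit $2|V|-2+|I_V\cap B|$ distinct vertices inside $\bar I_{j^{*}}$.

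The key tools are two structural lemmas about $\bar G^{(t)}$. (i) \emph{Fill-in:} if $a,b\in[p]$ and there is a path $a=u_0,u_1,\dots,u_m=b$ in $G$ whose internal vertices $u_1,\dots,u_{m-1}$ are all free and processed by iteration $t$, then $a$ and $b$ are adjacent in $\bar G^{(t)}$. I would prove this by strong induction on $m$: letting $u_i$ be the internal vertex processed last, at iteration $s\le t$, the two sub-paths $u_0\cdots u_i$ and $u_i\cdots u_m$ are shorter and have all of their internal vertices processed before $s$, so by induction $a,b\in\bar I_{u_i}$ in $\bar G^{(s-1)}$; processing the free vertex $u_i$ turns its neighborhood into a clique, so $a\sim b$ in $\bar G^{(s)}$ and hence in $\bar G^{(t)}$. (ii) \emph{Twin tracking:} if a free vertex $v$ is processed at iteration $j_v$ and $u\in[p]\setminus\{v\}$, then for every $t\ge j_v$, adjacency of $v$ and $u$ in $\bar G^{(t)}$ implies adjacency of $\tilde v$ and $u$ in $\bar G^{(t)}$. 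I would prove this by induction on $t$: the twin $\tilde v$ is created (at iteration $j_v$) with exactly the neighborhood that $v$ has immediately after being processed; and at any later iteration that processes a free vertex $f$, a new edge between the two original vertices $v$ and $u$ can only be a fill-in (clique) edge, so $v,u\in\bar I_f$ in $\bar G^{(t)}$, and since $f\ne v$ is an original vertex the inductive hypothesis gives $\tilde v\in\bar I_f$ too, so the same cliquing step also produces the edge $\tilde v\sim u$.

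Granting these, the count is quick. For each $v\in V\setminus\{j^{*}\}$, connectedness of $V$ in $G_B$ gives a path from $j^{*}$ to $v$ lying in $V$; its internal vertices lie in $V\setminus\{j^{*}\}$, hence are free and---because $j^{*}$ is the last vertex of $V$ in $\pi$---processed before iteration $t^{*}$, so (i) puts $v$ into $\bar I_{j^{*}}$ and then (ii) puts $\tilde v$ into $\bar I_{j^{*}}$. For each $w\in I_V\cap B$, pick $v\in V$ with $(v,w)\in E$ and concatenate a path in $V$ from $j^{*}$ to $v$ with the edge $v\sim w$; its internal vertices again all lie in $V\setminus\{j^{*}\}$, so (i) gives $w\in\bar I_{j^{*}}$. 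The three sets $V\setminus\{j^{*}\}$, $I_V\cap B$, and $\{\tilde v: v\in V\setminus\{j^{*}\}\}$ are pairwise disjoint (free originals, blocked originals, twins), so $|\bar I_{j^{*}}|\ge 2(|V|-1)+|I_V\cap B|$, and combining with $n'\ge 3+|\bar I_{j^{*}}|$ gives $2|V|+|I_V\cap B|\le n'-1$.

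I expect the twin-tracking lemma (ii) to be the main obstacle: one must pin down precisely which edges a single iteration adds (only clique edges within one vertex's current neighborhood, plus edges from the freshly created twin), verify that the vertex $f$ being processed is genuinely distinct from $v$ so the inductive hypothesis applies, and dispose of the corner case $u=f$ (no new $v$--$f$ edge can appear at $f$'s own iteration since $f\notin\bar I_f$). The degenerate situations---a singleton component $V=\{j^{*}\}$, or $B=\emptyset$ with $G$ connected---require no separate argument, since then the twin set is empty (resp. equals $\{\tilde v:v\neq j^{*}\}$) and the same inequality recovers the expected bound (e.g. $2p<n$ in the connected case).
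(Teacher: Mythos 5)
Your proposal is correct, and it reaches the same inequality $2|V| + |I_V \cap B| \le n' - 1$ that the paper establishes, by the same central idea: examine the acceptance test at the last-visited vertex $j^{*}$ of each component vertex set $V$ of $G_B$. The execution, however, differs from the paper's in two genuine ways. First, you reason directly about the evolving graph $\bar{G}$ of Algorithm~\ref{alg:ggm-greedy-search-graph}, whereas the paper proves the proposition through the equivalent set-based reformulation (Algorithm~\ref{alg:ggm-greedy-search}) tracking the sets $N_j$; your fill-in lemma~(i) is the graph-language counterpart of the paper's ``Part~2'' ($L_j\subseteq N_j^{(p)}$), and your twin-tracking lemma~(ii) replaces the paper's count of $|N_{j^*}\cap\{\pi_1,\dots,\pi_{t^*-1}\}\setminus B|$. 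Second---and this is a real simplification---you observe that only a lower bound on $|\bar{I}_{j^{*}}|$ is needed, so you dispense entirely with the paper's ``Part~1,'' which proves the reverse containment $N_j^{(p)}\subseteq (I_V\cap B)\uplus(V\setminus\{j\})$ in order to get exact equality at $j^{*}$; the exact-equality argument is not actually required for the stated inequality. Your two structural lemmas are correct as outlined: in the induction for~(ii), the crucial facts are that $f\neq v$ at every later iteration (each vertex of $[p]$ is visited exactly once) and that the cliquing step at $f$ never adds edges incident to $f$ itself, so $\bar{I}_f$ is the same before and after $f$'s own iteration, and the corner case $u=f$ is vacuous; you flag both and both check out. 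Also double-check you state the $m=1$ base case of lemma~(i) separately, since the ``last-processed internal vertex'' step presupposes an internal vertex exists.
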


\rev{Algorithm~\ref{alg:ggm-greedy-search-graph} is meant to be intuitive but a more efficient implementation} is given in Appendix~\ref{app:detail-ggm}. 
Algorithm~\ref{alg:ggm-greedy-search-graph} can also be made even greedier by choosing the next $j$ at each step as the unvisited vertex in $[p]$ with the smallest degree in $\bar{G}$ (breaking ties at random), instead of following the ordering $\pi$. The algorithm also takes an input $n'$, which one may prefer to choose smaller than $n$ for computational or statistical efficiency, as we investigate in Section~\ref{sec:ggm-sims} (smaller $n'$ will mean smaller $V_k$ to generate knockoffs for in Line~\ref{line:ggm-alg1}
of Algorithm~\ref{alg:sparse-gaussian-or}). The flexibility in both $\pi$ and $n'$ is mainly motivated by the second aforementioned issue of trivial knockoffs $\bXk_B=\bX_B$, addressed next.

An intuitive solution to prevent the trivial knockoffs $\bXk_B$ in Algorithm~\ref{alg:sparse-gaussian-or} is to split the rows of $\bX$ in half and run Algorithm~\ref{alg:sparse-gaussian-or} on each half with disjoint blocking sets $B_{1}$ and $B_{2}$ such that $G$ is $n/2$-separated by both blocking sets. Then the knockoffs for variables in $B_1$ will be trivial for half the rows of $\bXk$ and those for variables in $B_2$ will be trivial for the other half of the rows of $\bXk$, but since $B_1$ and $B_2$ are disjoint, no variables will have entirely trivial knockoffs. Even though some knockoff variables are trivial for half their rows, we find the power loss for these variables to be surprisingly small, see the simulations in Section~\ref{sec:ggm-sims}.

This data-splitting idea is generalized in Algorithm~\ref{alg:datasplitting} to splitting the rows of $\bX$ into $\ncg$ folds and running Algorithm~\ref{alg:sparse-gaussian-or} on each fold with a different input $B$. 
\begin{algorithm}[h]
\caption{Conditional Knockoffs for Gaussian Graphical Models with Data Splitting}\label{alg:datasplitting}
\begin{algorithmic}[1]
\ENSURE $\bX\in\R^{n\times p}$, $G=([p],E)$, $B_1,\dots, B_{\ncg}\subset[p]$, $n_1,\dots,n_\ncg\in\mathbb{N}$
\REQUIRE $\bigcup\limits_{i=1}^{ \ncg} B_{i} ^{c}= [p]$, $G$ is $n_{i}$-separated by $B_{i}$ for all $i = 1,\dots,\ncg$, and $\sum\limits_{i=1}^{\ncg}n_{i}= n$.
\STATE Partition the rows of $\bX$ into submatrices $\bX^{(1)},\dots,\bX^{(\ncg)}$ with each $\bX^{(i)}\in\R^{n_i\times p}$.
\FOR{$i = 1,\dots,\ncg$}
\STATE Run Algorithm~\ref{alg:sparse-gaussian-or} on $\bX^{(i)}$ with blocking set $B_i$ to obtain $\bXk^{(i)}$.
\ENDFOR
\RETURN $\bXk = \left[ \bXk^{(1)}; \dots; \bXk^{(\ncg)}\right]$ (the row-concatenation of the $\bXk^{(i)}$'s).
\end{algorithmic}
\end{algorithm}

In Algorithm~\ref{alg:datasplitting}, since $\bigcup\limits_{i=1}^{ \ncg} B_{i} ^{c}= [p]$, for each $j\in [p]$ there is at least one $i$ such that $j\notin B_i$, and thus $\bXk_{j}\neq\bX_j$. Before characterizing when it is possible to find such $B_i$, we formalize the requirements of Algorithm~\ref{alg:datasplitting} into a definition.
\begin{definition}
$G=([p], E)$ is \emph{$(\ncg,n)$-coverable} if there exist $B_1,\dots, B_{\ncg}$ subsets of $[p]$ and integers $n_1\dots,n_{\ncg}$ such that $\bigcup\limits_{i=1}^{ \ncg} B_{i} ^{c}= [p]$, $G$ is $n_{i}$-separated by $B_{i}$ for all $i=1,\dots,\ncg$, and $\sum\limits_{i=1}^{\ncg}n_{i}\leq n$. 
\end{definition}

The following common graph structures are $(\ncg,n)$-coverable:
\begin{itemize}
\item If the largest connected component of $G$ is not larger than $(n-1)/2$, $G$ is $(1,n)$-coverable. 
\item If $G$ is a Markov chain of order $r$ (making the model a time-inhomogeneous AR($r$) model), i.e., $E = \{(i,j): 1\le |i-j|\le r\}$, and $n\geq 2+8r$, then $G$ is $(2,n)$-coverable.
\item If $G$ is a \emph{$\ncg$-colorable} (also known as \emph{$\ncg$-partite}), i.e., the vertices can be divided into $m$ disjoint sets such that the vertices in each subset are not adjacent, and $n\geq \ncg(3+\max_{j} | I_j|)$, then $G$ is $(\ncg,n)$-coverable. For example, 
\begin{itemize}
\item A tree ($\ncg=2$) in which the maximal number of children of any vertex is no more than $(n-8)/2$, 
\item A circle with $p$ even ($\ncg=2$) and $n\geq 10$, or  with $p$ odd ($\ncg=3$) and $n\geq 15$,
\item A finite subset of the $d$-dimensional lattice $\Z^{d}$ where vertices separated by distance 1 are adjacent ($\ncg=2$) and $n\geq 6+4d$.
\end{itemize}
\end{itemize}
For simple graphs such as those listed above, finding appropriate blocking sets $B_i$ can be done by inspection; see Appendix~\ref{app:ggm-eg}. 
More generally, determining $(\ncg,n)$-coverability for an arbitrary graph or, given an $(\ncg,n)$-coverable graph, determining blocking sets $B_i$'s that are optimal in some sense (e.g., minimizing $\Big|\bigcup\limits_{i\leq \ncg}B_i \Big|$) are beyond the scope of this work. However, in Algorithm~\ref{alg:greedy-blocking} in Appendix~\ref{app:detail-ggm}, we provide a randomized greedy search for suitable $B_{i}$'s that be applied in practice when the graph structure is too complex to find such $B_{i}$'s by inspection.

\subsubsection{Numerical Examples}\label{sec:ggm-sims}
We present two simulations comparing the power of Algorithm~\ref{alg:datasplitting} with its unconditional counterpart, one a time-varying AR$(1)$ model and the other a time-varying AR$(10)$. Line~\ref{line:ggm-alg1} of Algorithm~\ref{alg:sparse-gaussian-or} uses Algorithm~\ref{alg:ldg} with the vector $\bss$ computed using the SDP method of Equation~\eqref{opt:sdp}, and the unconditional construction also uses the SDP method \citep{EC-ea:2018}. Algorithm~\ref{alg:datasplitting} was run with $m=2$ and $B_1$ and $B_2$ chosen by fixing $n'$ (specified in the following paragraphs) and running Algorithm~\ref{alg:ggm-greedy-search-graph} twice with two different $\pi$'s. The first run used the original variable ordering for $\pi$, and the second run used ordered $B_1$ followed by the ordered remaining variables.\footnote{This is a nonrandomized version of Algorithm~\ref{alg:greedy-blocking}, which works well for AR models because of their graph structure.}
\rev{
We remind the reader that the simulation setting is at the beginning of Section~\ref{sec:ex}. 
}

In Figure~\ref{fig:ggm1}, the $\bx_i\in \R^{2000}$ are i.i.d. AR$(1)$ with autocorrelation coefficient 0.3 (although the autocorrelation coefficient does not vary with time, this is not assumed by Algorithm~\ref{alg:datasplitting}). We chose $n' = 40$, resulting in $210$ variables that are each blocked in half the samples. The number of unknown parameters is $3p-1=5,999$ while the sample sizes simulated are much smaller, $n\le 350$, yet the power of conditional knockoffs is nearly indistinguishable from that of unconditional knockoffs which uses the exactly-known distribution of $X$.

In Figure~\ref{fig:ggm10}, the $\bx_i\in\R^{2000}$ are time-varying AR(10); specifically, $\bx_i\iid \N(\bs{0},\bS)$ where $\bS$ is the renormalization of $\bS^0$ to have 1's on the diagonal, and $\left(\bS^0\right)^{-1}_{j,k} = \one{j=k}-0.05\cdot \one{1\leq |j-k|\leq 10}.$ We chose $n'=50$, resulting in $1,660$ variables that are each blocked in half the samples. The number of unknown parameters is $2p+10p-10\times 11/2 = 23,945$ while the sample sizes are again much smaller, $n\le 500$, and the power difference between conditional and unconditional knockoffs remains very slight. 



\begin{figure}[h]
\centering
  \subfloat[
  ]{ \includegraphics[width=0.45\linewidth]{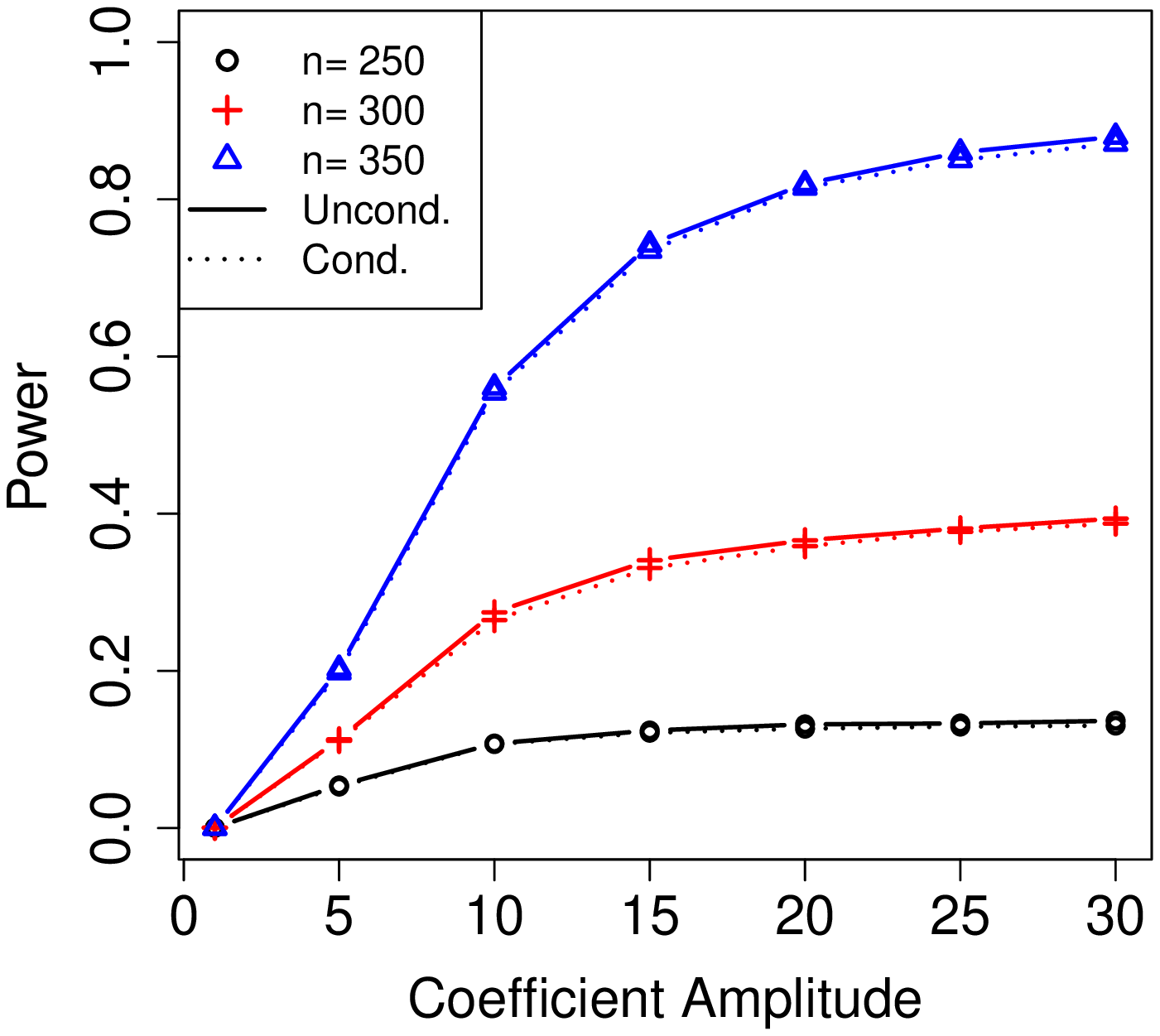}
   \label{fig:ggm1}
    }
  \subfloat[
  ]{     \includegraphics[width=0.45\linewidth]{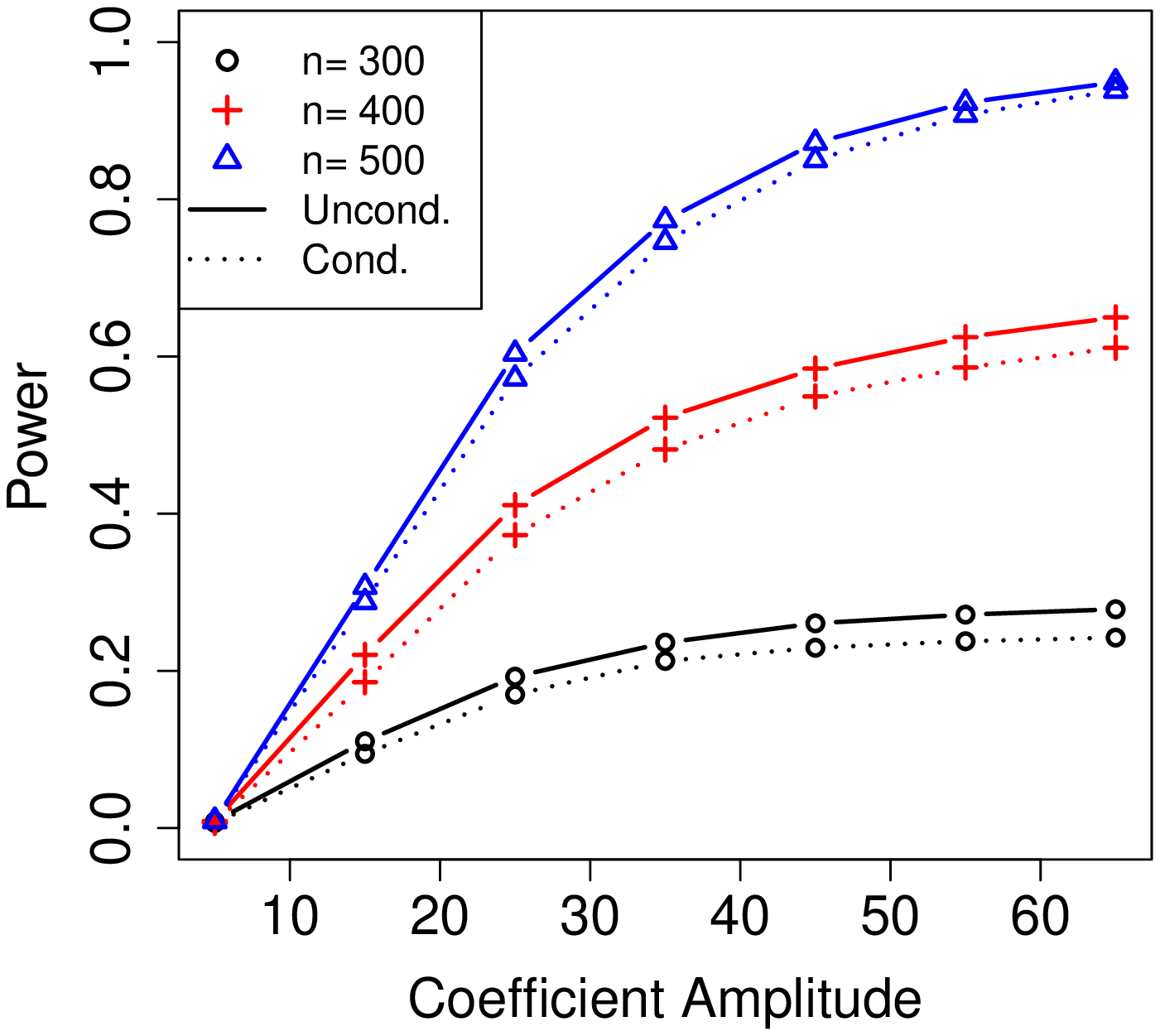} 
  \label{fig:ggm10}
  }
 \caption{Power curves of conditional and unconditional knockoffs for $p=2000$ and a range of $n$ for (a) an AR$(1)$ model and (b) an AR$(10)$ model. Standard errors are all below $0.008$.}
\label{fig:ggm} 
 \end{figure}

Note that the simulation in Figure~\ref{fig:ggm1} blocked on just roughly 10\% of its variables (i.e., $|B_1\cup B_2|/p\approx 10\%$), and since the signals are uniformly distributed, one might worry that in specific applications where the blocked variables and signals happened to align, the power loss might be much worse. But Figure~\ref{fig:ggm10}'s simulation blocked on over 80\% of its variables and still suffered very little power loss compared to unconditional knockoffs, suggesting that even the blocking of signal variables has only a small effect on power thanks to the data splitting in Algorithm~\ref{alg:datasplitting}.

Finally, we examine the sensitivity of the power of conditional knockoffs to the choice of $n'$ in Algorithm~\ref{alg:ggm-greedy-search-graph} for choosing the $B_i$. In the case of AR($1$) with $n=300$ and $p=2000$, Figure~\ref{fig:ggm_nprime_dc} shows the averaged density\footnote{3200 independent simulations were averaged and the kernel density estimate used a Gaussian kernel with a bandwidth of 0.01.} of original-knockoff correlations $\tilde{\rho}_j=\bX_j\tp \bXk_j/ (\|\bX_j\|\|\bXk_j\|)$ for three different choices of $n'$, and Figure~\ref{fig:ggm_nprime_power} shows the corresponding power curves. Recall that smaller $n'$ means blocking on more variables but generating better knockoffs for the non-blocked variables in each step $i$ of Algorithm~\ref{alg:datasplitting}. Figure~\ref{fig:ggm_nprime_dc} shows quite different correlation profiles for different $n'$, with $n'=40$ seeming to provide the density with mass most concentrated to the left. Indeed Figure~\ref{fig:ggm_nprime_power} shows $n'=40$ is most powerful, but only by a small margin---the power is quite insensitive to the choice of $n'$. In applications, the choice of $n'$ may rely on an approximate version of Figure~\ref{fig:ggm_nprime_dc} obtained by simulating $\bX$ from an estimated model. 

\begin{figure}\centering

  \subfloat[
  ]{     \includegraphics[width=0.45\linewidth]{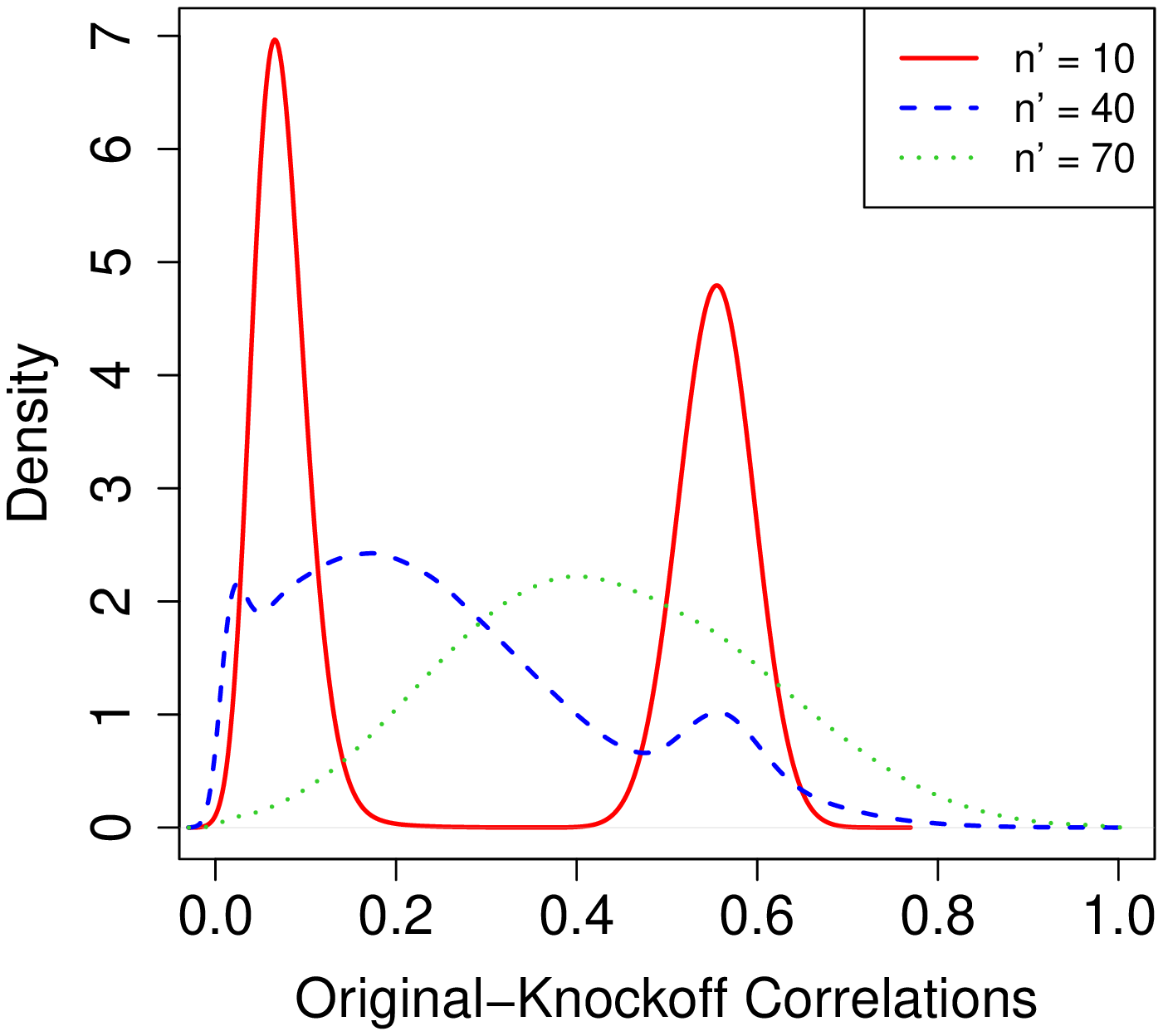} 
  \label{fig:ggm_nprime_dc}
  }
\subfloat[
]{     \includegraphics[width=0.45\linewidth]{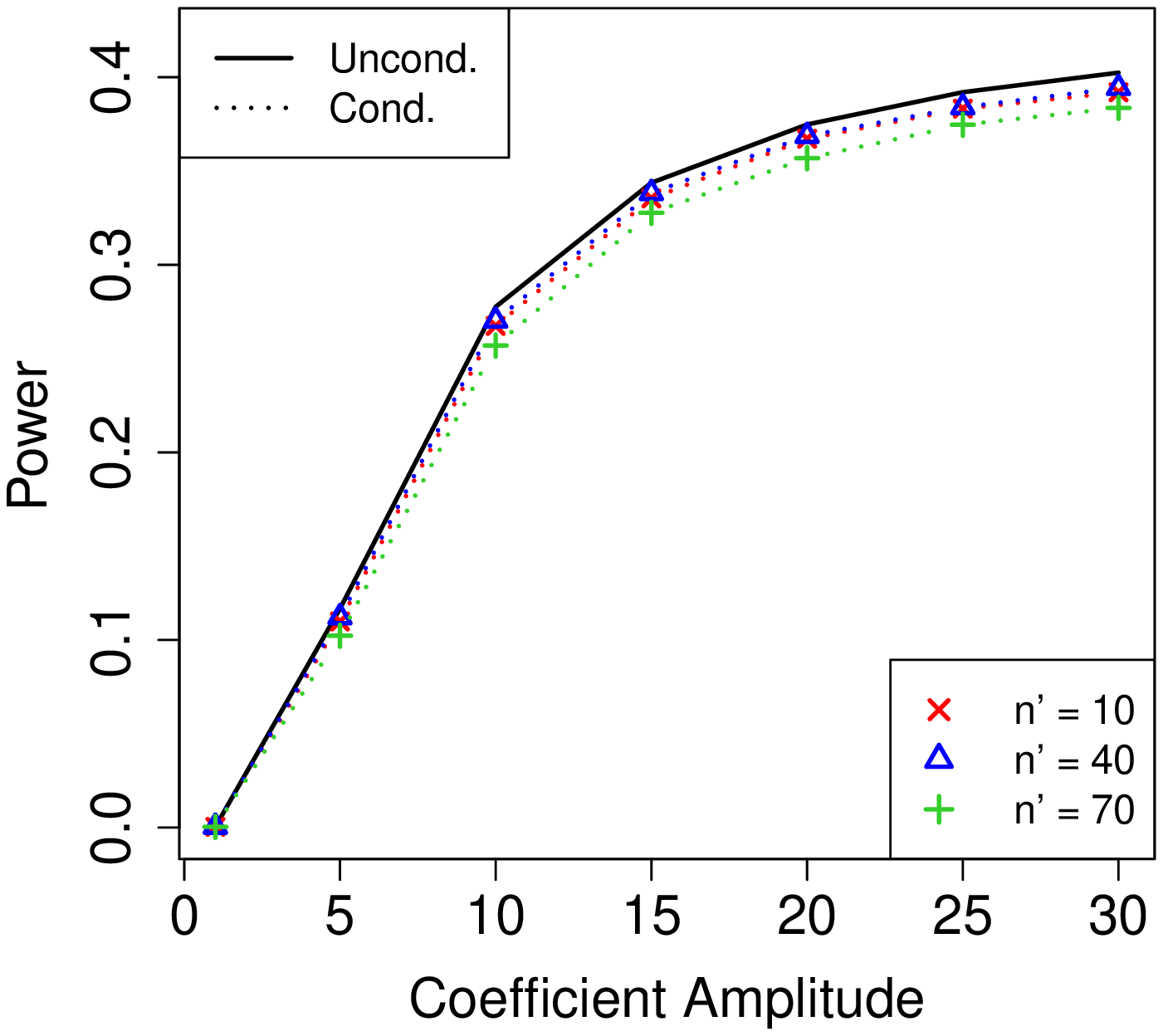}
   \label{fig:ggm_nprime_power}
    }
 \caption{Sensitivity of conditional knockoffs to the choice of $n'$ for an AR(1) model with $n=300$ and $p=2000$. (a) Histograms of the original-knockoff correlations and (b) power curves. Standard errors in (b) are all below 0.004.}\label{fig:ggm_nprime}
 \end{figure}

\rev{
In Appendix~\ref{app:Simulation}, we provide additional experiments that compare the performance of conditional knockoffs that are generated using different sufficient statistics (Appendix~\ref{app:simu-different-ss}) and examine the scenario where a superset of the edge set $E$ is unknown and is instead estimated using the data (Appendix~\ref{app:simu-unknown-graph}). 
}


\subsection{Discrete Graphical Model}\label{sec:eg-dgm}
We now turn to applying conditional knockoffs to discrete models for $X$. Such models are used, for example, for survey responses, general binary covariates, and single nucleotide polymorphisms (mutation counts at loci along the genome) in genomics. Many discrete models assume some form of local dependence, for instance in time or space. We will show how to construct conditional knockoffs when that local dependence is modeled by (undirected) graphical models (see, e.g., \citet[Chapter 2]{edwards2000introduction}), for example, Ising models, Potts models, and Markov chains. 

A random vector $X$ is Markov with respect to a graph $G=([p],E)$ if for any two disjoint subsets $A,A' \subset [p]$ and a \emph{cut set} $B\subset [p]$ such that every path from $A$ to $A'$ passes through $B$, it holds that  $X_A \indp X_{A'} \mid X_B$. Denote by $I_j$ the vertices adjacent to $j$ in $G$ (excluding $j$ itself). $X$ being Markov  implies the \emph{local Markov property} that $X_j \indp X_{ (\{j\}\cup  I_j )^{c} } \mid X_{I_j}$. 

In this section, we assume $X$ is locally Markov with respect to a known graph $G$ and each variable $X_j$ takes $\St_{j} \ge 2$ discrete values (for simplicity label these values $[\St_j]=\{1,\dots, \St_{j} \}$). Although the algorithms in this section can be applied when $\St_{j}$ is infinite, we assume for simplicity that $\St_{j}$ is finite. Formally, we assume 
\begin{equation}\label{model:dgm}
\Fx \in\left\{\text{distribution on } \prod_{j=1}^p [ \St_j] \text{ satisfying the local Markov property w.r.t. } G \right\}.
\end{equation}

\subsubsection{Generating Conditional Knockoffs by Blocking} \label{sec:eg-dgm-1}
Our algorithm for generating conditional knockoffs for discrete graphical models uses again the ideas of blocking and data splitting in Section~\ref{sec:eg-ggm}. However, unlike Section~\ref{sec:eg-ggm} which built upon the low-dimensional construction of Section~\ref{sec:ldg}, there is no known efficient algorithm for constructing conditional knockoffs for general discrete models in low dimensions. As such, instead of blocking to isolate small graph components, we now block to isolate \emph{individual} vertices, and as such need to be more careful with data splitting to ensure the resulting knockoffs remain powerful.

Suppose $B$ is a cut set such that every path connecting \emph{any} two different vertices in $B^{c}$ passes through $B$; call such a set a \emph{global cut set} with respect to $G$. The local Markov property implies the elements of $X_{B^c}$ are conditionally independent given $X_B$:
\[
\Pcr{X_{B^c}}{X_{B}} \, =\, \prod_{j \in B^c} \Pcr{X_{j}}{X_{B}}\, =\, \prod_{j \in B^c} \Pcr{X_{j}}{X_{ I_j}},
\]
where we used the fact that for any $j\in B^c$, $I_j\subseteq B$ and $X_j\indp X_{B\setminus  I_j} \mid X_{I_j}$. For any $A\subseteq [p]$ and $\st_1,\dots, \st_p$, denote by $\bst_{A}$ the vector of $\st_{j}$'s for $j\in A$ and by $[\bSt_{A}]$ the cartesian product $\prod\limits_{j\in A} [\St_{j}]$. Then the conditional probability $\Pcr{X_{j}}{X_{ I_j}}$ can be written as
\[
\prod_{ \st_{j} \in [\St_j],\bst_{I_j}\in [\bSt_{I_j}] } \theta_j ( \st_{j}, \bst_{I_j} )^{\one{X_j= \st_{j}, X_{I_j}=\bst_{I_j}}},
\]
with parameters $\theta_j ( \st_{j}, \bst_{I_j} )\in [0,1]$ for all $\st_{j}$, $\bst_{I_j}$, with the convention that $0^0 :=\; 1$. 
Let $\psi_B(X_B)$ be the probability mass function for $X_B$, 
the joint distribution for $n$ i.i.d. samples from the graphical model is then
\[
\prod_{i=1}^{n}\psi_B(X_{i,B})\prod_{j \in B^c} \left( \prod_{ \st_{j} \in [\St_j], \bst_{I_j} \in [\bSt_{I_j}]  } \theta_j ( \st_{j}, \bst_{I_j} )^{ 
N_j(\st_j,\bst_{I_j})
}  \right),
\]
where $N_j(\st_j,\bst_{I_j})= \sum_{i=1}^n \one{X_{i,j}=\st_j,\bX_{i,{I_j}}=\bst_{I_j}}$. 
Let $T_{B}(\bX)$ be the statistic that includes $\bX_B$ and the counts $N_j( \st_{j}, \bst_{I_j})$ for all $j\in B^c$ and all possible $( \st_{j}, \bst_{I_j})$.  Then $T_{B}(\bX)$ is a sufficient statistic for model \eqref{model:dgm}. Conditional on $T_B(\bX)$, the random vectors $\{\bX_j,  j\in B^c\}$ are independent and each $\bX_j$ is uniformly distributed  on all $\bw\in [\St_j]^{n}$ such that $\sum_{i=1}^n \one{w_i=\st_j,\bX_{i,{I_j}}=\bst_{I_j}}=N_{j}( \st_{j},\bst_{I_j})$ for any $(\st_j, \bst_{I_j})$. Algorithm~\ref{alg:block-dg} generates knockoffs conditional on $T_B(\bX)$ by, for each $j$, uniformly permuting subsets of entries of $\bX_j$ to produce $\bXk_j$. The subsets of entries are defined by blocks of identical rows of $\bX_{I_j}$ so that $\sum_{i=1}^n \one{\bXk_{i,j}=\st_j,\bX_{i,{I_j}}=\bst_{I_j}}=N_{j}( \st_{j},\bst_{I_j})$, as required.
\begin{algorithm}[h]
\caption{Conditional Knockoffs for Discrete Graphical Models} \label{alg:block-dg}
\begin{algorithmic}[1]
  \ENSURE $\bX\in\mathbb{N}^{n\times p}$, $G =([p],E)$, $B\in [p]$.
\REQUIRE $B$ is a global cut set of $G$.
\FOR{$j$ in $[p]\setminus  B$ }
\STATE Initialize  $\bXk_j$ to $\bX_j$.
\FOR{$\bst_{I_j} \in [\bSt_{I_j}]$} \label{alg:dgm-enum}
\STATE Uniformly randomly permute the entries of $\bXk_j$ whose corresponding rows of $\bX_{I_j}$ equal $\bst_{I_j}$.
\ENDFOR
\ENDFOR
\STATE Set $\bXk_B=\bX_B$. 
\RETURN $\bXk = [\bXk_1,\dots,\bXk_p]$.
\end{algorithmic}
\end{algorithm}

The computational complexity of Algorithm~\ref{alg:block-dg} is $O\left(  \sum\limits_{j\in B^{c}}(n+ \min (\prod\limits_{\ell\in I_j}\St_{\ell} , n |I_j| ) ) \right)$, which is  shown in Appendix~\ref{app:detail-dgm}.  If {$n>\max_{j\in B^c} \prod\limits_{\ell\in I_j}\St_{\ell}$}, as needed to guarantee nontrivial knockoffs for all $j\in B^c$ are generated with positive probability, then the complexity can be simplified to $O\left( n (p-|B|) \right)$. In general, Algorithm~\ref{alg:block-dg}'s computational complexity is bounded by the simple expression $O(n p\bar{d})$, where $\bar{d}$ is the average degree in $B^c$.

\begin{theorem}\label{thm:dgm-block}
Algorithm~\ref{alg:block-dg} generates valid knockoffs for model~\eqref{model:dgm}.
\end{theorem}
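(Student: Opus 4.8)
The plan is to verify directly that the $\bXk$ produced by Algorithm~\ref{alg:block-dg} satisfies Definition~\ref{def:cond-ko} with the statistic $T_B(\bX)$ defined just before the algorithm (which includes $\bX_B$ together with all the counts $N_j(\st_j,\bst_{I_j})$ for $j\in B^c$), and then invoke Proposition~\ref{prop:cko} to conclude it is a model-X knockoff matrix. The relation $\bXk\indp\by\mid\bX$ is immediate, since the only randomness used by the algorithm is the within-block permutations, drawn independently of $(\bX,\by)$. So the whole content is the conditional exchangeability $[\bX,\bXk]_{\swap(A)}\eqd[\bX,\bXk]\mid T_B(\bX)$ for every $A\subseteq[p]$.

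First I would pin down the law of $\bX$ given $T_B(\bX)$. Because $B$ is a global cut set, $I_j\subseteq B$ for every $j\in B^c$, so the local Markov property makes the likelihood of $n$ i.i.d.\ draws factor as in the display preceding the algorithm: $\prod_i\psi_B(X_{i,B})$ times $\prod_{j\in B^c}\prod_{\st_j,\bst_{I_j}}\theta_j(\st_j,\bst_{I_j})^{N_j(\st_j,\bst_{I_j})}$. On the event $\{T_B(\bX)=t\}$ this likelihood is a function of $t$ alone ($\bX_B$, hence every $\bX_{I_j}$, and all the counts are fixed by $t$), so the conditional law of $\bX$ given $T_B(\bX)=t$ is uniform on the finite set of matrices consistent with $t$. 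That set is a product over $j\in B^c$: with $\bX_B$ fixed, $\bX_j$ ranges over $\mathcal{S}_j:=\{\bw\in[\St_j]^n:\sum_i\one{w_i=\st_j,\,\bX_{i,I_j}=\bst_{I_j}}=N_j(\st_j,\bst_{I_j})\text{ for all }(\st_j,\bst_{I_j})\}$, which is a deterministic function of $T_B(\bX)$. Hence, conditionally on $T_B(\bX)$, the columns $\{\bX_j:j\in B^c\}$ are mutually independent, $\bX_j$ uniform on $\mathcal{S}_j$, and $\bX_B$ is the value encoded in $T_B(\bX)$.

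Next I would analyze the algorithm's output conditionally on $\bX$. Since $\bXk_B=\bX_B$, those columns add nothing beyond what $T_B(\bX)$ already fixes. For $j\in B^c$, the within-block uniform permutation preserves every count $N_j(\st_j,\bst_{I_j})$, so $\bXk_j\in\mathcal{S}_j$; and a uniform permutation inside each block of identical rows of $\bX_{I_j}$ yields the uniform distribution over all rearrangements of that block's multiset of values, with blocks handled independently, so $\bXk_j\mid\bX$ is uniform on $\mathcal{S}_j$. The point that deserves care is that this conditional law depends on $\bX$ only through $T_B(\bX)$ (indeed only through $\mathcal{S}_j$), because the number of rearrangements of each block's multiset is a function of the counts alone, not of the particular values taken by $\bX_j$. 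Combined with the mutual independence of the columns of $\bXk_{B^c}$ given $\bX$, this is exactly what upgrades ``columns of $\bX_{B^c}$ independent given $T_B(\bX)$'' plus ``columns of $\bXk_{B^c}$ independent given $\bX$'' into: conditionally on $T_B(\bX)$, the pairs $\{(\bX_j,\bXk_j):j\in B^c\}$ are mutually independent, each $(\bX_j,\bXk_j)$ uniform on $\mathcal{S}_j\times\mathcal{S}_j$, while $\bX_B=\bXk_B$ is fixed.

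The exchangeability is then read off this description. Fix $A\subseteq[p]$: swapping $\bX_j\leftrightarrow\bXk_j$ for $j\in A\cap B$ changes nothing because those columns coincide; for $j\in A\cap B^c$ it swaps the two coordinates of a pair whose conditional law on $\mathcal{S}_j\times\mathcal{S}_j$ is symmetric; and these operations act on mutually independent blocks. Hence $[\bX,\bXk]_{\swap(A)}\eqd[\bX,\bXk]\mid T_B(\bX)$, so $\bXk$ is a conditional model-X knockoff matrix and, by Proposition~\ref{prop:cko}, a model-X knockoff matrix, for every $\Fx$ in model~\eqref{model:dgm} (the argument only used the factorization, which holds throughout the model). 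The main obstacle is the bookkeeping in the third paragraph---correctly arguing joint, not merely marginal, independence of the original--knockoff pairs conditional on $T_B(\bX)$; once the conditional law of $\bXk_j$ given $\bX$ is seen to depend on $\bX$ only through $T_B(\bX)$, the rest is routine, and---unlike the Gaussian constructions of Sections~\ref{sec:ldg}--\ref{sec:eg-ggm}---there are no measure-theoretic complications here since every distribution in sight is discrete.
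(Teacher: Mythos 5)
Your proof is correct and follows essentially the same route as the paper's: both arguments rest on the local Markov property to factor the conditional law over $j\in B^c$, on the observation that the uniform within-block permutation makes $\bXk_j$ uniform on a set determined only by $T_B(\bX)$, and on the algorithm's column-wise independence, which together yield that the pairs $(\bX_j,\bXk_j)$ are conditionally independent and individually exchangeable. The only (cosmetic) difference is that you keep the conditioning on $T_B(\bX)$ throughout and read swap-invariance directly off the resulting product-of-uniforms form, whereas the paper first descends to conditioning on $\bX_B$ via the law of total probability and then checks exchangeability through a chain of indicator-function identities.
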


As with Algorithm~\ref{alg:sparse-gaussian-or}, in Algorithm~\ref{alg:block-dg} variables in $B$ are blocked and their knockoffs are trivial: $\bXk_B = \bX_B$. One way to mitigate this drawback is to, after running Algorithm~\ref{alg:block-dg}, expand the graph to include the generated knockoff variables and then conduct a second knockoff generation with the expanded graph. 
\rev{We elaborate on this idea and present Algorithm~\ref{alg:dgm-expanding}, a modified version of Algorithm~\ref{alg:block-dg},  in Appendix~\ref{app:mc}. }
Another systematic way to address this issue is to take the same approach as Algorithm~\ref{alg:datasplitting} by splitting the data and running Algorithm~\ref{alg:block-dg} (or Algorithm~\ref{alg:dgm-expanding}) on each split with different $B$'s; see Algorithm~\ref{alg:datasplitting-discrete}. 
\begin{algorithm}[h]
\caption{Conditional Knockoffs for Discrete Graphical Models with Data Splitting}\label{alg:datasplitting-discrete}
\begin{algorithmic}[1]
\ENSURE $\bX\in\mathbb{N}^{n\times p}$, $G=([p],E)$, $B_1,\dots, B_{\ncg}\subset[p]$, $n_1,\dots,n_\ncg\in\mathbb{N}$.
\REQUIRE$[p] = \bigcup_{i=1}^{\ncg}B_i^c$ and each $B_i$ is a global cut set.
\STATE Partition the rows of $\bX$ into submatrices $\bX^{(1)},\dots,\bX^{(\ncg)}$ with each $\bX^{(i)}\in\mathbb{N}^{n_i\times p}$.
\FOR{$i = 1,\dots,\ncg$}
\STATE Run Algorithm~\ref{alg:block-dg} or \ref{alg:dgm-expanding} on $\bX^{(i)}$ with $B_i$ to obtain $\bXk^{(i)}$. \label{line:dgm-sub}
\ENDFOR
\RETURN $\bXk = \left[ \bXk^{(1)}; \dots; \bXk^{(\ncg)}\right]$ (row-concatenation of $\bXk^{(i)}$'s).
\end{algorithmic}
\end{algorithm}

If $n_i > \max\limits_{j\in B_i^c} \prod\limits_{\ell\in I_j} \St_{\ell}$ for all $i\leq m$ and all the model parameters $\theta_j ( \st_{j}, \bst_{I_j} )$ are positive, then Algorithm~\ref{alg:datasplitting-discrete} produces nontrivial knockoffs for all $j$ with positive probability. Note that in the continuous case, similar mild conditions guarantee that Algorithm~\ref{alg:datasplitting} produces nontrivial knockoffs for all $j$ with \emph{probability 1}. This is unachievable in general in the discrete case no matter how the sufficient statistic is chosen, as there is always a positive probability (for every $j$) that the sufficient statistic takes a value such that $\bXk_j = \bX_j$ is uniquely determined given that sufficient statistic (e.g., if $\bX_{i,j}= 1$ for all $i$).


One way to ensure $B_1,\dots,B_m$ satisfy the requirements of Algorithm~\ref{alg:datasplitting-discrete} is if assigning each $B_i^c$ a different color produces a proper coloring of $G$.\footnote{A coloring of $G$ is \emph{proper} if no adjacent vertices have the same color. } The end of Section~\ref{sec:ggm-alg-proof} listed some common graph structures with known chromatic numbers,\footnote{The chromatic number of a graph $G$ is the minimal $m$ such that $G$ is $m$-colorable.} which subsume many common models including Ising models and Potts models. Although not specified in Section~\ref{sec:ggm-alg-proof}, a Markov chain of order $\ncg-1$ is $\ncg$-colorable and a planar graph (map) is 4-colorable. Also, for any graph of maximal degree $d$, a $(d+1)$-coloring can be found in $O(dp)$ time by greedy coloring \citep[Chapter 2]{RL:2016}. In general, both finding the chromatic number and finding a corresponding coloring of a graph $G$ are NP-hard \citep{GM-JD:2002}, but there exist efficient algorithms that in practice are able to color graphs with a near-optimal number of colors (see \citet{EM-PT:2010} for a survey). 

\subsubsection{Refined Constructions for Markov Chains}\label{sec:eg-mc}
For Markov chains, we develop two alternative conditional knockoff constructions that take advantage of the Markovian structure. Although we generally expect these constructions to dominate Algorithm~\ref{alg:datasplitting-discrete} when $G$ is a Markov chain, we found the difference in power to be negligible in every simulation we tried, and so we defer these algorithms to Appendix~\ref{app:mc} and only provide a brief summary here.

Suppose the components of $X$ follow a $K$-state discrete Markov chain, and let $\pi^{(1)}_{\st} = \pr{X_{1}=\st}$ and $\pi^{(j)}_{\st_{},\st'} = \Pc{X_{j}=\st'}{X_{j-1}=\st_{}}$ be the model parameters. Then the joint distribution for $n$ i.i.d. samples is,
\begin{align*}
\pr{\bX} 
&= \; \prod_{\st=1}^{ \St_{}}(\pi_{  \st}^{(1)})^{\sum_{ \st' =1}^{ \St_{}}N_{\st,\st' }^{(2)}} \prod_{j=2}^p \prod_{\st =1}^{\St_{}}\prod_{ \st' =1}^{\St_{}}(\pi_{\st ,\st' }^{(j)})^{N_{\st ,\st' }^{(j)}},
\end{align*}
where $N^{(j)}_{\st_{},\st'} = \sum_{i=1}^n\one{X_{i,j-1}=\st,X_{i,j}=\st'}$. So all the $N^{(j)}_{\st_{},\st'}$'s together form a sufficient statistic, which we denote by $T(\bX)$. As opposed to the statistics $N_j( \st_{j}, \bst_{ \{j-1,j+1\} })$'s used in Section~\ref{sec:eg-dgm-1}, $T(\bX)$ is minimal, and thus we expect that generating knockoffs conditional on it will be more powerful than knockoffs generated conditional on a non-minimal statistic. Conditional on $T(\bX)$, the columns of $\bX$ still comprise a Markov chain whose distribution can be used to generate knockoffs in two possible ways:
\begin{enumerate}
	\item \emph{The sequential conditional independent pairs (SCIP)} algorithm \citep{EC-ea:2018,MS-CS-EC:2017} has computational complexity exponential in $n$, but by splitting the samples into small folds and generating conditional knockoffs separately for each fold, $n$ is artificially reduced and the computation made tractable.
	\item \emph{Refined blocking} modifies Algorithm~\ref{alg:block-dg} by first drawing a new contingency table that is exchangeable with the the three-way contingency table for $(\bX_{j-1},\bX_{j},\bX_{j+1})$ and then sampling $\bXk_j$ given the new contingency table. 
\end{enumerate}

\subsubsection{Numerical Examples}\label{sec:dgm-sims}

We present two simulations, comparing the power of Algorithm~\ref{alg:datasplitting-discrete} with its unconditional counterpart for discrete Markov chains \citep{MS-CS-EC:2017} and for Ising models \citep{SB-EC-LJ-WW:2019}. 
\rev{
We remind the reader that the simulation setting is at the beginning of Section~\ref{sec:ex}. 
}

In Figure~\ref{fig:MC}, the \rev{$\bx_i\in \{0,1\}^{1000}$}  are i.i.d. from an inhomogeneous binary Markov chain with $p=1000$. The initial distribution is ${\pr{ X_1=0}=\pr{ X_1=1}=.5}$, and the transition probabilities 
\[
{\pr{ X_j=0| X_{j-1}=1}=Q_{10}^{(j)}},\; \quad {\pr{ X_j=1| X_{j-1}=0}=Q_{01}^{(j)}} 
\]
are randomly generated as 
\[
Q_{10}^{(j)}=\frac{U_1^{(j)} }{0.4+U_1^{(j)} +U_2^{(j)} },\;\quad  Q_{01}^{(j)}=\frac{U_3^{(j)} }{0.4+U_3^{(j)} +U_4^{(j)} }, 
\]
where $U_i^{(j)} \iid \text{Unif}([0,1])$ but held fixed across all replications. 
We implemented Algorithm~\ref{alg:datasplitting-discrete} with $B_1$ as the even variables and $B_2$ as the odds, with $n_1=n_2=n/2$, and used Algorithm~\ref{alg:dgm-expanding} (with $Q=2$) in Line~\ref{line:dgm-sub}. 
The number of unknown parameters in the model is $2p-1=1,999$ and all plotted power curves have $n\leq 350$. Despite the high-dimensionality, conditional knockoffs are nearly as powerful as the unconditional SCIP procedure of \citet{MS-CS-EC:2017} which requires knowing the exact distribution of $X$.

\begin{figure}[h]\centering
  \subfloat[
  ]{\includegraphics[width=0.45\linewidth]{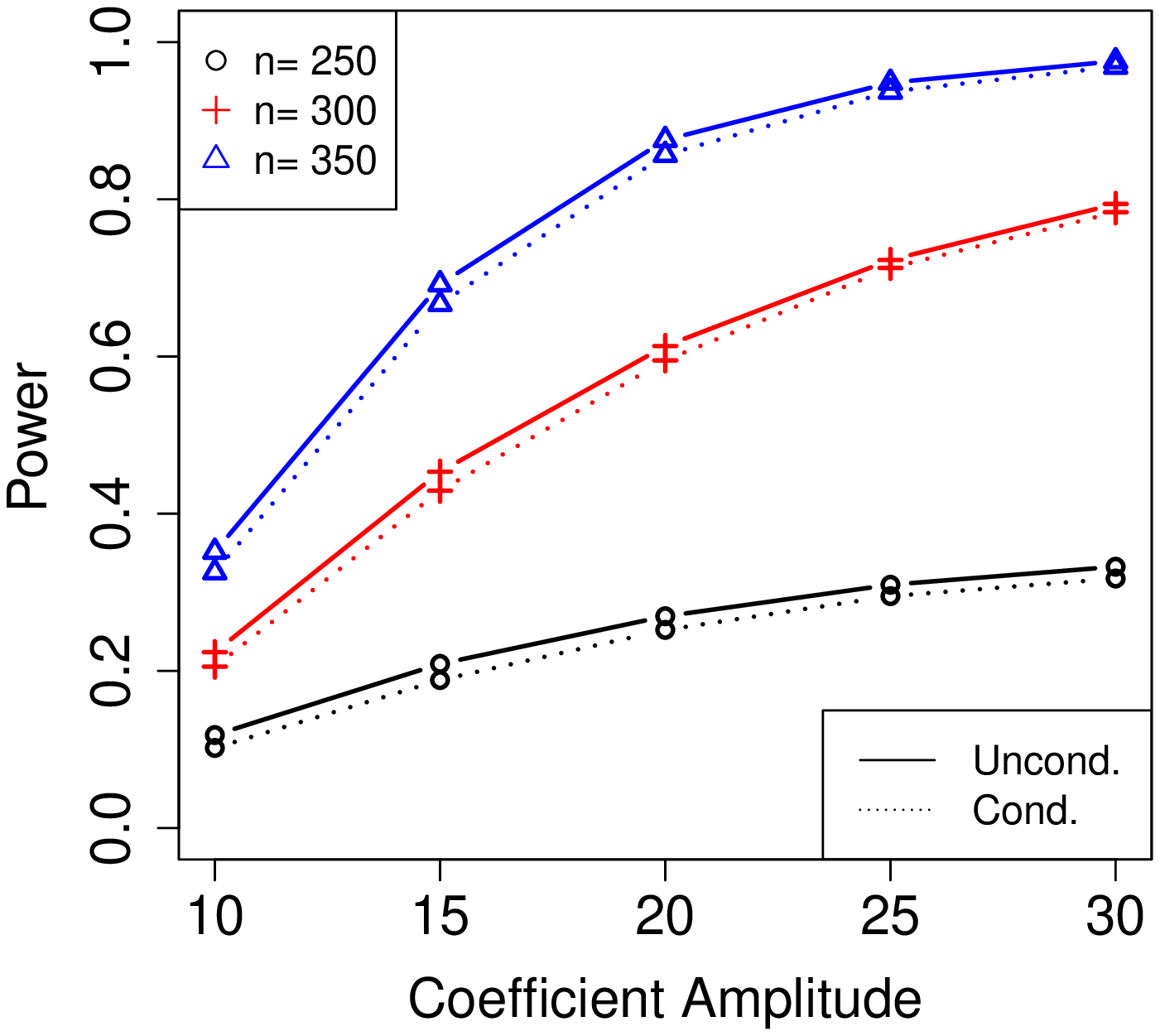} \label{fig:MC}
  }
  ~
    \subfloat[
    ]{     \includegraphics[width=0.45\linewidth]{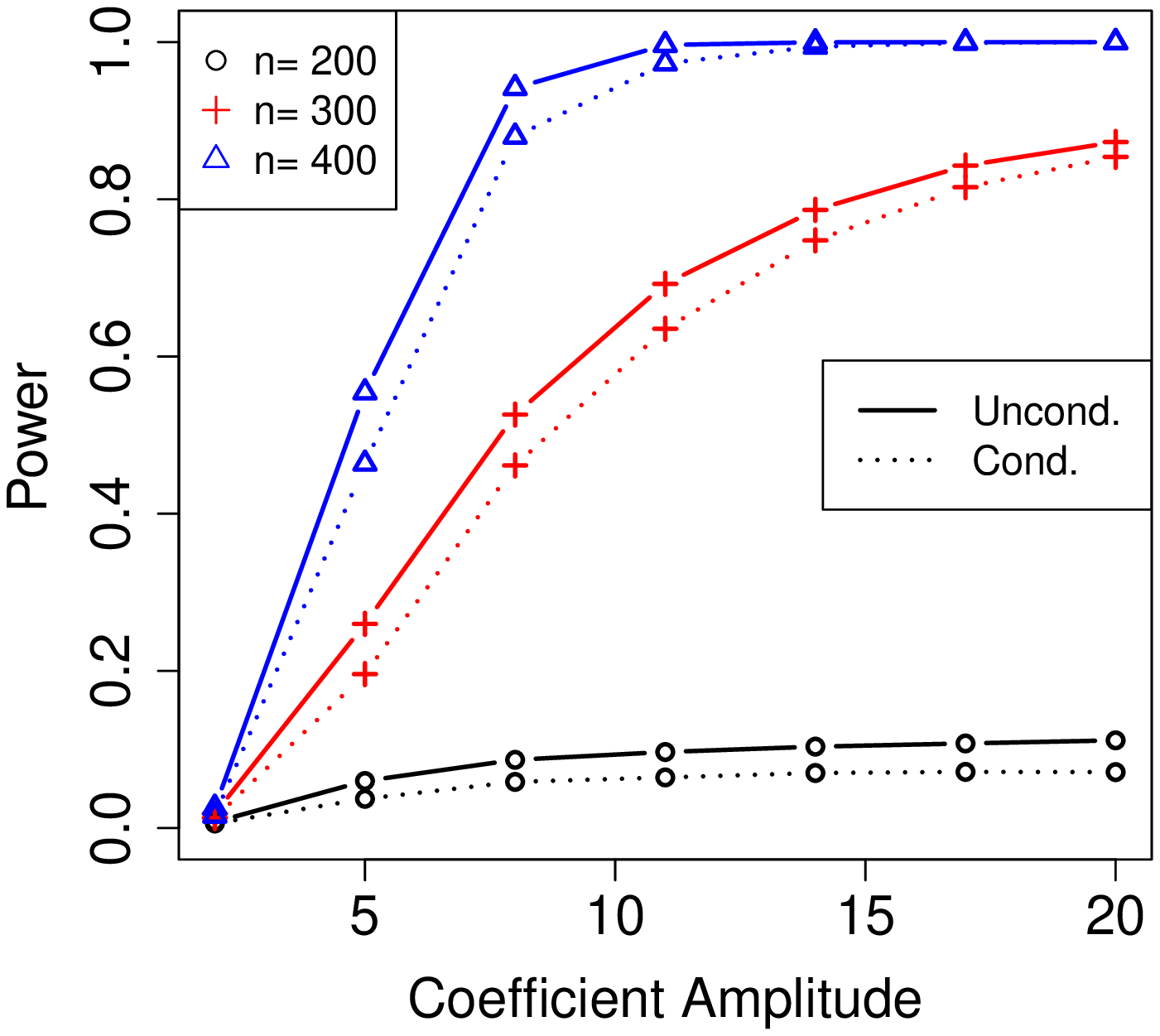} \label{fig:DGM}}
  \caption{Power curves of conditional and unconditional knockoffs with a range of $n$ for (a) a Markov chain of length $p=1000$ and (b) an Ising model of size $32\times 32$. Standard errors are all below $0.008$.} \label{fig:dgm-global}
  \end{figure}

In Figure~\ref{fig:DGM}, the $\bx_i\in\R^{32\times 32}$ are i.i.d. draws from an Ising model\footnote{We use the \emph{coupling from the past algorithm} \citep{JP-DW:1996} to sample exactly from this distribution.} given by:  
\begin{equation}\label{eq:model-ising}
	\pr{ X =\bx} \propto \; \exp \left(  \sum_{ (s,t)\in E} \theta_{s,t} x_s x_t +\sum_{s\in V} h_s x_s \right),\qquad \bx\in \{-1,+1\}^{V},
\end{equation}
where the vertex set $V = [32]\times[32]$ and the edge set $E$ is all the pairs $(s,t)$ such that $\|s- t \|_{1}=1$. We take $\theta_{s,t}=0.2$ and $h_s=0$.  Model~\eqref{eq:model-ising} has $2\times 32\times 31 + 32^2=3008$ parameters, again far larger than any of the sample sizes simulated, yet conditional knockoffs are still nearly as powerful as their unconditional counterparts.\footnote{We use the default subgraph width $w=5$ in \citet{SB-EC-LJ-WW:2019} for generating unconditional knockoffs.} The conditional knockoffs are generated by Algorithm~\ref{alg:datasplitting-discrete} with two-fold data-splitting ($m=2$\rev{, vertices are colored by the parity of the sum of their coordinates}) and no graph-expanding. Although it is possible to use graph-expanding, the power improvement is negligible because the sample size is quite small relative to the size of the neighborhoods in the expanded graph, resulting in the second round of knockoffs being nearly identical to their original counterparts.

%
%


\section{Discussion}\label{sec:disc}
This paper introduced a way to use knockoffs to perform variable selection with exact FDR control under much weaker assumptions than made in \citet{EC-ea:2018}, while retaining nearly as high power in simulations. In fact, our method controls the FDR under arguably weaker assumptions than \emph{any} existing method (see Section~\ref{sec:contribution}).
The key idea is simple, to generate knockoffs conditional on a sufficient statistic, but finding and proving valid algorithms for doing so required surprisingly sophisticated tools. One particularly appealing property of conditional knockoffs is how it directly leverages unlabeled data for improved power. We conclude with a number of open research questions raised by this paper:

\vspace{0.3cm}\noindent {\bf Algorithmic:} Perhaps the most obvious question is how to construct conditional knockoffs for models not addressed in this paper. Even for the models in this paper, what is the best way to choose the tuning parameters (e.g., $\bss$ in Algorithm~\ref{alg:ldg}, or the blocks $B_i$ in Algorithms~\ref{alg:datasplitting} and \ref{alg:datasplitting-discrete})?

\vspace{0.3cm}\noindent {\bf Robustness:} 
Can techniques like those in \citet{BR-CE-SR:2018} be used to quantify the robustness of conditional knockoffs to model misspecification? 
\rev{Empirical evidence for such robustness is provided in Appendix~\ref{app:vary}.}
Also, it is worth pointing out that there are models for which no `small' sufficient statistic exists, i.e., every sufficient statistic $T(\bX)$ has the property that $\bX_j\mid\bX_{\noj},T(\bX)$ is a point mass at $\bX_j$, which forces the conditional knockoffs $\bXk_j$ to be trivial. In such models where the proposal of this paper can only produce trivial knockoffs, could postulating a distribution and generating knockoffs conditional on \emph{some} (not-sufficient) statistic still improve robustness to the parameter values in the model, relative to generating knockoffs for the same distribution but unconditionally? See \citet{BT-WY-BR-SR:2018} for a positive example for the related conditional randomization test.

\vspace{0.3cm}\noindent {\bf Power:} In this paper we always used unconditional knockoffs as a power benchmark for conditional knockoffs, as it seems intuitive that conditioning on less should result in higher power. Can this be formalized, and/or can the cost of conditioning in terms of power be quantified? Combining this with the previous paragraph, we expect there to be a \emph{power--robustness tradeoff} that can be navigated by conditioning on more or less when generating knockoffs.

\vspace{0.3cm}\noindent {\bf Conditioning:} There are reasons other than robustness that one might wish to generate knockoffs conditional on a statistic. For instance, if a model for $\bX$ needs to be checked by observing a statistic of $\bX$, generating knockoffs conditional on that statistic would guarantee a form of post-selection inference after model selection. Or when data contains variables that confound the variables of interest, it may be desirable to generate knockoffs conditional on those confounders (e.g., by Algorithm~\ref{alg:ldg-block}) in order to control for them. Also, can the conditioning tools and ideas in this paper be used to relax the assumptions of the conditional randomization test, generalizing \citet{RP:1984}?

\subsection*{Acknowledgments}
D. H. would like to thank Yu Zhao for advice on topological measure theory.
L. J. would like to thank Emmanuel Cand\`es, Rina Barber, Natesh Pillai, Pierre Jacob, and Joe Blitzstein for helpful discussions regarding this project. The authors also thank the editors and the three referees for their constructive comments and suggestions.

\bibliography{references}
\bibliographystyle{apalike}

\appendix

\section{Proofs for Main Text}\label{app:main-proof}
\subsection{Integration of Unlabled Data}
\begin{proof}[Proof of Proposition \ref{prop:unsuper}]
Denote by $\bX^{(u)}$ the last $n^{(u)} = n^{*}-n$ rows of $\bX^{*}$. Since the rows of $\bX^{*}$ are independent, $\bX^{(u)} \indp (\by,\bX)$. Then by the weak union property, $\bX^{(u)} \indp\by\,|\,\bX$. In addition, the condition that $\bXk^{*} \indp \by \,|\, (\bX,\bX^{(u)})$ and the fact that $\bXk$ is a function of $\bXk^{*}$ imply $\bXk \indp \by \,|\, (\bX,\bX^{(u)})$. By the contraction property, these two together show $\bXk \indp\by\,|\,\bX$. 

Let $\phi:\R^{n^*\times 2p} \mapsto \R^{n\times 2p}$ be the mapping that keeps the first $n$ rows of a matrix. We have $ [\bX,\,\bXk]=\phi([\bX^{*},\,\bXk^{*}] )$ and  $[\bX,\,\bXk]_{\text{swap}(A)}=\phi([\bX^{*},\,\bXk^{*}]_{\text{swap}(A)} )$ for any subset $A\subseteq [p]$. The given exchangeability condition implies that \[
\phi([\bX^{*},\,\bXk^{*}]_{\text{swap}(A)}) \eqd \phi([\bX^{*},\,\bXk^{*}] )\,\Big{|}\, T(\bX^{*}),
\]
which is simply $[\bX,\,\bXk]_{\text{swap}(A)} \eqd [\bX,\,\bXk] \,\Big{|}\, T(\bX^{*})$. It then follows that \[
[\bX,\,\bXk]_{\text{swap}(A)} \eqd [\bX,\,\bXk],
\]
and we conclude that $\bXk$ is a  model-X knockoff matrix for $\bX$. 
\end{proof}

\subsection{Low-Dimensional Gaussian Models}

Throughout the appendix, bold-faced capital letters such as $\bs{A}$ are used for any matrix (random or not) \emph{except} when we need to distinguish between a random matrix and the values it may take, in which case we use bold sans serif letters for the values. For example, we will write $\pr{\bs{A} = \bs{\mathsf{A}}}$ to denote the probability that the random matrix $\bs{A}$ takes the (nonrandom) value $\bs{\mathsf{A}}$.

\rev{This section is planned as follows. Section~\ref{app:counterexample} clarifies a difficulty in the joint uniform distribution on a manifold. Section~\ref{app:ldgprf} contains the proof of Theorem~\ref*{thm:ldg_ko}, leaving the proofs of the lemmas in Section~\ref{app:ldg-lemma}. Section~\ref{app:ldg-intutive} discusses why a seemingly simpler proof for the theorem fails and thus justifies our technical contribution.}

\subsubsection{Counterexample for Conditional Uniformity}\label{app:counterexample}
The following statement is \uline{false}: `If a random variable $A$ is uniform on its support and another random variable $B$ is such that $B\mid A$ is conditionally uniform on its support for every $A$, with normalizing constant that does not depend on $A$, then $(A,B)$ is uniform on its support.' Although this statement seems intuitively true and holds for many simple examples (especially when $A$ and $B$ are both univariate), Figure~\ref{fig:nonunif} shows a counterexample. In it, although $X$ is uniform on $(0,1)$ and $(Y,Z)\mid X$ is uniform for every $X$ on a line whose length does not depend on $X$, the joint distribution of $(X,Y,Z)$ is not uniform on its 2-dimensional support.
\begin{figure}[h]\centering
 \includegraphics[width=0.4\linewidth]{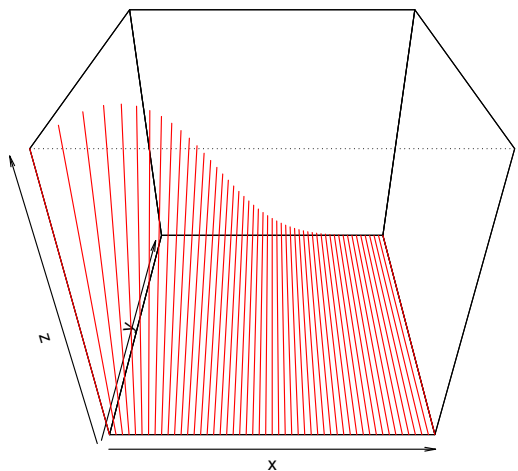}
\caption{A non-uniform distribution on a surface. $X\sim \text{Unif}(0,1)$ and $(Y,Z) \mid X \sim \text{Unif}(L_{X})$, where the line segment $L_{X}$ of length $1$ is orthogonal to the X-axis and has an angle with the Y-axis of $(1-X)^{10}\pi/2$. } \label{fig:nonunif}
 \end{figure}


\subsubsection{Proof of Theorem~\ref{thm:ldg_ko}}\label{app:ldgprf}
The proof of Theorem~\ref{thm:ldg_ko} follows three steps: Lemma~\ref{lem:ldg-invariant} states that the conditional distribution of $[\bX,\bXk]\mid T(\bX)$ is invariant on its support to multiplication by elements of the topological group of orthonormal matrices that have $\bs{1}_n$ as a fixed point, Lemma~\ref{lem:ldg-swap-invariant} states that the conditional distribution remains invariant (on the same support) after swapping $\bX_j$ and $\bXk_j$, and Lemma~\ref{lem:ldg-unique} states that the invariant measure on the support of $[\bX,\bXk]\mid T(\bX)$ is unique. These three steps combined show that the distributions before and after swapping are the same, and hence $\bXk$ is a valid conditional knockoff matrix for $\bX$.

To streamline notation, we redefine $\hbS := (\bX-\bs{1}_{n}\hbmu\tp)^{\top}(\bX-\bs{1}_n\hbmu\tp)$ as $n$ times the sample covariance matrix (it was defined as just the sample covariance matrix in the main text), and redefine $\bss$ such that $\bs{0}_{p\times p}\prec\diag{\bss}\prec 2 \hbS$ accordingly. With this new notation, $\bL$ is the Cholesky decomposition such that $\bL\tp\bL= 2\diag{\bss} - \diag{\bss}\hbS^{-1}\diag{\bss} $.
Let $\bC\in \R^{(n-1)\times n}$ be a matrix with orthonormal rows that are also orthogonal to $\bs{1}_{n}$. Then $\bC\tp \bC= \bs{I}_n-\bs{1}_n\bs{1}_n^{\top}/n$ is the centering matrix, $\bC\tp \bC\bX = \bX-\bs{1}_n\hbmu\tp$ and $ (\bC\bX)^{\top}\bC\bX=\hbS $; note $\bC$ is just a constant, nonrandom matrix. The statistic being conditioned on this this proof is $T(\bX) = ( \bX\tp \bs{1}_n / n , (\bC\bX)^{\top}\bC\bX) = (\hbmu,\hbS)$.
For any positive integers $s$ and $t$ such that $s\geq t$, 
denote by $\mathcal{O}_{s}$ the group of $s \times s$  orthogonal matrices (also known as the orthogonal group) and denote by $\mathcal{F}_{s,t}$ the set of $s \times t$ real matrices whose columns form an orthonormal set in $\R^s$ (also known as the Stiefel manifold). 

We will use techniques from topological measure theory to prove Theorem~\ref{thm:ldg_ko}, specifically on invariant measures (see e.g. \citet[Chapter 13]{schneider2008stochastic} and \citep[Chapter 44]{fremlin2000measure}). For readers unfamiliar with the field, the following is a short list of definitions we will use:
\begin{itemize}
    \item A group $\mathcal{G}$ is a \emph{topological group} if it has a topology such that the functions of multiplication and inversion, i.e., $(x,y)\mapsto xy$ and $x\mapsto x^{-1}$, are continuous.\footnote{A function between two topological spaces is \emph{continuous} if the inverse image of any open set is an open set. }
\item An \emph{operation} of a group $\mathcal{G}$ on a nonempty set $\setE$ is a function $\psi: \mathcal{G} \times \setE \mapsto \setE$ satisfying $\psi(g, \psi(g', x))=\psi(g g', x)$ and $\psi(e,x)=x$. The operation $\psi(g,x)$ is also written as $gx$ when there is no risk of confusion. For any subset $\mathcal{B}\subseteq \setE$ and $g\in \mathcal{G}$, denote by $g\mathcal{B}$ the image under the operation with $g$, i.e.,  $g\mathcal{B}=\{\psi(g,x): x\in \mathcal{B}\}$.
\item An operation $\psi$ is \emph{transitive} if for any $x,y \in \setE$ there exists $g\in \mathcal{G}$ such that $\psi(g,x)=y$. 
\item Suppose $\setE$ is a topological space and $\mathcal{G}$ is a topological group, the operation $\psi$ is  \emph{continuous} if $\psi$, as a function of two arguments, is continuous.
\item Suppose $\setE$ is a locally compact metric space. A Borel measure $\rho$ on $\setE$ is called \emph{$\mathcal{G}$-invariant} if for any $g\in \mathcal{G}$ and Borel subset $\mathcal{B}\subseteq \setE$, it holds that $\rho(\mathcal{B})=\rho( g\mathcal{B})$.
\end{itemize}

We can now define the elements of the proof. 
Suppose $\bs{S}\in\R^{p\times p}$ is a positive definite matrix and $\bs{m}\in \R^p$. 
Define a metric space
\begin{equation}
	\begin{aligned}
		\mathcal{M}_{} = &\left\{ [\bsfX,\bsfXk]\in \R^{n\times (2p)}:\bsfX^{\top}\bs{1}_n/n = \bs{m},\;(\bC\bsfX)^{\top}\bC\bsfX = \bs{S},\right.\\
&	\hspace{8em}	\left. \; \bsfXk^{\top}\bs{1}_n/n = \bs{m},\;(\bC\bsfXk)^{\top}\bC\bsfXk = \bs{S},\;
		(\bC\bsfXk)^{\top}\bC\bsfX = \bs{S}-\text{diag}\{\bss\}\right\},
	\end{aligned}
\end{equation}
equipped with the Euclidean metric in the vectorized space, stacked column-wise.
By Equation~\eqref{eq:ldg_ko_short}, it is straightforward to check that if $(\hbmu, \hbS)=(\bs{m}, \bs{S})$ then $[\bX,\bXk]\in \mathcal{M}$. 

Define $\mathcal{G}=\{\bG\in \mathcal{O}_{n} : \bG \bs{1}_{n}=\bs{1}_{n}\}$. It is easy to check that $\mathcal{G}$ is a group whose identity element is $\bs{I}_n$. $\mathcal{G}$ is also a topological group with the induced metric of $\R^{n\times n}$ because matrix inversion and multiplication are continuous. 

Define a mapping $\psi: \mathcal{G}\times \mathcal{M}_{} \mapsto \mathcal{M}_{}$ by $\psi(\bG, [\bsfX,\bsfXk])=[\bG\bsfX,\bG\bsfXk]$. Note that 
\begin{equation}\label{eq:ldg-operation-1}
\bG\tp \bC\tp \bC \bG=\bG\tp\bG - \bG\tp (\bs{1}_{n}\bs{1}_{n}\tp/n)\bG = \bs{I}_{n}-(\bs{1}_{n}\bs{1}_{n}\tp/n)=\bC\tp \bC,
\end{equation}
thus for any $[\bsfX,\bsfXk]\in \mathcal{M}$, we have $\psi(\bG, [\bsfX,\bsfXk])\in \mathcal{M}_{}$. It is also seen that $\psi(\bG_{1} \bG_{2}, [\bsfX,\bsfXk])=\psi(\bG_{1}, \psi( \bG_{2},[\bsfX,\bsfXk]))$ and $\psi(\bs{I}_n,[\bsfX,\bsfXk]) = [\bsfX,\bsfXk]$, so $\psi$ is an operation of $\mathcal{G}$ on $\mathcal{M}_{}$. By the continuity of matrix multiplication, $\psi$ is a continuous operation. 

We can now state the three lemmas which comprise the proof.
\begin{lemma}[Invariance]\label{lem:ldg-invariant}
The probability measure of $[\bX,\bXk]$ conditional on $\hbmu=\bs{m}$ and $\hbS=\bs{S}$
is $\mathcal{G}$-invariant on $\mathcal{M}_{}$.
\end{lemma}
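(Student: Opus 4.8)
The plan is to avoid manipulating conditional densities directly (delicate, since $(\hbmu,\hbS)$ ranges over a continuum so $\{(\hbmu,\hbS)=(\bs{m},\bs{S})\}$ is a null event), and instead to prove a clean \emph{unconditional} distributional identity and then transfer it to the regular conditional distribution. Concretely, for each fixed $\bG\in\mathcal{G}$ I will establish
\[
\big(\psi(\bG,[\bX,\bXk]),\,\hbmu,\,\hbS\big)\ \eqd\ \big([\bX,\bXk],\,\hbmu,\,\hbS\big),
\]
which, since the two sides share the marginal of the conditioning variable $(\hbmu,\hbS)$, forces the conditional law $\rho=\rho_{\bs{m},\bs{S}}$ of $[\bX,\bXk]$ given $(\hbmu,\hbS)=(\bs{m},\bs{S})$ to satisfy $\psi(\bG,\cdot)_{*}\rho=\rho$; equivalently $\rho(\bG\mathcal{B})=\rho(\mathcal{B})$ for all Borel $\mathcal{B}\subseteq\mathcal{M}$, i.e., $\mathcal{G}$-invariance of $\rho$ on $\mathcal{M}$.

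\emph{Step 1 (left-multiplication commutes with the construction).} Write $\bXk=F_2(\bX,\bW)$ for the deterministic map of Algorithm~\ref{alg:ldg}: the selected $\bss$ is a fixed function of $\hbS$, $\bL$ a fixed function of $(\hbS,\bss)$, and the $\bU$-block a fixed function of $(\bX,\bW)$ via Gram--Schmidt. Fix $\bG\in\mathcal{G}$. Since $\bG\bs{1}_n=\bs{1}_n$ and $\bG$ is orthogonal, $\bG\tp\bs{1}_n=\bs{1}_n$, so $\hbmu(\bG\bX)=\hbmu(\bX)$, and $\hbS(\bG\bX)=\hbS(\bX)$ by \eqref{eq:ldg-operation-1}; hence the same $\bss$ and $\bL$ are used for input $\bG\bX$. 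Because Gram--Schmidt orthonormalization commutes with left-multiplication by an orthogonal matrix and $[\bs{1}_n,\bG\bX,\bG\bW]=\bG[\bs{1}_n,\bX,\bW]$, the $\bU$-block for $(\bG\bX,\bG\bW)$ is $\bG\bU$. Substituting into \eqref{eq:ldg_ko_short} and using $\bs{1}_n\hbmu\tp=\bG\bs{1}_n\hbmu\tp$ and $\bG\bX-\bs{1}_n\hbmu\tp=\bG(\bX-\bs{1}_n\hbmu\tp)$ gives $F_2(\bG\bX,\bG\bW)=\bG\,F_2(\bX,\bW)$, i.e., $[\bG\bX,\bG\bXk]=\psi(\bG,[\bX,\bXk])$.

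\emph{Step 2 (the distributional identity).} Because $\bX$ has i.i.d.\ Gaussian rows, a direct covariance computation (using $\bG\bG\tp=\bs{I}_n$) shows $\bG\bX\eqd\bX$; similarly $\bG\bW\eqd\bW$ since $\bW$ has i.i.d.\ $\N(0,1)$ entries; and $\bX\indp\bW$ gives $(\bG\bX,\bG\bW)\eqd(\bX,\bW)$. Apply the measurable map $(\bsfX,\bsfW)\mapsto\big([\bsfX,F_2(\bsfX,\bsfW)],\,\hbmu(\bsfX),\,\hbS(\bsfX)\big)$ to both sides. On the right this yields $\big([\bX,\bXk],\hbmu,\hbS\big)$; on the left, using $\hbmu(\bG\bX)=\hbmu(\bX)$, $\hbS(\bG\bX)=\hbS(\bX)$ and the commutation from Step 1, it yields $\big(\psi(\bG,[\bX,\bXk]),\hbmu,\hbS\big)$. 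This is the displayed identity.

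\emph{Step 3 and the main obstacle.} From the displayed identity and the equal conditioning marginals, the standard fact that joint equality in distribution entails equality of regular conditional distributions gives $\psi(\bG,\cdot)_{*}\rho=\rho$ for a.e.\ value $(\bs{m},\bs{S})$ of $(\hbmu,\hbS)$ (and hence, for the fixed version of the conditional distribution in play, every admissible $(\bs{m},\bs{S})$); separability of $\mathcal{G}$ and continuity of $\psi$ promote this to hold simultaneously for all $\bG\in\mathcal{G}$, yielding $\mathcal{G}$-invariance of $\rho$ on $\mathcal{M}$. The crux is precisely this measure-theoretic bookkeeping: one must be careful that the conditional statement is about a genuine regular conditional distribution and that the clean unconditional identity of Step 2 does pin it down on the null event $\{(\hbmu,\hbS)=(\bs{m},\bs{S})\}$. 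Steps 1 and 2, by contrast, are routine --- linear algebra (the only point of care being uniqueness of the Gram--Schmidt factorization) and a short Gaussian computation, respectively.
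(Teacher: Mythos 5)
Your proposal is correct and follows essentially the same route as the paper: both arguments establish that $T(\bG\bX)=T(\bX)$, that $(\bG\bX,\bG\bW)\eqd(\bX,\bW)$ by Gaussianity, and that the knockoff construction is equivariant under left-multiplication by $\bG$ (with the Gram--Schmidt equivariance being the key algebraic fact), and then transfer the unconditional distributional identity to a statement about the conditional law given $T(\bX)$. The only cosmetic difference is that the paper routes the argument through the pair $(\bX,\bU)$ and only afterwards applies the deterministic formula for $\bXk$, whereas you fold the whole construction into a single measurable map $F_2$ and push forward once; and you are somewhat more explicit (your Step 3) about the ``a.e.\ vs.\ every'' subtlety in pinning down a regular conditional distribution on a null event, a point the paper treats more tersely but resolves in the same spirit, by working with the concrete version of the conditional law determined by the algorithm.
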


\begin{lemma}[Invariance after swapping]\label{lem:ldg-swap-invariant}
The probability measure of $[\bX,\bXk]_{\text{\emph{swap}}(j)}$ conditional on $\hbmu=\bs{m}$ and $\hbS=\bs{S}$ is $\mathcal{G}$-invariant on $\mathcal{M}_{}$.
\end{lemma}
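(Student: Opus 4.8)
The plan is to reduce the claim to Lemma~\ref{lem:ldg-invariant} by exploiting that swapping column $j$ of $\bX$ with column $j$ of $\bXk$ is a \emph{right} matrix multiplication, which commutes with the \emph{left} action $\psi$ of $\mathcal{G}$. Write $\sigma_j\colon\R^{n\times 2p}\to\R^{n\times 2p}$ for the map $[\bsfX,\bsfXk]\mapsto[\bsfX,\bsfXk]_{\text{swap}(j)}$, i.e.\ right multiplication by the fixed $2p\times 2p$ permutation matrix $\bs{P}_j$ transposing coordinates $j$ and $j+p$. Then $\sigma_j$ is a linear involution, hence a self-homeomorphism of $\R^{n\times 2p}$ with $\sigma_j^{-1}=\sigma_j$. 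On the event $\{\hbmu=\bs{m},\hbS=\bs{S}\}$ we have $[\bX,\bXk]\in\mathcal{M}$ almost surely, and (as I will check) $\sigma_j$ preserves $\mathcal{M}$, so $[\bX,\bXk]_{\text{swap}(j)}$ also lies in $\mathcal{M}$ and in particular still satisfies $\hbmu=\bs{m}$, $\hbS=\bs{S}$; therefore the conditional law of $[\bX,\bXk]_{\text{swap}(j)}$ given that event is exactly the pushforward $\sigma_{j\#}\mu$, where $\mu$ is the conditional law of $[\bX,\bXk]$ given the same event.

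First I would verify $\sigma_j(\mathcal{M})=\mathcal{M}$. This amounts to checking the five defining relations of $\mathcal{M}$ for $[\bsfX,\bsfXk]_{\text{swap}(j)}$ whenever they hold for $[\bsfX,\bsfXk]$. The two mean relations and the two Gram relations are symmetric under $\bsfX_j\leftrightarrow\bsfXk_j$ and follow at once; for the cross relation $(\bC\bsfXk)^{\top}\bC\bsfX=\bs{S}-\diag{\bss}$ one uses that $\diag{\bss}$ is diagonal, so the $(a,b)$ entry of $\bs{S}-\diag{\bss}$ coincides with $\bs{S}_{a,b}$ whenever $a\neq b$; examining the four cases according to whether each index equals $j$ then shows all relations persist (this also uses that $\bs{S}-\diag{\bss}$ is symmetric, so $(\bC\bsfX)^{\top}\bC\bsfXk=\bs{S}-\diag{\bss}$ holds on $\mathcal{M}$ too). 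Since $\sigma_j$ is an involution, $\sigma_j(\mathcal{M})\subseteq\mathcal{M}$ forces $\sigma_j(\mathcal{M})=\mathcal{M}$, so $\sigma_j$ restricts to a homeomorphism of $\mathcal{M}$.

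The commutation is then immediate: with $\bG\in\mathcal{G}$ acting on rows and $\bs{P}_j$ on columns, $\bG[\bsfX,\bsfXk]\bs{P}_j=(\bG[\bsfX,\bsfXk])\bs{P}_j$, that is $\psi(\bG,\sigma_j([\bsfX,\bsfXk]))=\sigma_j(\psi(\bG,[\bsfX,\bsfXk]))$ for all $[\bsfX,\bsfXk]\in\mathcal{M}$; equivalently $\sigma_j(\bG\mathcal{B})=\bG\,\sigma_j(\mathcal{B})$ for every Borel $\mathcal{B}\subseteq\mathcal{M}$. Combining this with $\sigma_j^{-1}=\sigma_j$ and the $\mathcal{G}$-invariance of $\mu$ from Lemma~\ref{lem:ldg-invariant}, for any Borel $\mathcal{B}\subseteq\mathcal{M}$ and $\bG\in\mathcal{G}$,
\[
(\sigma_{j\#}\mu)(\bG\mathcal{B})=\mu\big(\sigma_j(\bG\mathcal{B})\big)=\mu\big(\bG\,\sigma_j(\mathcal{B})\big)=\mu\big(\sigma_j(\mathcal{B})\big)=(\sigma_{j\#}\mu)(\mathcal{B}),
\]
so $\sigma_{j\#}\mu$ is $\mathcal{G}$-invariant on $\mathcal{M}$, which is the assertion of the lemma.

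I expect the only step requiring genuine work to be the verification that $\sigma_j$ preserves $\mathcal{M}$; the commutation and pushforward manipulations are purely formal. The point to watch there is that a single column is swapped, so that the off-diagonal entries of the cross-Gram constraint are unaffected by the diagonal perturbation $\diag{\bss}$ — this is exactly what makes the four-case check go through (and, incidentally, the same reasoning shows $\sigma_A$ preserves $\mathcal{M}$ for any $A\subseteq[p]$, though only the single-index case is needed here).
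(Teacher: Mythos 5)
Your proof is correct and takes essentially the same approach as the paper: prove that the swap map is a bijection of $\mathcal{M}$ onto itself, observe that it commutes with the $\mathcal{G}$-action (since one acts on the left by an $n\times n$ orthogonal matrix and the other on the right by a $2p\times 2p$ permutation matrix), and then derive $\mathcal{G}$-invariance of the pushforward from Lemma~\ref{lem:ldg-invariant} via the same three-step chain of equalities. Your entrywise, four-case verification that the cross-Gram constraint survives the swap is an equivalent rephrasing of the paper's check (which is phrased as reorganizing products of indicator functions), both hinging on $\diag{\bss}$ being diagonal and $\bs{S}-\diag{\bss}$ being symmetric.
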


\begin{lemma}[Uniqueness]\label{lem:ldg-unique}
The $\mathcal{G}$-invariant probability measure on  $\mathcal{M}_{}$ is unique. 
\end{lemma}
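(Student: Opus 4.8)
\textbf{Proof proposal for Lemma~\ref{lem:ldg-unique}.}
The plan is to derive the lemma from the classical fact that a \emph{compact} topological group acting \emph{continuously} and \emph{transitively} on a nonempty \emph{compact} metric space carries a unique invariant Borel probability measure (see, e.g., \citet[Chapter 13]{schneider2008stochastic} or \citet[Chapter 44]{fremlin2000measure}). The continuity of the operation $\psi$ on $\mathcal{M}$ has already been checked above, so it remains to verify that (i) $\mathcal{G}$ is compact, (ii) $\mathcal{M}$ is compact, and (iii) $\psi$ is transitive.

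Compactness is routine. The group $\mathcal{G}=\{\bG\in\mathcal{O}_n:\bG\bs{1}_n=\bs{1}_n\}$ is the preimage of the closed singleton $\{\bs{1}_n\}$ under the continuous map $\bG\mapsto\bG\bs{1}_n$, hence a closed subset of the compact group $\mathcal{O}_n$, hence compact. For $\mathcal{M}$: it is closed in $\R^{n\times 2p}$, being an intersection of preimages of closed sets under continuous (polynomial) maps; and it is bounded, because using $\bC\tp\bC=\bs{I}_n-\bs{1}_n\bs{1}_n\tp/n$ and $\bC\bC\tp=\bs{I}_{n-1}$ one has $\bsfX=\bs{1}_n\bs{m}\tp+\bC\tp\bC\bsfX$ with the two summands orthogonal, so $\|\bsfX\|_F^2=n\|\bs{m}\|^2+\tr\big((\bC\bsfX)\tp\bC\bsfX\big)=n\|\bs{m}\|^2+\tr(\bs{S})$, and likewise for $\bsfXk$. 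Thus $\mathcal{M}$ is a compact metric space (nonempty whenever $\bs{S}\succ\bs{0}$ and $\bs{0}\prec\diag{\bss}\prec 2\bs{S}$, since then the joint Gram matrix below is positive definite of rank $2p\le n-1$ and can be realized by columns lying in $\bs{1}_n^{\perp}$).

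The crux is transitivity. Fix $[\bsfX_1,\bsfXk_1],[\bsfX_2,\bsfXk_2]\in\mathcal{M}$. Using $\bC\tp\bC=\bs{I}_n-\bs{1}_n\bs{1}_n\tp/n$, $\bC\bs{1}_n=\bs{0}$, and the defining constraints of $\mathcal{M}$, a direct computation gives that, for $i=1,2$, the $n\times(2p+1)$ matrix $[\bs{1}_n,\bsfX_i,\bsfXk_i]$ has Gram matrix
\[
\begin{pmatrix} n & n\bs{m}\tp & n\bs{m}\tp \\ n\bs{m} & n\bs{m}\bs{m}\tp+\bs{S} & n\bs{m}\bs{m}\tp+\bs{S}-\diag{\bss} \\ n\bs{m} & n\bs{m}\bs{m}\tp+\bs{S}-\diag{\bss} & n\bs{m}\bs{m}\tp+\bs{S} \end{pmatrix},
\]
which does not depend on $i$. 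Now I use the standard fact that two finite tuples of vectors in $\R^n$ with identical Gram matrix are related by some $\bG\in\mathcal{O}_n$: the assignment $[\bs{1}_n,\bsfX_1,\bsfXk_1]\mapsto[\bs{1}_n,\bsfX_2,\bsfXk_2]$ extends to a well-defined linear isometry between the two (equal-dimensional) column spans, which in turn extends to an orthogonal transformation of $\R^n$ by mapping an orthonormal basis of one orthogonal complement to one of the other. Any such $\bG$ sends $\bs{1}_n\mapsto\bs{1}_n$, hence $\bG\in\mathcal{G}$, and satisfies $\psi(\bG,[\bsfX_1,\bsfXk_1])=[\bG\bsfX_1,\bG\bsfXk_1]=[\bsfX_2,\bsfXk_2]$; so $\psi$ is transitive.

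Granting (i)--(iii), uniqueness follows from the usual averaging argument: fix any $[\bsfX_0,\bsfXk_0]\in\mathcal{M}$, let $\mu$ be normalized Haar measure on the compact group $\mathcal{G}$, and note the pushforward of $\mu$ under $\bG\mapsto\psi(\bG,[\bsfX_0,\bsfXk_0])$ is $\mathcal{G}$-invariant; conversely, if $\nu$ is any $\mathcal{G}$-invariant probability measure on $\mathcal{M}$ then for every continuous $f$, invariance of $\nu$ and Fubini give $\int f\,d\nu=\int_{\mathcal{G}}\!\int_{\mathcal{M}} f(\psi(\bG,x))\,d\nu(x)\,d\mu(\bG)=\int_{\mathcal{M}}\!\int_{\mathcal{G}} f(\psi(\bG,x))\,d\mu(\bG)\,d\nu(x)$, and by transitivity together with translation invariance of Haar measure the inner integral is a constant independent of $x$, whence $\int f\,d\nu$ equals that same constant for \emph{every} $\mathcal{G}$-invariant probability measure; Riesz representation then forces uniqueness. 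The step I expect to be the main obstacle is transitivity—specifically the bookkeeping verifying that membership in $\mathcal{M}$ pins down the \emph{entire} joint Gram matrix (including the $\bs{1}_n$ row/column), which is exactly what allows a single orthogonal matrix to simultaneously reconcile $\bsfX$, $\bsfXk$, and the mean constraint.
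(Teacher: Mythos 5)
Your proof is correct, and it takes a genuinely different and in fact cleaner route to transitivity than the paper. The paper establishes transitivity via a three-part construction: it parametrizes the fiber $\mathcal{M}_{\bsfX}$ by the Stiefel manifold $\mathcal{F}_{n-1-p,p}$ (Lemma~\ref{lem:ldg-express-xk}), builds an explicit base point $[\bsfX_0,\bsfXk_0]$ from the eigendecomposition of $\bs{S}$, and then exhibits an explicit $\bG\in\mathcal{G}$ transporting the base point to an arbitrary point of $\mathcal{M}$; this requires substantial linear-algebra bookkeeping. You instead observe that the defining equations of $\mathcal{M}$ pin down the full Gram matrix of the $n\times(2p+1)$ tuple $[\bs{1}_n,\bsfX,\bsfXk]$ (after unwinding $\bC\tp\bC=\bs{I}_n-\bs{1}_n\bs{1}_n\tp/n$), and then invoke the standard fact that two tuples in $\R^n$ with identical Gram matrices are related by an $n\times n$ orthogonal matrix, which automatically fixes $\bs{1}_n$ and hence lies in $\mathcal{G}$. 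This collapses the three parts of the paper's argument into a few lines and avoids the Stiefel-manifold machinery altogether. One small virtue of the paper's explicit construction is that it incidentally exhibits a concrete parametrization of the fiber (via Lemma~\ref{lem:ldg-express-xk}), but that parametrization is used only within the transitivity proof, so nothing is lost by your shortcut. For the final uniqueness step, the paper cites Theorem~13.1.5 of \citet{schneider2008stochastic} directly, while you reproduce the standard Haar-averaging argument that underlies it; both are fine, and your version is self-contained. One minor bookkeeping point worth making explicit: the Gram matrix you display is positive definite (a row reduction reduces the check to $\bs{0}\prec\diag{\bss}\prec 2\bs{S}$, the condition imposed by Algorithm~\ref{alg:ldg}), so $\mathcal{M}\ne\emptyset$ when $n\ge 2p+1$, which is also what justifies the ``Gram matrix determines the tuple up to orthogonal transformation'' step without worrying about rank deficiency.
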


Combining Lemmas~\ref{lem:ldg-invariant}, \ref{lem:ldg-swap-invariant} and \ref{lem:ldg-unique} together, we conclude that given $\hbmu=\bs{m}$ and $\hbS=\bs{S}$, swapping $\bX_{j}$ and $\bXk_{j}$ leaves the distribution of $[\bX, \bXk]$ unchanged.  
Since if swapping one column does not change the distribution, then by induction swapping any set of columns will not change the distribution and this completes the proof. 

\begin{remark}
 Although not shown here, one can define the uniform distribution on $\mathcal{M}$ via the Hausdorff measure and show that it is also $\mathcal{G}$-invariant. Therefore, by the uniqueness of the invariant measure, $[\bX,\bXk]$ is distributed uniformly on $\mathcal{M}$.
 \end{remark}



\subsubsection{Proofs of Lemmas}\label{app:ldg-lemma}
Before proving the lemmas, we introduce some notation and properties for Gaussian matrices. Let $r$, $s$, and $t$ be any positive integers. 
For any matrix $\bs{A}\in \R^{s\times t}$, denote by $\ovec(\bs{A})$ the vector that concatenates its columns, i.e., $ ( \bs{A}_{1}\tp, \dots, \bs{A}_{t}\tp )\tp$.
Denote by $\otimes$ the Kronecker product. 
A $s\times t$ random matrix $\bs{A}$ is a Gaussian random matrix $\bs{A}\sim \N_{s,t}( \bs{M}, \bs{\Upsilon} \otimes \bs{\Sigma})$ if  $\ovec(\bs{A}\tp)\sim \N( \ovec(\bs{M}\tp) , \bs{\Upsilon} \otimes \bs{\Sigma})$ for some $\bs{M}\in \R^{s\times t}$ and matrices $\bs{\Upsilon}\succeq \bs{0}_{s\times s}$ and $\bs{\Sigma}\succeq \bs{0}_{t\times t}$.

If $\bs{A}\sim \N_{s,t}( \bs{M}, \bs{\Upsilon} \otimes \bs{\Sigma})$, then for any matrix $\bs{\Gamma}\in \R^{r\times s}$, $\ovec ( (\bs{\Gamma}\bs{A})\tp)=(\bs{\Gamma}\otimes \bs{I}_{t})\ovec(\bs{A}\tp)$ is still multivariate Gaussian and 
\begin{equation*}
\bs{\Gamma}\bs{A} \sim 	 \N_{r,t}( \bs{\Gamma}\bs{M}, (\bs{\Gamma}\bs{\Upsilon}\bs{\Gamma}\tp) \otimes \bs{\Sigma}) ,
\end{equation*}
because $(\bs{\Gamma}\otimes \bs{I}_{t})(\bs{\Upsilon} \otimes \bs{\Sigma})(\bs{\Gamma}\otimes \bs{I}_{t})\tp=(\bs{\Gamma}\bs{\Upsilon}\bs{\Gamma}\tp) \otimes (\bs{I}_{t}\bs{\Sigma}\bs{I}_{t})$ by the mixed-product property and transpose of Kronecker product. When the rows of $\bs{A}$ are i.i.d. samples from a multivariate Gaussian, $\bs{\Upsilon}=\bs{I}_{s}$ and $\bs{M}=\bs{1}_{s}\bmu\tp$ for some $\bmu\in \R^{t}$.
If further, $\bs{\Gamma}\bs{\Gamma}\tp=\bs{I}_{r}$, then 
\begin{equation}\label{eq:gaussian-matrix}
\bs{\Gamma}\bs{A} \sim 	 \N_{r,t}( \bs{\Gamma}\bs{1}_{s}\bmu\tp, \bs{I}_{r}\otimes \bs{\Sigma}).
\end{equation}
We write the Gram--Schmidt orthonormalization as a function $\Psi	(\cdot)$. We will make use of the property that for any $\bs{\Gamma}_{0}\in \mathcal{O}_{s}$ and any matrix $\bU_{0}\in \R^{s \times t}$ (for $s\ge t$), it holds that 
\begin{equation}\label{eq:ldg-gs-0}
\Psi(\bs{\Gamma}_{0}\bU_{0})=\bs{\Gamma}_{0} \Psi( \bU_{0}).
\end{equation}
See, e.g., \citet[Proposition 7.2]{eaton1983multivariate}.

\begin{proof}[Proof of Lemma~\ref{lem:ldg-invariant}]
Define $\nu(\mathcal{B})\, :=\, \Pcr{[\bX,\bXk]\in \mathcal{B}}{ \hbmu=\bs{m}, \hbS=\bs{S} }$ for any Borel subset $\mathcal{B}\subseteq \mathcal{M}$. 
For fixed $\bG\in \mathcal{G}$, we need to show the group operation given $\bG$, i.e., $g_{\bG}=\psi(\bG, \cdot)$, leaves $\nu$ unchanged. Define $\bX^{\prime}=\bG\bX$ and $\bXk^{\prime}=\bG\bXk$. We will show 
\[
[\bX,\bXk] \eqd [\bX^{\prime},\bXk^{\prime}] \mid T(\bX).
\]

By Equation~\eqref{eq:ldg-operation-1}
and $\bG\bs{1}_{n}=\bs{1}_{n}$, we have $T(\bG\bsfX)=T(\bsfX)$ 
for any $\bsfX\in \R^{n\times p}$. 
Applying the property in Equation~\eqref{eq:gaussian-matrix}, we have 
\[
\bG\bX\sim \N_{n,p}( \bs{1}_{n}\bmu\tp, \bs{I}_{n}\otimes\bS), 
\]
where we have used $\bG\bs{1}_{n}=\bs{1}_{n}$ and $\bG\in \mathcal{O}_{n}$. Thus $\bX^{\prime}\eqd\bX$. By Equation~\eqref{eq:ldg-gs-0} and the definition of $[\bQ,\bU]$ in Algorithm~\ref{alg:ldg},  
\begin{align}\label{eq:ldg-gu}
\Psi([\bs{1}_{n},\bX^{\prime}, \bG \bW])=\bG\Psi(\bG\tp[\bs{1}_{n},\bX^{\prime}, \bG \bW])=\bG\Psi([\bs{1}_{n},\bX, \bW])=[\bG\bQ, \bG\bU]. 
\end{align}
Let $\bU^{\prime}=\bG\bU$. Since $\bW$ is independent of $\bX$ and $\bG\bW$ has the same distribution as $\bW$, we have $(\bX, \bW)\eqd (\bX^{\prime}, \bG\bW)$.  This together with Equation~\eqref{eq:ldg-gu} implies $(\bX, \bU)\eqd (\bX^{\prime}, \bU^{\prime})$. 
Hence
\[
\Pcr{ \bX, \bU }{T(\bX)}=\Pcr{ \bX^{\prime}, \bU^{\prime}}{T(\bX^{\prime})}
\]
and since $T(\bX^{\prime})=T(\bX)$, we conclude
\[
(\bX, \bU) \eqd (\bX^{\prime}, \bU^{\prime}) \mid T(\bX). 
\]

Now recall we are conditioning on $T(\bX)=(\bs{m},\bs{S})$, and thus also $\bss$ and $\bL$. By Equation~\eqref{eq:ldg_ko_short} and the definition of $\bXk^{\prime}$,
\begin{align}
\bXk^{\prime} &=\; \bG\left(\bs{1}_n\hbmu\tp + (\bX-\bs{1}_n\hbmu\tp)(\bs{I}_p - \hbS^{-1}\diag{\bss}) + \bU\bL\right) \\
&=\; \bs{1}_n\hbmu\tp + (\bX^{\prime}-\bs{1}_n\hbmu\tp)(\bs{I}_p - \hbS^{-1}\diag{\bss}) + \bU^{\prime}\bL, 	
\end{align}
which would be the knockoff generated by Algorithm~\ref{alg:ldg} if $\bX^{\prime}$ was observed. As a consequence, 
\[
(\bX, \bXk) \eqd (\bX^{\prime}, \bXk^{\prime})\mid T(\bX).
\]

This shows that for any Borel subset $\mathcal{B}\subseteq \mathcal{M}$, $\nu(\mathcal{B})=\;\nu(g_{\bG^{-1}}(\mathcal{B})) $. We conclude that for any $\bG\in \mathcal{G}$ and any Borel subset $\mathcal{B}\subseteq \mathcal{M}$
\[
\nu(g_{\bG}(\mathcal{B}))    =\; \nu(\mathcal{B}),
\]
that is, the conditional probability measure of $[\bX,\bXk]$ given $T(\bX)=(\bs{m},\bs{S})$ is $\mathcal{G}$-invariant. 
\end{proof}

\begin{proof}[Proof of Lemma~\ref{lem:ldg-swap-invariant}]
Without loss of generality, we take $j=1$. Define a mapping $\phi: \R^{n\times(2p)}\mapsto \R^{n\times(2p)}$ by $\phi([\bsfX,\bsfXk])= [\bsfXsw,\bsfXksw]$, i.e., replacing $\bsfX$ and $\bsfXk$ with $\bsfXsw$ and $\bsfXksw$, respectively. It is easy to see that $\phi$ is isometric and $\phi^{-1}=\phi$. Furthermore, we will prove that $\phi$ is a bijective mapping of $\mathcal{M}$ to itself (Lemma~\ref{lem:M-bijective}). 
The conditional distribution of $\phi([\bX,\bXk])$ is the measure $\nu_{\phi}$ on $\mathcal{M}$ such that $\nu_{\phi}(\mathcal{B})=\nu(\phi^{-1}(\mathcal{B}))$, for any Borel subset $\mathcal{B}\subseteq \mathcal{M}$. We will show that $\nu_{\phi}$ is $\mathcal{G}$-invariant on $\mathcal{M}$ (Lemma~\ref{lem:M-invariant}). 

\begin{lemma}\label{lem:M-bijective}
$\phi$ is a bijective mapping of $\mathcal{M}$ to itself. 
\end{lemma}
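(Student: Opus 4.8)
The plan is to rewrite the five conditions defining $\mathcal{M}$ as two equations on the full matrix $\bsfZ:=[\bsfX,\bsfXk]\in\R^{n\times 2p}$ and then exploit that $\phi$ is right-multiplication by a permutation matrix. Put $\bar{\bs{m}}:=(\bs{m}\tp,\bs{m}\tp)\tp\in\R^{2p}$ and
\[
\bs{D}:=\begin{pmatrix} \bs{S} & \bs{S}-\diag{\bss}\\ \bs{S}-\diag{\bss} & \bs{S}\end{pmatrix}\in\R^{2p\times 2p},
\]
which is symmetric since $\bs{S}$ and $\diag{\bss}$ are. Writing out the four $p\times p$ blocks of $(\bC\bsfZ)\tp(\bC\bsfZ)=\bsfZ\tp(\bC\tp\bC)\bsfZ$ and using $(\bC\bsfX)\tp\bC\bsfXk = \big((\bC\bsfXk)\tp\bC\bsfX\big)\tp$, one sees that $[\bsfX,\bsfXk]\in\mathcal{M}$ if and only if $\bsfZ\tp\bs{1}_n = n\bar{\bs{m}}$ and $(\bC\bsfZ)\tp(\bC\bsfZ)=\bs{D}$. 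Since the text already records that $\phi$ is isometric with $\phi^{-1}=\phi$, it suffices to prove $\phi(\mathcal{M})\subseteq\mathcal{M}$: then $\mathcal{M}=\phi\big(\phi(\mathcal{M})\big)\subseteq\phi(\mathcal{M})$, so $\phi$ restricts to a bijection of $\mathcal{M}$ onto itself.

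Next I would let $\bs{P}\in\R^{2p\times 2p}$ be the permutation matrix that transposes the coordinates $1$ and $p+1$ and fixes all others, so that $\phi(\bsfZ)=\bsfZ\bs{P}$ for every $\bsfZ$. Then $(\bsfZ\bs{P})\tp\bs{1}_n=\bs{P}\tp(\bsfZ\tp\bs{1}_n)$ and $(\bC\bsfZ\bs{P})\tp(\bC\bsfZ\bs{P})=\bs{P}\tp\big((\bC\bsfZ)\tp(\bC\bsfZ)\big)\bs{P}$, so by the reformulation above it is enough to check the two fixed-point identities $\bs{P}\tp\bar{\bs{m}}=\bar{\bs{m}}$ and $\bs{P}\tp\bs{D}\bs{P}=\bs{D}$. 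The first is immediate because coordinates $1$ and $p+1$ of $\bar{\bs{m}}$ are both equal to $m_1$. For the second, note that conjugation by $\bs{P}$ only moves entries of $\bs{D}$ lying in row or column $1$ or $p+1$; a direct inspection shows $\bs{D}_{1,b}=\bs{D}_{p+1,b}$ for every $b\notin\{1,p+1\}$ (both equal $S_{1,b'}$, with $b'$ the residue of $b$ modulo $p$), and symmetrically for columns, while the $\{1,p+1\}$-submatrix of $\bs{D}$ has both diagonal entries equal to $S_{11}$ and both off-diagonal entries equal to $S_{11}-s_1$, hence is invariant under transposing its two indices. Therefore $\bs{P}\tp\bs{D}\bs{P}=\bs{D}$, and $\phi$ maps $\mathcal{M}$ into itself.

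The argument is essentially bookkeeping; the one place to be careful is the identity $\bs{P}\tp\bs{D}\bs{P}=\bs{D}$, i.e., that reassigning the first column of $\bsfX$ to the ``$\bsfXk$-block'' and vice versa does not disturb the block structure of $\bs{D}$. This works precisely because the correction $\diag{\bss}$ appears only on the off-diagonal blocks and contributes the same scalar $s_1$ to the entry $\bs{D}_{1,p+1}=\bs{D}_{p+1,1}$ regardless of which block the swapped column is considered to belong to; all the remaining entries are untouched by the swap, and every column-mean constraint is preserved because the two swapped columns share the common mean $m_1$.
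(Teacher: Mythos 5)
Your proof is correct and takes a genuinely different, cleaner route than the paper's. The paper proves the equivalence $\phi([\bsfX,\bsfXk])\in\mathcal{M}\iff[\bsfX,\bsfXk]\in\mathcal{M}$ by expanding each of the five defining constraints into products of scalar indicator functions over index pairs and then carefully exchanging individual factors between these products. You instead compress the five constraints into the two matrix equations $\bsfZ\tp\bs{1}_n=n\bar{\bs{m}}$ and $(\bC\bsfZ)\tp(\bC\bsfZ)=\bs{D}$ on $\bsfZ=[\bsfX,\bsfXk]$, observe that $\phi$ is right-multiplication by the transposition matrix $\bs{P}$ exchanging indices $1$ and $p+1$, and reduce the whole problem to the two fixed-point identities $\bs{P}\tp\bar{\bs{m}}=\bar{\bs{m}}$ and $\bs{P}\tp\bs{D}\bs{P}=\bs{D}$. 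The linear-algebraic repackaging buys you a much shorter verification (a direct inspection of the block structure of $\bs{D}$) and also makes transparent why the $\diag{\bss}$ correction does not break the symmetry: $s_1$ sits only in the $(1,p+1)$ and $(p+1,1)$ entries, which the transposition merely exchanges. The paper's index-level argument accomplishes exactly the same bookkeeping but is harder to read. One very minor quibble: when you say $b'$ is ``the residue of $b$ modulo $p$,'' this gives $0$ rather than $p$ when $b\in\{p,2p\}$; you clearly mean $b'=b$ if $b\le p$ and $b'=b-p$ otherwise, which is what the inspection needs. This does not affect correctness.
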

\begin{proof} 
$\phi$ is easily seen to be injective, and to show surjectivity, we will first show $\phi(\mathcal{M})\subseteq\mathcal{M}$. Combining this with $\phi^{-1}=\phi$ gives $\mathcal{M}\subseteq\phi^{-1}(\mathcal{M})=\phi(\mathcal{M})$, and thus $\phi(\mathcal{M})=\mathcal{M}$ so $\phi$ is surjective from $\mathcal{M}$ to $\mathcal{M}$. We now complete the proof by showing something even stronger than $\phi(\mathcal{M})\subseteq\mathcal{M}$, namely the equivalence $\phi([\bsfX,\bsfXk])\in \mathcal{M} \iff [\bsfX,\bsfXk]\in\mathcal{M}$. 

Translating this equivalence to an equality of indicator functions, we need to show that 
\begin{align*}
&\;\one{\bsfX^{\top}\bs{1}_n/n = \bs{m}}\one{(\bC\bsfX)^{\top}\bC\bsfX = \bs{S}}\one{\bsfXk^{\top}\bs{1}_n/n = \bs{m}}\one{(\bC\bsfXk)^{\top}\bC\bsfXk = \bs{S}}\one{(\bC\bsfXk)^{\top}\bC\bsfX = \bs{S}-\text{diag}\{\bss\}} \\
= &\;\one{\bsfXsw^{\top}\bs{1}_n/n = \bs{m}}\one{(\bC\bsfXsw)^{\top}\bC\bsfXsw = \bs{S}} \\
&\;\cdot\one{\bsfXksw^{\top}\bs{1}_n/n = \bs{m}}\one{(\bC\bsfXksw)^{\top}\bC\bsfXksw = \bs{S}}\one{(\bC\bsfXksw)^{\top}\bC\bsfXsw = \bs{S}-\text{diag}\{\bss\}},
\end{align*}
where the righthand side is the same as the lefthand side but with $\bsfX$ and $\bsfXk$ replaced with $\bsfXsw$ and $\bsfXksw$, respectively. First note that for the first and third indicator functions on the lefthand side,
\[\one{\bsfX^{\top}\bs{1}_n/n = \bs{m}}\one{\bsfXk^{\top}\bs{1}_n/n = \bs{m}} = \left(\prod_{j=1}^p\one{\bsfX_j^{\top}\bs{1}_n/n=m_j}\right)\left(\prod_{j=1}^p\one{\bsfXk_j^{\top}\bs{1}_n/n=m_j}\right)\]
and exchanging the first term in each product and compressing the products each back into single indicator functions gives $\one{\bsfXsw^{\top}\bs{1}_n/n = \bs{m}}\one{\bsfXksw^{\top}\bs{1}_n/n = \bs{m}}$, so it just remains to show that 
\begin{align*}
&\;\one{(\bC\bsfX)^{\top}\bC\bsfX = \bs{S}}\one{(\bC\bsfXk)^{\top}\bC\bsfXk = \bs{S}}\one{(\bC\bsfXk)^{\top}\bC\bsfX = \bs{S}-\text{diag}\{\bss\}} \\
= &\;\one{(\bC\bsfXsw)^{\top}\bC\bsfXsw = \bs{S}}\one{(\bC\bsfXksw)^{\top}\bC\bsfXksw = \bs{S}}\one{(\bC\bsfXksw)^{\top}\bC\bsfXsw = \bs{S}-\text{diag}\{\bss\}}.
\end{align*}
Again it is useful to rewrite the three indicator functions as products:
\begin{align*}
&\;\one{(\bC\bsfX)^{\top}\bC\bsfX = \bs{S}}\one{(\bC\bsfXk)^{\top}\bC\bsfXk = \bs{S}}\one{(\bC\bsfXk)^{\top}\bC\bsfX = \bs{S}-\text{diag}\{\bss\}} \\
= &\;\left(\prod_{1\leq j\le k\leq p}^p\one{(\bC\bsfX_j)^{\top}\bC\bsfX_k = \bs{S}_{j,k}}\right)\left(\prod_{1\leq j\le k\leq p}^p\one{(\bC\bsfXk_j)^{\top}\bC\bsfXk_k = \bs{S}_{j,k}}\right)\left(\prod_{j,k=1}^p\one{(\bC\bsfXk_j)^{\top}\bC\bsfX_k = \bs{S}_{j,k}-1_{\{j=k\}}s_j}\right).
\end{align*}
Now if we exchange the terms in the first product with $k>j=1$ with the same terms in the third product, and exchange the terms in the second product with $k>j=1$ with the terms in the third product with $j>k=1$, we can compress the products each back into single indicator functions again to get $\one{(\bC\bsfXsw)^{\top}\bC\bsfXsw = \bs{S}}\one{(\bC\bsfXksw)^{\top}\bC\bsfXksw = \bs{S}}\one{(\bC\bsfXksw)^{\top}\bC\bsfXsw = \bs{S}-\text{diag}\{\bss\}}$.
We conclude that $ [\bsfX,\bsfXk]\in\mathcal{M} \iff \phi([\bsfX,\bsfXk])\in \mathcal{M}$. 
\end{proof}

\begin{lemma}\label{lem:M-invariant}
$\nu_{\phi}$ is $\mathcal{G}$-invariant on $\mathcal{M}$.
\end{lemma}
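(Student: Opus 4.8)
The plan is to reduce this to Lemma~\ref{lem:ldg-invariant} via the observation that the group operation $g_{\bG}=\psi(\bG,\cdot)$ and the swap map $\phi$ commute as transformations of $\mathcal{M}$. Indeed, $g_{\bG}$ multiplies \emph{every} column of $[\bsfX,\bsfXk]$ on the left by $\bG$, whereas $\phi$ merely transposes two of the $2p$ columns, and left-multiplication commutes with any permutation of columns; hence for all $\bG\in\mathcal{G}$ and $[\bsfX,\bsfXk]\in\mathcal{M}$,
\[
g_{\bG}\big(\phi([\bsfX,\bsfXk])\big)=\phi\big(g_{\bG}([\bsfX,\bsfXk])\big),
\]
i.e.\ $g_{\bG}\circ\phi=\phi\circ g_{\bG}$ on $\mathcal{M}$. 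Because $\phi^{-1}=\phi$, this identity also reads $g_{\bG}\circ\phi^{-1}=\phi^{-1}\circ g_{\bG}$.

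Before using this, I would record the routine measurability facts: $g_{\bG}$ is a homeomorphism of $\mathcal{M}$ onto itself (continuous by continuity of matrix multiplication, with continuous inverse $g_{\bG^{-1}}$), and $\phi$ is a homeomorphism of $\mathcal{M}$ onto itself by Lemma~\ref{lem:M-bijective} together with the fact that $\phi$ is an isometry; therefore both maps preserve the Borel $\sigma$-algebra on $\mathcal{M}$, so every set appearing below is Borel.

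Now, for any Borel $\mathcal{B}\subseteq\mathcal{M}$ and any $\bG\in\mathcal{G}$, using $\nu_{\phi}(\cdot)=\nu(\phi^{-1}(\cdot))$, then the commutation identity, then the $\mathcal{G}$-invariance of $\nu$ from Lemma~\ref{lem:ldg-invariant} (which gives $\nu(g_{\bG}\mathcal{C})=\nu(\mathcal{C})$ for every Borel $\mathcal{C}\subseteq\mathcal{M}$), we get
\[
\nu_{\phi}(g_{\bG}\mathcal{B})=\nu\big(\phi^{-1}(g_{\bG}\mathcal{B})\big)=\nu\big(g_{\bG}(\phi^{-1}\mathcal{B})\big)=\nu(\phi^{-1}\mathcal{B})=\nu_{\phi}(\mathcal{B}).
\]
Since $\bG$ and $\mathcal{B}$ were arbitrary, $\nu_{\phi}$ is $\mathcal{G}$-invariant on $\mathcal{M}$. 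The only step requiring any thought is the commutation $g_{\bG}\circ\phi=\phi\circ g_{\bG}$, and even that is immediate because the two operations act on ``opposite sides'' of the matrix (one on the row index, one on the column index); the remainder is a one-line bookkeeping consequence of Lemma~\ref{lem:ldg-invariant}.
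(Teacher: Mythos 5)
Your proof is correct and follows essentially the same route as the paper's: both establish the commutation identity $g_{\bG}\circ\phi=\phi\circ g_{\bG}$ on $\mathcal{M}$ and then push it through the definition of $\nu_{\phi}$ together with the $\mathcal{G}$-invariance of $\nu$ from Lemma~\ref{lem:ldg-invariant}. The only difference is cosmetic (you write $\phi^{-1}$ where the paper writes $\phi$, which is the same map, and you spell out the measurability bookkeeping that the paper leaves implicit).
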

\begin{proof}
For any $\bG\in \mathcal{G}$, the group operation $g_{\bG}=\psi(\bG, \cdot)$ is exchangeable with $\phi$ because
\begin{align*}
\psi(\bG,\phi([\bsfX,\bsfXk])) &=\; [\bG\bsfXsw,\bG\bsfXksw] \\
&=\; [[\bG\bsfXk_1,\,\bG\bsfX_{\text{-}1}], [\bG\bsfX_1,\, \bG\bsfXk_{\text{-}1}] ] \\
&=\; \phi(  [\bG\bsfX,\bG\bsfXk])). 	\\
&=\; \phi( \psi(\bG, [\bsfX,\bsfXk])). 	
\end{align*}
Thus for any Borel subset $\mathcal{B}\subseteq \mathcal{M}$ ,  
\begin{align*}
    \nu_{\phi}(g_{\bG}\mathcal{B})&=\,\nu(\phi(g_{\bG}\mathcal{B}))\\
    &=\; \nu(g_{\bG}(\phi(\mathcal{B})))\\
    &=\; \nu(\phi(\mathcal{B}))\\
    &=\; \nu_{\phi}(\mathcal{B}),
\end{align*} 
where the third equality follows from Lemma~\ref{lem:ldg-invariant}. Thus we conclude that $\nu_{\phi}$ is $\mathcal{G}$-invariant. 
\end{proof}

\end{proof}
\begin{proof}[Proof of Lemma~\ref{lem:ldg-unique}]
Before the proof, we list a few results that will be used.
\begin{enumerate}[{Fact} 1.]
    \item For an operation $\psi$ of a group $\mathcal{G}$ on a space $\setE$, if there is some $z\in \setE$ such that for any $y\in \setE$ there exists $g_y\in \mathcal{G}$ such that $\psi(g_y,z)=y$, then $\psi$ is transitive. This is because for any $x,y\in \setE$, $\psi(g_{x}^{-1},x)=\psi(g_{x}^{-1}, \psi(g_{x}, z))=\psi(g_{x}^{-1}g_{x}, z)=z$  and $\psi(g_y g_{x}^{-1},x)=\psi(g_y,\psi(g_{x}^{-1},x))=\psi(g_y,z)=y $. \label{fact:transitive}
    \item For any compact Hausdorff\footnote{A topological space is \emph{Hausdorff} if every two different points can be separated by two disjoint open sets. } topological group $\mathcal{G}$, there exists a finite Borel measure $\nu$, called a \emph{Haar measure}, such that for any $g\in \mathcal{G}$ and Borel subset $\mathcal{B}\subseteq \mathcal{G}$, $\nu(\mathcal{B})=\nu( g\mathcal{B})=\nu( \mathcal{B}g)$ \citep[441E, 442I(c)]{fremlin2000measure}. As an example, the orthogonal group $\mathcal{O}_{n}$ has a Haar measure \citep[Chapter 6.2]{eaton1983multivariate}. \label{fact:exist-haar}
\end{enumerate}
The key theorem we use is the following.
\begin{lemma}[Theorem~13.1.5 in \citet{schneider2008stochastic}]\label{lem:unique-invariant}
Suppose that the compact group $\mathcal{G}$ operates continuously and transitively on the Hausdorff space $\setE$ and that $\mathcal{G}$ and $\setE$ have countable bases. Let $\nu$ be a Haar measure on $\mathcal{G}$ with $\nu(\mathcal{G}) = 1$. Then there exists a unique $\mathcal{G}$-invariant Borel measure $\rho$ on $\setE$ with $\rho(\setE) = 1$.
\end{lemma}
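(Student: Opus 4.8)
The plan is to deduce Lemma~\ref{lem:ldg-unique} directly from Lemma~\ref{lem:unique-invariant} (Theorem~13.1.5 of \citet{schneider2008stochastic}) by checking its hypotheses for the group $\mathcal{G}=\{\bG\in\mathcal{O}_n:\bG\bs{1}_n=\bs{1}_n\}$ operating on $\mathcal{M}$ through $\psi$. Three of the four hypotheses are quick. First, $\mathcal{G}$ is compact, being the intersection of the compact orthogonal group $\mathcal{O}_n$ with the closed set $\{\bG:\bG\bs{1}_n=\bs{1}_n\}$. Second, $\mathcal{G}$ and $\mathcal{M}$, viewed as subspaces of the finite-dimensional Euclidean spaces $\R^{n\times n}$ and $\R^{n\times 2p}$ with their induced metrics, are Hausdorff and have countable bases. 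Third, a normalized Haar measure on $\mathcal{G}$ exists: by Fact~\ref{fact:exist-haar} the compact Hausdorff topological group $\mathcal{G}$ carries a finite Haar measure, which we rescale to total mass $1$. The continuity of $\psi$ was already established in the main text, and $\mathcal{M}$ is nonempty since, as noted there, the output $[\bX,\bXk]$ of Algorithm~\ref{alg:ldg} lies in it when $(\hbmu,\hbS)=(\bs{m},\bs{S})$.

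The one step requiring real work is transitivity of $\psi$. By Fact~\ref{fact:transitive} it suffices to show that any two points $[\bsfX,\bsfXk]$ and $[\bsfX',\bsfXk']$ of $\mathcal{M}$ are connected by some $\bG\in\mathcal{G}$, i.e.\ $\bG\bsfX=\bsfX'$ and $\bG\bsfXk=\bsfXk'$. The key computation is that the constraints defining $\mathcal{M}$ completely determine the Gram matrix of the augmented $n\times(2p+1)$ array $[\bs{1}_n,\bsfX,\bsfXk]$: using the centering identity $\bC\tp\bC=\bs{I}_n-\bs{1}_n\bs{1}_n\tp/n$ one obtains $\bsfX\tp\bs{1}_n=\bsfXk\tp\bs{1}_n=n\bs{m}$, $\bsfX\tp\bsfX=\bsfXk\tp\bsfXk=\bs{S}+n\bs{m}\bs{m}\tp$, and $\bsfXk\tp\bsfX=\bs{S}-\diag{\bss}+n\bs{m}\bs{m}\tp$, all of which depend only on $(n,\bs{m},\bs{S},\bss)$ and not on the particular point chosen. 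Hence $[\bs{1}_n,\bsfX,\bsfXk]$ and $[\bs{1}_n,\bsfX',\bsfXk']$ have equal Gram matrices, and (since $n>2p$ forces $n\ge 2p+1$, so the array has at most $n$ columns) a standard fact of linear algebra — two real matrices with equal Gram matrices are related by left multiplication by some orthogonal matrix, seen via reduced singular value decompositions sharing a common right factor and then extending the left factors to elements of $\mathcal{O}_n$ — furnishes $\bG\in\mathcal{O}_n$ with $\bG[\bs{1}_n,\bsfX,\bsfXk]=[\bs{1}_n,\bsfX',\bsfXk']$. Reading off the first column gives $\bG\bs{1}_n=\bs{1}_n$, so $\bG\in\mathcal{G}$, and reading off the remaining columns gives $\psi(\bG,[\bsfX,\bsfXk])=[\bsfX',\bsfXk']$, establishing transitivity.

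With all four hypotheses of Lemma~\ref{lem:unique-invariant} verified, it follows that $\mathcal{M}$ admits a unique $\mathcal{G}$-invariant Borel probability measure, which is exactly the assertion of Lemma~\ref{lem:ldg-unique}. The main obstacle is the transitivity argument; within it the two delicate points are the Gram-matrix computation — one must augment by the column $\bs{1}_n$ and use the centering identity to see that every entry of the Gram matrix is pinned down by $(n,\bs{m},\bs{S},\bss)$ — and the invocation of the orthogonal-equivalence fact, which should be accompanied by the brief SVD justification above since it is precisely what converts the statement ``equal Gram matrices'' into an honest element of the group $\mathcal{G}$.
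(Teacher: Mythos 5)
The statement you have been handed, Lemma~\ref{lem:unique-invariant}, is a verbatim citation of a theorem from \citet{schneider2008stochastic} and is not proved in the paper. What you have actually set out to prove is Lemma~\ref{lem:ldg-unique} (uniqueness of the $\mathcal{G}$-invariant probability measure on $\mathcal{M}$), which is the surrounding context in which Lemma~\ref{lem:unique-invariant} is invoked. That is the sensible reading of the task, so I will compare against the paper's proof of Lemma~\ref{lem:ldg-unique}.

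Your hypothesis-checking (compactness of $\mathcal{G}$, Hausdorff and second-countability of both spaces, existence of a normalized Haar measure, continuity of $\psi$) matches the paper's. Where you genuinely diverge is the transitivity argument. The paper fixes a concrete reference point $[\bsfX_0,\bsfXk_0]$, parameterizes the conditional support $\mathcal{M}_{\bsfX}$ by the Stiefel manifold $\mathcal{F}_{n-1-p,p}$ through the bijection $\varphi_{\bsfX}$ (Lemma~\ref{lem:ldg-express-xk}), and builds the connecting $\bG\in\mathcal{G}$ by hand from an eigendecomposition of $\bs{S}$ together with certain $\bQ_{\bsfX}$, $\bQ_{\bsfXk}$, and $\bZ_{\bsfX}$ blocks. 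You instead observe that every entry of the Gram matrix of the augmented array $[\bs{1}_n,\bsfX,\bsfXk]$ is pinned down by $(n,\bs{m},\bs{S},\bss)$ via the centering identity $\bC\tp\bC = \bs{I}_n - \bs{1}_n\bs{1}_n\tp/n$, and then invoke the standard linear-algebra fact that two real $n\times k$ matrices with equal Gram matrices differ by a left orthogonal factor. Reading off the first column forces that factor to fix $\bs{1}_n$, hence lie in $\mathcal{G}$. This is correct and considerably shorter than the paper's argument; it sidesteps the entire Stiefel-manifold machinery. The trade-off is that the paper's explicit parameterization via $\varphi_{\bsfX}$ gives a constructive handle on the conditional fiber $\mathcal{M}_{\bsfX}$ (used for the reference-point construction in Part~2 of its proof), whereas your argument proves orthogonal equivalence nonconstructively. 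One small stylistic note: your parenthetical ``since $n>2p$ forces $n\ge 2p+1$, so the array has at most $n$ columns'' is harmless but unnecessary — the Gram-equivalence fact holds for any $n,k$, and phrasing it as if the column count mattered slightly misdirects the reader.
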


Now we are ready to prove Lemma~\ref{lem:ldg-unique}.  
Note $\mathcal{G}$ and $\mathcal{M}$ are compact subspaces of  the vectorized spaces, and Fact~\ref{fact:exist-haar} ensures the existence of a Haar measure on $\mathcal{G}$. Since $\psi$ is continuous, as long as $\psi$ is transitive we can apply Lemma~\ref{lem:unique-invariant} and conclude that the $\mathcal{G}$-invariant probability measure on $\mathcal{M}_{}$ is unique. 

To show $\psi$ is transitive by Fact~\ref{fact:transitive}, we first fix a point $[\bsfX_{0},\bsfXk_{0}]$ and then show for any $[\bsfX,\bsfXk]\in \mathcal{M}_{}$, we can find $\bG\in \mathcal{G}$ such that $\psi(\bG, [\bsfX_{0},\bsfXk_{0}])=[\bsfX,\bsfXk]$. 


\textbf{Part 1.} We begin with representing $\bsfXk$ using the Stiefel Manifold.
Define $\mathcal{M}_{1}=\{\bsfX\in \R^{n\times p}:\, \bsfX^{\top}\bs{1}_n/n = \bs{m},\;(\bC\bsfX)^{\top}\bC\bsfX = \bs{S}\}=\{\bsfX\in \R^{n\times p}:\, T(\bsfX)=(\bs{m}, \bs{S})\}$. 
For any $\bsfX\in \mathcal{M}_{1}$, define 
\[
\mathcal{M}_{\bsfX} = \left\{\bsfXk\in \R^{n\times p}: \bsfXk^{\top}\bs{1}_n/n = \bs{m},\;(\bC\bsfXk)^{\top}\bC\bsfXk = \bs{S},\;(\bC\bsfXk)^{\top}\bC\bsfX = \bs{S}-\text{diag}\{\bss\}\right\}.
\]
Let $\bZ_{\bsfX}$ be a $n\times (n-1-p)$ matrix whose columns form an orthonormal basis for the orthogonal complement of $\text{span}(\left[ \bs{1}_n,\, \bsfX \right])$. Recall that $\mathcal{F}_{n-1-p,p}$ is the set of $(n-1-p) \times p$ real matrices whose columns form an orthonormal set in $\R^{n-1-p}$.  Define $\varphi_{\bsfX}: \mathcal{F}_{n-1-p,p} \mapsto  \R^{n\times p}$ by
\[
\varphi_{\bsfX}(\bsfV) =  \bs{1}_n\bs{m}^{\top} + (\bsfX-\bs{1}_n\bs{m}^{\top})(\bs{I}_p - \bs{S}^{-1}\diag{\bss}) +\bZ_{\bsfX}\bsfV \bL.
\]
The following result tells us that for any $[\bsfX,\bsfXk]\in\mathcal{M}$, there exists a $\bsfV\in \mathcal{F}_{n-1-p,p}$ such that $\bsfXk = \varphi_{\bsfX}(\bsfV)$, and thus we are implicitly decomposing $\bU$ from Algorithm~\ref{alg:ldg} into $\bZ_{\bX}\bV$ for some random $\bV$, and we think of $\bsfV$ as a realization of this $\bV$.
\begin{lemma}\label{lem:ldg-express-xk}
$\varphi_{\bsfX}$ is a bijective mapping from $\mathcal{F}_{n-1-p,p}$ to $\mathcal{M}_{\bsfX}$.
\end{lemma}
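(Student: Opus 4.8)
The plan is to show that the three defining constraints of $\mathcal{M}_{\bsfX}$ are exactly equivalent to $\bsfXk$ being of the form $\varphi_{\bsfX}(\bsfV)$ for a (necessarily unique) $\bsfV$ in the Stiefel manifold. Throughout I would abbreviate $\bs{D} := \bs{I}_p - \bs{S}^{-1}\diag{\bss}$ and use the three identities characterizing $\bZ_{\bsfX}$, namely $\bZ_{\bsfX}\tp\bs{1}_n = \bs{0}$, $\bZ_{\bsfX}\tp\bsfX = \bs{0}$, $\bZ_{\bsfX}\tp\bZ_{\bsfX} = \bs{I}_{n-1-p}$, together with the fact that $\bC\tp\bC = \bs{I}_n - \bs{1}_n\bs{1}_n\tp/n$ lets one rewrite any $\bC$-quadratic form as an ordinary centered Gram matrix whenever convenient (e.g. $(\bC\bsfX)\tp\bC\bsfX = (\bsfX - \bs{1}_n\bs{m}\tp)\tp(\bsfX - \bs{1}_n\bs{m}\tp) = \bs{S}$ on $\mathcal{M}_1$). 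I would first record two one-line algebraic identities that make everything telescope: $\bs{D}\tp\bs{S}\bs{D} + \bL\tp\bL = \bs{S}$ and $\bs{S}\bs{D} = \bs{D}\tp\bs{S} = \bs{S} - \diag{\bss}$, both immediate from $\bL\tp\bL = 2\diag{\bss} - \diag{\bss}\bs{S}^{-1}\diag{\bss}$.

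Next I would check $\varphi_{\bsfX}$ maps into $\mathcal{M}_{\bsfX}$: for $\bsfV \in \mathcal{F}_{n-1-p,p}$, left-multiplying $\varphi_{\bsfX}(\bsfV)$ by $\bs{1}_n\tp$ annihilates the $\bs{D}$- and $\bZ_{\bsfX}$-terms and returns $n\bs{m}\tp$; left-multiplying by $\bC$ annihilates the $\bs{1}_n\bs{m}\tp$-term, and since $(\bC\bZ_{\bsfX})\tp\bC\bsfX = \bZ_{\bsfX}\tp\bsfX = \bs{0}$ and $(\bC\bZ_{\bsfX})\tp\bC\bZ_{\bsfX} = \bs{I}$, the Gram matrix of $\bC\varphi_{\bsfX}(\bsfV)$ collapses to $\bs{D}\tp\bs{S}\bs{D} + \bL\tp\bsfV\tp\bsfV\bL = \bs{D}\tp\bs{S}\bs{D} + \bL\tp\bL = \bs{S}$ (using $\bsfV\tp\bsfV = \bs{I}_p$), and its cross-Gram with $\bC\bsfX$ to $\bs{D}\tp\bs{S} = \bs{S} - \diag{\bss}$. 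Injectivity is then immediate because $\bZ_{\bsfX}$ has full column rank and $\bL$ is invertible (its Gram $2\diag{\bss} - \diag{\bss}\bs{S}^{-1}\diag{\bss}$ is positive definite under $\bs{0} \prec \diag{\bss} \prec 2\bs{S}$), so $\bZ_{\bsfX}\bsfV_1\bL = \bZ_{\bsfX}\bsfV_2\bL$ forces $\bsfV_1 = \bsfV_2$.

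The main content is surjectivity. Given $\bsfXk \in \mathcal{M}_{\bsfX}$, set $\bs{R} := \bsfXk - \bs{1}_n\bs{m}\tp - (\bsfX - \bs{1}_n\bs{m}\tp)\bs{D}$. A short computation from the three constraints and $\bs{S}\bs{D} = \bs{S} - \diag{\bss}$ gives $\bs{1}_n\tp\bs{R} = \bs{0}$ and $\bsfX\tp\bs{R} = \bs{0}$, so the columns of $\bs{R}$ lie in the orthogonal complement of $\mathrm{span}([\bs{1}_n,\,\bsfX])$; since $(\bC\bsfX)\tp\bC\bsfX = \bs{S} \succ \bs{0}$ forces $[\bs{1}_n,\,\bsfX]$ to have full column rank $p+1$, that complement is precisely the column space of $\bZ_{\bsfX}$ and has dimension $n-1-p$, so $\bZ_{\bsfX}$ and $\mathcal{F}_{n-1-p,p}$ are the right objects. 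Thus $\bs{R} = \bZ_{\bsfX}\bZ_{\bsfX}\tp\bs{R}$, and I would set $\bsfV := (\bZ_{\bsfX}\tp\bs{R})\bL^{-1}$, which gives $\varphi_{\bsfX}(\bsfV) = \bsfXk$ by construction. It only remains to verify $\bsfV\tp\bsfV = \bs{I}_p$; since $\bZ_{\bsfX}\tp\bs{R}$ has the same Gram matrix as $\bs{R}$, this reduces to $\bs{R}\tp\bs{R} = \bL\tp\bL$, which I would obtain by expanding $\bs{R}\tp\bs{R}$ using $(\bC\bsfXk)\tp\bC\bsfXk = \bs{S}$, $(\bC\bsfXk)\tp\bC\bsfX = \bs{S} - \diag{\bss}$, $(\bC\bsfX)\tp\bC\bsfX = \bs{S}$ and the two recorded identities; all terms telescope to $2\diag{\bss} - \diag{\bss}\bs{S}^{-1}\diag{\bss} = \bL\tp\bL$.

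I do not expect a genuine obstacle here — the lemma is essentially just the assertion that the affine map $\varphi_{\bsfX}$ is the coordinate chart identifying the fibre $\mathcal{M}_{\bsfX}$ with a Stiefel manifold — but the one place needing care is the bookkeeping that lets the $\bC$-quadratic-form constraints be manipulated as ordinary Gram matrices (consistently replacing $\bsfX\tp(\cdot)$ by $(\bsfX - \bs{1}_n\bs{m}\tp)\tp(\cdot)$ whenever $\bs{1}_n\tp(\cdot) = \bs{0}$), and noting that the dimension count $n - 1 - p \ge p$, which is exactly $n > 2p$, is what guarantees $\mathcal{F}_{n-1-p,p}$ is nonempty.
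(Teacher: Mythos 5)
Your proposal is correct and follows essentially the same four-step outline as the paper's proof (invertibility of $\bL$, $\varphi_{\bsfX}(\mathcal{F}_{n-1-p,p})\subseteq\mathcal{M}_{\bsfX}$, injectivity, surjectivity), with surjectivity handled by the same decomposition into components along $\bs{1}_n$, $\bsfX-\bs{1}_n\bs{m}\tp$, and $\bZ_{\bsfX}$. Your packaging is a little slicker — introducing $\bs{D}$ and the two telescoping identities up front, and defining the residual $\bs{R}$ directly instead of the paper's coefficients $\bs{\alpha},\bs{\Lambda},\bs{\Theta}$ — but the underlying computation is identical.
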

The proof of Lemma~\ref{lem:ldg-express-xk} involves mainly linear algebra and is deferred to the end of this section. 

\textbf{Part 2.} We now define $[\bsfX_{0},\bsfXk_{0}]$.  Let  the eigenvalue decomposition of $\bs{S}$ be $\bG_0 \bs{D}^{2} \bG_0\tp$, where $\bs{D}$ is a $p\times p$ diagonal matrix with positive non-increasing diagonal entries and $\bG_0\in \mathcal{O}_{p}$. Define a  $(n-1)\times p$ matrix $\bsfX_{*}$ and a $(n-1)\times (n-1-p)$ matrix $\bZ_{*}$ as
\[
\bsfX_{*}=\begin{bmatrix}
\bs{D}\bG_{0}\tp \\
\bs{0}_{(n-1-p)\times p}\\	
\end{bmatrix}, \; 
\bZ_{*}=\begin{bmatrix}
\bs{0}_{p\times (n-1-p)} \\
\bs{I}_{n-1-p}\\	
\end{bmatrix}.
\]
Then $\bZ_{*}\tp \bZ_{*}=\bs{I}_{n-1-p}$, $\bsfX_{*}\tp \bsfX_{*}=\bs{S}$ and $\bsfX_{*}\tp\bZ_{*}=\bs{0}$. Next define
\begin{align*}
\bsfX_{0}&=\,   \bs{1}_{n}\bs{m}\tp+\bC\tp \bsfX_{*}, \\
\bZ_{0}&=\, \bC\tp \bZ_{*}, \qquad\qquad\quad	\bsfV_{0}=  \begin{bmatrix}
 \bs{I}_{p}\\
 \bs{0}_{(n-1-2p)\times p}	
 \end{bmatrix}, \\
\bsfXk_{0}&=\, \bs{1}_{n}\bs{m}\tp+(\bsfX_{0}-\bs{1}_{n}\bs{m}\tp)(\bs{I}_{p}- \bs{S}^{-1}\diag{\bss})+\bZ_{0}\bsfV_{0} \bL. 
\end{align*}
One can check that $[\bsfX_{0},\bsfXk_{0}] \in \mathcal{M}_{}$. 

\textbf{Part 3.} Now for any $[\bsfX,\bsfXk]\in \mathcal{M}_{}$, we will find a $\bG\in \mathcal{G}$ such that $\psi(\bG, [\bsfX_{0},\bsfXk_{0}])=[\bsfX,\bsfXk]$. 

Let $\bs{Q}_{\bsfX}=\bC\bsfX\bG_{0} \bs{D}^{-1}$, which is a $(n-1)\times p$ matrix. 
Since $(\bC\bsfX)\tp \bC\bsfX=\bs{S}$, we have $\bs{Q}_{\bsfX}\tp \bs{Q}_{\bsfX}=\bs{I}_{p}$. Thus $\bs{Q}_{\bsfX}\in \mathcal{F}_{n-1, p}$. 
By Lemma~\ref{lem:ldg-express-xk}, 
there is some $\bsfV\in \mathcal{F}_{n-1-p,p}$ such that $\bsfXk=\bs{1}_n\bs{m}^{\top} + (\bsfX-\bs{1}_n\bs{m}^{\top})(\bs{I}_p - \bs{S}^{-1}\diag{\bss}) +\bZ_{\bsfX}\bsfV \bL$. 
Let $\bs{Q}_{\bsfXk}$ be $\bC\bZ_{\bsfX}\bsfV$. We will show $\bs{Q}_{\bsfXk}\in \mathcal{F}_{n-1,  p}$ and $\bs{Q}_{\bsfXk}\tp \bs{Q}_{\bsfX}=\bs{0}$:\\
Because $\bZ_{\bsfX}\tp \bs{1}_{n}=\bs{0}$,  it holds $\bC\tp \bC\bZ_{\bsfX}=\bZ_{\bsfX}$. Thus  
\begin{align*}
\bs{Q}_{\bsfXk}\tp\bs{Q}_{\bsfXk}
&=\; (\bC \bZ_{\bsfX}\bsfV )\tp \bC \bZ_{\bsfX}\bsfV \\
&=\; \bsfV\tp \bZ_{\bsfX}\tp \bZ_{\bsfX}\bsfV\\
&=\; \bs{I}_{p}.
\end{align*}
In addition, because $\bZ_{\bsfX}\tp\bsfX=\bs{0}$, it holds that 
\begin{align*}
\bs{Q}_{\bsfXk}\tp\bs{Q}_{\bsfX}&=\; \bsfV\tp\bZ_{\bsfX} \tp \bC\tp \bs{Q}_{\bsfX} \\
&= \; \bsfV\tp\bZ_{\bsfX} \tp \bC\tp \bC\bsfX\bG_0 \bs{D}^{-1}\\
&=\; \bsfV\tp\bZ_{\bsfX} \tp \bsfX \bG_0 \bs{D}^{-1}\\
&=\; \bs{0}.
\end{align*}
Then we can find some $\bG_{*}\in \mathcal{O}_{n-1}$ such that 
\[
(\bG_{*})_{1:(2p)}=[\bs{Q}_{\bsfX} , \bs{Q}_{\bsfXk}]=[\bC\bsfX\bG_{0} \bs{D}^{-1} , \bC\bZ_{\bsfX}\bsfV]
\]
Define $\bG=\bC\tp \bG_{*} \bC+ \bs{1}_{n}\bs{1}_{n}\tp /n $. One can check that $\bG\tp\bG=\bs{I}_{n}$ and $\bG \bs{1}_{n} =\bs{1}_{n}$, and conclude that $\bG\in \mathcal{G}$. We next show $ [\bG\bsfX_{0}, \bG\bsfXk_{0}]=[\bsfX,\bsfXk]$. 

We first check 
\begin{align*} 
\bG\bsfX_{0}&=\; (\bC\tp \bG_{*} \bC + \bs{1}_{n}\bs{1}_{n}\tp /n ) ( \bs{1}_{n}\bs{m}\tp+\bC\tp \bsfX_{*} ) \nonumber\\
&=\;  \bs{1}_{n}\bs{m}\tp+ \bC\tp \bG_{*} \bC \bC\tp \bsfX_{*} & (\because \bC\bs{1}_{n}=\bs{0}) \nonumber \\
&=\;  \bs{1}_{n}\bs{m}\tp+ \bC\tp \bG_{*}\bsfX_{*} & (\because \bC \bC\tp=\bs{I}_{n-1})\nonumber\\
&=\;  \bs{1}_{n}\bs{m}\tp+ \bC\tp \bs{Q}_{\bsfX} \bs{D}\bG_{0}\tp & (\because \text{definitions of } \bG_{*}, \bsfX_{*} ) \\
&=\;  \bs{1}_{n}\bs{m}\tp+ \bC\tp \bC \bsfX & (\because \text{definition of } \bs{Q}_{\bsfX} ) \nonumber\\
&=\;  \bsfX. \nonumber
\end{align*}
Next, note that
\begin{align*}
	\bG\bZ_0 \bsfV_0 &=\;  (\bC\tp \bG_{*} \bC + \bs{1}_{n}\bs{1}_{n}\tp /n )\bC\tp \bZ_{*}\bsfV_0\\
	&=\; \bC\tp \bG_{*} \bC \bC\tp \bZ_{*}\bsfV_0  & (\because \bC\bs{1}_{n}=\bs{0})  \\
	&=\; \bC\tp \bG_{*}\bZ_{*}\bsfV_0 & (\because \bC \bC\tp=\bs{I}_{n-1})\\
	&=\; \bC\tp \bG_{*}\begin{bmatrix}
\bs{0}_{p\times p}	\\
\bs{I}_{p} \\
\bs{0}_{(n-1-2p)\times p}
\end{bmatrix}  & (\because \text{definitions of } \bZ_{*}, \bsfV_0 ) \\
&=\; \bC\tp \bC \bZ_{\bsfX}\bsfV & (\because \text{definition of } \bG_{*} ) \\
&=\; \bZ_{\bsfX}\bsfV, 
\end{align*}
and hence it holds that
\begin{align}\label{eq:xk-manifold2}
\bG\bsfXk_{0}&=\; \bG\left(  \bs{1}_{n}\bs{m}\tp+(\bsfX_{0}-\bs{1}_{n}\bs{m}\tp)(\bs{I}_{p}- \bs{S}^{-1}\diag{\bss})+\bZ_{0}\bsfV_{0} \bL \right) \nonumber\\
&=\;  \bs{1}_{n}\bs{m}\tp+ (\bsfX-\bs{1}_{n}\bs{m}\tp)(\bs{I}_{p}- \bs{S}^{-1}\diag{\bss})  +  \bZ_{\bsfX}\bsfV \bL \nonumber \\
&=\; \bsfXk. \nonumber
\end{align}
Hence the operation $\psi$ is transitive, and the proof is complete. 
\end{proof}

\begin{proof}[Proof of Lemma~\ref{lem:ldg-express-xk}]
The proof takes four steps. 

\textbf{Step 1}: $\bL$ is invertible.

Let
\[ \bs{S}_{*} = \left[\begin{array}{cc} 2\diag{\bss} & \diag{\bss} \\
				   \diag{\bss} & \bs{S} \end{array}\right]. \]

By construction of $\bss$, $2\diag{\bss}\succ\bs{0}_{p\times p}$ and
\begin{align*}
&&2\bs{S}&\succ\diag{\bss} \\
\Rightarrow&& \bs{S}-\frac{1}{2}\diag{\bss}&\succ\bs{0}_{p\times p} \\
\Rightarrow&& \bs{S}-\diag{\bss}\left(2\diag{\bss}\right)^{-1}\diag{\bss}&\succ\bs{0}_{p\times p},
\end{align*}
where the lefthand side of the last line is exactly the Schur complement of $2\diag{\bss}$ in $\bs{S}_{*}$, and therefore $\bs{S}_{*}\succ\bs{0}_{p\times p}$. But since $\bs{S}\succ\bs{0}_{p\times p}$, the fact that $\bs{S}_{*}\succ\bs{0}_{p\times p}$ implies that the Schur complement of $\bs{S}$ in $\bs{S}_{*}$ is also positive definite: 
\[2\diag{\bss}-\diag{\bss}\bs{S}^{-1}\diag{\bss} = \bL^{\top}\bL\succ\bs{0}_{p\times p}, \]
and therefore $\bL$ is invertible. 

\textbf{Step 2}: $\varphi_{\bsfX}(\mathcal{F}_{n-1-p,p})\subseteq \mathcal{M}_{\bsfX}$.

Let $\bsfXk=\varphi_{\bsfX}(\bsfV)$ for some $\bsfV\in \mathcal{F}_{n-1-p,p}$. First we show $\bsfXk^{\top}\bs{1}_n/n = \bs{m}$:
\begin{flalign*}
\bsfXk^{\top}\bs{1}_n/n &= \left(\bs{1}_n\bs{m}^{\top} + (\bsfX-\bs{1}_n\bs{m}^{\top})(\bs{I}_p - \bs{S}^{-1}\diag{\bss}) + \bZ_{\bsfX}\bsfV\bL\right)^{\top}\bs{1}_n/n && \\
&= \bs{m} + \bL^{\top}\bsfV^{\top}\bZ_{\bsfX}\tp \bs{1}_n/n &&\because (\bsfX-\bs{1}_n\bs{m}^{\top})^{\top}\bs{1}_n = \bs{0}_p \\
&= \bs{m} &&\because \bZ_{\bsfX}^{\top}\bs{1}_n = \bs{0}_{n-1-p}.
\end{flalign*}
Next we show $(\bC\bsfXk)^{\top}\bC\bsfXk = \bs{S}$:
\begin{align}
(\bC\bsfXk)^{\top}\bC\bsfXk 
&=\left((\bsfX-\bs{1}_n\bs{m}^{\top})(\bs{I}_p - \bs{S}^{-1}\diag{\bss}) + \bZ_{\bsfX}\bsfV\bL\right)^{\top} \left((\bsfX-\bs{1}_n\bs{m}^{\top})(\bs{I}_p - \bs{S}^{-1}\diag{\bss}) + \bZ_{\bsfX}\bsfV\bL\right)  \nonumber\\
 &\hspace{200pt minus 1fil}  \because \bC\tp \bC[ \bsfX-\bs{1}_n\bs{m}^{\top} , \bZ_{\bsfX}] =  [ \bsfX-\bs{1}_n\bs{m}^{\top} , \bZ_{\bsfX}] \nonumber\\
&= (\bs{I}_p - \bs{S}^{-1}\diag{\bss})^{\top}\bs{S}(\bs{I}_p - \bs{S}^{-1}\diag{\bss}) + \bL^{\top}\bsfV^{\top}\bZ_{\bsfX}\tp \bZ_{\bsfX}\bsfV\bL \nonumber  \\
&\hspace{200pt minus 1fil} \because (\bsfX-\bs{1}_n\bs{m}^{\top})^{\top}\bZ_{\bsfX} = \bs{0}_{p\times p}  \hfilneg \nonumber\\
&= \bs{S}-2\diag{\bss}+\diag{\bss}\bs{S}^{-1}\diag{\bss} + \bL^{\top}\bL \nonumber \\
 &\hspace{200pt minus 1fil}  \because  \bZ_{\bsfX}^{\top}\bZ_{\bsfX}=\bs{I}_{n-1-p}, \bsfV\in \mathcal{F}_{n-1-p,p} \nonumber\\
&= \bs{S}
\hspace{180pt minus 1fil}  \because \bL^{\top}\bL = 2\diag{\bss}-\diag{\bss}\bs{S}^{-1}\diag{\bss}.\nonumber
\end{align}
And finally we show $(\bC\bsfXk)^{\top}\bC\bsfX = \bs{S}-\text{diag}\{\bss\}$:
\begin{align*}
(\bC\bsfXk)^{\top}\bC\bsfX 
&= \left((\bsfX-\bs{1}_n\bs{m}^{\top})(\bs{I}_p - \bs{S}^{-1}\diag{\bss}) +  \bZ_{\bsfX}\bsfV\bL\right)^{\top} \left(\bsfX-\bs{1}_n\bs{m}^{\top}\right) \\
&= (\bs{I}_p - \bs{S}^{-1}\diag{\bss})^{\top}\bs{S} + \bL^{\top}\bsfV^{\top}\bZ_{\bsfX}\tp\left(\bsfX-\bs{1}_n\bs{m}^{\top}\right) \\
&= \bs{S}-\diag{\bss}\hspace{7cm}\because (\bsfX-\bs{1}_n\bs{m}^{\top})^{\top}\bZ_{\bsfX} = \bs{0}_{p\times (n-1-p)}.
\end{align*}
We conclude that $\bsfXk\in \mathcal{M}_{\bsfX}$ and therefore $\varphi_{\bsfX}(\mathcal{F}_{n-1-p,p})\subseteq \mathcal{M}_{\bsfX}$.\medskip

\textbf{Step 3}: $\varphi_{\bsfX}$ is injective.

Since $\bZ_{\bsfX}\tp \left[ \bs{1}_n,\, \bX \right]=\bs{0}$ and $\bL$ is invertible, $\bZ_{\bsfX}\tp\varphi_{\bsfX}(\bsfV)\bL^{-1}=\bsfV$. Thus $\varphi_{\bsfX}$ is injective. 

\textbf{Step 4}: $\varphi_{\bsfX}$ is surjective.

Let $\bsfXk\in \mathcal{M}_{\bsfX}$. By the definition of $\bZ_{\bsfX}$, the columns of $\left[ \bs{1}_n,\, (\bsfX-\bs{1}_n\bs{m}^{\top}), \bZ_{\bsfX} \right]$ form a basis of $\R^n$. Hence we can uniquely define $\bs{\alpha}\tp \in \R^{1\times p}$, $\bs{\Lambda} \in \R^{p\times p}$ and $\bs{\Theta}\in \R^{(n-1-p)\times p}$ such that 
\begin{equation}\label{eq:pf-ldg-bsfxk}
\bsfXk= \bs{1}_n\bs{\alpha}\tp + (\bsfX-\bs{1}_n\bs{m}^{\top})\bs{\Lambda}+\bZ_{\bsfX}\bs{\Theta}
\end{equation}

First, $\bs{m}=\bsfXk^{\top}\bs{1}_n/n=\bs{\alpha}$ because $\left[(\bsfX-\bs{1}_n\bs{m}^{\top}), \bZ_{\bsfX} \right]\tp \bs{1}_n=\bs{0}_{(n-1)\times 1}$.

Next we show $\bs{\Lambda}=\bs{I}_p - \bs{S}^{-1}\diag{\bss}$:
\begin{align*}
&&\bs{S}-\text{diag}\{\bss\} &= (\bC\bsfXk)^{\top}\bC\bsfX \\
\Rightarrow&&\bs{S}-\text{diag}\{\bss\} &= \bs{\Lambda}^{\top}(\bsfX-\bs{1}_n\bs{m}^{\top})^{\top}\left(\bsfX-\bs{1}_n\bs{m}^{\top}\right) + \bs{\Theta}^{\top}\bZ_{\bsfX}\tp  \left(\bsfX-\bs{1}_n\bs{m}^{\top}\right) \\
\Rightarrow&&\bs{S}-\text{diag}\{\bss\} &= \bs{\Lambda}^{\top}(\bsfX-\bs{1}_n\bs{m}^{\top})^{\top}\left(\bsfX-\bs{1}_n\bs{m}^{\top}\right)   \\
\Rightarrow&&\bs{S}-\text{diag}\{\bss\} &= \bs{\Lambda}^{\top}\bs{S}  \\
\Rightarrow&&\bs{I}_p - \bs{S}^{-1}\diag{\bss} &= \bs{\Lambda} .
\end{align*}

And finally, we show $\bs{\Theta}=\bsfV\bL$ for some 
$\bsfV\in \mathcal{F}_{n-1-p,p}$. Using Equation~\eqref{eq:pf-ldg-bsfxk},
\begin{align*}
&&(\bC\bsfXk)^{\top}\bC\bsfXk &= \bs{S} \\
\Rightarrow&& \bs{\Lambda}\tp \bsfX\tp \bC\tp \bC \bsfX \bs{\Lambda} + \bs{\Theta}\tp \bZ_{\bsfX}\tp\bZ_{\bsfX} \bs{\Theta} &=\bs{S}\qquad\qquad  \\
\Rightarrow&& (\bs{I}_p - \bs{S}^{-1}\diag{\bss})^{\top}\bs{S}(\bs{I}_p - \bs{S}^{-1}\diag{\bss}) + \bs{\Theta}\tp \bs{\Theta} &=\bs{S}\qquad\qquad   \\
\Rightarrow&& \bs{S}-2\diag{\bss}+\diag{\bss}\bs{S}^{-1}\diag{\bss} + \bs{\Theta}\tp \bs{\Theta}  &=\bs{S} \\
\Rightarrow&& \bs{\Theta}\tp \bs{\Theta} &=2\diag{\bss}-\diag{\bss}\bs{S}^{-1}\diag{\bss} \\
\Rightarrow&& \left(\bL^{-1}\right)^\top\bs{\Theta}\tp \bs{\Theta}\bL^{-1}  &=\bs{I}_p,
\end{align*}
where again the second equality uses $\bC\bs{1}_{n}=\bs{0}$ and $\bC\tp\bC\bZ_{\bsfX}=\bZ_{\bsfX}$, the third equality uses $\bs{\Lambda}=\bs{I}_p - \bs{S}^{-1}\diag{\bss}$ and $\bZ_{\bsfX}\tp\bZ_{\bsfX}=\bs{I}_{n-1-p}$, and the last equality follows from the invertibility of $\bL$. Define $\bsfV \, :=\, \bs{\Theta}\bL^{-1}$, then the last equality implies $\bsfV \in \mathcal{F}_{n-1-p,p}$.
We conclude that $\bsfXk =  \varphi_{\bsfX}(\bsfV)$. 
\end{proof}

\rev{
 \subsubsection{An Intuitive Proof That Does Not Quite Work}\label{app:ldg-intutive}
The astute reader may think there is a more straightforward way than the previous subsection to prove Theorem~\ref*{thm:ldg_ko} using the fact that all the randomness in the conditional knockoffs construction of Algorithm~\ref{alg:ldg} comes from $[\bU,\tilde{\bU}]$ which follows the Haar measure on $\mathcal{F}_{n,2p}$, and this Haar measure has many known properties including swap-invariance. We show here why we were not able to follow this route, and resorted instead to a more technical proof using topological measure theory.
 
 For simplicity, consider the special case where the mean vector is known to be zero, i.e. $\bx_ i\sim N(0,\bs{I}\otimes \bS)$. Let $\bX=\bU \bs{D}\bV\tp$  be the singular value decomposition of $\bX$ where  $\bU \in \mathbb{R}^{n \times p}, \bs{D} \in \mathbb{R}^{p \times p}, \bV \in \mathbb{R}^{p \times p}$. It is not hard to see  that $\bU$ is uniformly distributed on $\mathcal{F}_{n,p}$ and is independent of $\bs{D} \bV^{\top}$. This claim implicitly uses the existence of a Haar measure on  $\mathcal{F}_{n,p}$, but this is well-known (we denote this measure by $\operatorname{Unif}\left(\mathcal{F}_{n, p}\right)$). Conditioning on $\bX^{T} \bX=\bV^{T} \bs{D}^{2} \bV=\hbS$, 
\[
\bX \stackrel{d}{=} \operatorname{Unif}\left(\mathcal{F}_{n, p}\right) \bs{D} \bV \stackrel{d}{=} \operatorname{Unif}\left(\mathcal{F}_{n, p}\right) \hbS^{1 / 2}
\]
Thus in principle, it would be sufficient to construct $\bXk$ such that
\begin{equation}\label{eq:r2}
[\bX,  \bXk] \sim \operatorname{Unif}\left(\mathcal{F}_{n, 2 p}\right) \left[ \begin{array}{cc}{\hbS} & {\hbS-\diag{\bss}} \\ {\hbS-\diag{\bss}} & {\hbS}\end{array}\right]^{1 / 2}
\end{equation}
which simply requires generating the left singular vectors of $\operatorname{Unif}\left(\mathcal{F}_{n, 2 p}\right) $ conditioned on $\bU$ being the first $p$ columns. This can be easily achieved by stacking $\bW$ on the right of $\bX$ and calculating the left singular values of $[\bX, \bW]$, which is exactly what is done in Algorithm 3.1.

To prove the validity of this construction, we just need to check that the right hand side of Equation~\eqref{eq:r2} is swap-invariant. Indeed, 
$\operatorname{Unif}\left(\mathcal{F}_{n, 2 p}\right)$ is easily shown to be swap-invariant, and the matrix multiplying it appears to be swap-invariant as well. However, the matrix square root complicates things. Denote
$$\bG = \left[ \begin{array}{cc}{\hbS} & {\hbS-\diag{\bss}} \\ {\hbS-\diag{\bss}} & {\hbS}\end{array}\right]^{1 / 2}.$$ 

To make the argument more precise, suppose that we want to show that swapping $\bX_1$ with $\bXk_1$ does not change the joint distribution of $[\bX,\tilde{\bX}]$. 
Let $\bs{P}\in \R^{2p\times 2p}$ be the permutation matrix that swaps columns $1$ and $1+p$ of a matrix when multiplied on the right. By Equation~\eqref{eq:r2}, what we need to show is
\begin{equation}\label{eq:r2-2}
\operatorname{Unif}\left(\mathcal{F}_{n, 2 p}\right) \,\bG \bs{P}\eqd \operatorname{Unif}\left(\mathcal{F}_{n, 2 p}\right)\, \bG 
\end{equation}
The left hand side equals to $\left( \operatorname{Unif}\left(\mathcal{F}_{n, 2 p}\right)\, \bs{P}  \right) \bs{P} \bG \bs{P}$. By known properties of the Haar measure, we have that $\operatorname{Unif}\left(\mathcal{F}_{n, 2 p}\right)\, \bs{P}\eqd \operatorname{Unif}\left(\mathcal{F}_{n, 2 p}\right)$, and hence Equation~\eqref{eq:r2-2} is equivalent to
\begin{equation}\label{eq:r2-3}
\operatorname{Unif}\left(\mathcal{F}_{n, 2 p}\right)\, \bs{P} \bG \bs{P}\eqd \operatorname{Unif}\left(\mathcal{F}_{n, 2 p}\right)\, \bG
\end{equation}
The only way we can see how one might prove this more simply than the proof in our paper is to show that $\bs{P} \bG \bs{P} = \bG$, i.e., that $\bG$ is swap-invariant.

	
	$\bG$ visually appears to be swap-invariant, and indeed is the square root of a swap-invariant matrix, but the fact that a matrix is swap-invariant does not directly imply that its square root is swap-invariant. The square root of a matrix in general is not unique, so we may hope that there exists (and we can identify) a swap-invariant square root in this case, but in the representation of Equation~\eqref{eq:r2}, we can actually only use the square root that has  $\bs{D}\bV^{\top}$ on its upper left block and has $\bs{0}$ on its bottom left block, in order to match $\bX=\bU\bs{D}\bV\tp$ on the left hand side. Therefore, we can actually say with certainty that
$$
\bG=\left[ \begin{array}{cc}{\bs{D}\bV\tp} & { \bs{D}^{-1}\bV\tp \diag{\bss}-\hbS^{-1}\diag{\bss}} \\ {0} & {\bL}\end{array}\right],
$$ 
where $\bL\tp \bL=2 \diag{\bss}-\diag{\bss} \hbS^{-1}\diag{\bss}$  is a Cholesky decomposition. However, this matrix \emph{cannot} be swap-invariant. This is why we were unable to prove swap-exchangeability of $[\bX,\tilde{\bX}]$ directly from swap-invariance of $\operatorname{Unif}\left(\mathcal{F}_{n, 2 p}\right)$, and were instead forced to prove it directly using topological measure theory. Note that our proof uses similar machinery to the first-principles proof of the known result that $\operatorname{Unif}\left(\mathcal{F}_{n, 2 p}\right)$ is swap-invariant.

}


\subsection{Gaussian Graphical Models}\label{app:ggmor}



\begin{proof}[Proof of Theorem~\ref{prop:ggm-or}]
By classical results for the multivariate Gaussian distribution, we have

\begin{equation}\label{eq:ggm-cond-XBc}
	 X_{B^c}\mid X_B \sim \N(\bmu^{\ast}+\bs{\Xi}X_B ,\; \bS^{\ast}),
\end{equation}
 where $\bs{\Xi}=\bS_{B^c,B} (\bS_{B,B})^{-1}$, $\bmu^{\ast}=\bmu_{B^c} -\bs{\Xi}\bmu_B$ and $(\bS^{\ast})^{-1}= ( \bS^{-1} )_{B^c,B^c} $. By the condition that $G$ is $n$-separated by $B$, $(\bS^{\ast})^{-1}$ is block diagonal with blocks defined by the $V_{\icc}$'s. Thus $X_{V_1},\dots, X_{V_{\ncc}}$ are conditionally independent given $X_{B}$. 
 
To show $[\bX,\bXk]$ is invariant to swapping $A$ for any $A\subseteq[p]$,  by conditional independence of the $X_{V_\icc}$'s, it suffices to show that for any $\icc \in [\ncc]$ and $A_{\icc} \,:=A\,\cap V_{\icc}$, 
 \begin{equation}\label{eq:ggm-swap-separate}
 [  \bX_{V_{\icc}} ,  \bXk_{V_{\icc}}] _{\swap{A_{\icc}} } \eqd   [   \bX_{V_{\icc}} ,  \bXk_{V_{\icc}} ]  \mid \bX_{B}.
 \end{equation}
  
Before proving Equation~\eqref{eq:ggm-swap-separate}, we set up some notation.  Let $\bO=\bS^{-1}$, then by block matrix inversion (see, e.g., \citet[Proposition 1.3.3]{kollo2006advanced}), 
\[
\bS_{B^c,B^c}-\bS_{B^c,B}(\bS_{B,B})^{-1}\bS_{B,B^c}=\bO_{B^c,B^c}^{-1},
\]
and
\[
-\left(\bS_{B^c,B^c}-\bS_{B^c,B}(\bS_{B,B})^{-1}\bS_{B,B^c}\right)^{-1}\bS_{B^c,B}(\bS_{B,B})^{-1}=\bO_{B^c,B}.
\]
Thus $\bs{\Xi}$ can be written as $-\bO_{B^c,B^c}^{-1}\bO_{B^c,B}$. 
 
 Now fix $\icc \in [\ncc]$.  Let $B_{\icc}=I_{V_{\icc}} \, \cap \, B$. Since $V_{\icc}$ and  $B\setminus B_{\icc}$ are not adjacent,  $\bO_{V_{\icc}, B\setminus B_{\icc}}$, and thus $\bs{\Xi}_{V_{\icc}, B\setminus B_{\icc}}$, equals $\bs{0}$.    Equation~\eqref{eq:ggm-cond-XBc} implies 
\[
 X_{V_{\icc}} \mid X_{B} \sim \N (\bmu^{\ast}_{V_{\icc}}+\bs{\Xi}_{V_{\icc}, B_{\icc}}X_{B_{\icc}} ,\; \bS^{\ast}_{V_{\icc},V_{\icc}}). 
\]
This also implies that \begin{equation}\label{eq:ggm-cond-gaussian}
X_{V_{\icc}} \indp X_{B\setminus B_{\icc}} \mid X_{B_{\icc}}.
\end{equation}

Since the rows of $\bX_{V_{\icc} \uplus B_{\icc} }$ are i.i.d. Gaussian, the validity of Algorithm~\ref{alg:ldg-block} (see Theorem~\ref{thm:ldg-block} in Appendix~\ref{app:detail-ldg-block}) says that $\bXk_{V_{\icc}}$ generated in Line~\ref{line:ggm-alg1} of Algorithm~\ref{alg:sparse-gaussian-or} satisfies
 \[
 [  \bX_{V_{\icc}} ,  \bXk_{V_{\icc}}] _{\swap{A_{\icc}} } \eqd   [   \bX_{V_{\icc}} ,  \bXk_{V_{\icc}} ]  \mid X_{B_{\icc}},  
 \]
This together with Equation~\eqref{eq:ggm-cond-gaussian} shows Equation~\eqref{eq:ggm-swap-separate}. This completes the proof. 
\end{proof}

\begin{proof}[Proof of Proposition~\ref{prop:ggm-block-comp}]
This proof will be about Algorithm~\ref{alg:ggm-greedy-search} in Appendix~\ref{app:detail-ggm}, which is shown there to be equivalent to Algorithm~\ref{alg:ggm-greedy-search-graph}. Without loss of generality, we assume $\pi=(1,\dots,p)$. Denote by $N_j^{(h)}$ the set $N_j$ in the Algorithm~\ref{alg:ggm-greedy-search} after the $h$th step. The updating steps of the algorithm ensure $j\notin N_j^{(h)}$ for any $j$ and $h$.  Note that $N_j$ does not change after the $(j-1)$th step, i.e., $N_j^{(j-1)}=N_j^{(j)}=\dots=N_j^{(p)}$.


It suffices to show the following inequality for each connected component $W$, whose vertex set is denoted by $V$, of the subgraph induced by deleting $B$:
	\[
	1+ 2|V|+| I_{V} \cap \, B| \leq n', \text{ where } I_{V}\, :=\,\bigcup\limits_{j\in V} I_j.
	\]
	
\textbf{Part 1. }
First note that by definition of $V$, every element of $I_V$ is either in $V$ or $B$. Now define $F=[p]\setminus ( V \uplus (I_{V} \cap \, B) )$.  
We will show that $k\in F$ will never appear in $N_j$ for any $j\in V$.

Initially, for any $j\in V$, $N_j^{(0)}=I_j$ does not intersect $F$. Suppose $h$ is the smallest integer such that there exists some $j\in V$ such that $N_j^{(h)}$ contains some $k\in F$. By the construction of the algorithm, $h\notin B$, $j>h$ and $j\in N_h^{(h-1)}$ (otherwise $N_j^{(h)}$ would not have been altered in the $h$th step), $k\in N_h^{(h-1)}$ (otherwise $k$ could not have entered $N_j^{(h)}$ at the $h$th step), and $h\in N_j^{(h-1)}$ (by symmetry of $N_j^{(i)}$ and $N_h^{(i)}$ for $i<\min(h,j)$).

 Since $h\in N_j^{(h-1)}$, the definition of $h$ guarantees $h\notin F$ (otherwise $h-1$ would be smaller and satisfy the condition defining $h$), and thus $h$ is in either $V$ or $I_V\cap B$. But since $h\notin B$, we must have $h\in V$. Now we have shown $k\in N_h^{(h-1)}$, i.e., $F$ intersects $N_h$ before the $h$th step, and $h\in V$, but this contradicts the definition of $h$. We conclude that for any $j\in V$ and any $h\in [p]$,  $F\cap\, N_j^{(h)}=\emptyset$ and thus $N_{j}^{(p)}\subseteq ( I_{V} \cap \, B) \uplus\, (V\setminus\{j\} )$. 
 
\textbf{Part 2:} We now characterize $N_{j}^{(p)}$.
	For any $j\in V$, define
\begin{equation*}
\begin{aligned}
L_j&:=\left\{ v\in (I_{V} \cap \, B) \uplus (V\setminus \{j\})  :\, \exists \text{ a path }(j,j_1,\dots,j_m,v)\text{ in $G$} , \right.\\
&\hspace{14 em} \left.  \text{ where } j_i <j \text{ and } j_i\in V ,\; \forall i\in [m]\right\}.\\
\end{aligned}
\end{equation*}
We will show $L_{j}\subseteq N_{j}^{(p)}$ by induction.
This is true for the smallest $j\in V$ because $L_j=I_j\subseteq N_{j}^{(p)}$.
Now assume $L_{j}\subseteq N_{j}^{(p)}$  for any $j<j_0$ (both in $V$), we will show $L_{j_0}\subseteq N_{j_{0}}^{(p)}$. For any $v\in L_{j_0}$, if $v\in I_{j_0}$ it is trivial that $v\in N_{j_{0}}^{(p)}$. If $v\in L_{j_0}\setminus I_j$, there is a path $(j_{0},j_1,\dots,j_m,v)$ in $G$ where $\{j_i\}_{i=1}^{m}\subseteq V$ are all smaller than $j_0$. Let $j_{i^*}$ be the largest among $\{j_i\}_{i=1}^{m}$. With the two paths $(j_0,j_1,\dots,j_{i^*} )$ and $( j_{i^*},\dots, j_m,v)$, we have $j_0, v \in L_{j_{i^*}}\subseteq N_{j_{i^*}}^{(p)}$ by the inductive hypothesis. 
Since $j_0\in N_{j_{i^*}}^{(p)}$ and $j_0>j_{i^*}$, 
in the $j_{i^*}$th step on Line~\ref{alg:alg-search-absorb},  $N_{j_0}$ absorbs $N_{j_{i^*}}\setminus \{j_0\}$, and it follows that $v\in N_{j_0}^{(j_{i^*})}$ and thus $v\in N_{j_0}^{(p)}$. We finally conclude that $L_{j_0}\subseteq N_{j_0}^{(p)}$, and by induction, $L_j\subseteq N_{j}^{(p)} $ for all $j\in V$. 
	
	
\textbf{Part 3.}	
	Let $j^*$ be the largest number in $V$. Since $W$ is connected and $j^*$ is the largest, the definition of $L_{j^*}$ implies $( I_{V} \cap \, B) \uplus (V\setminus\{j^*\} ) = L_{j^*}$. Part 1 showed that $N_{j^*}^{(p)}\subseteq ( I_{V} \cap \, B) \uplus\, (V\setminus\{j^*\} )$ and Part 2 showed that $L_{j^*}\subseteq N_{j^*}^{(p)}$. Thus $N_{j^*}^{(p)}= ( I_{V} \cap \, B) \uplus\, (V\setminus\{j^*\} )$. 
	
	Since $B$ keeps growing, at the $j$th step of Algorithm~\ref{alg:ggm-greedy-search},  the set $\{ 1,\dots,j-1\}\setminus B$ with the current $B$ is the same as that with the final $B$.  
At the $j^*$th step of the algorithm,
$N_{j^*}\cap (\{ 1,\dots,j^{*}-1\}\setminus B) $ 
	equals $V\setminus\{j^*\}$ (since $j^*$ is the largest in $V$). 
	Hence 
	\[
|N_{j^*}|+|N_{j^*}\cap (\{ 1,\dots,j^{*}-1\}\setminus B)| = \left( | (I_{V} \cap \, B) \, \uplus V|-1 \right)+ \left(  |V|-1  \right) =2|V|+| I_{V} \cap \, B|-2.
	\] 
	Since $j^*\notin B$, the requirement in Line~\ref{line:alg-search} and the equality above implies
	\[
	1+2|V|+| I_{V} \cap \, B| \leq n',
	\]
	and this completes the proof. 
\end{proof}

\subsection{Discrete Graphical Models}\label{app:graph-expanding}

\begin{proof}[Proof of Theorem~\ref{thm:dgm-block}]
We first show
\begin{equation}\label{eq:dgm-decompostion}
\Pcr{X_{B^c}}{X_{B}} \, =\, \prod_{j \in B^c} \Pcr{X_{j}}{X_{B}}\, =\, \prod_{j \in B^c} \Pcr{X_{j}}{X_{ I_j}}. 
\end{equation}
Suppose $j\in B^c$,  then $I_j\subseteq B$. By the local Markov property, 
\[
X_j \indp (X_{B^c\setminus  \{j\} }, X_{B\setminus  I_j} ) \mid X_{I_j}.
\]
By the weak union property, we have 
\[
X_j \indp X_{B^c\setminus  \{j\}  } \mid ( X_{I_j}, X_{B\setminus  I_j} ), 
\]
which implies $\Pcr{X_{B^c}}{X_{B}} \, = \, \Pcr{X_j}{X_B} \Pcr {X_{B^c\setminus  \{j\} } }{X_B}$. 
Following this logic for the remaining elements of $B^c\setminus \{j\} $, we have $\Pcr{X_{B^c}}{X_{B}} \, =\,\prod_{j \in B^c} \Pcr{X_{j}}{X_{B}}$, which is then equal to $\prod_{j \in B^c} \Pcr{X_{j}}{X_{ I_j}}$ because  $X_j \indp  X_{B\setminus  I_j}  \mid X_{I_j}$. 

Secondly, as justified in Section~\ref{sec:eg-dgm-1}, the construction of $\bXk_{j}$ in Algorithm~\ref{alg:block-dg} implies that conditional on  $T_{B}(\bX)$, $ \bXk_j$ and  $\bX_{j}$ are independent and identically distributed, and thus 
\[
(\bX_j,\bXk_j)\eqd (\bXk_j,\bX_j) \mid T_{B}(\bX).\] 
By the law of total probability, it follows that
\begin{equation}\label{eq:pf-dgm-exchange}
   (\bX_j,\bXk_j)\eqd (\bXk_j,\bX_j) \mid \bX_{B}
\end{equation}
Since $ \bXk_j$ is generated without looking at $\bX_{B^{c}\setminus  \{j\}  }$, it holds that 
\begin{equation}\label{eq:pf-dgm-cindp}
    \bXk_j \indp (\bX_{B^{c}\setminus  \{j\}  }, \bXk_{B^{c}\setminus  \{j\}  }) \mid (\bX_B, \bX_j).
\end{equation}

Next we show $(\bX_{B^{c}},\bXk_{B^c})_{\text{swap}(A)} \eqd (\bX_{B^c},\bXk_{B^c})$ for any $A\subseteq B^c$. 
For any pair of column vectors $(\bsfX_{j}, \bsfXk_{j})\in [\St_j]^{n}\times  [\St_j]^{n}$, define 
\[
(\bsfX_{j}^{A}, \bsfXk_{j}^{A})=\left\{
\begin{array}{c l}	
	(\bsfX_{j}, \bsfXk_{j}) & j\notin A \\
	(\bsfXk_{j}, \bsfX_{j}) & j \in A
\end{array}\right.
\]
By Equations~\eqref{eq:dgm-decompostion} and \eqref{eq:pf-dgm-cindp},
\begin{align*}
&\; \Pc{ ( \bX_{B^{c}},\bXk_{B^c} )= (\bsfX_{B^{c}},\bsfXk_{B^c} )}{ \bX_{B}}\\
=&\; \prod_{j\in B^c}\Pc{ ( \bX_{j},\bXk_{j} )= ( \bsfX_{j},\bsfXk_{j} )}{ \bX_{B}}\\
=&\; \prod_{j\in B^c\setminus  A}\Pc{ ( \bX_{j},\bXk_{j} )=( \bsfX_{j},\bsfXk_{j} )}{ \bX_{B}}\times \prod_{j\in A}\Pc{ ( \bX_{j},\bXk_{j} )=( \bsfXk_{j}, \bsfX_{j} ) }{ \bX_{B}}\\
=& \; \prod_{j\in B^c\setminus  A}\Pc{ ( \bX_{j},\bXk_{j} )=( \bsfX_{j}^{A},\bsfXk_{j}^{A} )}{ \bX_{B}}\times \prod_{j\in A}\Pc{ ( \bX_{j},\bXk_{j} )=( \bsfX_{j}^{A}, \bsfXk_{j}^{A} ) }{ \bX_{B}}\\
=&\; \Pc{ ( \bX_{B^{c}},\bXk_{B^c} )= (\bsfX_{B^{c}}^{A},\bsfXk_{B^c}^{A} )}{ \bX_{B}},\\
=&\; \Pc{ ( \bX_{B^{c}},\bXk_{B^c} )_{\text{swap}(A)}= (\bsfX_{B^{c}},\bsfXk_{B^c} )}{ \bX_{B}},
\end{align*}
where the third equality (which swaps the order of $\bsfX_j$ and $\bsfXk_j$ and adds superscript $A$'s in the second product) follows from Equation~\eqref{eq:pf-dgm-exchange}. 

Together with $\bXk_{B}=\bX_{B}$, we conclude $\bXk$ is a valid knockoff for $\bX$. 
\end{proof}

\section{Algorithmic Details}
\subsection{Low Dimensional Gaussian}\label{app:detail-ldg}
\subsubsection{Additional Details on Algorithm~\ref{alg:ldg}} 
We begin with the construction of a suitable $\bss$ by extending existing algorithms for computing $\bss$ to our situation. 
Without loss of generality we assume $\hbS_{j,j}=1$ for $j=1,\dots,p$ here; otherwise denote by $\hat{\bs{D}}$ the diagonal matrix with $\hat{\bs{D}}_{j,j}=\hbS_{j,j}$, set $\hbS^{0}$ to be $\hat{\bs{D}}^{-1/2}\hbS\hat{\bs{D}}^{-1/2}$, define $\bss^{0}=\hat{\bs{D}}^{-1}\bss$, and proceed with $\hbS$ and $\bss$ replaced by $\hbS^{0}$ and $\bss^{0}$ respectively. For any $\epsilon, \delta \in (0,1)$, we can compute $\bss$ in any of the following ways: 
\begin{itemize}
\item \emph{Equicorrelated} \citep{RB-EC:2015}: Take $s_j^{\text{EQ}} = (1-\epsilon)\min\left( 2\lambda_{\text{min}}( \hbS ),  1 \right) $ for all $j=1,\dots,p$.
\item \emph{Semidefinite program (SDP)} \citep{RB-EC:2015}: Take $\bss^{\text{SDP}}$ to be the solution to the following convex optimization:
\begin{align}\label{opt:sdp}
\min\sum_{j=1}^p\left(1-s_j\right)\quad\text{subject to: }\qquad \begin{array}{l}\delta\le s_j\le 1,\; j=1,\dots,p\\ \diag{\bss}\preceq(1-\epsilon)2\hbS.\end{array}
\end{align}
\item \emph{Approximate SDP} \citep{EC-ea:2018}: Choose an approximation $\hbS_{\text{approx}}$ of $\hbS$ and compute $\bss^{\text{approx}}$ by the SDP method as if $\hbS=\hbS_{\text{approx}}$. Then set $\bss = \gamma\bss_{\text{approx}}$ where $\gamma$ solves
\[\max\gamma\qquad\text{subject to: }\quad \diag{\gamma\bss_{\text{approx}}}\preceq(1-\epsilon)2\hbS.\]
\end{itemize}
Noting that $\bXk_j^{\top}\bX_j/n = \hat{\Sigma}_{j,j}-s_j$, it will always be preferable to take $\epsilon$ as small as possible (for all methods), so that $\bss$ is as large as possible and $\bX_j$ and $\bXk_j$ are as different as possible. For the SDP method, the lower bound $\delta$ can be set to be $s_j^{\text{EQ}}$ multiplied by a small number, e.g., $\delta = 0.1 \cdot 2\lambda_{\text{min}}( \hbS ) $, to guarantee feasibility; this choice is used in the simulations in Sections~\ref{sec:ldg} and \ref{sec:eg-ggm}.


We now prove the computational complexity of Algorithm~\ref{alg:ldg}. The Cholesky decomposition takes $O(p^3)$ operations and the Gram--Schmidt orthonormalization takes $O(n p^2)$ operations. If $\bss$ is computed by the \emph{Equicorrelated} method whose complexity is no larger than $O(p^3)$, the overall complexity of Algorithm~\ref{alg:ldg} is $O(np^2)$.

\subsubsection{Gaussian Knockoffs with Known Mean}
Algorithm~\ref{alg:ldg-known-mean} is a slight modification of Algorithm~\ref{alg:ldg} for mulitvariate Gaussian models with mean parameter $\bmu$ known. The proof of its validity requires only minor modification of the proof of Theorem~\ref{thm:ldg_ko}, and is thus omitted.

\begin{algorithm}[]
\caption{Conditional Knockoffs for Low-Dimensional Gaussian Models with Known Mean} \label{alg:ldg-known-mean}
\begin{algorithmic}[1]
\ENSURE$\bX\in \R^{n\times p}$, $\bmu\in \R^{p}$.
\REQUIRE $n\geq 2p$.
\STATE Define $\hbS=(\bX-\bs{1}_{n}\bmu\tp)\tp (\bX-\bs{1}_{n}\bmu\tp)$.
\STATE Find $\bss\in\R^p$ such that $\bs{0}_{p\times p}\prec\diag{\bss}\prec 2 \hbS$.
\STATE Compute the Cholesky decomposition of $2\diag{\bss} - \diag{\bss}\hbS^{-1}\diag{\bss}$ as $\bL^{\top}\bL$.
\STATE Generate $\bW$ a $n\times p$ matrix whose entries are i.i.d. $\N(0,1)$ and independent of $\bX$ and compute the Gram--Schmidt orthonormalization $[\underbrace{\bs{Q}}_{n\times p} ,\underbrace{\bs{U}}_{n\times p}]$ of $\left[\bX,\, \bW \right]$.
\STATE Set 
\begin{equation}\label{eq:ldg-known-mean_short}
\bXk = \bs{1}_{n}\bmu\tp + (\bX-\bs{1}_{n}\bmu\tp)  (\bs{I}_p - \hbS^{-1}\diag{\bss}) + \bU\bL.
\end{equation}\vspace{-.5cm}
\RETURN $\bXk$.
\end{algorithmic}
\end{algorithm}

\subsubsection{Partial Gaussian Knockoffs with Fixed Columns}\label{app:detail-ldg-block}
Consider the case where some of the variables are known to be relevant and thus do not need to have knockoffs generated for them. Let $B\subseteq [p]$ be the set of variables that no knockoffs are needed for, so we only want to construct knockoffs for variables in $V=B^c$, i.e., to generate $\bXk_V$ such that for any subset $A\subseteq V$, 
\[
 [  \bX_{V} ,  \bXk_{V}] _{\swap{A} } \eqd   [   \bX_{V} ,  \bXk_{V} ]  \mid \bX_{B}.
\]
Algorithm~\ref{alg:ldg-block} provides a way to generate such knockoffs. We can find its computational complexity as follows. Fitting the least squares in Line~\ref{line:ls} takes $O(n|B|^{2}|V|)$, computing $\hbS$ takes $O(n|V|^{2})$, both the most efficient construction of $\bss$ and inverting $\hbS$ take $O(|V|^{3})$, and the Gram--Schmidt orthonormalization takes $O(n (1+|B|+2|V|)^{2})$. Hence the overall computational complexity is $O\left( n |B|^{2} |V| +n |V|^{2} \right)$.
\begin{algorithm}[h]
\begin{algorithmic}[1]
\caption{Partial Conditional Knockoffs for Low-Dimensional Gaussian Models} \label{alg:ldg-block}
\ENSURE $\bX_V \in \R^{n\times |V|}$, columns to condition on: $\bX_B\in\R^{n\times |B|}$.
\REQUIRE $n> 2|V|+|B|$.
\STATE Compute the least squares fitted value $\hat{\bX}_j$ and residual $\bs{R}_{j}$ from regressing $\bX_j$ on $[\bs{1}_{n},\bX_{B}]$ for each $j\in V$. Let $\bs{R}=[\dots,\bs{R}_{j},\dots]_{j\in V}$ and compute $\hbS=\bs{R}\tp \bs{R}$.\label{line:ls} 
\STATE Find $\bss\in\R^{|V|}$ such that $\bs{0}_{|V|\times |V|}\prec\diag{\bss}\prec 2 \hbS$.
\STATE Compute the Cholesky decomposition of $2\diag{\bss} - \diag{\bss}\hbS^{-1}\diag{\bss}$ as $\bL^{\top}\bL$.
\STATE Generate $\bW$ a $n\times |V|$ matrix whose entries are i.i.d. $\N(0,1)$ and independent of $\bX$ and compute the Gram--Schmidt orthonormalization 
$[\!\!\underbrace{\bU_{0}}_{n\times(1+|B|+|V|)}\!\!, \;\;\;  \underbrace{ \bU}_{n\times |V|}\, ]$ of $\left[\, \bs{1}_{n},\bX_{B}, \bs{R}, \,\bW\right]$.  \label{line:ldg-block-sub-1}
\STATE Set $\bXk_{V}=\hat{\bX}_{V} + \bs{R}\left( \bs{I}_{|V|}- (\hbS)^{-1}\diag{\bss} \right)+\bU \bL$. \label{line:ldg-block-1}
\RETURN $\bXk_{V}$.
\end{algorithmic}
\end{algorithm}

The validity of Algorithm~\ref{alg:ldg-block} relies on its equivalence to a straightforward but slow algorithm, Algorithm~\ref{alg:ldg-block-2}. We first show the validity of Algorithm~\ref{alg:ldg-block-2} and then show the equivalence. 

\begin{algorithm}[h]
\begin{algorithmic}[1]
\caption{Alternative Form of Algorithm~\ref{alg:ldg-block}} \label{alg:ldg-block-2}
\ENSURE $\bX \in \R^{n\times p}$, $B\subseteq [p]$ and $V=B^c$.
\REQUIRE $n> 2|V|+|B|$.
\STATE Generate a $n\times(n-1- | B|)$ orthonormal matrix $\bs{Q}$ that is orthogonal to $[\bs{1}_{n},\bX_{  B}]$, and compute $\bs{Q}_{\perp}$ an orthonormal basis for the column space of $[\bs{1}_{n},\bX_{  B}]$.\label{line:ldg-block-Q}
\STATE  Construct low-dimensional knockoffs $\bs{J}$ for $\bs{Q}\tp \bX_{V}$ via Algorithm~\ref{alg:ldg-known-mean} with $\bmu = \bs{0}$. \label{line:ldg-block-alg1}
\STATE Set $\bXk_{V}=\bs{Q}\bs{J} + \bs{Q}_{\perp}\bs{Q}_{\perp}\tp \bX_{V}$.\label{line:genxk}
\RETURN $\bXk_{V}$.
\end{algorithmic}
\end{algorithm}

\begin{proposition}
	Algorithm~\ref{alg:ldg-block-2} generates valid knockoff for $\bX_{V}$ conditional on $\bX_{B}$. 
\end{proposition}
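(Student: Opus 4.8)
The plan is to reduce this ``partial'' construction to the full, zero‑mean Gaussian construction of Algorithm~\ref{alg:ldg-known-mean} by working in the orthogonal complement of $\operatorname{span}([\bs{1}_n,\bX_B])$. First I would record the conditional law of $\bX_V$: since the rows of $\bX$ are i.i.d.\ $\N(\bmu,\bS)$ with $\bS\succ\bs{0}$, Gaussian conditioning gives
\[
\bX_V \mid \bX_B \;\sim\; \N_{n,|V|}\!\big(\bs{1}_n\bs{a}^{\top}+\bX_B\bs{\Xi}^{\top},\;\bs{I}_n\otimes\bs{\Sigma}^{*}\big),
\]
with $\bs{\Xi}=\bS_{V,B}\bS_{B,B}^{-1}$, $\bs{a}=\bmu_V-\bs{\Xi}\bmu_B$, and $\bs{\Sigma}^{*}=\bS_{V,V}-\bs{\Xi}\bS_{B,V}\succ\bs{0}$; in particular the conditional mean $\bs{M}:=\bs{1}_n\bs{a}^{\top}+\bX_B\bs{\Xi}^{\top}$ lies columnwise in $\operatorname{span}([\bs{1}_n,\bX_B])$. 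Writing $\bX_V=\bs{M}+\bs{E}$ with $\bs{E}\mid\bX_B\sim\N_{n,|V|}(\bs{0},\bs{I}_n\otimes\bs{\Sigma}^{*})$, independent of the exogenous randomness used to draw $\bs{Q}$, I would condition on the pair $(\bX_B,\bs{Q})$, noting that then $\bs{Q}_{\perp}\bs{Q}_{\perp}^{\top}=\bs{I}_n-\bs{Q}\bs{Q}^{\top}$ is the (fixed) orthogonal projection onto $\operatorname{span}([\bs{1}_n,\bX_B])$.

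The crux is the distributional fact that, conditional on $(\bX_B,\bs{Q})$, we have $\bs{Q}^{\top}\bX_V=\bs{Q}^{\top}\bs{E}$ (since $\bs{Q}^{\top}\bs{M}=\bs{0}$), so by rotational invariance of the matrix‑variate Gaussian (Equation~\eqref{eq:gaussian-matrix} with $\bs{\Gamma}=\bs{Q}^{\top}$ and $\bs{Q}^{\top}\bs{Q}=\bs{I}_{n-1-|B|}$), $\bs{Q}^{\top}\bX_V\sim\N_{n-1-|B|,|V|}(\bs{0},\bs{I}_{n-1-|B|}\otimes\bs{\Sigma}^{*})$ --- i.e.\ i.i.d.\ zero‑mean Gaussian rows, exactly the input for which Algorithm~\ref{alg:ldg-known-mean} is valid, its dimension hypothesis $n-1-|B|\ge 2|V|$ being the assumption $n>2|V|+|B|$. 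Moreover $\bs{Q}^{\top}\bs{E}$ and $\bs{Q}_{\perp}^{\top}\bs{E}$ are jointly Gaussian with zero cross‑covariance ($\bs{Q}^{\top}\bs{Q}_{\perp}=\bs{0}$), hence independent given $(\bX_B,\bs{Q})$, so $\bs{Q}^{\top}\bX_V$ is independent of $\bs{Q}_{\perp}\bs{Q}_{\perp}^{\top}\bX_V$ given $(\bX_B,\bs{Q})$. Since $\bs{J}$ is a function of $\bs{Q}^{\top}\bX_V$ and fresh noise only, $[\bs{Q}^{\top}\bX_V,\bs{J}]$ is likewise conditionally independent of $\bs{Q}_{\perp}\bs{Q}_{\perp}^{\top}\bX_V$ given $(\bX_B,\bs{Q})$, while the validity of Algorithm~\ref{alg:ldg-known-mean} (the zero‑mean specialization of Theorem~\ref{thm:ldg_ko}) gives $[\bs{Q}^{\top}\bX_V,\bs{J}]_{\swap{A}}\eqd[\bs{Q}^{\top}\bX_V,\bs{J}]$ given $(\bX_B,\bs{Q})$, for every $A\subseteq V$.

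To conclude I would pass back to the original coordinates. Because $[\bs{Q},\bs{Q}_{\perp}]$ is an orthonormal basis of $\R^n$, $\bX_V=\bs{Q}\bs{Q}^{\top}\bX_V+\bs{Q}_{\perp}\bs{Q}_{\perp}^{\top}\bX_V$ and $\bXk_V=\bs{Q}\bs{J}+\bs{Q}_{\perp}\bs{Q}_{\perp}^{\top}\bX_V$ share the identical ``fitted'' piece $\bs{Q}_{\perp}\bs{Q}_{\perp}^{\top}\bX_V$, so for any $A\subseteq V$,
\[
[\bX_V,\bXk_V]_{\swap{A}}
= \bs{Q}\,[\bs{Q}^{\top}\bX_V,\bs{J}]_{\swap{A}}
 \;+\; \big[\,\bs{Q}_{\perp}\bs{Q}_{\perp}^{\top}\bX_V,\;\bs{Q}_{\perp}\bs{Q}_{\perp}^{\top}\bX_V\,\big],
\]
the second term being swap‑invariant since its two blocks coincide; thus $[\bX_V,\bXk_V]_{\swap{A}}=h\big(\bs{Q},\,\bs{Q}_{\perp}\bs{Q}_{\perp}^{\top}\bX_V,\,[\bs{Q}^{\top}\bX_V,\bs{J}]_{\swap{A}}\big)$ for a fixed measurable map $h$ that depends on $A$ only through its last argument. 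Given $(\bX_B,\bs{Q})$, that last argument has a law not depending on $A$ and is independent of the second argument, so the joint law of the three arguments is unchanged when $\swap{A}$ is dropped; applying $h$ yields $[\bX_V,\bXk_V]_{\swap{A}}\eqd[\bX_V,\bXk_V]$ given $(\bX_B,\bs{Q})$, and averaging over $\bs{Q}$ gives $[\bX_V,\bXk_V]_{\swap{A}}\eqd[\bX_V,\bXk_V]\mid\bX_B$. Finally $\bXk_V\indp\by\mid\bX$ is immediate as $\bXk_V$ is built from $\bX$ and independent noise alone. The main obstacle I anticipate is the conditional‑independence bookkeeping around $\bs{Q}$ --- specifically, verifying that given $(\bX_B,\bs{Q})$ the projected data $\bs{Q}^{\top}\bX_V$ is exactly i.i.d.\ zero‑mean Gaussian and is independent of the fitted part that the swap carries along unchanged --- after which the argument is linear algebra plus the already‑established validity of Algorithm~\ref{alg:ldg-known-mean}.
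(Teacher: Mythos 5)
Your proof is correct and follows essentially the same route as the paper's: condition on $\bX_B$, use classical Gaussian conditioning and Equation~\eqref{eq:gaussian-matrix} to show $\bQ^{\top}\bX_V$ is zero-mean i.i.d.\ Gaussian, invoke Algorithm~\ref{alg:ldg-known-mean} to obtain swap-exchangeability of $[\bQ^{\top}\bX_V,\bs{J}]$, and reassemble via $\bs{I}_n=\bQ\bQ^{\top}+\bQ_{\perp}\bQ_{\perp}^{\top}$. Your explicit verification that $\bQ^{\top}\bX_V$ (hence $[\bQ^{\top}\bX_V,\bs{J}]$) is conditionally independent of the fitted piece $\bQ_{\perp}\bQ_{\perp}^{\top}\bX_V$ given $(\bX_B,\bQ)$ is a worthwhile sharpening of the paper's final step, which simply says the trivially swap-invariant fitted piece can be ``added to both sides''---a move that tacitly relies on exactly this joint/independence structure to upgrade the marginal swap-exchangeability of $[\bQ^{\top}\bX_V,\bs{J}]$ to the joint swap-exchangeability of $[\bX_V,\bXk_V]$.
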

\begin{proof}
By classical results for the multivariate Gaussian distribution, we have
\begin{equation}\label{eq:ldg-block-cond-XBc}
	 X_{V}\mid X_B \sim \N(\bmu^{\ast}+\bs{\Xi}X_B ,\; \bS^{\ast}),
\end{equation}
 where $\bs{\Xi}=\bS_{V,B} (\bS_{B,B})^{-1}$, $\bmu^{\ast}=\bmu_{V} -\bs{\Xi}\bmu_B$ and $(\bS^{\ast})^{-1}= ( \bS^{-1} )_{V,V} $. 
 
We want to show that for any $A\subseteq V$,  
 \begin{equation}\label{eq:ldg-block-swap-separate}
 [  \bX_{V} ,  \bXk_{V}] _{\swap{A} } \eqd   [   \bX_{V} ,  \bXk_{V} ]  \mid \bX_{B}.
 \end{equation}
 
 For $n$ i.i.d. samples, 
 \[
 \bX_{V}\mid \bX_{B} \sim \N_{n,|V|}(\bs{1}_{n}(\bmu^{\ast})\tp + \bX_{B} \bs{\Xi}\tp ,\; \bs{I}_{n}\otimes \bS^{\ast}  ).
 \]
By the definition of $\bs{Q}$ in Algorithm~\ref{alg:sparse-gaussian-or}, $\bs{Q}\tp [\bs{1}_{n},\bX_{B}]=\bs{0}_{ (n-1-|B|)\times (1+|B|)}$ and $\bs{Q}\tp\bs{Q}=\bs{I}_{n-1-|B| }$. 
This together with the property in Equation~\eqref{eq:gaussian-matrix} implies
\[
\bs{Q}\tp \bX_{V} \mid \bX_B \sim \N_{n-1-|B|,|V|}(\bs{0},\; \bs{I}_{n-1-|B|} \otimes \bS^{\ast} ).
\]


Since $n-1-|B| \geq 2|V|$, Algorithm~\ref{alg:ldg-known-mean} can be used to generate knockoffs $\bs{J}$ for $\bs{Q}\tp \bX_{V}$, which satisfies that
\[
 [ \bs{Q}\tp \bX_{V} ,\bs{J}] _{\swap{A} } \eqd   [ \bs{Q}\tp \bX_{V} ,\bs{J}] \mid \bX_B,
 \]
and thus
\[
 [  \bs{Q}\bs{Q}\tp \bX_{V} , \bs{Q}\bs{J}] _{\swap{A} } \eqd   [  \bs{Q}\bs{Q}\tp \bX_{V} , \bs{Q}\bs{J}]  \mid \bX_{B}.
\]	
Adding $[\bs{Q}_{\perp}\bs{Q}_{\perp}\tp \bX_{V},\bs{Q}_{\perp}\bs{Q}_{\perp}\tp \bX_{V}]$, which is trivially invariant to swapping, to both sides and using $\bs{I}_{n}=\bs{Q}\bs{Q}\tp+ \bs{Q}_{\perp}\bs{Q}_{\perp}\tp$ and the definition of $\bXk_{V}$ in Line~\ref{line:genxk} of Algorithm~\ref{alg:sparse-gaussian-or}, we have
\[
 [  \bX_{V} ,  \bXk_{V}] _{\swap{A} } \eqd   [   \bX_{V} ,  \bXk_{V} ]  \mid \bX_{B}.
\]
Since this holds for any $A\subseteq V$, $\bXk_{V}$ is a valid knockoff matrix for $\bX_{V}$ conditional on $\bX_{B}$.
\end{proof}

\begin{theorem}\label{thm:ldg-block}
	Algorithm~\ref{alg:ldg-block} generates valid knockoffs for $\bX_V$ conditional on $\bX_B$. 
\end{theorem}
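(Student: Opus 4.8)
The plan is to deduce Theorem~\ref{thm:ldg-block} from the already-established validity of Algorithm~\ref{alg:ldg-block-2} by showing that, run on the same input $\bX$ and coupled to use the same vector $\bss$ (in every construction of Appendix~\ref{app:detail-ldg}, $\bss$ is a function of $\hbS$ only), the two algorithms produce $\bXk_V$ with the same conditional law given $\bX$. First note the two algorithms form the same $\hbS$: in Algorithm~\ref{alg:ldg-block} it is $\bs{R}^{\top}\bs{R}$, and in Algorithm~\ref{alg:ldg-block-2} the matrix built inside the call to Algorithm~\ref{alg:ldg-known-mean} (with $\bmu=\bs{0}$ on input $\bs{Q}^{\top}\bX_V$) is $(\bs{Q}^{\top}\bX_V)^{\top}\bs{Q}^{\top}\bX_V=\bX_V^{\top}\bs{Q}\bs{Q}^{\top}\bX_V$; since $[\bs{Q}_{\perp},\bs{Q}]$ is orthogonal, $\bs{Q}\bs{Q}^{\top}=\bs{I}_n-\bs{Q}_{\perp}\bs{Q}_{\perp}^{\top}$ is the residual projection onto $\text{span}([\bs{1}_n,\bX_B])^{\perp}$, so $\bs{Q}\bs{Q}^{\top}\bX_V=\bs{R}$, $\bs{Q}_{\perp}\bs{Q}_{\perp}^{\top}\bX_V=\hat{\bX}_V$, and $\bX_V^{\top}\bs{Q}\bs{Q}^{\top}\bX_V=\bs{R}^{\top}\bs{R}$. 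Hence $\bss$ and $\bL$ agree, and substituting $\bs{J}$ from Algorithm~\ref{alg:ldg-known-mean} into Line~\ref{line:genxk} rewrites the output of Algorithm~\ref{alg:ldg-block-2} as $\hat{\bX}_V+\bs{R}(\bs{I}_{|V|}-\hbS^{-1}\diag{\bss})+\bs{Q}\bU'\bL$, where $\bU'$ denotes the Gaussian block generated inside that call. Comparing with Line~\ref{line:ldg-block-1} of Algorithm~\ref{alg:ldg-block}, it remains only to show $\bs{Q}\bU'\eqd\bU\mid\bX$, where $\bU$ is the block from Line~\ref{line:ldg-block-sub-1}.

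Both noise blocks will be identified with Haar-distributed orthonormal $|V|$-frames in suitable $d$-dimensional subspaces, where $d:=n-1-|B|-|V|\ge|V|$ by the hypothesis $n>2|V|+|B|$. Put $\mathcal{V}:=\text{span}([\bs{1}_n,\bX_B,\bs{R}])^{\perp}\subseteq\R^n$; almost surely $[\bs{1}_n,\bX_B,\bs{R}]$ has full column rank, so $\dim\mathcal{V}=d$. In the Gram--Schmidt decomposition $[\bU_0,\bU]=\Psi([\bs{1}_n,\bX_B,\bs{R},\bW])$, one checks that $\bU=\Psi(\bs{P}_{\mathcal{V}}\bW)$ with $\bs{P}_{\mathcal{V}}$ the orthogonal projection onto $\mathcal{V}$; writing $\bs{P}_{\mathcal{V}}=\bs{A}\bs{A}^{\top}$ for any orthonormal basis $\bs{A}$ of $\mathcal{V}$ (a function of $\bX$) and using that Gram--Schmidt commutes with left multiplication by a matrix with orthonormal columns (cf.\ Equation~\eqref{eq:ldg-gs-0}), we get $\bU=\bs{A}\,\Psi(\bs{A}^{\top}\bW)$, with $\bs{A}^{\top}\bW$ a $d\times|V|$ matrix of i.i.d.\ $\N(0,1)$ entries given $\bX$. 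The Gram--Schmidt orthonormalization of such a Gaussian matrix is uniform on $\mathcal{F}_{d,|V|}$ (by rotational invariance of the Gaussian, the same commutation property, transitivity of $\mathcal{O}_d$ on $\mathcal{F}_{d,|V|}$, and uniqueness of the invariant probability measure, exactly as in Lemma~\ref{lem:ldg-unique}); and since $\bs{A}\,\mathrm{Unif}(\mathcal{F}_{d,|V|})$ has the same law for every orthonormal basis $\bs{A}$ of $\mathcal{V}$, that law---call it $\mu_{\mathcal{V}}$---depends only on $\mathcal{V}$, so $\bU\mid\bX\sim\mu_{\mathcal{V}}$. The identical argument inside Algorithm~\ref{alg:ldg-known-mean} gives $\bU'\mid(\bX,\bs{Q})\sim\mu_{\mathcal{W}}$, where $\mathcal{W}:=\text{span}(\bs{Q}^{\top}\bX_V)^{\perp}\subseteq\R^{n-1-|B|}$ also has dimension $d$.

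Finally, $\bs{Q}$ is an isometry of $\R^{n-1-|B|}$ onto $\text{span}(\bs{Q})=\text{span}([\bs{1}_n,\bX_B])^{\perp}$ that carries $\text{span}(\bs{Q}^{\top}\bX_V)$ onto $\text{span}(\bs{Q}\bs{Q}^{\top}\bX_V)=\text{span}(\bs{R})$, hence carries $\mathcal{W}$ onto $\text{span}([\bs{1}_n,\bX_B])^{\perp}\cap\text{span}(\bs{R})^{\perp}=\mathcal{V}$, the last equality holding because $\bs{R}$, a matrix of regression residuals, is orthogonal to $[\bs{1}_n,\bX_B]$. Since $\bs{Q}$ also sends orthonormal bases of $\mathcal{W}$ to orthonormal bases of $\mathcal{V}$, the pushforward of $\mu_{\mathcal{W}}$ under $\bs{Q}$ is $\mu_{\mathcal{V}}$; thus $\bs{Q}\bU'\mid(\bX,\bs{Q})\sim\mu_{\mathcal{V}}$, and as $\mu_{\mathcal{V}}$ does not involve $\bs{Q}$, integrating it out gives $\bs{Q}\bU'\mid\bX\sim\mu_{\mathcal{V}}$, i.e.\ $\bs{Q}\bU'\eqd\bU\mid\bX$. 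Consequently Algorithms~\ref{alg:ldg-block} and \ref{alg:ldg-block-2} output $(\bX_V,\bXk_V)$ with the same conditional law given $\bX$, a fortiori given $\bX_B$, so the swap-exchangeability $[\bX_V,\bXk_V]_{\swap{A}}\eqd[\bX_V,\bXk_V]\mid\bX_B$ for all $A\subseteq V$ transfers from Algorithm~\ref{alg:ldg-block-2}, while $\bXk_V\indp\by\mid\bX$ is immediate because $\bXk_V$ depends only on $\bX$ and independent noise. I expect the main work to be the bookkeeping of the last two paragraphs---pinning down the $\bU$-block of a Gram--Schmidt output as a Haar frame in the residual subspace $\mathcal{V}$, and verifying that the isometry $\bs{Q}$ identifies $\mathcal{W}$ with $\mathcal{V}$ so the two Haar laws literally coincide; the remainder is substitution.
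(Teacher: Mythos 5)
Your proof is correct and follows essentially the same route as the paper: both reduce Algorithm~\ref{alg:ldg-block} to Algorithm~\ref{alg:ldg-block-2} by verifying that $\hbS$, $\bss$, and $\bL$ coincide, then show $\bs{Q}\bU'\eqd\bU\mid\bX$ by identifying each noise block as a Haar-distributed $|V|$-frame in the $(n-1-|B|-|V|)$-dimensional residual subspace and using that $\bs{Q}$ is an isometry carrying $\mathcal{W}$ onto $\mathcal{V}$. Your exposition is slightly more basis-free — you argue directly that the law $\mu_{\mathcal{V}}$ depends only on the subspace $\mathcal{V}$, whereas the paper picks the specific orthonormal bases $\bQ\bZ$ of $\mathcal{V}$ and $\bZ$ of $\mathcal{W}$ and computes $\bU=\bQ\bZ\,\Psi(\bZ\tp\bQ\tp\bW)$ and $\bU'=\bZ\,\Psi(\bZ\tp\bW')$ explicitly via the block-triangular Gram--Schmidt computation — but this is a cosmetic rather than a substantive difference.
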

\begin{proof}
		It suffices to show that if the same $\bss$ and $\bL$ in Algorithm~\ref{alg:ldg-block}  are used to generate $\bs{J}$ in Line~\ref{line:ldg-block-alg1} of Algorithm~\ref{alg:ldg-block-2}, then the output $\bXk_{V}$ in Algorithm~\ref{alg:ldg-block} and the output in Algorithm~\ref{alg:ldg-block-2}, which is denoted by $\bXk_{V}^{\prime}$ to avoid confusion,  have the same conditional distribution given $\bX_{B}$ and $\bX_{V}$. 

	We write the Gram--Schmidt orthonormalization as a function $\Psi	(\cdot)$. 
Let $b=1+|B|$ and $d=|V|$. By assumption, $b+2d\leq n$. 
	
	By the definition of $\bs{Q}$ and $\bs{Q}_{\perp}$ in Line~\ref{line:ldg-block-Q} of Algorithm~\ref{alg:ldg-block-2}, we have 
	\[
	\hat{\bX}_{V}=\bs{Q}_{\perp}\bs{Q}_{\perp}\tp \bX_{V},\; \bs{R}=\bs{Q}\bs{Q}\tp \bX_{V}. 
	\]

First, we express $\bXk_{V}^{\prime}$ in a similar form as  $\bXk_{V}$ in Line~\ref{line:ldg-block-1} of Algorithm~\ref{alg:ldg-block}. The conditional knockoff matrix for $\bs{Q}\tp \bX_{V}$ generated by Algorithm~\ref{alg:ldg-known-mean} (with $\bmu=\bs{0}_{(n-b)\times 1}$) is given by 
\[
\bs{J}=\bs{Q}\tp\bX_{V}(\bs{I}_d - (\hbS^{\prime})^{-1}\diag{\bss}) + \bU^{\prime}\bL,\]
where $\hbS^{\prime}=\bX_{V}\tp\bs{Q}\bs{Q}\tp\bX_{V}=\bs{R}\tp \bs{R}=\hbS$ and $\bU^{\prime}$ is the last $d$ columns of the Gram--Schmidt orthonormalization of $[\bs{Q}\tp\bX_{V}, \bW^{\prime}]$ with $\bW^{\prime}\sim \N_{n-b,d}(\bs{0}, \bs{I}_{n-b}\otimes \bs{I}_{d})$ independent of $\bX_{V}$ and $\bX_{B}$. 
Hence we have 
	 	\begin{align}
\bXk_{V}^{\prime}&=\, \bs{Q}_{\perp}\bs{Q}_{\perp}\tp \bX_{V}+\bs{Q}\bs{J}\nonumber\\
&=\,  \hat{\bX}_{V} + \bs{R}\left( \bs{I}_{d}- \hbS^{-1}\diag{\bss} \right)+\bs{Q}\bU^{\prime}\bL. \label{eq:ldg-block-Q}
	 	\end{align}
It suffices to show $\bU$ in Line~\ref{line:ldg-block-sub-1} of Algorithm~\ref{alg:ldg-block} is distributed the same as $\bs{Q}\bU^{\prime}$ conditional on $\bX$. 

Without loss of generality (by choosing $\bs{Q}_{\perp}$ in Line~\ref{line:ldg-block-Q} of Algorithm~\ref{alg:ldg-block-2}), assume the Gram--Schmidt orthonormalization of $[\bs{1}_{n},\bX_{B},\bX_{V}]$ is $[\bs{Q}_{\perp}, \bs{M}]$, where $\bs{M}$ is a $n\times d$ matrix.
Hence $\text{span}(\bs{M})=\text{span}(\bs{Q}\bs{Q}\tp\bX_{V})$. 
Let $\bZ$ be a $(n-b)\times (n-b-d)$ matrix whose columns form an orthonormal basis for the orthogonal complement of $\text{span}( \bs{Q}\tp\bX_{V})$. 

\textbf{Characterizing $\bU$}:
Let $\bs{\Gamma}=[\bs{Q}_{\perp}, \bs{M}, \bs{Q}\bs{Z}]$. 
Since $\bZ\tp \bs{Q}\tp\bX_{V}=\bs{0}$, we have $\bZ\tp\bs{Q}\tp\bs{Q}\bs{Q}\tp\bX_{V}=\bs{0}$ and thus $\bZ\tp\bs{Q}\tp \bs{M}=\bs{0}$. Together with $\bs{Q}_{\perp}\tp \bs{Q}\bZ=\bs{0}$ and $(\bs{Q}\bZ)\tp\bs{Q}\bZ=\bs{I}_{n-b-d}$, we have $\bs{\Gamma}\in \bs{O}_{n}$. 

Using Equation~\eqref{eq:ldg-gs-0}, 
\begin{align}\label{eq:ldg-block-sub-gs1}
  \Psi(\left[\bs{1}_{n},\bX_{B}, \bX_{V}, \bW\right])
=\; &\Psi( \left[\Psi(\left[\bs{1}_{n},\bX_{B}, \bX_{V}\right]), \bW \right]) \nonumber\\
=\; &\Psi( \left[\bs{Q}_{\perp}, \bs{M}, \bW \right]) \nonumber\\
=\; & \bs{\Gamma}\Psi(\bs{\Gamma}\tp \left[\bs{Q}_{\perp}, \bs{M}, \bW  \right]),
\end{align}
where the first equality is due to the fact that Gram--Schmidt orthonormalization treats the columns of its inputs sequentially. 
An elementary calculation shows
\begin{align*}
	\bs{\Gamma}\tp \left[\bs{Q}_{\perp}, \bs{M}, \bW  \right]=
		\begin{bmatrix}
		\bs{I}_{b} & \bs{0}_{b\times d} &  \bs{Q}_{\perp}\tp \bW \\
		\bs{0}_{d\times b}&  \bs{I}_{d}  &   \bs{M}\tp\bW \\
		\bs{0}_{(n-b-d)\times b}&  \bs{0}_{(n-b-d)\times d} & \bZ\tp\bs{Q}\tp\bW
\end{bmatrix}, 
\end{align*}
thus 
\begin{align}\label{eq:ldg-block-sub-gs2}
	\Psi\left(  \bs{\Gamma}\tp \left[\bs{Q}_{\perp}, \bs{M}, \bW  \right] \right)=
			\begin{bmatrix}
		\bs{I}_{b} & \bs{0}_{b\times d} &  \bs{0}_{b\times(n-b-d)}\\
		\bs{0}_{d\times b}&  \bs{I}_{d}  & \bs{0}_{d\times(n-b-d)} \\
		\bs{0}_{(n-b-d)\times b}&  \bs{0}_{(n-b-d)\times d} & \Psi(\bZ\tp\bs{Q}\tp\bW)
\end{bmatrix}. 
\end{align}
Using the definition of $\bs{\Gamma}$ and Equations~\eqref{eq:ldg-block-sub-gs1} and \eqref{eq:ldg-block-sub-gs2}, we conclude \[\Psi(\left[\bs{1}_{n},\bX_{B}, \bX_{V}, \bW\right])=[\bs{Q}_{\perp}, \bs{M}, \bs{Q}\bs{Z} \Psi(\bZ\tp\bs{Q}\tp\bW) ],\]which implies
\[
\bU=\bs{Q}\bs{Z} \Psi(\bZ\tp\bs{Q}\tp\bW). 
\]

Noting that $\bZ\tp\bs{Q}\tp\bs{Q}\bZ=\bs{I}_{n-b-d}$ and $\bW\sim \N_{n,d}(\bs{0}, \bs{I}_{n}\otimes \bs{I}_{d})$, Equation~\eqref{eq:gaussian-matrix} implies $\bZ\tp\bs{Q}\tp\bW\sim \N_{n-b-d,d}(\bs{0}, \bs{I}_{n-b-d}\otimes \bs{I}_{d})$. 
By the classic result in \citet[Proposition 7.2]{eaton1983multivariate}, the conditional distribution of $ \Psi(\bZ\tp\bs{Q}\tp\bW)$ given $(\bX_B, \bX_{V})$ is the unique $\mathcal{O}_{n-b-d}$-invariant probability measure on $\mathcal{F}_{n-b-d,d}$. 

\textbf{Characterizing $\bU^{\prime}$}: Let $\bZ_{\perp}\in \R^{(n-b)\times d}$ be the Gram--Schmidt orthonormalization of $\bQ\tp\bX_{V}$, and thus $\bZ\tp\bZ_{\perp}=\bs{0}$. Let $\bs{\Gamma}_{z}=[\bZ_{\perp},\bZ]$, then $\bs{\Gamma}_{z}\in \mathcal{O}_{n-b}$. Again using the properties of Gram--Schmidt orthonormalization, 
\begin{equation}\label{eq:ldg-block-sub-gs3}
\Psi( [\bs{Q}\tp\bX_{V}, \bW^{\prime}] )=\Psi( [\bZ_{\perp}, \bW^{\prime}] )=\bs{\Gamma}_{z}\Psi(\bs{\Gamma}_{z}\tp[\bZ_{\perp}, \bW^{\prime}] ).
\end{equation}
Since 
\begin{align*}
	\bs{\Gamma}_{z}\tp[\bZ_{\perp}, \bW^{\prime}] =
		\begin{bmatrix}
		\bs{I}_{d} &                       \bZ_{\perp}\tp \bW^{\prime} \\
		\bs{0}_{(n-b-d)\times d}&  \bZ\tp\bW^{\prime}
\end{bmatrix}, 
\end{align*}
it holds that
\begin{align}\label{eq:ldg-block-sub-gs4}
	\Psi\left( \bs{\Gamma}_{z}\tp[\bZ_{\perp}, \bW^{\prime}]  \right)=
				\begin{bmatrix}
		\bs{I}_{d} &                      \bs{0}_{d\times d} \\
		\bs{0}_{(n-b-d)\times d}&  \Psi(\bZ\tp\bW^{\prime})
\end{bmatrix}. 
\end{align}
Hence $\bU^{\prime}=\bZ \Psi(\bZ\tp\bW^{\prime})$ by combining Equations~\eqref{eq:ldg-block-sub-gs3} and \eqref{eq:ldg-block-sub-gs4}. As before, we can conclude that the conditional distribution of $\Psi(\bZ\tp\bW^{\prime})$ given $( \bX_B, \bX_{V})$ is the unique $\mathcal{O}_{n-b-d}$-invariant probability measure on $\mathcal{F}_{n-b-d,d}$. 

Combining the two parts above and using the uniqueness of the invariant measure, we conclude that
\[
\bU\eqd \bs{Q}\bU^{\prime} \mid ( \bX_B, \bX_{V}).
\]
Using the definition of $\bXk_{V}$ in Line~\ref{line:ldg-block-sub-1} and Equation~\eqref{eq:ldg-block-Q}, it follows that $\bXk_{V} \eqd \bXk_{V}^{\prime} \mid (\bX_B, \bX_{V})$. 
\end{proof}

\subsection{Gaussian Graphical Models}\label{app:detail-ggm}
The computational complexity of Algorithm~\ref{alg:sparse-gaussian-or} can be shown by summing up the computational complexity of Algorithm~\ref{alg:ldg-block} in Line~\ref {line:ggm-alg1} for individual connected components, which is $O\left(n  \left| I_{V_{\icc}} \, \cap \, B \right|^{2} \left| V_{\icc} \right|+  | V_{\icc} |^{2} \right)$, 
as shown in Appendix~\ref{app:detail-ldg-block}. Its upper bound is due to the facts that $\sum_{\icc=1}^{\ncc}| V_{\icc}|\leq p$ and $\max_{1\leq \icc\leq \ncc} |V_{\icc}|\leq n'$. 

\subsubsection{Greedy Search for a Blocking Set}
Algorithm~\ref{alg:ggm-greedy-search} is the virtual implementation of Algorithm~\ref{alg:ggm-greedy-search-graph}. In Line~\ref{alg:alg-search-absorb} of Algorithm~\ref{alg:ggm-greedy-search}, we only need to keep track of $N_j$ the neighborhood of each unvisited $j$ in $\bar{G}$ among the vertices in $[p]$. This is because  if $k\in N_j$ and $\tilde{k}$ exists in $\bar{G}$ then it is guaranteed by Algorithm~\ref{alg:ggm-greedy-search-graph} that $\tilde{k}$ is a neighbor of $j$ in $\bar{G}$, and $j$ is a neighbor of both $k$ and $\tilde{k}$. This also implies that $|N_{j}\cap \{\pi_1,\dots, \pi_{t-1} \} \setminus B|$ equals the size of the neighborhood of $j$ in $\bar{G}$ among the knockoff vertices. 
Also note that the neighborhood of a visited vertex is no longer used in Line~\ref{line:alg-search-graph} of Algorithm~\ref{alg:ggm-greedy-search-graph}, therefore the update step in Line~\ref{alg:alg-search-absorb} of Algorithm~\ref{alg:ggm-greedy-search} can be restricted to the unvisited $k$'s. 
In the following, we use the equivalence between Algorithm~\ref{alg:ggm-greedy-search-graph} and Algorithm~\ref{alg:ggm-greedy-search} to prove the properties of Algorithm~\ref{alg:ggm-greedy-search-graph}. 

\begin{algorithm}[h]
\caption{Greedy Search for a Blocking Set} \label{alg:ggm-greedy-search}
\begin{algorithmic}[1]
\ENSURE  $\pi$ a permutation of $[p]$, $G=([p],E)$, $n'$.
\STATE Initialize $N_j=I_j$ for all $j\in [p]$, $B=\emptyset$.\label{line:alg-init}
\FOR{$t = 1,\dots,p$}
\STATE Let $j$ be $\pi_{t}$.
\IF{$n' \geq 3+ |N_{j}|+|N_{j}\cap \{\pi_1,\dots, \pi_{t-1} \} \setminus B|  $} \label{line:alg-search}
\STATE Update $N_k\leftarrow N_k \cup ( N_{j}\setminus  \{k\})$ for all $k\in N_{j}\cap \{\pi_{t+1},\dots, \pi_{p}\}$.\label{alg:alg-search-absorb}
\ELSE
\STATE $B\leftarrow B\cup \{j\}$.
\ENDIF
\ENDFOR
\RETURN $B$.
\end{algorithmic}
\end{algorithm}

The following proposition shows that if the tail of the input permutation to Algorithm~\ref{alg:ggm-greedy-search-graph} is already a blocking set of the graph, then the output from the algorithm is a subset of this blocking set. This property allows one to refine a known but large blocking set (e.g., one could apply Algorithm~\ref{alg:ggm-greedy-search-graph} to the blocking set from Example~\ref{eg:square-lattice} in Appendix~\ref{app:ggm-eg}).


%
%
%
\begin{proposition}\label{prop:ggm-perm-block}
	Suppose $n'$ and $\pi$ are the inputs of Algorithm~\ref{alg:ggm-greedy-search-graph}, which returns a blocking set $B$. If $G$ is $n'$-separated by $\{\pi_{m+1}, \dots,\pi_{p}\}$ for some $m\in [p]$, then $\pi_1,\dots,\pi_{m}$ will not be in $B$.
\end{proposition}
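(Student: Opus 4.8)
The plan is to work with Algorithm~\ref{alg:ggm-greedy-search} (which Appendix~\ref{app:detail-ggm} shows is equivalent to Algorithm~\ref{alg:ggm-greedy-search-graph}) and induct on the processing order. Relabel so $\pi=(1,\dots,p)$, write $B^{*}=\{m+1,\dots,p\}$ for the given $n'$-separating set, and let $V_1,\dots,V_\ell$ be the vertex sets of the connected components of the subgraph of $G$ induced on $[p]\setminus B^{*}=\{1,\dots,m\}$; for $j\le m$ let $V(j)$ be the component containing $j$ and $I_{V(j)}:=\bigcup_{k\in V(j)}I_k$. Writing $B^{(h)}$, $N_j^{(h)}$ for the state of $B$, $N_j$ after step $h$ of Algorithm~\ref{alg:ggm-greedy-search} (so $N_j^{(0)}=I_j$, $B^{(0)}=\emptyset$), the goal reduces to $B^{(m)}\cap\{1,\dots,m\}=\emptyset$, since $B$ only grows.

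The key point I would exploit is that during the first $m$ steps the algorithm only ever processes vertices of $\{1,\dots,m\}$, i.e.\ vertices outside $B^{*}$, so the graph gets modified only ``from'' vertices in $[p]\setminus B^{*}$. This is precisely the configuration in Part~1 of the proof of Proposition~\ref{prop:ggm-block-comp}, with $B^{*}$ now playing the role that the algorithm's own output $B$ played there. I would formalize this by a combined induction on $t=0,1,\dots,m$ of: (i) $\{1,\dots,t\}\cap B^{(t)}=\emptyset$; and (ii) for every $j\le m$, $N_j^{(t)}\subseteq(V(j)\setminus\{j\})\cup(I_{V(j)}\cap B^{*})$. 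The base case $t=0$ is immediate, using that $I_j\subseteq I_{V(j)}$, $j\notin I_j$, and any $k\in I_{V(j)}\setminus B^{*}$ is adjacent to $V(j)$ via an edge surviving in the subgraph induced on $\{1,\dots,m\}$, hence lies in $V(j)$.

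For the inductive step I would argue: given (i), (ii) at $t-1$ with $t\le m$, clause (ii) for $j=t$ gives $|N_t^{(t-1)}|\le|V(t)|-1+|I_{V(t)}\cap B^{*}|$; and since $I_{V(t)}\cap B^{*}$ is disjoint from $\{1,\dots,t-1\}$ while $\{1,\dots,t-1\}\cap B^{(t-1)}=\emptyset$, the knockoff-neighbor count $|N_t^{(t-1)}\cap\{1,\dots,t-1\}\setminus B^{(t-1)}|$ is at most $|V(t)|-1$. Hence the test quantity in Line~\ref{line:alg-search} obeys
\[
3+|N_t^{(t-1)}|+\bigl|N_t^{(t-1)}\cap\{1,\dots,t-1\}\setminus B^{(t-1)}\bigr|\ \le\ 1+2|V(t)|+|I_{V(t)}\cap B^{*}|\ \le\ n',
\]
the last step being the definition of $n'$-separation. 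So $t$ is declared free and (i) holds at $t$; and then the absorption step only touches $N_k$ for $k\in N_t^{(t-1)}\cap\{t+1,\dots,p\}$, where any $k\le m$ must lie in $V(t)$ (it is in $N_t^{(t-1)}$ and not in $B^{*}$), so $V(k)=V(t)$ and (ii) is preserved for $k$ by combining its two old (ii)-bounds; unmodified $j\le m$ keep theirs. That closes the induction.

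I do not expect a genuine obstacle here; the only thing requiring care is the bookkeeping---carrying clauses (i) and (ii) in tandem, and keeping the containment in (ii) stated relative to the externally specified $B^{*}$ rather than the algorithm's own $B$. The substance of (ii) is exactly the ``$F\cap N_j^{(h)}=\emptyset$'' argument from the proof of Proposition~\ref{prop:ggm-block-comp}, except that here its driving force is not ``blocked vertices do not propagate'' but the even simpler ``up to step $m$, vertices of $B^{*}$ are never visited.''
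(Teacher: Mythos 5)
Your proof is correct and takes essentially the same route as the paper's: both pass to Algorithm~\ref{alg:ggm-greedy-search}, establish that for $j\le m$ the neighborhood set $N_j$ stays inside $\bigl(V(j)\setminus\{j\}\bigr)\cup\bigl(I_{V(j)}\cap B^{*}\bigr)$, and then feed this containment together with $n'$-separation into the test quantity of Line~\ref{line:alg-search}; the only difference is that you phrase the containment as a forward induction where the paper uses a smallest-counterexample contradiction, which is a cosmetic reorganization of the same argument (and your invocation of clause (i) in bounding the knockoff-neighbor count is harmless but not actually needed, since dropping $\setminus B^{(t-1)}$ can only increase that count).
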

\begin{proof}
In this proof, we use the equivalence between Algorithm~\ref{alg:ggm-greedy-search-graph} and Algorithm~\ref{alg:ggm-greedy-search}. 
	Let $D=\{\pi_{m+1}, \dots,\pi_{p}\}$. Without loss of generality, re-index the variables so that $\pi_{j}= j$ for every $j\in [p]$, and thus $D=\{m+1,\dots,p\}$. 
	Denote by $N_j^{(h)}$ the set $N_j$ in Algorithm~\ref{alg:ggm-greedy-search} after the $h$th step, as in the proof of Proposition~\ref{prop:ggm-block-comp}. 
	Let $W$ be any of the connected components of the subgraph induced by deleting $D$,  and $V$ be the vertex set of $W$. Then $V\subseteq \{1,\dots,m\}$.	

	
	

\textbf{Part 1.} 
We first show that $N_j^{(h)}\subseteq V\uplus\, (I_{V}\cap D)$ for any $j\in V$ and $h\in [p]$. The proof is similar to Part 1  in the proof of Proposition~\ref{prop:ggm-block-comp}.  
By definition of $V$, every element of $I_V$ is either in $V$ or $D$. 
Define $F=[p]\setminus ( V \uplus (I_{V} \cap \, D) )$. It suffices to show   that $k\in F$ will never appear in $N_j$ for any $j\in V$.

Initially, for any $j\in V$, $N_j^{(0)}=I_j$ does not intersect $F$. 
Suppose $h$ is the smallest integer such that there exists some $j\in V$ such that $N_j^{(h)}$ contains some $k\in F$. 
By the construction of the algorithm, $j>h$ and $j\in N_h^{(h-1)}$ (otherwise $N_j^{(h)}$ would not have been altered in the $h$th step), $k\in N_h^{(h-1)}$ (otherwise $k$ could not have entered $N_j^{(h)}$ at the $h$th step), and $h\in N_j^{(h-1)}$ (by symmetry of $N_j^{(i)}$ and $N_h^{(i)}$ for $i<\min(h,j)$). The fact that $h < j\leq m$ implies $h\notin D$. 
Since $h\in N_j^{(h-1)}$, the definition of $h$ guarantees $h\notin F$ (otherwise $h-1$ would be smaller and satisfy the condition defining $h$), and thus $h$ is in either $V$ or $I_V\cap D$. But since $h\notin D$, we must have $h\in V$. Now we have shown $k\in N_h^{(h-1)}$, i.e., $F$ intersects $N_h$ before the $h$th step, and $h\in V$, but this contradicts the definition of $h$. 
We conclude that for any $j\in V$ and any $h\in [p]$,  $F\cap\, N_j^{(h)}=\emptyset$ and thus $N_{j}^{(j-1)}\subseteq ( I_{V} \cap \, D) \uplus\, (V\setminus\{j\} )$.  

\textbf{Part 2.}
For any $j\in V$, at the $j$th step of Algorithm~\ref{alg:ggm-greedy-search}, 
$N_{j}\cap \{1,\dots,j-1\}\subseteq  V\setminus \{j\}$ by the definition of $D$. 
Hence we have 
\begin{align*}
3+	 | N_{j}\cap \{1,\dots,j-1\} \setminus B |+|N_j| &\leq\; 3+ |V|-1+|V\uplus\,  (I_{V}\cap D) | -1 \\
&\leq \; 1+2|V|+| (I_{V}\cap D) | \\
&\leq \; n', 
\end{align*}
where the last inequality is because of the condition that $G$ is $n'$-separated by 	$D$. Thus the requirement in Line~\ref{line:alg-search} of Algorithm~\ref{alg:ggm-greedy-search} is satisfied and $j$ is not in the blocking set.  
	
Finally, since $j$ and $W$ are arbitrary, we conclude that any vertex in $\{1,\dots,m\}$ is not blocked. 
\end{proof}


\subsubsection{Searching for Blocking Sets}\label{app:ggm-greedy-B}
Given any $\ncg$, Algorithm~\ref{alg:greedy-blocking} performs a randomized greedy search for the blocking sets $B_{i}$. Although there is no guarantee that the $B_i$'s found by Algorithm~\ref{alg:greedy-blocking} satisfy $\bigcap\limits_{i=1}^{\ncg}B_i=\emptyset$, one can subsequently check whether $\eta_{j}=\ncg$ for any $j\in [p]$, in which case the algorithm can be run again. Inspecting the vertices with $\eta_{j}=\ncg$ may reveal the difficulties of blocking for this graph. Changing the inputs $\ncg$ and $n'$ may also help.

\begin{algorithm}[h]
\caption{Randomized Greedy Search for Blocking Sets}\label{alg:greedy-blocking}
\begin{algorithmic}[1]
\ENSURE $G$, $\ncg$, $n'$ (by default $n'=\floor{n/\ncg}$).
\REQUIRE $n'\leq n/\ncg$.
\STATE Let $\{\eta_{j}=0\}_{j=1}^{p}$ be a sequence counting how often a variable is in a blocking set.
\FOR{ $i= 1,\dots,\ncg$  }
\STATE Set $\pi^{}$ to be the decreasing order (with ties broken randomly) of $\eta_{j}$'s, so  $\eta_{\pi^{}_{1}}\geq \eta_{\pi^{}_{2}}\geq\dots\geq\eta_{\pi^{}_{p}}$.
\STATE Run Algorithm~\ref{alg:ggm-greedy-search-graph} with $\pi^{}$ and $n'$ and let $B^{(i)}$ be the returned blocking set.
\STATE Update $\eta_{j}\leftarrow \eta_{j}+\one{j\in B^{(i)}}$ for each $j=1,\dots,p$.
\ENDFOR
\RETURN $B^{(1)},\dots,B^{(m)}$.
\end{algorithmic}
\end{algorithm}


\subsubsection{Examples of $(m,n)$-Coverable  Graphs} \label{app:ggm-eg}

%

\begin{example}[Time-inhomogeneous Autoregressive Models  ]
Consider a time-inhomogeneous Gaussian AR($r$) model (assuming\footnote{When $r=0$, the graph is isolated and is $(1,n)$-coverable for any $n\geq 3$.} 
 $r\geq 1$), so that the sparsity pattern $E = \{(i,j): 1\le |i-j|\le r\}$. Suppose $n\geq 2+8r$. A simple choice of blocking sets is given as follows. Let $d=\floor{(n-2)/8}$, then $d\geq r$. Let $B_1= [p]\cap \{k d + i: k \text{ odd, and } i=1,\dots,d\} $ and $B_2= [p]\cap \{kd + i: k \text{ even, and } i=1,\dots,d\} $. Any connected component $W$ of the subgraph that deletes $B_{1}$ is no larger than $d$ and $W$'s vertices $V$ satisfy $|I_V\cap B_1|\leq 2r$, so $G$ is $(2d+2r+1)$-separated by $B_1$ and $2d+2r+1\leq n/2$. The same holds for $B_2$. Note that $[p]=B_{1}^c\cup B_{2}^{c}$, thus the graph is $(2,n)$-coverable. 
\end{example}

\begin{example}[$d$-dimensional Square-lattice Models ]\label{eg:square-lattice}
Consider a finite subset of the $d$-dimensional lattice $\Z^{d}$ where pairs of vertices with distance 1 are adjacent. Suppose $n\geq 6+4d$, one could take $B_1$ as the grid points whose coordinates sum up to an odd number, and $B_2$ as the complement of $B_1$. The subgraph that deletes $B_1$ (or $B_2$) is isolated and each vertex has a neighborhood of size $2d$, so the graph is $(3+2d)$-separated by $B_1$ (or $B_2$). Since $3+2d\leq n/2$, the graph is $(2,n)$-coverable. 
\end{example}
\begin{example}
Consider a $m$-colorable graph $G$. Let each of $V_{1}, \dots, V_{m}$ be the vertex set of the same color. For any $i\in [m]$, the subgraph that deletes $B_{i}\, :=\, \cup_{\ell \neq i} V_{\ell}$ is the subgraph that restricts on $V_{i}$, of which each vertex is isolated. Thus $G$ is $(1+2+\max_{v\in V_{i}} |I_v|)$-separated by $B_{i}$. If $n\geq \sum_{i\in [m]}(3+\max_{v\in V_{i}} |I_v|)$, the graph is $(m,n)$-coverable. Note this subsumes Example~\ref{eg:square-lattice} which has $m=2$, but also applies to many other graphs such as forests, stars, and circles.
\end{example}

\subsection{Discrete Graphical Models}\label{app:detail-dgm}

\subsubsection{Details about the Algorithms}

We begin by proving the computational complexity of Algorithm~\ref{alg:block-dg}. For each $j\in B^{c}$, enumerating all nonempty configurations of $\bst_{I_j}$ takes no more than $\prod\limits_{\ell\in I_j}\St_{\ell}$ operations by checking each $\bst_{I_j}$ or $n|I_j|$ operations by checking each observed $X_{i,I_j}$. The random permutation takes no more than $n$ steps in total, so the overall complexity is $O\left(  \sum\limits_{j\in B^{c}}(n+ \min (\prod\limits_{\ell\in I_j}K_{\ell} , n |I_j| ) ) \right)$. 

As mentioned at the beginning of Section~\ref{sec:eg-dgm}, we can generate knockoffs without assuming the covariate categories being finite. First of all, with infinite $\St_{\ell}$'s, Algorithm~\ref{alg:block-dg} can still be used since in Line~\ref{alg:dgm-enum} it is only needed to enumerate  those $\bst_{I_j}$ actually appearing in the observed data, which is at most $n$. 
Furthermore, the proof of Theorem~\ref{thm:dgm-block} does not require the $\St_{\ell}$'s to be finite.

\subsubsection{Graph Expanding}
As mentioned in Section~\ref{sec:eg-dgm-1}, in Algorithm~\ref{alg:block-dg} variables in $B$ are blocked and their knockoffs are trivial. One way to mitigate this drawback is to run multiple times of Algorithm~\ref{alg:block-dg} with expanded graphs that include the generated knockoff variables. 

Specifically, denote by $\bar{G}$ a graph being augmented from $G$. For each $j\in B^c$, we add an edge between every pair of $j$'s neighbors and add to $\bar{G}$ the `knockoff vertex'  $\tilde{j}$ which has the same neighborhood as $j$. One can show that $[\bX,\bXk_{B^c}]$ is locally Markov w.r.t. the new graph. Applying Algorithm~\ref{alg:block-dg} to $[\bX,\bXk_{B^c}]$ with graph $\bar{G}$ but with a different global cut set $\bar{B}$ which pre-includes $B^c$ and also the knockoff vertices, we can generate knockoffs for some of the variables that have been blocked in the first run. One can continue to expand the graph to include the new knockoff variables, although the neighborhoods may become so large that the knockoff variables generated are constrained (through conditioning on these large neighborhoods) to be identical to their corresponding original variables. Algorithm~\ref{alg:dgm-expanding} formally describes this process, whose validity is guaranteed by Theorem~\ref{thm:dgm-expand}.

\begin{algorithm}[h]
\caption{Conditional Knockoffs for Discrete Graphical Models with Graph-Expanding}\label{alg:dgm-expanding}
\begin{algorithmic}[1]
\ENSURE $\bX\in\mathbb{N}^{n\times p}$, $G=([p],E)$, $Q$ the maximum number of steps to expand the graph. 
\STATE \textbf{Initialization}: the augmented graph $\bar{G}\leftarrow G$, whose vertex set is denoted by $\bar{V}$; $D\leftarrow \emptyset$ contains the variables that have knockoffs ; $q\leftarrow 1$ is the step of the graph expansion.
\WHILE{$q\leq Q$ and $|D|<p$}
\STATE Find a global cut set $B$ for $\bar{G}$ such that $B \supseteq D$.
\STATE Construct knockoffs $\bs{J}$ for $[\bX,\bXk_{\bar{V}\setminus [p]}]$ w.r.t. the graph $\bar{G}$ and the cut set $B$ by Algorithm~\ref{alg:block-dg}.
\STATE Set $\bXk_{[p]\setminus B}=\bs{J}_{[p]\setminus B}$.
\FOR{ $ j\in B^c$}
\STATE Add $\tilde{j}$ to $\bar{G}$  with the same neighborhood as $j$.
\STATE Add an edge between each pair of $j$'s neighbors.
\STATE Update $D\leftarrow D\uplus \{j\}$.
\ENDFOR
\STATE $q\leftarrow q+1$.
\ENDWHILE
\STATE Set $\bXk_{[p]\setminus D}=\bX_{[p]\setminus D}$.
\RETURN $\bXk$.
\end{algorithmic}
\end{algorithm}

\begin{theorem}\label{thm:dgm-expand}
Algorithm~\ref{alg:dgm-expanding} generates valid knockoff for model~\eqref{model:dgm}.
\end{theorem}

\begin{proof}[Proof of Theorem~\ref*{thm:dgm-expand}]
We will first define some notation to describe the process of graph-expanding, and then write down the joint probability mass function. Finally, we show the p.m.f. remains unchanged when swapping one variable, which suffices to prove the theorem by induction. 

\textbf{Part 1.} 
To streamline the notation, we redefine $Q$ as the number of steps that have actually been taken to expand the graph in Algorithm~\ref*{alg:block-dg} (rather than the input value).
For each $q\in [Q]$, denote by $G^{(q)}$ the augmented graph and by $B^{(q)}$ the blocking set used in the $q$th run of Algorithm~\ref*{alg:block-dg}. Let $V^{(q)}=[p]\setminus B^{(q)}$. Also denote  by $D^{(q)}$ the variables for which knockoffs have already been generated before the $q$th run of Algorithm~\ref*{alg:block-dg}. Then $D^{(r)}=\bigcup_{q=1}^{r-1} V^{(q)}$ for any $r\geq 2$ and $D^{(1)}=\emptyset$. 
Let $I_j^{(q)}$ be the neighborhood of $j$ in $G^{(q)}$. 
For ease of notation, we neglect to write the ranges of $\st_j$ and $\bst_A$ when enumerating them in equations (e.g., taking a product over all their possible values).
For a $n\times c$ matrix $\bZ$ and integers $k_1,\dots, k_c$, let $\mrc{\bZ,\st_1,\dots, \st_c}\; :=\; \sum_{i=1}^{n} \one{Z_{i,1}=\st_1,\dots,Z_{i,c}=\st_c}$, i.e., the number of rows of $\bZ$ that equal the vector $(\st_1,\dots,\st_c)$. 

\textbf{Part 2.} 
For any $q\in [Q]$ and $j\in V^{(q)}$, the neighborhood $I_j^{(q)}$ consists of three parts:
\begin{enumerate}
    \item $([p]\cap I_{j}^{(q)})\setminus D^{(q)}$ : the neighbors in $[p]$ for which no knockoffs have been generated,
    \item $I_{j}^{(q)}\cap D^{(q)}$ : the neighbors in $[p]$ for which knockoffs have been generated, and
    \item $\{\tilde{\ell}: \ell \in I_{j}^{(q)}\cap D^{(q)} \}$ : the neighbors that are knockoffs. 
\end{enumerate} 
The generation of $\bXk_{j}$  by Algorithm~\ref*{alg:block-dg} is to sample uniformly from all vectors in $ [\St_j]^{n}$ such that the contingency table for variable $j$ and its neighbors in $I_{j}^{(q)}$ remains the same if $\bX_j$ is replaced by any of these vectors. Define  
\begin{align*}
&\mathcal{M}_{j} \left( \bsfX_{j}, \bsfX_{ ([p]\cap I_{j}^{(q)})\setminus D^{(q)}}, \bsfX_{I_{j}^{(q)}\cap  D^{(q)}}, \bsfXk_{I_{j}^{(q)}\cap D^{(q)}} \right)\\
=\; &
\left\{ \bsfW_{j} \in [\St_j]^{n}: 
	\mrc{ [\bsfW_{j}, \bsfX_{ ([p]\cap I_{j}^{(q)})\setminus D^{(q)}}, \bsfX_{I_{j}^{(q)}\cap  D^{(q)}}, \bsfXk_{I_{j}^{(q)}\cap D^{(q)}}], \st_{j}, \bst_{I_{j}^{(q)}}}= \right.\\
&\qquad\qquad\qquad\;\, 	\mrc{ [\bsfX_{j}, \bsfX_{ ([p]\cap I_{j}^{(q)})\setminus D^{(q)}}, \bsfX_{I_{j}^{(q)}\cap  D^{(q)}}, \bsfXk_{I_{j}^{(q)}\cap D^{(q)}}], \st_{j}, \bst_{I_{j}^{(q)}}} , \\
&\qquad\qquad\qquad\;\, 	\left.\forall \st_{j}, \bst_{I_{j}^{(q)}} \right\}, 
\end{align*}
and then
\begin{align*}
&\Pcr{ \bXk_{j}=\bsfXk_{j} }{ \bX=\bsfX,  \bXk_{V^{(1)}}=\bsfXk_{V^{(1)}},\dots,\bXk_{V^{(q-1)}}=\bsfXk_{V^{(q-1)}} } \\
=\;  &\frac
	{ 
	\one{\bsfXk_{j}\in \mathcal{M}_{j} \left( \bsfX_{j}, \bsfX_{ ([p]\cap I_{j}^{(q)})\setminus D^{(q)}}, \bsfX_{I_{j}^{(q)}\cap  D^{(q)}}, \bsfXk_{I_{j}^{(q)}\cap D^{(q)}} \right)}
	}
	{ \left|\mathcal{M}_{j} \left( \bsfX_{j}, \bsfX_{ ([p]\cap I_{j}^{(q)})\setminus D^{(q)}}, \bsfX_{I_{j}^{(q)}\cap  D^{(q)}}, \bsfXk_{I_{j}^{(q)}\cap D^{(q)}} \right)\right| }. 
\end{align*}

Denote the probability mass function of $X$ by $f(\bsfx)$.
The joint probability mass of the distribution of $[\bX, \bXk]$ is 
\begin{align}\label{eq:graph-expand-pmf}
	&\pr{[\bX, \bXk]=[\bsfX, \bsfXk]}  \\
=&\prod_{i=1}^{n}f(\bsfx_i)\nonumber\\
		& \times \Pcr{ \bXk_{V^{(1)}}=\bsfXk_{V^{(1)}} }{ \bX=\bsfX }\nonumber \\
		& \times \prod_{q=2}^{Q} \Pcr{ \bXk_{V^{(q)}}=\bsfXk_{V^{(q)}} }{ \bX=\bsfX,  \bXk_{V^{(1)}}=\bsfXk_{V^{(1)}},\dots,\bXk_{V^{(q-1)}}=\bsfXk_{V^{(q-1)}} }    \nonumber\\
	& \times \prod_{j\in [p]\setminus (\bigcup\limits_{q=1}^{Q} V^{(q)})} \one{\bsfXk_j=\bsfX_j}\nonumber\\
=& \prod_{i=1}^{n}f(\bsfx_i)  \nonumber\\
	& \times \prod_{q=1}^{Q}\prod_{j\in V^{(q)}} 
	\frac
	{ 
	\one{\bsfXk_{j}\in \mathcal{M}_{j} \left( \bsfX_{j}, \bsfX_{ ([p]\cap I_{j}^{(q)})\setminus D^{(q)}}, \bsfX_{I_{j}^{(q)}\cap  D^{(q)}}, \bsfXk_{I_{j}^{(q)}\cap D^{(q)}} \right)}
	}
	{ |\mathcal{M}_{j} \left( \bsfX_{j}, \bsfX_{ ([p]\cap I_{j}^{(q)})\setminus D^{(q)}}, \bsfX_{I_{j}^{(q)}\cap  D^{(q)}}, \bsfXk_{I_{j}^{(q)}\cap D^{(q)}} \right)| } \nonumber\\
	& \times \prod_{j\in [p]\setminus (\bigcup\limits_{q=1}^{Q} V^{(q)})} \one{\bsfXk_j=\bsfX_j},\nonumber
\end{align}
where the product is partitioned into three parts: the distribution of $\bX$, the distributions of the knockoff columns generated in each step and the indicator functions for the variables that have no knockoffs generated within the $Q$ steps. 

\textbf{Part 3}. 
It suffices to show that for any $\ell \in [p]$,  
\begin{equation}\label{eq:graph-expand-goal}
\pr{[\bX, \bXk]=[\bsfX, \bsfXk]}=\pr{[\bX, \bXk]=[\bsfX, \bsfXk]_{\swap{\ell}}}.
\end{equation}
If both sides of Equation~\eqref{eq:graph-expand-goal} equal zero, it holds trivially. Without loss of generality, we will prove this equation under the assumption that the left hand side is non-zero. One can redefine $[\bsfX',\bsfXk']=[\bsfX,\bsfXk]_{\swap{\ell}}$ and apply the same proof when assuming the right hand side is non-zero. 

First, suppose $\ell \in  [p]\setminus (\bigcup\limits_{q=1}^{Q} V^{(q)})$. 
Since the left hand side is non-zero, 
by Equation~\eqref{eq:graph-expand-pmf}, $\bsfXk_{\ell}=\bsfX_{\ell}$ and the p.m.f. does not change when swapping $\bsfX_{\ell}$ with $\bsfXk_{\ell}$. 

Second, suppose $\ell \in V^{(q_{\ell})}$ for some $q_{\ell}\in [Q]$. Since the left hand side of Equation~\eqref{eq:graph-expand-goal} is non-zero, then in Equation~\eqref{eq:graph-expand-pmf}, the indicator function in the second part with $q=q_{\ell}$ and $j=\ell$ being non-zero indicates
\begin{align}
&\mrc{ [\bsfXk_{\ell}, \bsfX_{ ([p]\cap I_{\ell}^{(q_{\ell})})\setminus D^{(q_{\ell})}}, \bsfX_{I_{\ell}^{(q_{\ell})}\cap  D^{(q_{\ell})}}, \bsfXk_{I_{\ell}^{(q_{\ell})}\cap D^{(q_{\ell})}}], \st_{\ell}, \bst_{I_{\ell}^{(q_{\ell})}}} \nonumber\\
	=\;&\mrc{ [\bsfX_{\ell}, \bsfX_{ ([p]\cap I_{\ell}^{(q_{\ell})})\setminus D^{(q_{\ell})}}, \bsfX_{I_{\ell}^{(q_{\ell})}\cap  D^{(q_{\ell})}}, \bsfXk_{I_{\ell}^{(q_{\ell})}\cap D^{(q_{\ell})}}], \st_{\ell}, \bst_{I_{\ell}^{(q_{\ell})}}} , \forall \st_{\ell}, \bst_{I_{\ell}^{(q_{\ell})}}. \label{graph-expand-cond-1}
\end{align}
The only difference between the two sides of Equation~\ref{graph-expand-cond-1} is that  $\bsfXk_{\ell}$ is replaced by $\bsfX_{\ell}$ in the first columns of the matrix. Such a  difference will keep appearing in the equations that will be showed below. 

In the following, we show that everywhere $\bsfX_{\ell}$ or $\bsfXk_{\ell}$ appears in the first or second part of the product in Equation~\eqref{eq:graph-expand-pmf} remains unchanged when swapping $\bsfX_{\ell}$ and $\bsfXk_{\ell}$. 

\begin{enumerate}
    \item 
As shown in Section~\ref*{sec:eg-dgm}, there exist some functions $\psi_{\ell}$ and $\theta_\ell(\st_\ell, \bst_{I_\ell})$'s such that 
\begin{equation*}
\prod_{i=1}^{n}f(\bsfx_i)=\prod_{i=1}^{n}\psi_{\ell}(\bsfx_{i, [p]\setminus  \{\ell\}})\prod_{\st_\ell, \bst_{I_\ell}} \theta_\ell(\st_\ell, \bst_{I_\ell})^{\mrc{[\bsfX_\ell, \bsfX_{I_\ell}], \st_\ell, \bst_{I_\ell}} }
\end{equation*}
Note that the initial neighborhood $I_{\ell}\subseteq I_{\ell}^{(q_{\ell})}$, by summing over Equation~\eqref{graph-expand-cond-1}, one can conclude that \[
\mrc{[\bsfX_\ell, \bsfX_{I_\ell}], \st_\ell, \bst_{I_\ell}} =\mrc{[\bsfXk_\ell, \bsfX_{I_\ell}], \st_\ell, \bst_{I_\ell}}\]
for all $\st_{\ell}, \bst_{I_{\ell}}$. 
Thus $\prod_{i=1}^{n}f(\bsfx_i)$ remains unchanged when swapping $\bsfX_\ell$ and $\bsfXk_\ell$. 
\item 
For any $j$ such that $j\in V^{(q_{j})}$ and $\ell\in I_j^{(q_{j})}$, the second part of the product in Equation~\ref{eq:graph-expand-pmf} involves $\bsfX_{\ell}$ or $\bsfXk_{\ell}$ with the indices $q_{j}$ and $j$. Since $B^{(q_{j})}$ is a blocking set, we have $q_{j}\neq q_{\ell}$. 
\begin{enumerate}
    \item 
If $q_{j}>q_{\ell}$, then $\ell \in I_{j}^{(q_{j})}\cap  D^{(q_{j})}$. Note that swapping $\bsfX_{\ell}$ with $\bsfXk_{\ell}$ only changes the order of the dimensions of the contingency table formed by 
$\bsfX_{j}$ and $\left[\bsfX_{[p]\cap I_{j}^{(q_{j})}}, \, \bsfXk_{ I_{j}^{(q_{j})} \cap D^{(q_{j})} }\right]$,
and thus does not change whether or not the following system of equations (where only the first column of the first argument of $\varrho$ differs between the two lines) holds
\begin{align*}
&\mrc{ [\bsfW_{j}, \bsfX_{ ([p]\cap I_{j}^{(q_{j})})\setminus D^{(q_{j})}}, \bsfX_{I_{j}^{(q_{j})}\cap  D^{(q_{j})}}, \bsfXk_{I_{j}^{(q_{j})}\cap D^{(q_{j})}}], \st_{j}, \bst_{I_{j}^{(q_{j})}}} \\
= &\mrc{ [\bsfX_{j},\; \bsfX_{ ([p]\cap I_{j}^{(q_{j})})\setminus D^{(q_{j})}}, \bsfX_{I_{j}^{(q_{j})}\cap  D^{(q_{j})}}, \bsfXk_{I_{j}^{(q_{j})}\cap D^{(q_{j})}}], \st_{j}, \bst_{I_{j}^{(q_{j})}}},\\
&\forall \st_{j},\bst_{I_{j}^{(q_{j})}}.
\end{align*}
Thus the indicator function 
\[\one{\bsfXk_{j}\in \mathcal{M}_{j} \left( \bsfX_{j}, \bsfX_{ ([p]\cap I_{j}^{(q_{j})})\setminus D^{(q_{j})}}, \bsfX_{I_{j}^{(q_{j})}\cap  D^{(q_{j})}}, \bsfXk_{I_{j}^{(q_{j})}\cap D^{(q_{j})}} \right)} \]
and the cardinal number 
\[\left|\mathcal{M}_{j} \left( \bsfX_{j}, \bsfX_{ ([p]\cap I_{j}^{(q_{j})})\setminus D^{(q_{j})}}, \bsfX_{I_{j}^{(q_{j})}\cap  D^{(q_{j})}}, \bsfXk_{I_{j}^{(q_{j})}\cap D^{(q_{j})}} \right)\right|\] from Equation~\eqref{eq:graph-expand-pmf} remain unchanged when swapping $\bsfX_\ell$ and $\bsfXk_\ell$. 
\item

Now we show the same conclusion for the remaining $j$ values, which will require a few intermediate steps. If $q_{j}<q_{\ell}$, then $\ell \in I_j^{(q_{j})}\setminus D^{(q_{j})}$ and $j\in I_{\ell}^{(q_{\ell})}\cap D^{(q_{\ell})}$. By the graph expanding algorithm, we have $I_j^{(q_{j})}\setminus \{\ell\} \subseteq I_{\ell}^{(q_{\ell})}$, and $D^{(q_{j})}\subseteq D^{(q_{\ell})}$. This shows 
\[
([p]\cap I_{j}^{(q_{j})})\setminus (\{\ell\}\cup D^{(q_{j})}) \subseteq ([p]\cap I_{\ell}^{(q_{\ell})}\setminus D^{(q_{\ell})}) \uplus ( I_{\ell}^{(q_{\ell})}\cap D^{(q_{\ell})}),
\]
and 
\[
I_{j}^{(q_{j})}\cap  D^{(q_{j})} \subseteq I_{\ell}^{(q_{\ell})}\cap  D^{(q_{\ell})}.
\]

Summing over Equation~\eqref{graph-expand-cond-1} and rearranging the columns of the first argument of $\varrho$, one can conclude that
\begin{align}
&\; \mrc{[ \bsfX_{j},\bsfXk_{j}, \bsfXk_{\ell}, \bsfX_{ ([p]\cap I_{j}^{(q_{j})})\setminus (\{\ell\}\cup D^{(q_{j})}) }, \bsfX_{I_{j}^{(q_{j})}\cap  D^{(q_{j})}}, \bsfXk_{I_{j}^{(q_{j})}\cap D^{(q_{j})}}], \st_{j},\st_{\tilde{j}}, \st_{\ell}, \bst_{I_{j}^{(q_{j})}\setminus\{\ell\}}} \label{eq:graph-expand-smallq-0}  \\
=&\; \mrc{[ \bsfX_{j},\bsfXk_{j}, \bsfX_{\ell}, \bsfX_{ ([p]\cap I_{j}^{(q_{j})})\setminus (\{\ell\}\cup D^{(q_{j})}) }, \bsfX_{I_{j}^{(q_{j})}\cap  D^{(q_{j})}}, \bsfXk_{I_{j}^{(q_{j})}\cap D^{(q_{j})}}], \st_{j},\st_{\tilde{j}}, \st_{\ell}, \bst_{I_{j}^{(q_{j})}\setminus\{\ell\}}},\nonumber \\
&\qquad\qquad\qquad\qquad\qquad\qquad\qquad\qquad\qquad\qquad\qquad\qquad\qquad\qquad\quad\forall  \st_{j},\st_{\tilde{j}},\st_{\ell}, \bst_{I_{j}^{(q_{j})}\setminus\{\ell\}},  \nonumber
\end{align}
where the two lines only differ in the third column of the first argument of $\varrho$. 
Summing over Equation~\eqref{eq:graph-expand-smallq-0} 
w.r.t. $\st_{\tilde{j}}$, 
one can further conclude that 
\begin{align}
&\; \mrc{[ \bsfX_{j},\bsfXk_{\ell}, \bsfX_{ ([p]\cap I_{j}^{(q_{j})})\setminus (\{\ell\}\cup D^{(q_{j})}) }, \bsfX_{I_{j}^{(q_{j})}\cap  D^{(q_{j})}}, \bsfXk_{I_{j}^{(q_{j})}\cap D^{(q_{j})}}], \st_{j}, \st_{\ell}, \bst_{I_{j}^{(q_{j})}\setminus\{\ell\}}}   \label{eq:graph-expand-smallq-1}\\
=&\; \mrc{[ \bsfX_{j}, \bsfX_{\ell}, \bsfX_{ ([p]\cap I_{j}^{(q_{j})})\setminus (\{\ell\}\cup D^{(q_{j})}) }, \bsfX_{I_{j}^{(q_{j})}\cap  D^{(q_{j})}}, \bsfXk_{I_{j}^{(q_{j})}\cap D^{(q_{j})}}], \st_{j},\st_{\ell}, \bst_{I_{j}^{(q_{j})}\setminus\{\ell\}}}, \nonumber \\
& \forall  \st_{j},\st_{\ell}, \bst_{I_{j}^{(q_{j})}\setminus\{\ell\}}, \nonumber 
\end{align}
where the two lines only differ in the second column of the first argument of $\varrho$, and the first column is $\bsfX_j$. 
Similarly, summing over Equation~\eqref{eq:graph-expand-smallq-0}  w.r.t. $\st_{j}$, we have
\begin{align}
&\; \mrc{[ \bsfXk_{j}, \bsfXk_{\ell}, \bsfX_{ ([p]\cap I_{j}^{(q_{j})})\setminus (\{\ell\}\cup D^{(q_{j})}) }, \bsfX_{I_{j}^{(q_{j})}\cap  D^{(q_{j})}}, \bsfXk_{I_{j}^{(q_{j})}\cap D^{(q_{j})}}], \st_{\tilde{j}}, \st_{\ell}, \bst_{I_{j}^{(q_{j})}\setminus\{\ell\}}}   \label{eq:graph-expand-smallq-2}  \\
=&\; \mrc{[\bsfXk_{j}, \bsfX_{\ell}, \bsfX_{ ([p]\cap I_{j}^{(q_{j})})\setminus (\{\ell\}\cup D^{(q_{j})}) }, \bsfX_{I_{j}^{(q_{j})}\cap  D^{(q_{j})}}, \bsfXk_{I_{j}^{(q_{j})}\cap D^{(q_{j})}}], \st_{\tilde{j}}, \st_{\ell}, \bst_{I_{j}^{(q_{j})}\setminus\{\ell\}}},\nonumber \\ & \forall  \st_{\tilde{j}},\st_{\ell}, \bst_{I_{j}^{(q_{j})}\setminus\{\ell\}},\nonumber
\end{align}
where again the two lines only differ in the second column of the first argument of $\varrho$, but now the first column is $\bsfXk_j$. 

Note that $\left|\mathcal{M}_{j} \left( \bsfX_{j}, \bsfX_{ ([p]\cap I_{j}^{(q_{j})})\setminus D^{(q_{j})}}, \bsfX_{I_{j}^{(q_{j})}\cap  D^{(q_{j})}}, \bsfXk_{I_{j}^{(q_{j})}\cap D^{(q_{j})}} \right)\right|$ is a product of some multinomial coefficients, and each  multinomial coefficient depends on a unique combination of $(\st_{\ell}, \bst_{I_{j}^{(q_{j})}\setminus\{\ell\}})$ and  the values of 
\[
\mrc{[ \bsfX_{j},\bsfX_{\ell}, \bsfX_{ ([p]\cap I_{j}^{(q_{j})})\setminus (\{\ell\}\cup D^{(q_{j})}) }, \bsfX_{I_{j}^{(q_{j})}\cap  D^{(q_{j})}}, \bsfXk_{I_{j}^{(q_{j})}\cap D^{(q_{j})}}], \st_{j}, \st_{\ell}, \bst_{I_{j}^{(q_{j})}\setminus\{\ell\}}}, \forall \st_{j}. 
\] 
These quantities are the ones on the right hand side of Equation~\eqref{eq:graph-expand-smallq-1}.  Thus we conclude that  $\left|\mathcal{M}_{j} \left( \bsfX_{j}, \bsfX_{ ([p]\cap I_{j}^{(q_{j})})\setminus D^{(q_{j})}}, \bsfX_{I_{j}^{(q_{j})}\cap  D^{(q_{j})}}, \bsfXk_{I_{j}^{(q_{j})}\cap D^{(q_{j})}} \right)\right|$ remains unchanged when swapping $\bsfX_\ell$ and $\bsfXk_\ell$ by checking the terms in Equation~\eqref{eq:graph-expand-smallq-1} that appear in the multinomial coefficients. 

Note that $\bsfXk_{j}\in \mathcal{M}_{j} \left( \bsfX_{j}, \bsfX_{ ([p]\cap I_{j}^{(q_{j})})\setminus D^{(q_{j})}}, \bsfX_{I_{j}^{(q_{j})}\cap  D^{(q_{j})}}, \bsfXk_{I_{j}^{(q_{j})}\cap D^{(q_{j})}} \right)$ if and only if the right hand sides of Equations~\eqref{eq:graph-expand-smallq-1} and \eqref{eq:graph-expand-smallq-2} are equal for all $\st_{\tilde{j}},\st_{\ell}, \bst_{I_{j}^{(q_{j})}\setminus\{\ell\}}$, which is equivalent to the left hand sides of the equations are equal, which holds if and only if $\bsfXk_{j}\in \mathcal{M}_{j} \left( \bsfX_{j},\bsfXk_{\ell},  \bsfX_{ ([p]\cap I_{j}^{(q_{j})})\setminus (\{\ell\}\cup D^{(q_{j})})}, \bsfX_{I_{j}^{(q_{j})}\cap  D^{(q_{j})}}, \bsfXk_{I_{j}^{(q_{j})}\cap D^{(q_{j})}} \right)$. Therefore the indicator function $\one{\bsfXk_{j}\in \mathcal{M}_{j} \left( \bsfX_{j}, \bsfX_{ ([p]\cap I_{j}^{(q_{j})})\setminus D^{(q_{j})}}, \bsfX_{I_{j}^{(q_{j})}\cap  D^{(q_{j})}}, \bsfXk_{I_{j}^{(q_{j})}\cap D^{(q_{j})}} \right)} $  remains unchanged when swapping $\bsfX_\ell$ and $\bsfXk_\ell$.


\end{enumerate}
\end{enumerate}

To sum up, Equation~\eqref{eq:graph-expand-goal} holds for any $\ell \in [p]$, and the proof is complete. 
\end{proof}

\subsection{Alternative Knockoff Generations for Discrete Markov Chains}\label{app:mc}
We provide alternative constructions of conditional knockoffs for Markov chains that make use of the simple chain structure. Proofs in this section are deferred to Appendix~\ref{app:proof-mc}.

We introduce a notation that makes the display clear.  
For any $n\times c$ matrix $\bZ$  and any integers $k_1,\dots, k_c$, let $\mrc{\bZ,\st_1,\dots, \st_c}\; :=\; \sum_{i=1}^{n} \one{Z_{i,1}=\st_1,\dots,Z_{i,c}=\st_c}$, i.e., the number of rows of $\bZ$ that equal the vector $(\st_1,\dots,\st_c)$. 

Suppose the components of $X$ follow a general discrete Markov chain, then the joint distribution for $n$ i.i.d. samples is
\begin{align*}
\pr{\bX} 
&= \; \prod_{\st=1}^{ \St_{1}}(\pi_{  \st}^{(1)})^{\sum_{ \st' =1}^{ \St_{2}}N_{\st,\st' }^{(2)}} \prod_{j=2}^p \prod_{\st =1}^{\St_{j-1}}\prod_{ \st' =1}^{\St_{j}}(\pi_{\st ,\st' }^{(j)})^{N_{\st ,\st' }^{(j)}},
\end{align*}
where $\pi^{(1)}_{\st} = \pr{X_{i,1}=\st}$ and $\pi^{(j)}_{\st_{},\st'} = \Pc{X_{i,j}=\st'}{X_{i,j-1}=\st_{}}$ are model parameters and $N^{(j)}_{\st_{},\st'} 
= \mrc{ [\bX_{j-1},\bX_j], \st_{},\st' } $ 
is the number of samples such that the $(j-1)$th and $j$th components are $\st_{}$ and $\st'$, respectively. So all the $N^{(j)}_{\st_{},\st'}$'s together form a sufficient statistic which we denote by $T(\bX)$, and although it has some redundant entries (for example, $\sum_{\st_{}=1}^{\St_{j-1}}\sum_{ \st'=1}^{\St_{j}}N^{(j)}_{\st_{},\st'} = n$), it is nevertheless minimal, and we prefer to keep the redundant entries for notational convenience.  

Conditional on $T(\bX)$, $\bX$ is uniformly distributed on $\mathcal{Q}\, :=\, \{\bsfW\in \prod_{j=1}^{p} [ \St_j ]^{n}: T(\bsfW)=T(\bX)\}$. Hereafter, we distinguish notationally between $\bsfX$'s and $\bsfW$'s (and $\bsfWk$'s), with the former denoting realized values of the data in $\bX$ and the latter denoting hypothetical such values not necessarily observed in the data.
The conditional distribution of $\bX$ can be decomposed as
\begin{align}\label{eq:mc-condp} \Pcr{\bX=\bsfW}{T(\bX)} &= \; C_0 \prod_{j=2}^{p}\left( \prod_{\st_{}=1}^{ \St_{j-1}}\prod_{\st'=1}^{\St_{j} }\one{ \mrc{ [\bsfW_{j-1},\bsfW_j], \st,\st' }=N_{\st_{},\st'}^{(j)}}  \right) \nonumber \\
& =\; C_0 \prod_{j=2}^{p}\phi_{j}( \bsfW_{j-1},\bsfW_{j} \mid T(\bX) ),
\end{align}
where $C_0$ only depends on $T(\bX)$. This decomposition implies that conditional on $T(\bX)$, the columns of $\bX$ still comprise a vector-valued Markov chain (see Appendix~\ref{app:proof-mc}). For ease of notation, in what follows we will write $\pr{\cdot}$ without \textquoteleft $\mid T(\bX)$\textquoteright\ since we always condition on $T(\bX)$ in this section.

\subsubsection{SCIP}
The sequential conditional independent pairs (SCIP) algorithm from \cite{EC-ea:2018} was introduced in completely general form for any distribution for X with the substantial caveat that actually carrying it out for any given distribution can be quite challenging. \cite{MS-CS-EC:2017} show how to run SCIP for Markov chains \emph{unconditionally}. When applied to vectors instead of scalars, SCIP can also be adapted to generate \emph{conditional} knockoffs for Markov chains because the conditional distribution of $\bX$ is uniform on $\mathcal{Q}$, making it a Markov chain, and conditional knockoffs are simply knockoffs for this conditional distribution.


SCIP sequentially samples $\bXk_j \sim \mathcal{L}(\bX_{j} |\bX_{\noj}, \bXk_{1:(j-1)})$\footnote{$\mathcal{L}(\bX_{j} |\bX_{\noj}, \bXk_{1:(j-1)})$ is the conditional distribution of $\bX_{j}$ given $(\bX_{\noj}, \bXk_{1:(j-1)})$.}  independently of $\bX_j$, for $j=1,\dots,p$. For a Markov chain, this sampling can be reduced to $\bXk_j \sim \mathcal{L}(\bX_{j} |\bX_{j-1}, \bX_{j+1}, \bXk_{j-1})$. 
The main computational challenge is to keep track of the following conditional probabilities:
\begin{align*}
	f_1( \bsfW_1, \bsfW_2) :=\; &\Pcr{ \bX_1=\bsfW_1 }{ \bX_2=\bsfW_{2} },\\
	f_j( \bsfW_j, \bsfWk_{j-1},\bsfW_{j+1}) :=\; &\Pcr{ \bX_j=\bsfW_j }{ \bX_{j-1}=\bsfX_{j-1},\bXk_{j-1}=\bsfWk_{j-1},\bX_{j+1}=\bsfW_{j+1} }, \\
&  \hspace{200pt minus 1fil} \forall \; j\in \{2,\dots,p-1\},   \hfilneg  \\
	f_p( \bsfW_p, \bsfWk_{p-1}) :=\; &\Pcr{ \bX_{p}=\bsfW_p }{\bX_{p-1}=\bsfX_{p-1}, \bXk_{p-1}=\bsfWk_{p-1} }.
\end{align*}
Algorithm~\ref{alg:scip-mc} describes how to generate conditional knockoffs for a discrete Markov Chain with finite states by SCIP, where the functions $f_j(\cdot)$ are computed recursively by the formulas in Proposition~\ref{prop:mc-scip-f}. These formulas are different from the ones in \citet[Proposition 1]{MS-CS-EC:2017}, in which the authors assume transition probabilities can be evaluated directly.  

\begin{algorithm}
\caption{Conditional Knockoffs for Discrete Markov Chains by SCIP} \label{alg:scip-mc}
\begin{algorithmic}[1]
\ENSURE $\bX\in \prod_{j=1}^{p} [\St_j]^{n}$.
\STATE Sample $\bXk_1$ uniformly from $\{ \bsfW_1 \in [\St_1]^n: (\bsfW_1, \bX_2,\dots, \bX_p)\in \mathcal{Q}\}$.    
\STATE Compute $f_1( \bXk_1, \bsfW_2)$ for all $\bsfW_2\in [\St_2]^n$.
\FOR{$j$ from $2$ to $p-1$}
\STATE  Compute $f_j( \bsfW_j, \bXk_{j-1},\bsfW_{j+1})$ for all $\bsfW_j\in [\St_j]^n$ and $\bsfW_{j+1}\in [\St_{j+1}]^n$.
\STATE  Sample $\bXk_j$ from $f_j(\cdot, \bXk_{j-1}, \bX_{j+1})$.
\ENDFOR
\STATE  Compute $f_p( \bsfW_j, \bXk_{p-1})$ for all $\bsfW_p\in [\St_p]^n$.
\STATE Sample $\bXk_p$ from $f_p(\cdot, \bXk_{p-1})$.
\RETURN $\bXk = [\bXk_1,\dots,\bXk_p]$.
\end{algorithmic}
\end{algorithm}

\begin{proposition}\label{prop:mc-scip-f}
Define $\frac{0}{0}=0$.
We formally write $f_1( \bsfW_1, \bsfXk_{0}, \bsfW_2)$ for $f_1( \bsfW_1, \bsfW_2)$. Suppose $\bsfXk$ is a realization of $\bXk$ generated by Algorithm 2. Then the following equations hold
\[f_1( \bsfW_1, \bsfW_2)=\frac{\one{(\bsfW_{1}, \bsfW_{2}, \bsfX_3,\dots, \bsfX_p)\in \mathcal{Q}}}{ \# \{ \bsfW_1^{\prime}\in [\St_1]^{n}: (\bsfW_1^{\prime}, \bsfW_{2}, \bsfX_3,\dots, \bsfX_p)\in \mathcal{Q}\} },\]
\[
\forall \; 1<j<p, \; f_j( \bsfW_j, \bsfXk_{j-1},\bsfW_{j+1})=\frac{\one{(\bsfX_{1},\dots, \bsfX_{j-1}, \bsfW_j,\bsfW_{j+1},\dots, \bsfX_p)\in \mathcal{Q}}  f_{j-1}(\bsfXk_{j-1}, \bsfXk_{j-2},\bsfW_{j}) }
{ \sum_{ \bsfW_j^{\prime}\in [\St_j]^{n} }\one{(\bsfX_{1},\dots, \bsfX_{j-1}, \bsfW_j^{\prime},\bsfW_{j+1},\dots, \bsfX_p)\in \mathcal{Q}} f_{j-1}(\bsfXk_{j-1}, \bsfXk_{j-2},\bsfW_{j}^{\prime})},
\]
\[
f_p( \bsfW_p, \bsfXk_{p-1})=\frac{\one{(\bsfX_{1},\dots, \bsfX_{p-1}, \bsfW_p)\in \mathcal{Q}}  f_{p-1}(\bsfXk_{p-1}, \bsfXk_{p-2},\bsfW_{p}) }
{ \sum_{ \bsfW_{p}^{\prime}\in [\St_p]^{n}  } \one{(\bsfX_{1},\dots, \bsfX_{p-1}, \bsfW_{p}^{\prime})\in \mathcal{Q}}  f_{p-1}(\bsfXk_{p-1}, \bsfXk_{p-2},\bsfW_{p}^{\prime}) }.
\]
\end{proposition}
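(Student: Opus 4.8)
\textbf{Proof plan for Proposition~\ref{prop:mc-scip-f}.}

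The plan is to derive the three recursive formulas by directly unwinding the definitions of the conditional probabilities $f_j$, repeatedly invoking the Markov structure of the conditional law $\pr{\bX=\bsfW}$ established in Equation~\eqref{eq:mc-condp}, together with the fact that $\bXk_j$ is sampled from $f_j(\cdot,\bXk_{j-1},\bX_{j+1})$ in Algorithm~\ref{alg:scip-mc}. First I would establish the base case: by Equation~\eqref{eq:mc-condp} the conditional law of $\bX$ given $T(\bX)$ is uniform on $\mathcal{Q}$, so $\Pcr{\bX_1=\bsfW_1}{\bX_2=\bsfW_2}$ is just the conditional of a uniform distribution on $\mathcal{Q}$, which is uniform on the slice $\{\bsfW_1^{\prime}:(\bsfW_1^{\prime},\bsfW_2,\bsfX_3,\dots,\bsfX_p)\in\mathcal{Q}\}$ (here I evaluate at the observed $\bsfX_3,\dots,\bsfX_p$ as that is how $f_1$ enters the algorithm). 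This immediately gives the stated ratio of indicators.

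Next I would handle the interior recursion. The key observation is that, because the columns of $\bX$ form a (vector-valued) Markov chain given $T(\bX)$ — a fact noted right after Equation~\eqref{eq:mc-condp} and proved in Appendix~\ref{app:proof-mc} — and because $\bXk_{1:(j-1)}$ is generated by SCIP conditionally on $(\bX_{1:(j-1)},\bXk_{1:(j-2)})$, we have the conditional independence
\[
\bXk_{1:(j-1)} \indp (\bX_{j+1},\dots,\bX_p) \mid \bX_{1:(j-1)}.
\]
Using this and Bayes' rule, I would write $f_j(\bsfW_j,\bsfXk_{j-1},\bsfW_{j+1})$ as proportional (in $\bsfW_j$) to the product of $\Pcr{\bX_{1:(j-1)}=\bsfX_{1:(j-1)},\bX_j=\bsfW_j,\bX_{j+1:p}=(\bsfW_{j+1},\bsfX_{j+2},\dots,\bsfX_p)}{T(\bX)}$ and the SCIP transition densities $\prod_{i=2}^{j-1}f_i(\bsfXk_i,\bsfXk_{i-1},\bsfX_{i+1})$; the Markov factorization of the first factor via $\prod_i\phi_i$ lets all terms not depending on $\bsfW_j$ cancel in the normalization, leaving exactly $\one{(\bsfX_1,\dots,\bsfX_{j-1},\bsfW_j,\bsfW_{j+1},\dots,\bsfX_p)\in\mathcal{Q}}$ times $f_{j-1}(\bsfXk_{j-1},\bsfXk_{j-2},\bsfW_j)$ up to the $\bsfW_j$-independent constant, which is precisely what the claimed formula states after renormalizing over $\bsfW_j$. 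The endpoint formula for $f_p$ is the same computation with the $\bsfW_{j+1}$ argument absent.

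The main obstacle I anticipate is bookkeeping the conditioning correctly: one must be careful that in $f_j$ the arguments $\bsfX_{1:(j-1)}$ and $\bsfXk_{1:(j-1)}$ are the \emph{realized} values (since SCIP conditions on already-generated knockoffs), whereas $\bsfW_j,\bsfW_{j+1}$ are free, and that the appearance of $f_{j-1}(\bsfXk_{j-1},\bsfXk_{j-2},\bsfW_j)$ — rather than a raw transition probability — comes from marginalizing out $\bX_{1:(j-2)}$ against the SCIP sampling distribution and recognizing the result as exactly the previously-computed $f_{j-1}$ evaluated with $\bsfW_j$ in its last slot. Verifying this requires an inductive argument showing that after the $(j-1)$th step the joint law of $(\bX,\bXk_{1:(j-1)})$ given $T(\bX)$ factors so that the relevant conditional of $\bX_j$ collapses to the stated expression; I would set this induction up explicitly and check that the Markov property plus the convention $\tfrac00=0$ handle the degenerate slices where the indicator count is zero. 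The remaining steps are routine cancellations in the normalizing sums.
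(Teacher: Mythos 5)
Your overall plan — write the joint conditional density of $(\bX,\bXk_{1:(j-1)})$ given $T(\bX)$ as the uniform density on $\mathcal{Q}$ times the SCIP transition densities, then condition on everything except $\bX_j$ and cancel the $\bsfW_j$-independent factors — is a valid route and would lead to the same cancellations the paper carries out via iterated Bayes' rule. But your stated ``key observation'' is wrong, and wrong in a way that would destroy the recursion rather than support it. You claim $\bXk_{1:(j-1)}$ is generated by SCIP conditionally on $(\bX_{1:(j-1)},\bXk_{1:(j-2)})$ and deduce $\bXk_{1:(j-1)}\indp \bX_{(j+1):p} \mid \bX_{1:(j-1)}$. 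In fact $\bXk_{j-1}$ is drawn from $f_{j-1}(\cdot,\bXk_{j-2},\bX_j)$, which depends on $\bX_j$; the correct statement is $\bXk_{1:(j-1)}\indp \bX_{(j+1):p}\mid\bX_{1:j}$. This matters: the dependence of $\bXk_{j-1}$ on $\bX_j$ is exactly what produces the factor $f_{j-1}(\bsfXk_{j-1},\bsfXk_{j-2},\bsfW_j)$ in the recursion. If $\bXk_{j-1}$ depended only on $\bX_{1:(j-1)}$, that transition density would be constant in $\bsfW_j$, would cancel under normalization, and you would be left with only the indicator ratio, which is the wrong formula.

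Relatedly, in the product $\prod_{i=2}^{j-1}f_i(\bsfXk_i,\bsfXk_{i-1},\bsfX_{i+1})$ the last slot at $i=j-1$ must be $\bsfW_j$ rather than $\bsfX_j$, since $\bX_j$ is the variable being conditioned on; and the reason $f_{j-1}(\bsfXk_{j-1},\bsfXk_{j-2},\bsfW_j)$ appears is not ``marginalizing out $\bX_{1:(j-2)}$'' --- it is simply the SCIP transition density for $\bXk_{j-1}$ evaluated at $\bX_j=\bsfW_j$, no marginalization required. Once the conditional independence is corrected and you track that exactly two factors in the joint depend on $\bsfW_j$ (the uniform indicator on $\mathcal{Q}$ and this single transition density), your plan recovers the stated recursion. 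The paper's bookkeeping is different: it first establishes (Equations~\eqref{eq:mc-scip-f1}--\eqref{eq:mc-scip-f4}) that the conditional of $\bX_j$ given $\bX_{\noj}$ and $\bXk_{1:(j-2)}$ --- with $\bXk_{j-1}$ deliberately excluded --- is proportional to the indicator alone, precisely because $\bXk_{1:(j-2)}$, unlike $\bXk_{j-1}$, depends only on $\bX_{1:(j-1)}$ and independent noise; it then incorporates $\bXk_{j-1}$ in a second Bayes step and recognizes its conditional density as $f_{j-1}$.
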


Computing $f_j( \bsfW_j, \bsfXk_{j-1},\bsfW_{j+1})$ in Proposition~\ref{prop:mc-scip-f} requires enumerating all possible configurations of $\bsfW_j\in [\St_j]^{n}$ and $\bsfW_{j+1}\in [\St_{j+1}]^{n}$, making the total computational complexity of SCIP $O(\sum_{j\leq p-1} (\St_{j}\St_{j+1})^n)$. Due to the $n$ in the exponent, SCIP quickly becomes intractable as the sample size grows, even for binary states and $n\gtrsim 10$. A simple remedy is to first randomly divide the rows of $\bX$ into disjoint folds of small size around $n_0$, say $n_0=10$, and then run SCIP for each fold separately. 
This construction conditions on a statistic which is $n/n_0$ times as large as that conditioned on before dividing into folds, but the former's computation time scales linearly with $n$, instead of exponentially. 
Conditioning on more should tend to degrade the quality of the knockoffs, but is necessary to enable computation. Still, compared to Algorithm~\ref{alg:datasplitting-discrete}, SCIP does not block any variables and thus has the potential to generate better knockoffs. 

\subsubsection{Refined Blocking}

One can apply Algorithm~\ref{alg:datasplitting-discrete} to Markov Chains, as a 2-colorable graph,  with two blocking sets, one with all even numbers and the other with all odd numbers. Instead of running Algorithm~\ref{alg:block-dg} in Line~\ref{line:dgm-sub},  a refined blocking algorithm, Algorithm~\ref{alg:mh-mc-simple}, can be used for $1<j<p$. 
It introduces more variability in the knockoff generation because it first draws a new contingency table $\tilde{\ct}$ that is conditionally exchangeable with the observed contingency table $\ct$ of $(\bX_{j-1},\bX_{j},\bX_{j+1})$, and then samples $\bXk_j$ given $\tilde{\ct}$. This algorithm constructs a reversible Markov Chain by proposing random walks on the space of contingency tables, moved by $\Delta \ct$ and corrected by acceptance ratio $\alpha$. In the following, we discuss how to sample $\Delta \ct$ and compute $\alpha$, and provide a detailed version of Algorithm~\ref{alg:mh-mc-simple} at the end of this section.

\begin{algorithm}[]
\caption{Improved Conditional Knockoffs for $\bX_j$ in a Discrete Markov Chain (Pseudocode)} \label{alg:mh-mc-simple}
\begin{algorithmic}[1]
\ENSURE $\bX_j$, columns to condition on: $\bX_{j-1}$ and $\bX_{j+1}$.
\STATE Initialize a chain of contingency table $\tilde{\ct}= \ct(\bX_{j})$.
\FOR{$t$ from $1$ to $t_{\max}$ }
\STATE Draw $\Delta \ct$ independent of $\tilde{\ct}$, and propose $\tilde{\ct}^{*}=\tilde{\ct}+\Delta \ct$.
\IF{all elements of $\tilde{\ct}^{*}$ are nonnegative}
\STATE  Calculate the Metropolis--Hastings acceptance ratio $\alpha$.
\STATE  With probability $\alpha$, update $\tilde{\ct} \leftarrow \tilde{\ct}^{*}$.
\ENDIF
\ENDFOR
\STATE Sample $\bXk_j$ uniformly on all possible vectors that match $\tilde{\ct}$.
\RETURN $\bXk_{j}$.
\end{algorithmic}
\end{algorithm}

Conditional on $T(\bX)$ and $\bX_B$, $\bX_j$ is uniformly distributed on all $\bsfW_{j} \in [ \St_j]^{n}$ such that 
\[
\mrc{[\bX_{j-1},\bsfW_{j}], \st_{j-1}, \st_j}=N^{(j)}_{\st_{j-1}, \st_{j}} \mbox{ and }\mrc{[\bsfW_{j},\bX_{j+1}], \st_j, \st_{j+1} }=N^{(j+1)}_{\st_{j}, \st_{j+1}}
\]
for all $(\st_{j-1}, \st_j, \st_{j+1})$. 
In the following, we view $T(\bX)$ and $\bX_B$ as fixed and only $\bX_j$ as being random. 

We begin with some notation. To avoid burdensome subscripts, we write $(\st_{-},\st_{}, \st_{+})$  for $(\st_{j-1},\st_{j}, \st_{j+1})$.
Let $\ct=\ct(\bX_{j})$ be the three-dimensional array with elements $\ct_{\st_{-}, \st_{},\st_{+}}:=\mrc{ \bX_{(j-1):(j+1)}, \st_{-},\st_{},\st_{+} }$ for all $(\st_{-},  \st_{},\st_{+})$. The statistic $\ct$ is essentially a three-way contingency table and its three-dimensional marginals satisfy 
\begin{align*}
&\sum_{\st_{-} \in [\St_{-}] } \ct_{\st_{-}, \st_{},\st_{+}}&=&\, N^{(j+1)}_{ \st_{}, \st_{+}}, ~~ \forall \; \st_{}, \st_{+}; 
&& \sum_{\st_{+} \in [\St_{+}] } \ct_{\st_{-}, \st_{},\st_{+}} =\, N^{(j)}_{\st_{-}, \st_{}} , ~~ \forall \; \st_{-}, \st_{};\\
&\sum_{ \st_{}\in [\St_{}] } \ct_{\st_{-}, \st_{},\st_{+}} &=&\, M_{\st_{-}, \st_{+}},\text{ \hspace{2em} where } && M_{\st_{-}, \st_{+}}\; :=  \,\mrc{ [\bX_{j-1},\bX_{j+1}], \st_{-},\st_{+}},  ~~ \forall \; \st_{-}, \st_{+}.  &&  &
\end{align*}
Here $M_{\st_{-}, \st_{+}}$ is a function of $\bX_{B}$ and thus fixed. Conditional on $\ct$, $\bX_j$ is uniform on all vectors in $[\St_j]^{n}$ that match the three-way contingency table. 
The probability function for $\ct$ is 
\begin{align}
\label{eq:prob-table}
\Pcr{ \forall \; \st_{-}, \st_{},\st_{+},  \ct_{\st_{-}, \st_{},\st_{+}} = \smallcell_{\st_{-}, \st_{},\st_{+}}  }{\bX_{B} }\propto \;\prod_{\st_{-}\in [ \St_{j-1}],\st_{+}\in [ \St_{j+1}]} {M_{\st_{-}, \st_{+}}  \choose \smallcell_{\st_{-},1,\st_{+}}, \dots, \smallcell_{\st_{-}, \St_{j},\st_{+}}},  
\end{align}
where the counts $\smallcell_{\st_{-}, \st_{},\st_{+}}$ satisfy $\sum\limits_{ \st_{}\in [\St_j]} \smallcell_{\st_{-}, \st_{},\st_{+}} = M_{\st_{-}, \st_{+}}$ for each pair of $(\st_{-}, \st_{+}) \in  [\St_{j-1}]\times[\St_{j+1}]$. 

The construction of the Markov chain on contingency tables begins with defining the basic moves: suppose there are $L$ different three-way tables $\{\bs{\Delta}^{(\ell)}\}_{\ell=1}^{L}\subseteq \Z^{ \St_{j-1} \times \St_{j} \times \St_{j+1 } }$ such that the marginals of each table $\bs{\Delta}^{(\ell)}$ are $0$'s:\footnote{The set $\{\bs{\Delta}^{(\ell)}\}_{\ell=1}^{L}$ is similar to the Markov bases used in algebraic statistics (see \cite{PD-BS:1998}), but 
it does not need to connect every two possible contingency tables. 
}  
\begin{align*}
\forall \; \st_{},\st_{+},	\sum_{\st_{-}} \bs{\Delta}^{(\ell)}_{\st_{-}, \st_{},\st_{+}}=0, &&
\forall \; \st_{-}, \st_{},	\sum_{\st_{+}} \bs{\Delta}^{(\ell)}_{\st_{-}, \st_{},\st_{+}}=0, \\
\forall \; \st_{-},\st_{+},	\sum_{ \st_{}} \bs{\Delta}^{(\ell)}_{\st_{-}, \st_{},\st_{+}}=0. &&
\end{align*}

A simple set of basic moves, indexed by $\bs{\ell}$, can be constructed as follows: 
for each $\bs{\ell}=(r_1,r_2,c_1,c_2,d_1,d_2)$ where $r_1, r_2 \in [\St_{j-1}]$, $c_1,c_2\in [\St_{j+1}]$, $d_1,d_2\in [\St_{j}]$ and $r_1\neq r_2$, $c_1\neq c_2$, $d_1\neq d_2$,  define 
\[
 \bs{\Delta}^{(\bs{\ell})}_{\st_{-}, \st_{},\st_{+}}=\left\{
\begin{array}{c l}	
     (-1)^{\one{\st_{-}=r_1}+ \one{\st_{+}=c_1}+\one{\st_{}=d_1}} , &\text{ if }\; \st_{-}\in \{r_1,r_2\},
     \st_{+}\in \{c_1,c_2\}, z_j\in \{d_1,d_2\}\\
     0, & \text{otherwise}
\end{array}\right.
 \]
Algorithm~\ref{alg:mh-mc} is a detailed sampling procedure, whose validity is guaranteed by Proposition \ref{prop:mc-blocking-exchange}.

\begin{proposition}\label{prop:mc-blocking-exchange}
For $j\in B^c$, if $\bXk_j$ is drawn from Algorithm \ref{alg:mh-mc}, then \[(\bX_j,\bXk_j)\eqd (\bXk_j,\bX_j) \mid \bX_B.\]
\end{proposition}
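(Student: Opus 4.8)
The plan is to prove the claim by working conditionally on the minimal sufficient statistic $T(\bX)$ throughout (as elsewhere in Section~\ref{sec:eg-mc}) and, following the same conditioning reduction used in the proof of Theorem~\ref{thm:dgm-block}, to reduce to establishing $(\bX_j,\bXk_j)\eqd(\bXk_j,\bX_j)$ conditional on $T(\bX)$ together with $\bX_{j-1}$ and $\bX_{j+1}$; the passage from $\{T(\bX),\bX_{j-1},\bX_{j+1}\}$ to $\bX_B$ follows by the law of total probability together with the conditional independence of $\bX_j$ from $\bX_{B\setminus\{j-1,j+1\}}$ given its neighbors, exactly as in Equation~\eqref{eq:dgm-decompostion}. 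The first concrete step is to recall the characterization preceding Equation~\eqref{eq:prob-table}: conditionally on $T(\bX)$ and $(\bX_{j-1},\bX_{j+1})$, the column $\bX_j$ is uniform on the set $\mathcal{Q}_j(\ct(\bX_j))$ of all vectors in $[\St_j]^n$ realizing the observed three-way contingency table $\ct(\bX_j)$, and $\ct(\bX_j)$ itself is distributed according to Equation~\eqref{eq:prob-table}. The linchpin combinatorial identity I would state explicitly is $|\mathcal{Q}_j(\smallcell)| = \prod_{\st_{-}\in[\St_{j-1}],\,\st_{+}\in[\St_{j+1}]} \binom{M_{\st_{-},\st_{+}}}{\smallcell_{\st_{-},1,\st_{+}},\dots,\smallcell_{\st_{-},\St_j,\st_{+}}}$ for every feasible table $\smallcell$, i.e. the number of vectors realizing $\smallcell$ equals the unnormalized mass in Equation~\eqref{eq:prob-table}. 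Writing $C(\smallcell):=|\mathcal{Q}_j(\smallcell)|$ and $Z$ for the common normalizer, it follows that $\Pcr{\bX_j=\bsfW_j}{\bX_{j-1},\bX_{j+1}}$ equals the constant $1/Z$ for every feasible $\bsfW_j$ (the $C(\ct(\bsfW_j))$ from uniformity cancels the $C(\ct(\bsfW_j))$ from the table weight).

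The second step is to establish that the Metropolis--Hastings chain on contingency tables inside Algorithm~\ref{alg:mh-mc-simple} (detailed as Algorithm~\ref{alg:mh-mc}) is reversible with respect to the target $\pi(\smallcell) := C(\smallcell)$ of Equation~\eqref{eq:prob-table}. Each step proposes adding an independently drawn basic move $\bs{\Delta}$; since the family $\{\bs{\Delta}^{(\bs{\ell})}\}$ is closed under negation --- negating a move corresponds to the index transposition $r_1\leftrightarrow r_2$, $c_1\leftrightarrow c_2$, $d_1\leftrightarrow d_2$ --- and is proposed uniformly, the proposal kernel $q$ is symmetric, so the acceptance probability reduces to the Metropolis ratio $\alpha=\min\!\big(1,\pi(\tilde{\ct}^{*})/\pi(\tilde{\ct})\big)$ and the one-step transition kernel $P$ satisfies detailed balance $\pi(\smallcell)P(\smallcell,\smallcell')=\pi(\smallcell')P(\smallcell',\smallcell)$ (a proposal leaving the feasible region is rejected, which contributes symmetrically to both sides and so is harmless). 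Composing steps, the $t_{\max}$-fold kernel $P^{t_{\max}}$ --- which is precisely the conditional law of the output table $\tilde{\ct}$ given the input table $\ct(\bX_j)$ --- still satisfies detailed balance with respect to $\pi$, by an easy induction on $t_{\max}$.

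The third step assembles these facts into the symmetry of the joint mass function. Conditionally on $(\bX_{j-1},\bX_{j+1})$, and using that the chain is initialized at $\ct(\bX_j)$ with fresh randomness while $\bXk_j$ is then drawn uniformly given $\tilde{\ct}$, the joint mass at a feasible pair $(\bsfW_j,\bsfWk_j)$ is
\[
\Pcr{\bX_j=\bsfW_j}{\bX_{j-1},\bX_{j+1}}\cdot P^{t_{\max}}\!\big(\ct(\bsfW_j),\ct(\bsfWk_j)\big)\cdot \frac{1}{C(\ct(\bsfWk_j))}
= \frac{1}{Z}\cdot P^{t_{\max}}\!\big(\ct(\bsfW_j),\ct(\bsfWk_j)\big)\cdot \frac{1}{C(\ct(\bsfWk_j))}.
\]
Detailed balance for $P^{t_{\max}}$ gives $C(\ct(\bsfW_j))\,P^{t_{\max}}(\ct(\bsfW_j),\ct(\bsfWk_j)) = C(\ct(\bsfWk_j))\,P^{t_{\max}}(\ct(\bsfWk_j),\ct(\bsfW_j))$, and substituting turns the displayed quantity into $\frac{1}{Z}\cdot\frac{1}{C(\ct(\bsfW_j))}\cdot P^{t_{\max}}(\ct(\bsfWk_j),\ct(\bsfW_j))$, which is manifestly invariant under exchanging $\bsfW_j\leftrightarrow\bsfWk_j$. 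Hence $(\bX_j,\bXk_j)\eqd(\bXk_j,\bX_j)\mid\bX_{j-1},\bX_{j+1}$, and then $\mid\bX_B$ by the reduction above. For boundary indices $j\in\{1,p\}$ the three-way table degenerates to a two-way table and the statement is already contained in the proof of Theorem~\ref{thm:dgm-block}.

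I expect the main obstacle to be the rigorous verification of reversibility for the contingency-table chain: one must confirm that the stated basic-move proposal is genuinely symmetric (including the bookkeeping for rejected moves), that the acceptance probability in Algorithm~\ref{alg:mh-mc} is exactly the Metropolis ratio for $\pi$ so that detailed balance holds for a single step, and that reversibility persists under composition. The combinatorial identity $C(\smallcell)=\prod_{\st_-,\st_+}\binom{M_{\st_-,\st_+}}{\smallcell_{\st_-,1,\st_+},\dots}$ is routine but must be stated, since it is what converts the "uniform-given-table'' description of $\bX_j$ into the $\pi$-weighting of tables that makes the detailed-balance cancellation go through.
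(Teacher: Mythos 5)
Your proof is correct and uses the same core argument as the paper: the Metropolis--Hastings chain on contingency tables is reversible with respect to the distribution in Equation~\eqref{eq:prob-table}, and since it is initialized at $\ct(\bX_j)$ which is already distributed from that target, the pair of endpoint tables is exchangeable, which then transfers to exchangeability of $(\bX_j,\bXk_j)$ via the identical uniform conditional laws given a table and the conditional independence of $\bX_j$ and $\bXk_j$ given both tables. The only stylistic difference is that you carry out the final assembly with explicit normalizers $Z$ and $C(\smallcell)$ in the joint mass function and cancel them, whereas the paper states the exchangeability of $(\tilde{\ct}^0,\tilde{\ct}^{t_{\max}})$ abstractly (Equation~\eqref{eq:mc-table-swap}) and combines it with the conditional independence \eqref{eq:mc-x-xk-indp} and distributional identity \eqref{eq:mc-table-to-x} without writing out the normalizing constants; both routes are equally rigorous and amount to the same computation.
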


A final remark is that one can generalize the refined blocking algorithm to Ising models. By Equation~\eqref{eq:model-ising}, the sufficient statistic is the vector that includes all the counts of configurations of  adjacent pairs, i.e., $\sum_{i=1}^{n} \one{X_{i,s}=\st, X_{i,t}=\st'}$ for all $\st,\st'\in \{-1,1\}$ and $(s,t)\in E$. Instead of sampling a three-way contingency table in Algorithm~\ref{alg:mh-mc}, now one has to construct a reversible Markov Chain for the $(2d+1)$-way contingency table for each vertex and its neighborhood. The basic moves can be constructed similarly as the $\Delta^{(\bs{\ell})}$ given before. 

\begin{algorithm}[]
\caption{Improved Conditional Knockoffs for $\bX_j$ in a Discrete Markov Chain} \label{alg:mh-mc}
\begin{algorithmic}[1]
\ENSURE $\bX_j$, columns to condition on: $\bX_{j-1}$ and $\bX_{j+1}$.
\STATE Initialize $\tilde{\ct}^{0}$ with $\tilde{\ct}^{0}_{\st_{j-1}, \st_{j},\st_{j+1}}:=\sum_{i=1}^n\one{X_{i,j-1}=\st_{j-1}, X_{i,j}= \st_{j}, X_{i,j+1}=\st_{j+1}}$ for all $\st_{j-1}$, $ \st_{j}$ and $\st_{j+1}$.
\FOR{$t$ from $1$ to $t_{\max}$ }
\STATE Sample uniformly without replacement the pair $(R_{1}, R_{2})$ from $\{1,\dots,\St_{j-1}\}$, the pair $(C_{1},C_{2})$ from $\{1,\dots, \St_{j+1}\}$, and the pair $(D_{1},D_{2})$ from $\{1,\dots, \St_{j}\}$. 
\STATE Define 
$\tilde{\ct}^{*}=\tilde{\ct}^{t-1}+ \bs{\Delta}^{(R_{1},R_{2},C_{1},C_{2},D_{1},D_{2})}$.
\IF{all elements of $\tilde{\ct}^{*}$ are nonnegative}
\STATE
Calculate the Metropolis-Hastings acceptance ratio 
\[
\alpha = \min \left( 
\frac{\prod_{\st_{j-1},\st_{j+1}} {M_{\st_{j-1}, \st_{j+1}}  \choose \tilde{\ct}^{*}_{\st_{j-1},1,\st_{j+1}}, \dots, \tilde{\ct}^{*}_{\st_{j-1}, \St_{j},\st_{j+1} } } }
{\prod_{\st_{j-1},\st_{j+1}} {M_{\st_{j-1}, \st_{j+1}}  \choose \tilde{\ct}^{t-1}_{\st_{j-1},1,\st_{j+1}}, \dots, \tilde{\ct}^{t-1}_{\st_{j-1}, \St_{j},\st_{j+1} } } }, 
1 \right).
\]
\STATE Sample $U\sim \text{Unif}(0,1)$. 
\IF{$U\leq \alpha$}
\STATE Set $\tilde{\ct}^{t}=\tilde{\ct}^{*}$.
\ELSE
\STATE Set $\tilde{\ct}^{t}=\tilde{\ct}^{t-1}$.
\ENDIF
\ELSE
\STATE Set $\tilde{\ct}^{t}=\tilde{\ct}^{t-1}$.
\ENDIF
\ENDFOR
\STATE Set $\tilde{\ct}=\tilde{\ct}^{t_{\max}}$. 
\STATE Sample $\bXk_j$ uniformly on all possible vectors that match $\tilde{\ct}$. In other words, for each $(\st_{j-1},\st_{j+1})\in[\St_{j-1}]\times [\St_{j+1}]$, set
the subvector of $\bXk_j$ with indices $\{i\in [n]: X_{i,j-1}= \st_{j-1}, X_{i,j+1}(i)=\st_{j+1} \}$ as a random uniform permutation of
\[
\big( \underbrace{1,\dots,1}_{ \tilde{\ct} \left(  \st_{j-1},1,\st_{j+1}  \right) }, \dots,  \underbrace{ \St_{j} ,\dots, \St_{j}}_{ \tilde{\ct}\left( \st_{j-1},\St_{j}, \st_{j+1}   \right)   } \big) .
\]
\RETURN $\bXk_{j} $.
\end{algorithmic}
\begin{remark}
\footnotesize
The calculation of  $\alpha$ is simply
\begin{align*}
&\prod_{\st_{j-1}\in \{R_1,R_2\},\st_{j+1}\in \{C_1,C_2\} } \frac{ \tilde{\ct}^{t-1}_{\st_{j-1},D_1,\st_{j+1}} !  }{\tilde{\ct}^{*}_{ \st_{j-1},D_1,\st_{j+1} } !  }\frac{\tilde{\ct}^{t-1}_{\st_{j-1},D_2,\st_{j+1} } !}{  \tilde{\ct}^{*}_{ \st_{j-1},D_2,\st_{j+1} } ! }  \\
=\; &\frac{ \tilde{\ct}^{t-1}_{R_1,D_1,C_1}   }{\tilde{\ct}^{t-1}_{R_1,D_2,C_1} +1 }\times
\frac{ \tilde{\ct}^{t-1}_{R_1,D_2,C_2} }{ \tilde{\ct}^{t-1}_{R_1,D_1,C_2}  +1 }\times
\frac{ \tilde{\ct}^{t-1}_{R_2,D_2,C_1} }{ \tilde{\ct}^{t-1}_{R_2,D_1,C_1}  +1 }\times
\frac{ \tilde{\ct}^{t-1}_{R_2,D_1,C_2}   }{\tilde{\ct}^{t-1}_{R_2,D_2,C_2} +1 },
\end{align*}
where all quantities can be read off directly from $\tilde{\ct}^{t-1}$. 
\end{remark}
\end{algorithm}

\subsubsection{Proofs}\label{app:proof-mc}
\paragraph{Conditional Markov Chains}
We first show that conditional  on $T(\bX)$, the sequence of $\bX_j$'s forms a Markov chain, i.e., Equation~\eqref{eq:mc-condp} describes a Markov chain. We still write \textquoteleft $\mid T(\bX)$\textquoteright\ in the probability here to emphasize the dependence on $T(\bX)$.

Summing Equation~\eqref{eq:mc-condp} over $\bsfW_p$, we have
\begin{equation}\label{eq:mc-cond-mc-p}
	\Pcr{\bX_{1:(p-1)}=\bsfW_{1:(p-1)} }{T(\bX)} = C_0 \prod_{j=2}^{p-1}\phi_{j}( \bsfW_{j-1},\bsfW_{j} \mid T(\bX) ) \sum_{\bsfW_p} \phi_{p}( \bsfW_{p-1},\bsfW_{p} \mid T(\bX) ),
\end{equation}
thus 
\[
\Pcr{\bX_p=\bsfW_p}{\bX_{1:(p-1)}=\bsfW_{1:(p-1)},T(\bX)} =\frac{  \phi_{p}( \bsfW_{p-1},\bsfW_{p} \mid T(\bX) ) }{ \sum_{\bsfW_p'} \phi_{p}( \bsfW_{p-1},\bsfW_{p}' \mid T(\bX) ) }.
\]
Since the right hand side of the last equation does not involve $\bsfW_{1:(p-2)}$, we conclude 
\[
\Pcr{\bX_p }{\bX_{1:(p-1)},T(\bX)}=\Pcr{\bX_p}{\bX_{p-1},T(\bX)}.
\] 
In addition, let  ${\phi}_{p-1}^{\prime}( \bsfW_{p-2},\bsfW_{p-1} \mid T(\bX) )=\phi_{p-1}( \bsfW_{p-2},\bsfW_{p-1} \mid T(\bX) ) \sum_{\bsfW_{p}'} \phi_{p}( \bsfW_{p-1},\bsfW_{p}' \mid T(\bX) )$  
and for $j<p-1$, let $\phi_{j}^{\prime}=\phi_{j}$, \eqref{eq:mc-cond-mc-p} can be rewritten as
\begin{equation}\label{eq:mc-cond-mc-p-2}
	\Pcr{\bX_{1:(p-1)}=\bsfW_{1:(p-1)} }{T(\bX)} = C_0 \prod_{j=2}^{p-1}\phi_{j}^{\prime}( \bsfW_{j-1},\bsfW_{j} \mid T(\bX) ), 
\end{equation}
which has the same form as Equation~\eqref{eq:mc-condp}. 
Continuing the same reasoning for $\bX_{p-1},\bX_{p-2},\dots, \bX_2$, we conclude 
\[
\Pcr{\bX_j }{\bX_{1:(j-1)},T(\bX)}=\Pcr{\bX_j}{\bX_{j-1},T(\bX)}, \; 2\le j\le p, 
\] 
that is, the sequence of $\bX_j$'s is a Markov chain conditional on $T(\bX)$.

\paragraph{SCIP}
\begin{proof}[Proof of Proposition~\ref{prop:mc-scip-f}]
The first equation follows from the uniform distribution and the Markovian property 
	\[\Pcr{ \bX_1=\bsfW_1 }{\bX_2=\bsfW_{2} }=\; \Pcr{ \bX_1=\bsfW_1}{ \bX_2=\bsfW_{2} , \bX_{3:p}=\bsfX_{3:p}}.\]
Next we prove the second equation, except for the case of $j=2$. However, the second equation with $j=2$ and also the third equation both follow the same proof, by allowing $k_1:k_2$ for  $k_1>k_2$ to denote the empty set.

Before the proof, we show an implication of Bayes' rule. For any $\bsfW_{j}$ and $\bsfW_{j+1}$, 
\begin{align}
	&\Pcr{ \bX_j=\bsfW_j }{ \bX_{1:(j-1)}=\bsfX_{1:(j-1)}, \bXk_{1:(j-2)}=\bsfXk_{1:(j-2)},\bX_{j+1}=\bsfW_{j+1}, \bX_{(j+2):p}=\bsfX_{(j+2):p} }\nonumber \\
	\propto \;&\Pcr{ \bX_j=\bsfW_j }{ \bX_{1:(j-1)}=\bsfX_{1:(j-1)}, \bX_{j+1}=\bsfW_{j+1}, \bX_{(j+2):p}=\bsfX_{(j+2):p} }\nonumber\\
	&\times  \Pcr{\bXk_{1:(j-2)}=\bsfXk_{1:(j-2)}}{ \bX_j=\bsfW_j , \bX_{1:(j-1)}=\bsfX_{1:(j-1)}, \bX_{j+1}=\bsfW_{j+1}, \bX_{(j+2):p}=\bsfX_{(j+2):p} }\label{eq:mc-scip-f1} \\
	\propto \;&\Pcr{ \bX_j=\bsfW_j }{\bX_{1:(j-1)}=\bsfX_{1:(j-1)}, \bX_{j+1}=\bsfW_{j+1}, \bX_{(j+2):p}=\bsfX_{(j+2):p} }\nonumber\\
	&\times  \Pcr{\bXk_{1:(j-2)}=\bsfXk_{1:(j-2)}}{  \bX_{1:(j-1)}=\bsfX_{1:(j-1)} }\label{eq:mc-scip-f2}\\
	\propto \;&\Pcr{ \bX_j=\bsfW_j }{\bX_{1:(j-1)}=\bsfX_{1:(j-1)}, \bX_{j+1}=\bsfW_{j+1}, \bX_{(j+2):p}=\bsfX_{(j+2):p} } \label{eq:mc-scip-f3}\\
	\propto \;& \one{ (\bsfX_{1:(j-1)}, \bsfW_{j},\bsfW_{j+1}, \bsfX_{(j+2):p} )\in \mathcal{Q} }\label{eq:mc-scip-f4},
\end{align}
where Equation~\eqref{eq:mc-scip-f1} is due to Bayes' rule, Equation~\eqref{eq:mc-scip-f2} is due to the Markovian property and the fact that SCIP sampling of $\bXk_{1:(j-2)}$ only depends on $\bX_{1:(j-1)}$, and Equation~\eqref{eq:mc-scip-f3} is because the conditional probability of $\bXk_{1:(j-2)}$ does not depend on $\bsfW_j$. Note that the normalizing constant in Equation~\eqref{eq:mc-scip-f4} depends on $\bsfW_{j+1}$ but not on $\bsfW_j$.

Now the second equation of Proposition~\ref{prop:mc-scip-f} can be shown as follows
	\begin{align*}
		&\Pcr{ \bX_j=\bsfW_j }{ \bX_{j-1}=\bsfX_{j-1},\bXk_{j-1}=\bsfXk_{j-1},\bX_{j+1}=\bsfW_{j+1} } \nonumber \\
		=\;&\Pcr{ \bX_j=\bsfW_j }{ \bX_{1:(j-1)}=\bsfX_{1:(j-1)}, \bXk_{1:(j-2)}=\bsfXk_{1:(j-2)},\bXk_{j-1}=\bsfXk_{j-1},\bX_{j+1}=\bsfW_{j+1}, \bX_{(j+2):p}=\bsfX_{(j+2):p}  } \nonumber  \\
&\rightline{ \text{ (Since $\bX_j, \bXk_{1:(j-2)}, \bX_{(j+2):p}$ are conditionally independent given $\bX_{j-1}, \bX_{j+1}$)}	}	\nonumber \\
		\propto \;& \; \Pcr{\bXk_{j-1}=\bsfXk_{j-1} }{ \bX_j=\bsfW_j , \bX_{1:(j-1)}=\bsfX_{1:(j-1)}, \bXk_{1:(j-2)}=\bsfXk_{1:(j-2)},\bX_{j+1}=\bsfW_{j+1}, \bX_{(j+2):p}=\bsfX_{(j+2):p} }\nonumber \\
		& \times \Pcr{ \bX_j=\bsfW_j}{\bX_{1:(j-1)}=\bsfX_{1:(j-1)}, \bXk_{1:(j-2)}=\bsfXk_{1:(j-2)},\bX_{j+1}=\bsfW_{j+1}, \bX_{(j+2):p}=\bsfX_{(j+2):p} }\nonumber \\
		& \rightline{ \text{ (By Bayes' rule)}	}	\nonumber \\
		\propto \;& \; f_{j-1}(	\bsfXk_{j-1}, \bsfXk_{j-2}, \bsfW_{j})  \one{ (\bsfX_{1:(j-1)}, \bsfW_{j},\bsfW_{j+1}, \bsfX_{(j+2):p} )\in \mathcal{Q} }, \\
		&\rightline{\text{ (By the Markovian property and Equation~\eqref{eq:mc-scip-f3})}	}	\nonumber 
	\end{align*}
where the normalizing constant does not depend on $\bsfW_j$. Hence we have 
\[
f_{j}(\bsfW_{j},\bsfXk_{j-1},\bsfW_{j+1})=\frac{ f_{j-1}(	\bsfXk_{j-1}, \bsfXk_{j-2}, \bsfW_{j})  \one{ (\bsfX_{1:(j-1)}, \bsfW_{j},\bsfW_{j+1}, \bsfX_{(j+2):p} )\in \mathcal{Q} } }{
\sum_{\bsfW_{j}^{\prime}\in [\St_{j}]^{n}} f_{j-1}(	\bsfXk_{j-1}, \bsfXk_{j-2}, \bsfW_{j}^{\prime})  \one{ (\bsfX_{1:(j-1)},\bsfW_{j}^{\prime},\bsfW_{j+1}, \bsfX_{(j+2):p} )\in \mathcal{Q} }
}
\]
\end{proof}

\paragraph{Refined Blocking}

\begin{proof}[Proof of Proposition \ref{prop:mc-blocking-exchange}]
In the following, we view $T(\bX)$ and $\bX_B$ as fixed and only $\bX_j$ being random, and denote this conditional probability 
by $\P_{j}(\cdot)$. 

We first show $( \ct, \tilde{\ct}^{t_{\max}}) \eqd (\tilde{\ct}^{t_{\max}},\ct)$. Denote the probability mass function in \eqref{eq:prob-table} as $g( \smallcell )$. 
Since $\tilde{\ct}^{0}=\ct\sim g$ and the transition kernel constructed in Algorithm~\ref{alg:mh-mc} is in detailed balance with density $g(\smallcell)$, $(\tilde{\ct}^{t})_{t=0}^{t_{\max}}$ is reversible. Thus
\begin{equation}\label{eq:mc-table-swap}
	( \tilde{\ct}^0, \tilde{\ct}^{t_{\max}}) \eqd (\tilde{\ct}^{t_{\max}},\tilde{\ct}^0).
\end{equation}
By the sampling of $\bXk_j$ in the algorithm, we have
\begin{equation}\label{eq:mc-x-xk-indp}
\bXk_j \indp \bX_j \mid  \tilde{\ct}^{t_{\max}},  \ct.
\end{equation}
and 
\begin{equation}\label{eq:mc-table-to-x}
\Pcrs{j}{\bXk_j=\bsfX_{j} }{ \tilde{\ct}^{t_{\max}}= \smallcell }=\Pcrs{j}{\bX_j=\bsfX_{j} }{ \ct (\bX_j)= \smallcell }.
\end{equation}
Hence
\begin{align*}
&\prs{j}	{ \bX_j=\bsfW_{j}, \ct(\bX_j) =\smallcell ,\bXk_j =\bsfWk_{j}, \tilde{\ct}^{t_{\max}}=\tilde{\smallcell}}\\
=\;&\prs{j}	{\ct(\bX_j) =\smallcell ,  \tilde{\ct}^{t_{\max}}=\tilde{\smallcell}}\Pcrs{j}{ \bX_j=\bsfW_{j}, \bXk_j =\bsfWk_{j}}{ \ct(\bX_j) =\smallcell , \tilde{\ct}^{t_{\max}}=\tilde{\smallcell}}\\
=\;&\prs{j}	{\ct(\bX_j) =\smallcell ,  \tilde{\ct}^{t_{\max}}=\tilde{\smallcell}}\Pcrs{j}{ \bX_j=\bsfW_{j} }{ \ct(\bX_j) =\smallcell}\Pcrs{j}{\bXk_j =\bsfWk_{j} }{ \tilde{\ct}^{t_{\max}}=\tilde{\smallcell}}\\
=\;&\prs{j}	{ \tilde{\ct}^{t_{\max}}=\smallcell, \ct(\bX_j) =\tilde{\smallcell} }\Pcrs{j}{\bXk_j =\bsfW_{j} }{  \tilde{\ct}^{t_{\max}}=\smallcell}\Pcrs{j}{ \bX_j=\bsfWk_{j} }{  \ct(\bX_j) =\tilde{\smallcell}}\\
=\;&\prs{j}	{ \bX_j=\bsfWk_{j}, \ct(\bX_j) =\tilde{\smallcell} ,\bXk_j =\bsfW_{j}, \tilde{\ct}^{t_{\max}}=\smallcell},
\end{align*}
where the second equality is due to \eqref{eq:mc-x-xk-indp} and the third equality is due to Equations~\eqref{eq:mc-table-swap} and  \eqref{eq:mc-table-to-x}. 
Summing over all $\smallcell, \tilde{\smallcell}$, we conclude that $(\bX_j,\bXk_j)\eqd (\bXk_j,\bX_j) \mid \bX_{B}$.
\end{proof}

\section{Conditional Hypothesis}\label{app:cond-hypo}
This section concerns the hypotheses actually tested by conditional knockoffs.
Suppose  $(\bx_i,Y_i)\iid (X,Y)$ and $T(\bX)$ is a statistic of $\bX$. 
The knockoff procedure using conditional knockoffs treats the variables in $\mathcal{H}_{0,T}=\{j\; :\;  \by\indp \bX_j\mid\bX_\noj,T(\bX)\}$  as null. It is of interest to compare $\mathcal{H}_{0,T}$ with $\nullset$, the original set of null variables defining the variable selection problem we actually care about.
\begin{proposition}\label{prop:hypo-sub}
$\nullset \subseteq \mathcal{H}_{0,T}$. 
\end{proposition}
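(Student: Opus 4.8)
The claim is that every variable that is null in the original (unconditional) sense remains null after conditioning on $T(\bX)$; that is, $Y\indp X_j\mid X_{\noj}$ implies $\by\indp\bX_j\mid\bX_{\noj},T(\bX)$. The natural tool is the conditional independence calculus (contraction, weak union, decomposition), the same graphoid axioms invoked elsewhere in the appendix (e.g., the proof of Proposition~\ref{prop:unsuper}). The plan is to work row-wise first and then lift to the matrix statement.

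First I would unpack what ``$j\in\nullset$'' means at the population level: $Y_i\indp X_{i,j}\mid X_{i,\noj}$ for each row $i$, by the i.i.d.\ assumption and the definition of a null variable in the introduction. The key structural fact to exploit is that $T(\bX)$ is a statistic of $\bX$ alone, and that, conditionally on the \emph{entire} covariate matrix $\bX$, the response vector $\by$ and $\bX$'s columns interact only through $\bX$ itself. Concretely, since the rows are i.i.d.\ and $j$ is null, one gets $\by\indp\bX_j\mid\bX_{\noj}$ (a statement about the whole matrices, obtained by combining the per-row conditional independences — each $Y_i$ depends on $\bX$ only through $\bx_{i,\noj}$, and the rows are independent). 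This is the analogue of the step ``$\bXk^{(u)}\indp\by\mid\bX$'' in the proof of Proposition~\ref{prop:unsuper}.

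Next, I would bring in $T(\bX)$. Because $T(\bX)$ is a deterministic function of $\bX=(\bX_j,\bX_{\noj})$, I would argue as follows. We have $\by\indp\bX_j\mid\bX_{\noj}$ from the previous step. I want $\by\indp\bX_j\mid(\bX_{\noj},T(\bX))$. Note that $\by$ is a function of $\bX$ and the exogenous noise only through the null structure: more carefully, conditionally on $\bX_{\noj}$, the pair $(\by,\bX_j)$ is independent, and $T(\bX)=T(\bX_j,\bX_{\noj})$ is a function of $(\bX_j,\bX_{\noj})$ whose randomness, given $\bX_{\noj}$, comes only through $\bX_j$. Then $T(\bX)\indp\by\mid\bX_{\noj}$ (since $T(\bX)$ is a function of $(\bX_j,\bX_{\noj})$ and $\bX_j\indp\by\mid\bX_{\noj}$, any function of $\bX_j$ and $\bX_{\noj}$ is also conditionally independent of $\by$ given $\bX_{\noj}$). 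Combining $\by\indp\bX_j\mid\bX_{\noj}$ with $\by\indp T(\bX)\mid\bX_{\noj}$ is not quite enough on its own; instead I would use that, conditionally on $\bX_{\noj}$, $\by$ is independent of the \emph{pair} $(\bX_j,T(\bX))$ — which holds because $T(\bX)$ is a function of $(\bX_j,\bX_{\noj})$, so conditioning further on $\bX_{\noj}$ makes $(\bX_j,T(\bX))$ a function of $\bX_j$ (plus the fixed $\bX_{\noj}$), and $\by\indp\bX_j\mid\bX_{\noj}$. From $\by\indp(\bX_j,T(\bX))\mid\bX_{\noj}$ the weak union property gives $\by\indp\bX_j\mid(\bX_{\noj},T(\bX))$, which is exactly $j\in\mathcal{H}_{0,T}$.

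I expect the main (and only real) obstacle to be bookkeeping precision in the middle step: making rigorous the assertion that ``given $\bX_{\noj}$, everything relevant is a deterministic function of $\bX_j$,'' so that $\by\indp(\bX_j,T(\bX))\mid\bX_{\noj}$. The cleanest way to do this is to observe $T(\bX)=h(\bX_j,\bX_{\noj})$ for some measurable $h$, hence $\sigma(\bX_j,T(\bX),\bX_{\noj})=\sigma(\bX_j,\bX_{\noj})$, and then $\by\indp\sigma(\bX_j,\bX_{\noj})$-measurable randomness given $\bX_{\noj}$ reduces to $\by\indp\bX_j\mid\bX_{\noj}$, which we already have. Once that identity of $\sigma$-algebras is stated, weak union finishes the argument and the lift from rows to matrices is routine given the i.i.d.\ assumption. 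It may also be worth remarking that the reverse inclusion $\mathcal{H}_{0,T}\subseteq\nullset$ generally fails (as the Remark following Definition~\ref{def:cond-ko} already illustrates with $T(\bX)=\bX$), so equality should not be expected.
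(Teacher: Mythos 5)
Your proposal is correct and follows essentially the same route as the paper: establish $\by\indp\bX_j\mid\bX_\noj$ from the i.i.d.\ rows, upgrade this to $\by\indp(\bX_j,T(\bX))\mid\bX_\noj$, and finish with weak union. The only cosmetic difference is that the paper names the middle step as an application of the contraction axiom (combining $\by\indp T(\bX)\mid\bX$ with $\by\indp\bX_j\mid\bX_\noj$), whereas you give a direct $\sigma$-algebra justification of the same fact via $\sigma(\bX_j,T(\bX),\bX_\noj)=\sigma(\bX_j,\bX_\noj)$; both are fine.
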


\begin{proof}
	Suppose $j\in \nullset$. For i.i.d. data,  $j\in \nullset$ implies $Y_{i}\indp X_{i,j}\mid X_{i,\noj}$, which together with the independence among $\{(Y_{i}, \bx_{i})\}_{i=1}^{n}$ implies $\by\indp \bX_j\mid \bX_\noj$. 
	Note that $\by \indp T(\bX)\mid \bX$ (since $T(\bX)$ is deterministic given $\bX$), which together with $\by\indp \bX_j\mid \bX_\noj$ implies $\by\indp (\bX_j,T(\bX))\mid \bX_\noj$ by the contraction property of conditional independence. And by the weak union property of conditional independence, $\by\indp (\bX_j,T(\bX))\mid \bX_\noj$ implies $\by\indp \bX_j\mid \bX_\noj, T(\bX)$.
	Thus  $j\in \mathcal{H}_{0,T}$. This holds for arbitrary $j\in \nullset$ and thus $\nullset \subseteq \mathcal{H}_{0,T}$.
\end{proof}

The converse is not true in general, for instance if $T(\bX)=\bX$ and $\nullset=\emptyset$, then all variables are automatically null conditional on $T(\bX)$ and thus $\nullset \subsetneq  \mathcal{H}_{0,T}$. In general, when $T(\bX)$ does not allow full reconstruction of $\bX_j$ it should be rare for a non-null variable $\bX_j$ to be null conditional on $T(\bX)$, as this can only happen if there is a perfect synergy of $\Fyx$ and $\Fx$ so that $\Fyx$ is only a function of $X_j$ through a transformation computable from the sufficient statistic $T(\bX)$ of $\Fx$. For most problems of interest, 
 Theorem~\ref{thm:suff-nn} provides a sufficient condition for $\by\nindp \bX_{j} \mid \bX_\noj , T(\bX)$, i.e., $\nullset =  \mathcal{H}_{0,T}$: the conditional mean of $Y_{i}$ (or some transformation of $Y_{i}$) given $\bx_{i}$, say $\phi(\bx_i)$, should not be deterministic after conditioning on $\bX_{\noj}$ and $T(\bX)$. 

\begin{theorem}\label{thm:suff-nn}
Suppose for a bounded function $g(y)$ and $\phi(\bx) :=\,  \Ec{g(Y) }{X=\bx} $,  there exist two disjoint Borel sets $B_1,B_2\subset \R^{p}$ such that $\inf_{\bx\in B_1} \phi(\bx) > \sup_{\bx\in B_2}\phi(\bx)$. 
If for each $j\in [p]$, it holds with positive probability that 
\[
\Pcr{ \bx_1\in B_i}{\bX_{\noj } , T(\bX) }>0, i=1,2,
\]
then $\nullset = \mathcal{H}_{0,T}$. 
\end{theorem}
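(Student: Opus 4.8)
\textbf{Proof proposal for Theorem~\ref{thm:suff-nn}.}

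The plan is to prove the contrapositive of the implication $\by \indp \bX_j \mid \bX_{\noj}, T(\bX) \Rightarrow j \in \nullset$: since Proposition~\ref{prop:hypo-sub} already gives $\nullset \subseteq \mathcal{H}_{0,T}$, it suffices to show that under the stated hypothesis, every $j \notin \nullset$ satisfies $j \notin \mathcal{H}_{0,T}$, i.e.\ $\by \nindp \bX_j \mid \bX_{\noj}, T(\bX)$. Actually, re-reading the statement, I think the cleanest route is to show directly that the separation hypothesis on $\phi$ forces conditional \emph{dependence} of $\by$ on $\bX_j$ given $(\bX_{\noj}, T(\bX))$ whenever $j \notin \nullset$; combined with Proposition~\ref{prop:hypo-sub} this yields $\nullset = \mathcal{H}_{0,T}$. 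So first I would fix $j \notin \nullset$ and suppose for contradiction that $\by \indp \bX_j \mid \bX_{\noj}, T(\bX)$.

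The key idea is to use the bounded test function $g$ and look at $\Ec{g(Y_1)}{\bX, T(\bX)} = \Ec{g(Y_1)}{\bX} = \phi(\bx_1)$, where the first equality holds because $T(\bX)$ is a deterministic function of $\bX$. Under the assumed conditional independence, by the tower property, $\Ec{g(Y_1)}{\bX_{\noj}, T(\bX)}$ must equal $\Ec{\phi(\bx_1)}{\bX_{\noj}, T(\bX)}$, but also — because $\by \indp \bX_j \mid \bX_{\noj}, T(\bX)$ and (for the coordinate-wise statement) using that the rows are i.i.d. so that $Y_1$ depends on the data only through row $1$ — one can argue that $g(Y_1)$ and $\bX_{1,j}$ are conditionally independent given $(\bX_{1,\noj}, \bX_{\text{-}1}, T(\bX))$, hence $\Ec{g(Y_1)}{\bX} = \Ec{g(Y_1)}{\bX_{1,\noj}, \bX_{\text{-}1}, T(\bX)}$ does not depend on $\bX_{1,j}$ once we condition on everything else. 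That is, $\phi(\bx_1)$ is (almost surely) constant as $\bX_{1,j}$ varies over its conditional support given $(\bX_{1,\noj}, \bX_{\text{-}1}, T(\bX))$. Now invoke the hypothesis: with positive probability the conditional law of $\bx_1$ given $(\bX_{\noj}, T(\bX))$ puts positive mass on both $B_1$ and $B_2$, and since varying $\bX_{1,j}$ is how $\bx_1$ moves (the other coordinates are held fixed), there is positive probability of a configuration in which $\phi(\bx_1)$ takes a value $\ge \inf_{B_1}\phi$ on one event and a value $\le \sup_{B_2}\phi$ on another, both reachable by changing only $\bX_{1,j}$. By the strict separation $\inf_{\bx\in B_1}\phi(\bx) > \sup_{\bx \in B_2}\phi(\bx)$, this contradicts $\phi(\bx_1)$ being conditionally constant in $\bX_{1,j}$.

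More precisely, I would make the last step rigorous as follows. Let $\mathcal{C} = \sigma(\bX_{1,\noj}, \bX_{\text{-}1}, T(\bX))$. On the event $E_i = \{\Pc{\bx_1 \in B_i}{\bX_{\noj}, T(\bX)} > 0\}$ (for $i=1,2$), which by hypothesis has positive probability and is $\mathcal{C}$-measurable up to conditioning refinements, the conditional distribution of $\bX_{1,j}$ given $\mathcal{C}$ must (on a further positive-probability sub-event) assign positive mass to values placing $\bx_1$ in $B_1$ and positive mass to values placing $\bx_1$ in $B_2$; otherwise the mixture over $\mathcal{C}$ could not reach both $B_1$ and $B_2$ with positive probability. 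On that sub-event, $\Ec{g(Y_1)}{\mathcal{C}}$ — which under the conditional-independence assumption equals $\phi(\bx_1)$ for a.e.\ value of $\bX_{1,j}$ in the conditional support — would have to simultaneously be $\ge \inf_{B_1}\phi$ and $\le \sup_{B_2}\phi$, impossible. Hence the conditional independence assumption fails, so $j \in \mathcal{H}_{0,T}^c$, and we conclude $\mathcal{H}_{0,T} \subseteq \nullset$; together with Proposition~\ref{prop:hypo-sub}, $\nullset = \mathcal{H}_{0,T}$.

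The main obstacle I anticipate is the bookkeeping around conditioning on $T(\bX)$, a statistic of the \emph{whole} matrix $\bX$, while trying to isolate the single entry $\bX_{1,j}$ and argue that $\phi(\bx_1)$ is conditionally constant in it. One must be careful that conditioning on $T(\bX)$ does not secretly pin down $\bX_{1,j}$ (in which case the "positive probability" hypothesis would fail and the statement is vacuous for that $j$), and that the i.i.d.\ structure of the rows plus $\by \indp \bX_j \mid \bX_{\noj}, T(\bX)$ really does give the row-wise conditional independence of $g(Y_1)$ from $\bX_{1,j}$ given $\mathcal{C}$ — this uses that $Y_1$ given $\bX$ depends only on $\bx_1$, plus a weak-union/decomposition manipulation of conditional independence analogous to those in the proof of Proposition~\ref{prop:unsuper} and Proposition~\ref{prop:hypo-sub}. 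The measure-theoretic handling of "conditionally constant on the conditional support" versus "positive conditional probability of landing in $B_1$ and $B_2$" is the delicate quantitative heart, but it is essentially a disintegration argument and should go through cleanly once the conditioning $\sigma$-algebras are set up correctly.
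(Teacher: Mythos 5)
Your overall plan — prove the inclusion $\mathcal{H}_{0,T} \subseteq \nullset$ by showing that the separation condition on $\phi$ forces conditional dependence whenever $j \notin \nullset$ — is the contrapositive arrangement of the paper's own argument (which goes through Lemma~\ref{lem:nindp-nindp} and Proposition~\ref{prop:nn-nn}), so the high-level idea matches. However, there is a genuine gap in your execution, and it is exactly where you flagged "the delicate quantitative heart."

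The correct consequence of $\by \indp \bX_j \mid \bX_\noj, T(\bX)$ is that $\phi(\bx_1) = \Ec{g(Y_1)}{\bX}$ is almost surely equal to a $\sigma(\bX_\noj, T(\bX))$-measurable random variable, because $\sigma(\bX) = \sigma(\bX_j, \bX_\noj, T(\bX))$ and conditional independence then gives $\Ec{g(Y_1)}{\bX} = \Ec{g(Y_1)}{\bX_\noj, T(\bX)}$ a.s.\ (this is the paper's $K = M$ a.s.). You instead apply weak union one step further and only conclude that $\phi(\bx_1)$ is a.s.\ equal to a $\mathcal{C}$-measurable variable, where $\mathcal{C} = \sigma(\bX_{1,\noj}, \bX_{\text{-}1}, T(\bX))$ is \emph{strictly finer} than $\sigma(\bX_\noj, T(\bX))$. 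This is a weaker conclusion — indeed if $\bX_{\text{-}1}$ together with $T(\bX)$ happened to determine $\bx_1$, it would hold vacuously and carry no information — and it no longer meshes with the hypothesis, which postulates positive conditional mass on $B_1$ and $B_2$ relative to the \emph{coarser} $\sigma$-algebra $\sigma(\bX_\noj, T(\bX))$. The "disintegration argument" you invoke to bridge the gap does not go through in general: the event $\{\Pc{\bx_1 \in B_1}{\bX_\noj, T(\bX)} > 0\} \cap \{\Pc{\bx_1 \in B_2}{\bX_\noj, T(\bX)} > 0\}$ having positive probability does not imply that $\Pc{\bx_1 \in B_1}{\mathcal{C}}$ and $\Pc{\bx_1 \in B_2}{\mathcal{C}}$ are simultaneously positive on any positive-probability sub-event, since refining the conditioning can split the mass between the two sets into disjoint $\mathcal{C}$-atoms.

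The fix is simple and brings you back exactly to the paper's proof: do not refine to $\mathcal{C}$. Work directly with $\sigma(\bX_\noj, T(\bX))$. Once you know $\phi(\bx_1)$ is a.s.\ equal to some $\sigma(\bX_\noj, T(\bX))$-measurable $M$, pick $s$ with $\sup_{B_2}\phi < s < \inf_{B_1}\phi$. Then on the event where $\Pc{\bx_1 \in B_1}{\bX_\noj, T(\bX)} > 0$, one must have $M > s$, and where $\Pc{\bx_1 \in B_2}{\bX_\noj, T(\bX)} > 0$, one must have $M < s$; since $M$ is determined by the conditioning, these two events are disjoint, contradicting the hypothesis that both have positive conditional probability simultaneously. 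No disintegration beyond the tower property is needed. Your remaining steps (reduction to the single observation $Y_1$ via the i.i.d.\ structure and the contraction/weak-union calculus, and invoking Proposition~\ref{prop:hypo-sub} for the other inclusion) are sound.
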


This theorem is based on Proposition~\ref{prop:hypo-sub} and the following Proposition~\ref{prop:nn-nn}. By Proposition~\ref{prop:nn-nn}, for each $j\notin \nullset$, it holds that $j\notin \mathcal{H}_{0,T}$; hence $\mathcal{H}_{0,T}\subseteq \nullset$. In addition, Proposition~\ref{prop:hypo-sub} shows $\nullset \subseteq \mathcal{H}_{0,T}$, thus $\nullset = \mathcal{H}_{0,T}$.

\begin{proposition}\label{prop:nn-nn} 
Suppose $Y\nindp X_{j} \mid X_{\noj}$, and $g(y)$ is a bounded function. Define $K :=\,  \Ec{g(Y_1) }{ \bx_{1}} $, and 
$M :=\,  \Ec{K}{ \bX_{\noj } , T(\bX) }$. 
\begin{enumerate}[(a)]
\item If $K$ is different from $M$, then 
\[
\by\nindp \bX_{j} \mid \bX_\noj , T(\bX).
\] 

\item 
If $K$ can be written as $\phi(\bx_1)$ and $\phi$ is not constant on the support of the conditional distribution of $\bx_1$ given $\bX_\noj$ and $T(\bX)$, i.e.,  
 there exist two disjoint Borel sets $B_1,B_2$ such that $\inf_{\bx\in B_1} \phi(\bx) > \sup_{\bx\in B_2}\phi(\bx)$, and
 \[
0< \pr{   \Pcr{ \bx_1\in B_i}{\bX_{\noj } , T(\bX) }>0, i=1,2},
 \]
 
 then $K$ is different from $M$.

\end{enumerate}

%
%
%
%


\end{proposition}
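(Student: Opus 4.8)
The plan is to prove (a) directly from the i.i.d.\ row structure together with a one-line conditional-expectation identity, and to prove (b) by contradiction using a regular conditional distribution of $\bx_1$. For \textbf{part (a)} I would argue by contraposition: assume $\by\indp\bX_j\mid\bX_\noj,T(\bX)$ and show $K=M$ almost surely. Since the rows $(Y_i,\bx_i)$ are independent, $(Y_1,\bx_1)$ is independent of $(\bx_2,\dots,\bx_n)$, so $\Ec{g(Y_1)}{\bX}=\Ec{g(Y_1)}{\bx_1}=K$; because $T(\bX)$ is a function of $\bX$ this also gives $\Ec{g(Y_1)}{\bX,T(\bX)}=K$ and, by the tower property, $\Ec{g(Y_1)}{\bX_\noj,T(\bX)}=\Ec{K}{\bX_\noj,T(\bX)}=M$. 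Now $g(Y_1)$ is a function of $\by$, so the assumed conditional independence yields $g(Y_1)\indp\bX_j\mid\bX_\noj,T(\bX)$, hence $\Ec{g(Y_1)}{\bX_j,\bX_\noj,T(\bX)}=\Ec{g(Y_1)}{\bX_\noj,T(\bX)}$. The left side equals $\Ec{g(Y_1)}{\bX}=K$ (conditioning on $(\bX_j,\bX_\noj)$ is conditioning on $\bX$), and the right side is $M$, so $K=M$ a.s.; the contrapositive is exactly statement (a).

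For \textbf{part (b)} I would suppose $\pr{K\neq M}=0$, i.e.\ $K=M$ a.s., and derive a contradiction. Write $\mathcal{T}=\sigma(\bX_\noj,T(\bX))$ and fix a Borel version of $\phi$ (so $K=\phi(\bx_1)$) together with a regular conditional distribution $Q(\cdot)=\Pc{\bx_1\in\cdot}{\mathcal{T}}$. Let $c_1=\inf_{\bx\in B_1}\phi(\bx)$ and $c_2=\sup_{\bx\in B_2}\phi(\bx)$, so $c_1>c_2$. Because $M$ is $\mathcal{T}$-measurable and $\phi(\bx_1)=M$ a.s., taking the conditional expectation of the indicator of $\{\phi(\bx_1)\neq M\}$ shows $Q(\{\bx:\phi(\bx)=M\})=1$ a.s. On the event $E=\{Q(B_1)>0,\ Q(B_2)>0\}$, which has positive probability by hypothesis, we get $Q(B_1\cap\{\phi=M\})=Q(B_1)>0$, forcing $M\ge c_1$, and $Q(B_2\cap\{\phi=M\})=Q(B_2)>0$, forcing $M\le c_2$; since $c_1>c_2$ this is impossible, so $\pr{E}=0$, contradicting the hypothesis. (Equivalently, $\phi(\bx_1)=M$ a.s.\ forces $\var(\phi(\bx_1)\mid\mathcal{T})=0$ a.s., whereas on $E$ this conditional variance is bounded below by a positive quantity depending on $c_1-c_2$ and the conditional masses $Q(B_1),Q(B_2)$.) Hence $\pr{K\neq M}>0$, which is (b).

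The step I expect to be the main obstacle is the measure-theoretic bookkeeping in (b): one must justify working with a regular conditional distribution of $\bx_1$ given $(\bX_\noj,T(\bX))$ and pass cleanly from ``$\phi(\bx_1)$ agrees a.s.\ with a $\mathcal{T}$-measurable random variable'' to ``the conditional law of $\phi(\bx_1)$ is a point mass almost surely,'' which then collides with the separation encoded by $B_1,B_2$. Part (a), by contrast, is essentially immediate once the i.i.d.\ structure identifies $\Ec{g(Y_1)}{\bX}$ with $K$. Note finally that the standing hypothesis $Y\nindp X_j\mid X_\noj$ is not actually used in either implication; it is recorded because this is precisely the case of interest when combining Propositions~\ref{prop:hypo-sub} and \ref{prop:nn-nn} to conclude $\nullset=\mathcal{H}_{0,T}$ in Theorem~\ref{thm:suff-nn}.
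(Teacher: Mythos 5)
Your proof is correct and rests on the same two ideas as the paper, but it is meaningfully streamlined. For part~(a), the paper routes the argument through Lemma~\ref{lem:iid-nindp} (to lift $Y\nindp X_j\mid X_\noj$ to the matrix level) and then Lemma~\ref{lem:nindp-nindp}, whose proof evaluates $\Ec{g(Y)\one{K>M}}{T}$ in two ways and extracts a contradiction from $\E{(K-M)\one{K>M}}=0$; your contraposition collapses this to the single identity $\Ec{g(Y_1)}{\bX}=\Ec{g(Y_1)}{\bX_\noj,T(\bX)}$ under the assumed conditional independence, recognizing both sides as $K$ and $M$ at once, and so dispenses with both auxiliary lemmas and the test function $\one{K>M}$. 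This also makes your closing observation transparent: the standing hypothesis $Y\nindp X_j\mid X_\noj$ is indeed redundant in both parts (if $Y\indp X_j\mid X_\noj$ then $K=\Ec{g(Y_1)}{\bx_{1,\noj}}$ is $\sigma(\bX_\noj)$-measurable, forcing $K=M$ and killing the separation condition in (b) as well), and the paper's Lemma~\ref{lem:nindp-nindp} never actually uses its own hypothesis $Y\nindp X$ in its proof. For part~(b), the paper picks a strict threshold $s$ with $\sup_{B_2}\phi<s<\inf_{B_1}\phi$ and bounds $\Pc{\bx_1\in B_i}{\bX_\noj,T(\bX)}$ by $\one{M>s}$ and $\one{M<s}$, while you disintegrate via a regular conditional distribution $Q$ and squeeze $M$ between $\inf_{B_1}\phi$ and $\sup_{B_2}\phi$ on the event $\{Q(B_1)>0,\,Q(B_2)>0\}$; the two arguments are essentially equivalent and both hinge on the same plug-in step equating the conditional probability given $\sigma(\bX_\noj,T(\bX))$ with the quantity evaluated at the $\sigma(\bX_\noj,T(\bX))$-measurable variable $M$. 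Your version makes the RCD machinery explicit (justified because $\bx_1$ takes values in a Polish space), whereas the paper keeps it at the level of conditional probabilities and indicators; both are valid, and your observation about the redundant hypothesis is a genuine improvement over how the paper presents the result.
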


To prove this proposition, we need the following lemma. 

\begin{lemma}\label{lem:nindp-nindp}
Suppose $Y\nindp X$ and $T$ is a function of $X$. Furthermore, if there exists a bounded function $g$ such that 
$K\; :=\; \Ec{ g(Y)}{X}$ is not conditionally deterministic in the following sense:
\[
0<\pr{K\neq \Ec{K}{T} },
\]
then $Y\nindp X \; | \; T$. 
\end{lemma}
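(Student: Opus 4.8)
The final statement to prove is Lemma~\ref{lem:nindp-nindp}: if $Y \nindp X$, $T = T(X)$, and there is a bounded $g$ with $0 < \pr{K \neq \Ec{K}{T}}$ where $K = \Ec{g(Y)}{X}$, then $Y \nindp X \mid T$.

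My plan is to argue by contrapositive: assume $Y \indp X \mid T$ and derive that $K = \Ec{K}{T}$ almost surely, contradicting the hypothesis. The key observation is that conditional independence $Y \indp X \mid T$ means that, conditionally on $T$, the joint law of $(Y,X)$ factors; in particular, for the bounded function $g$, the conditional expectation $\Ec{g(Y)}{X, T}$ should equal $\Ec{g(Y)}{T}$. Since $T$ is a deterministic function of $X$, conditioning on $(X,T)$ is the same as conditioning on $X$, so $\Ec{g(Y)}{X,T} = \Ec{g(Y)}{X} = K$. Combining these, $K = \Ec{g(Y)}{T}$ almost surely. But then by the tower property $\Ec{K}{T} = \Ec{\Ec{g(Y)}{T}}{T} = \Ec{g(Y)}{T} = K$ almost surely, i.e.\ $K = \Ec{K}{T}$ a.s., which contradicts $0 < \pr{K \neq \Ec{K}{T}}$. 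Therefore $Y \nindp X \mid T$.

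The step requiring the most care is justifying that $Y \indp X \mid T$ implies $\Ec{g(Y)}{X, T} = \Ec{g(Y)}{T}$ almost surely. I would invoke the standard measure-theoretic characterization of conditional independence: $Y \indp X \mid T$ iff for every bounded measurable $h$, $\Ec{h(Y)}{X, T} = \Ec{h(Y)}{T}$ a.s. (see, e.g., the equivalent formulations of conditional independence via conditional expectations). Applying this with $h = g$ gives the claim directly. The only subtlety is the interplay between $\sigma(X)$ and $\sigma(X, T)$: since $T$ is $\sigma(X)$-measurable, these $\sigma$-algebras coincide, so $\Ec{g(Y)}{X,T} = \Ec{g(Y)}{X} = K$. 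I do not anticipate any genuine obstacle here; it is a routine application of standard conditional-independence facts, and the bounded\-ness of $g$ ensures all conditional expectations are well-defined and finite. Note this lemma is exactly what is needed to prove Proposition~\ref{prop:nn-nn}(a): taking $X = X_j$, conditioning throughout additionally on $X_{\noj}$, and with $T$ playing the role of $T(\bX)$, the non-degeneracy of $K$ relative to $\Ec{K}{X_{\noj}, T(\bX)}$ yields $Y \nindp X_j \mid X_{\noj}, T(\bX)$.
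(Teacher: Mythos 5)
Your proof is correct, and it takes a genuinely different (and more direct) route than the paper's. You argue by contrapositive using the standard equivalent characterization of conditional independence: $Y\indp X\mid T$ iff $\Ec{h(Y)}{X,T}=\Ec{h(Y)}{T}$ a.s.\ for all bounded $h$. Combined with the $\sigma$-algebra collapse $\sigma(X,T)=\sigma(X)$ (since $T$ is a function of $X$), this gives $K=\Ec{g(Y)}{X}=\Ec{g(Y)}{X,T}=\Ec{g(Y)}{T}$ a.s., which is $\sigma(T)$-measurable, so $K=\Ec{K}{T}$ a.s.\ and you are done. The paper instead assumes WLOG $\pr{K>M}>0$ with $M:=\Ec{K}{T}$ and computes $\Ec{g(Y)\one{K>M}}{T}$ two ways — first via the tower property and $\{K>M\}\in\sigma(X)$ to get $\Ec{K\one{K>M}}{T}$, then under the conditional-independence assumption via the product factorization $\Ec{g(Y)\one{K>M}}{T}=\Ec{g(Y)}{T}\,\Ec{\one{K>M}}{T}=\Ec{M\one{K>M}}{T}$ — and concludes $\E{(K-M)\one{K>M}}=0$, forcing $\pr{K>M}=0$. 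Your version buys brevity and conceptual clarity by invoking the ``dropping-irrelevant-conditioning'' characterization as a black box; the paper's version is more self-contained, essentially re-deriving that characterization for this specific instance via the indicator trick. Both proofs correctly leave the stated hypothesis $Y\nindp X$ unused — it is redundant for the conclusion $Y\nindp X\mid T$, which follows from the non-degeneracy hypothesis on $K$ alone. One minor polish worth making: the characterization $\Ec{h(Y)}{X,T}=\Ec{h(Y)}{T}$ as equivalent to conditional independence is a theorem (e.g.\ Kallenberg's Proposition 6.6), not a definition, so you should cite it rather than treat it as tautological.
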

\begin{proof}
Let $M  :=\;  \Ec{K}{T}$. Then $M$ is $\sigma(T)$-measurable. Since $\pr{K\neq M }=\pr{K>M }+\pr{K< M }$, without loss of generality, we assume $0<\pr{K>M }$. 


We compute $\Ec{g(Y)\one{K>M} }{T}$ in two different ways.	On the one hand,
	\begin{align*}
&		\Ec{g(Y)\one{K>M} }{T}  & \\
\; \overset{a.s.}{=}\; & \Ec{ \Ec{g(Y)\one{K>M} }{X} }{T}  & (\because \sigma(T) \subseteq \sigma(X) )\\
\;\overset{a.s.}{=}\; & \Ec{ \Ec{g(Y) }{X}\one{K>M} }{T} & (\because \{K>M\} \in \sigma(X) ) \\
\; =\; & \Ec{ K\one{K>M} }{T} & (\because \mbox{definition of }K) 
	\end{align*}
On the other hand, if $Y\indp X \; | \; T$ then 
\begin{align*}
&	\Ec{g(Y)\one{K>M} }{T} & \\
 \overset{a.s.}{=}\; & \Ec{g(Y)}{T}\Ec{ \one{K>M} }{T} & (\because Y\indp X \; | \; T ) \; \\
 \overset{a.s.}{=}\; & \Ec{ \Ec{g(Y)}{X}}{T}\Ec{ \one{K>M} }{T} & (\because\mbox{law of total expectation}) \; \\
 \overset{a.s.}{=}\; & M\Ec{ \one{K>M} }{T} & (\because\mbox{definition of }M) \; \\
\overset{a.s.}{=} \; & \Ec{M\one{K>M} }{T}. & (\because M\in \sigma(T)) \; 
\end{align*}

Combining these two expressions shows that if $Y\indp X \; | \; T$ then ${\Ec{ (K-M)\one{K>M} }{T}\overset{a.s.}{=}  0}$, and thus $\E{ (K-M)\one{K>M} }{=}  0$. However, this implies $\pr{K>M}=0$ and contradicts the condition; hence  $Y\nindp X \; | \; T$.
\end{proof}


\begin{proof}[Proof of Proposition \ref{prop:nn-nn}]
$ $ 
\begin{enumerate}[(a)]
	\item The condition that $Y\nindp X_{j}\mid X_{\noj}$ implies $Y_1\nindp \bX_{j}\mid\bX_{\noj}$ (see Lemma~\ref{lem:iid-nindp} below). 
Because $\bx_1 \indp \bx_{2:n}$, we have $K=\Ec{g(Y_1) }{ \bx_{1}, \bx_{2:n} }=\Ec{g(Y_1) }{ \bX_{j}, \bX_{\noj} }$.
The condition $\P(K\neq M)>0$ implies that  $\Pc{K\neq M}{\bX_\noj}>0$ holds with positive probability. 

To apply Lemma \ref{lem:nindp-nindp},  $\bX_{\noj}$ is treated as fixed, and $\bX_j$ (resp. $Y_1$) is treated as $X$ (resp. $Y$). Then we have $Y_1\nindp \bX_{j} \mid \bX_{\noj}, T(\bX)$, which immediately implies $\by\nindp \bX_{j} \mid \bX_\noj , T(\bX)$. 

\item The existence of $B_1$ and $B_2$ implies that there exists a real number $s$ such that 
\[
\sup_{x\in B_2}\phi(x)< s< \inf_{x\in B_1} \phi(x) .
\]
We will prove by contradiction that $K$ is different from $M$. Suppose $\pr{K\neq M}=0$, then $\Pcr{ K\neq M }{\bX_{\noj } , T(\bX)}\overset{a.s.}{=}0$. Thus a.s. we have
\begin{align*}
&\Pcr{X_1\in B_{1}}{\bX_{\noj } , T(\bX) }\\
=\; &\Pcr{X_1\in B_{1}, K=M }{\bX_{\noj }, T(\bX) }\\
\leq \; & \Pcr{X_1\in B_{1}, M>s}{\bX_{\noj } , T(\bX) } & \hfill{(\because s<\inf_{x\in B_1} \phi(x) ) } \\
 \leq \; & \one{M>s}. & \hfill{ (\because M \in \sigma(\bX_{\noj } , T(\bX) ))} \;
\end{align*}
Similarly $\Pcr{X_1\in B_{2}}{\bX_{\noj } , T(\bX) }\overset{ a.s.}{\leq} \one{M<s}$. Since $\one{M>s} \cdot  \one{M<s}=0$, it follows that
\[\Pcr{X_1\in B_{1}}{\bX_{\noj } , T(\bX) } \cdot \Pcr{X_1\in B_{2}}{\bX_{\noj } , T(\bX) }\overset{ a.s.}{\leq} 0,\] which contradicts the condition that $0< \pr{   \Pcr{X_1\in B_i}{\bX_{\noj } , T(\bX) }>0, i=1,2}$. Hence we conclude $\pr{K\neq M}>0$. 
\end{enumerate}
\end{proof}

\begin{lemma}\label{lem:iid-nindp}
If $Y\nindp X\mid U$ and $(X,Y,U) \indp (V,W)$, then $Y\nindp (X, V)\mid (U,W)$. 
\end{lemma}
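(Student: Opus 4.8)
The final statement to prove is Lemma~\ref{lem:iid-nindp}: if $Y \nindp X \mid U$ and $(X,Y,U) \indp (V,W)$, then $Y \nindp (X,V) \mid (U,W)$.

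\textbf{Approach.} The plan is to argue by contrapositive: assuming $Y \indp (X,V) \mid (U,W)$, I would derive $Y \indp X \mid U$, contradicting the hypothesis. The key tool is the standard ``graphoid'' manipulation of conditional independence statements (weak union, decomposition, contraction), combined with the fact that the joint independence $(X,Y,U) \indp (V,W)$ lets me ``insert'' and ``remove'' the variables $V$ and $W$ from conditioning/joint positions freely.

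\textbf{Key steps.} First, from $(X,Y,U) \indp (V,W)$, the decomposition and weak union properties give several auxiliary independences I will need, in particular $(X,Y) \indp W \mid U$ (so $W$ is irrelevant given $U$) and $Y \indp W \mid (X,U)$. Second, start from the assumed $Y \indp (X,V) \mid (U,W)$ and apply decomposition to get $Y \indp X \mid (U,W)$. Third, I want to drop $W$ from the conditioning set: combining $Y \indp X \mid (U,W)$ with $Y \indp W \mid (X,U)$ (or with $(X,Y)\indp W\mid U$) via the contraction property yields $Y \indp (X,W) \mid U$, and then decomposition gives $Y \indp X \mid U$. This contradicts $Y \nindp X \mid U$, so the assumption $Y \indp (X,V)\mid(U,W)$ must fail, i.e. $Y \nindp (X,V)\mid(U,W)$.

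\textbf{Main obstacle.} The only delicate point is making sure each graphoid axiom is applied with the correct grouping of variables, and that the joint independence hypothesis $(X,Y,U)\indp(V,W)$ is strong enough to supply exactly the auxiliary statement needed to cancel $W$ — a weaker hypothesis like $(X,Y)\indp V\mid U$ alone would not suffice. One should double-check that contraction (which states $A\indp B\mid C$ and $A\indp D\mid (B,C)$ imply $A\indp (B,D)\mid C$) is being invoked in the right direction; there is no real difficulty beyond bookkeeping, and no measure-theoretic subtlety since all statements are ordinary conditional independences among the given random elements. I would present the argument as a short chain of implications citing weak union, decomposition, and contraction at each step.
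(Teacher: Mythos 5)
Your proposal takes essentially the same route as the paper: argue by contrapositive, use decomposition to reduce $Y\indp(X,V)\mid(U,W)$ to $Y\indp X\mid(U,W)$, then use the joint independence $(X,Y,U)\indp(V,W)$ to remove $W$ from the conditioning set via contraction. The paper's two displayed auxiliary facts are exactly $Y\indp X\mid(U,W)$ and $(X,Y)\indp W\mid U$ (obtained from $(X,Y,U)\indp W$ by weak union), which match the second of your two suggested auxiliaries.

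One small warning on the bookkeeping you flagged: your first suggested auxiliary, $Y\indp W\mid(X,U)$, does \emph{not} combine with $Y\indp X\mid(U,W)$ via the contraction axiom to give $Y\indp(X,W)\mid U$. Contraction requires one of the two statements to have the smaller conditioning set (i.e., $Y\indp W\mid U$ together with $Y\indp X\mid(W,U)$), whereas the pair $Y\indp W\mid(X,U)$ and $Y\indp X\mid(U,W)$ both have the larger conditioning sets and would need the intersection property, which is not a universally valid graphoid axiom (it requires positivity). So only your parenthetical alternative $(X,Y)\indp W\mid U$ (which decomposes to $Y\indp W\mid U$) closes the argument without extra assumptions; this is the one the paper uses.
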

\begin{proof}

Suppose $Y\indp (X, V)\mid (U,W)$. Then 
\begin{equation}\label{eq:hypo-lem-5-1}
Y\indp X\mid (U,W),
\end{equation}

The condition that $(X,Y,U) \indp (V,W)$ implies $(X,Y,U) \indp W$, and thus by weak union property of conditional independence, we have
\begin{equation}\label{eq:hypo-lem-5-2}
 (X,Y) \indp W \mid U. 	
\end{equation}

Equations~\eqref{eq:hypo-lem-5-1} and \eqref{eq:hypo-lem-5-2} together with the contraction property of conditional independence imply $Y \indp X \mid U$. This contradicts the condition, so we conclude that $Y\nindp (X, V)\mid (U,W)$. 
\end{proof}

\section{Supplementary Simulations}\label{app:Simulation}
\subsection{Nonlinear Response Models}\label{app:nonliner}
We re-conduct the same the simulations in Section~\ref{sec:ex}  on logistic regression, confirming that the variable selection by using conditional knockoff allows for general response dependence. The experiments follow the same designs as in Sections~\ref{sec:ldg-sims}, \ref{sec:ggm-sims} and \ref{sec:dgm-sims}, but with binary responses sampled as $Y_i\mid \bx_{i}\sim \text{Bernoulli}( \varsigma(\bx_i\tp \bb/\sqrt{n}))$, where $\varsigma(t)=e^{t}/(1+e^t)$ is the  logistic function, and slightly larger sample sizes $n$ for the re-conducted simulations of Sections \ref{sec:ggm-sims} and \ref{sec:dgm-sims}.

\begin{figure}[H]\centering
  \subfloat [
  ]{     \includegraphics[width=0.45\linewidth]{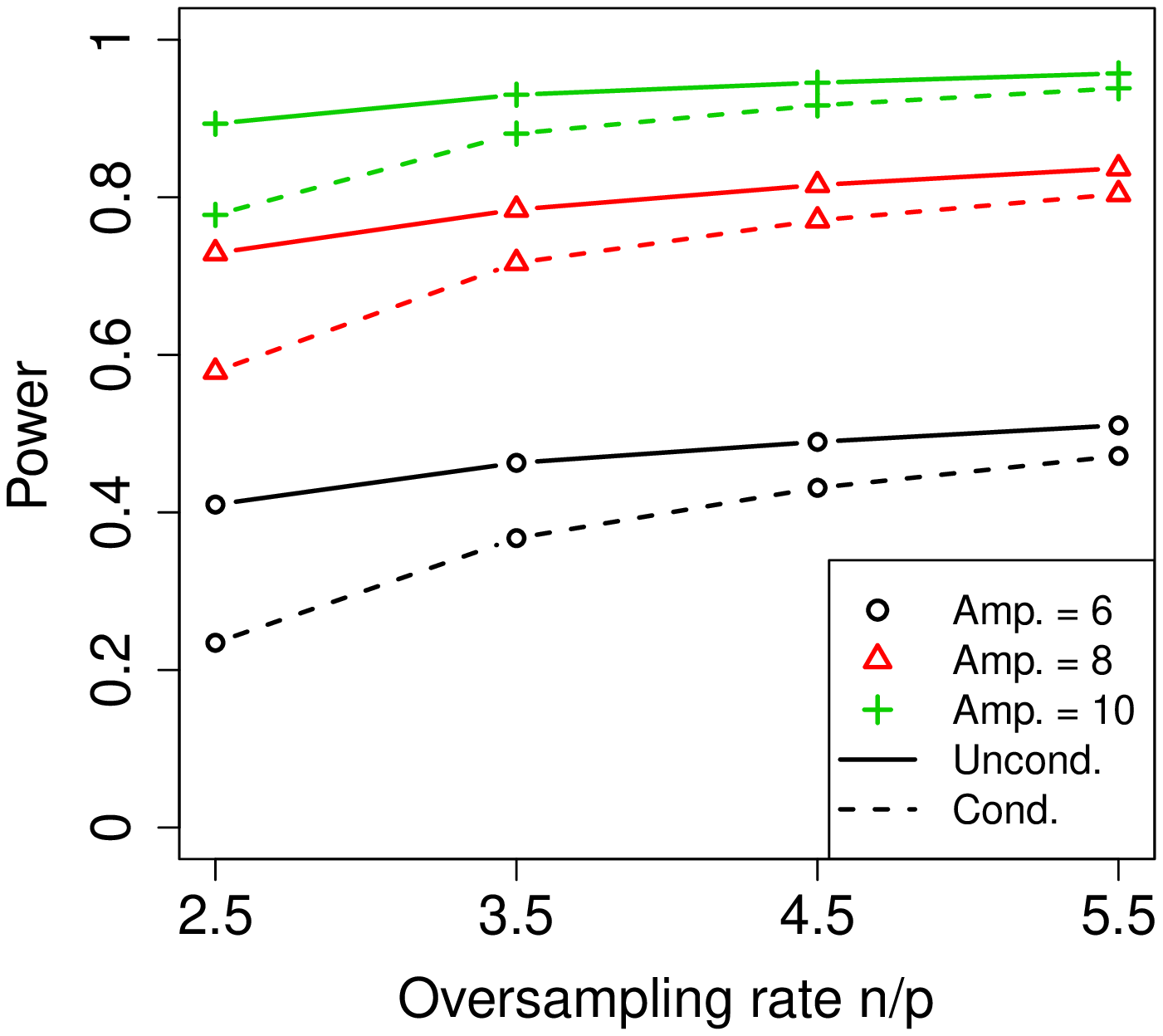} 
  \label{fig:Bin_LDG1}}
  \subfloat[
  ]{     \includegraphics[width=0.45\linewidth]{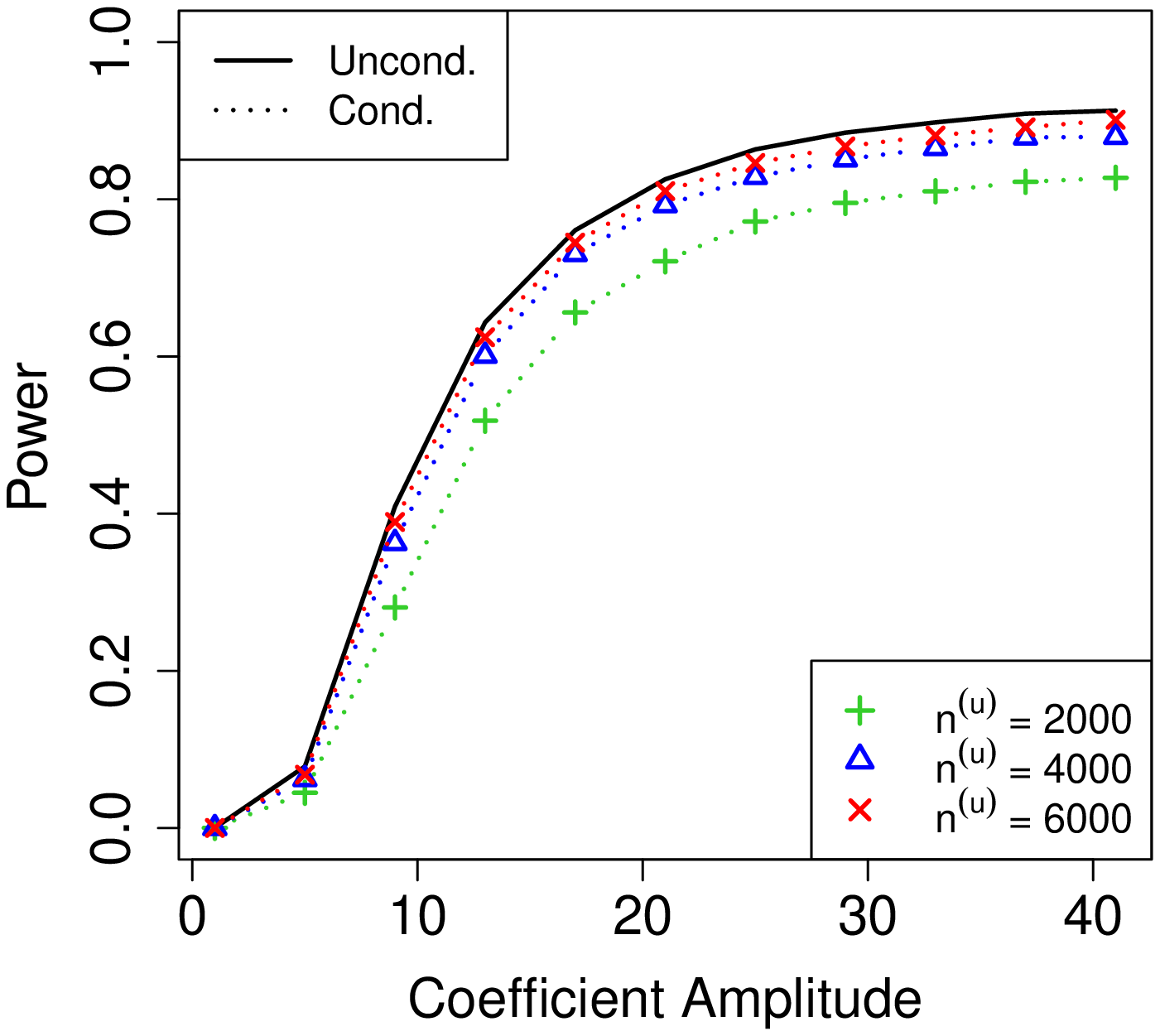} 
  \label{fig:Bin_LDG2}
  }
    \caption{(Logistic regression version of Figure~\ref{fig:ldg}) Power curves of conditional and unconditional knockoffs for an AR(1) model with $p=1000$ (a) as $n/p$ varies for various coefficient amplitudes and (b) as the coefficient amplitude varies for various values of $n^{(u)}$, with $n=800$ fixed. Standard errors are all below $0.006$.}
 \end{figure}

\begin{figure}[H]
\centering
  \subfloat[
  ]{ \includegraphics[width=0.45\linewidth]{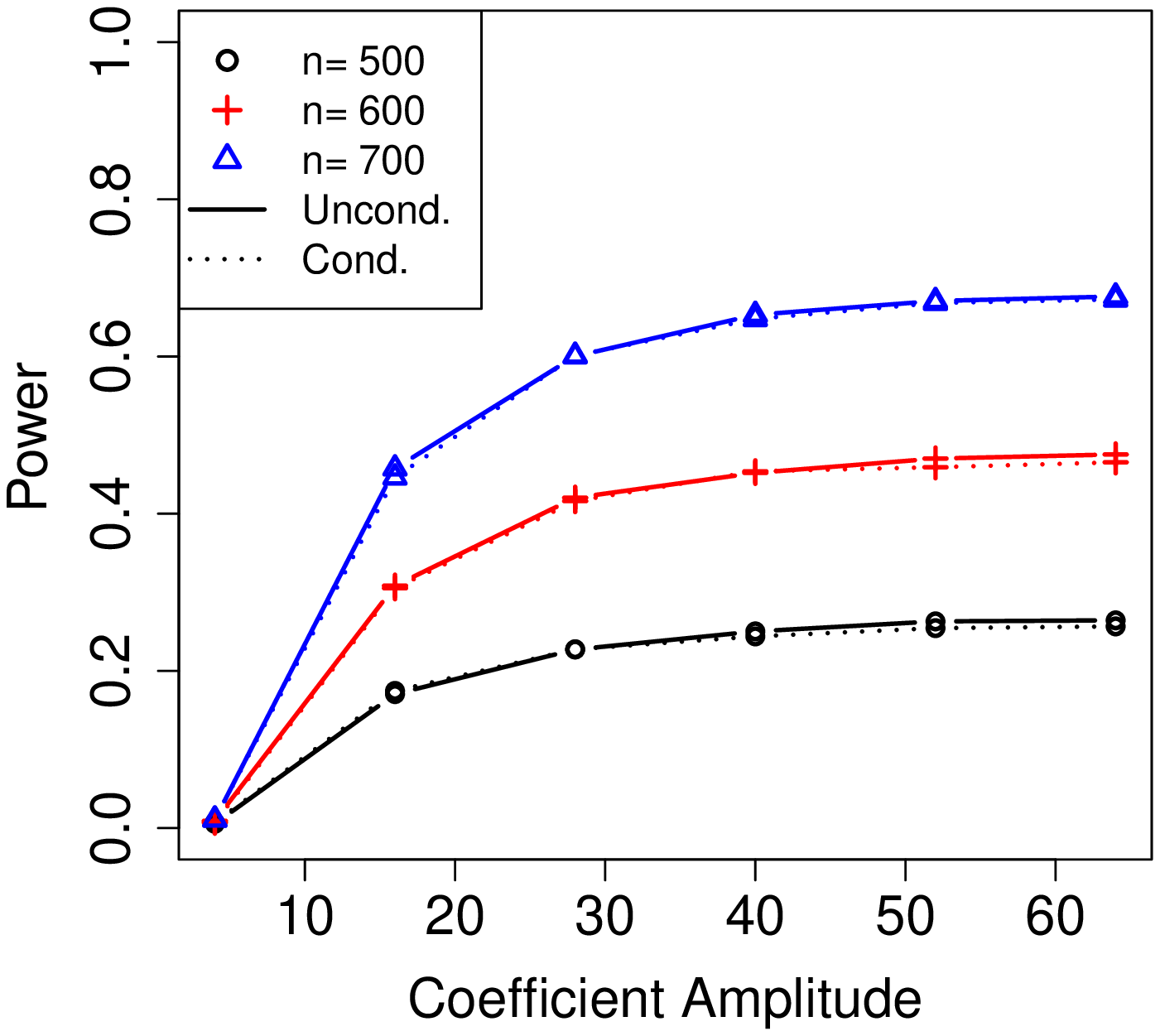}
   \label{fig:Bin_ggm1}
    }
  \subfloat[
  ]{     \includegraphics[width=0.45\linewidth]{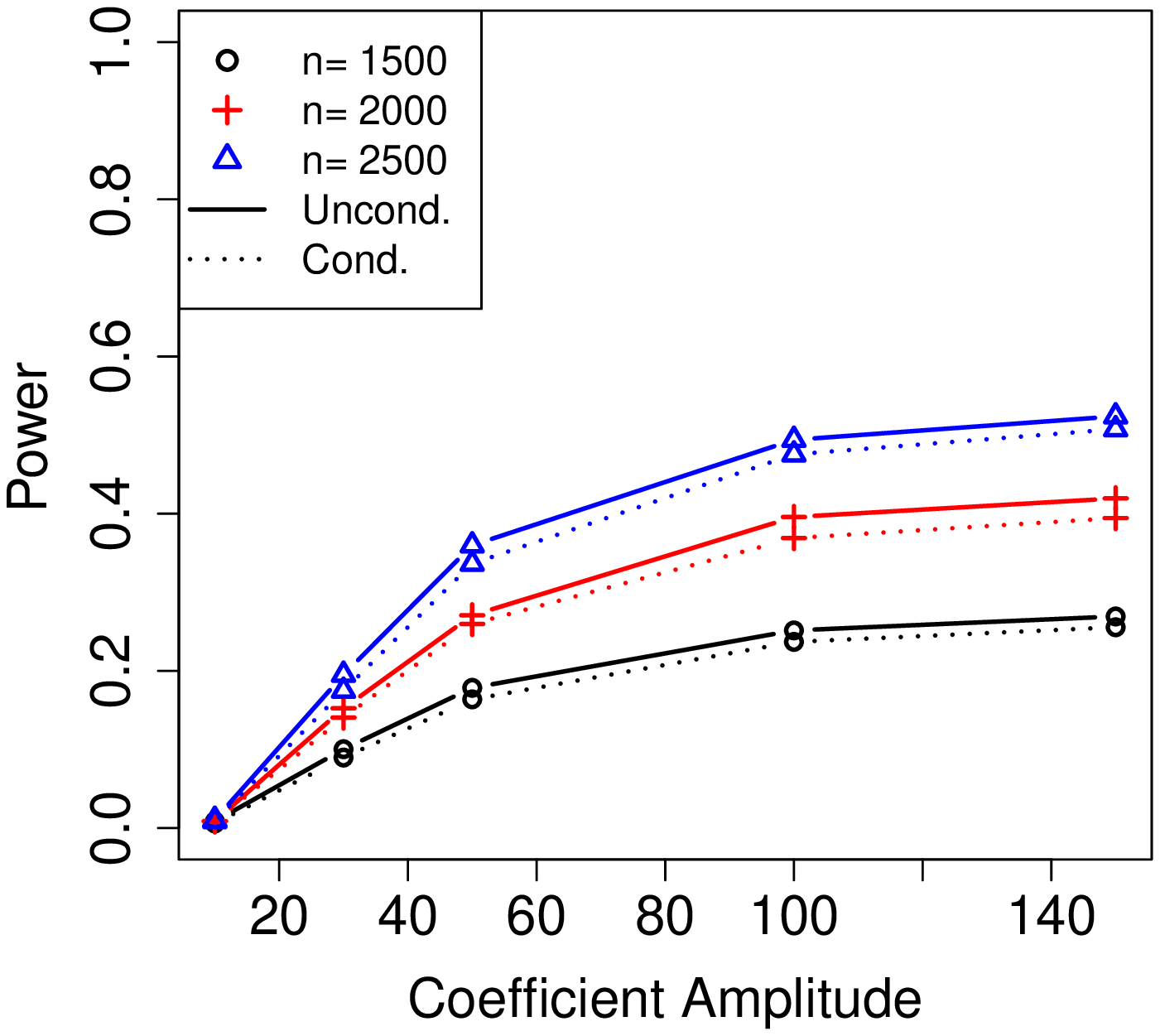} 
  \label{fig:Bin_ggm10}
  }
 \caption{(Logistic regression version of Figure~\ref{fig:ggm}) Power curves of conditional and unconditional knockoffs for $p=2000$ and a range of $n$ for (a) an AR$(1)$ model and (b) an AR$(10)$ model. Standard errors are all below $0.008$.} 
 \end{figure}
 

\begin{figure}[H]\centering
  \subfloat[  ]{     \includegraphics[width=0.45\linewidth]{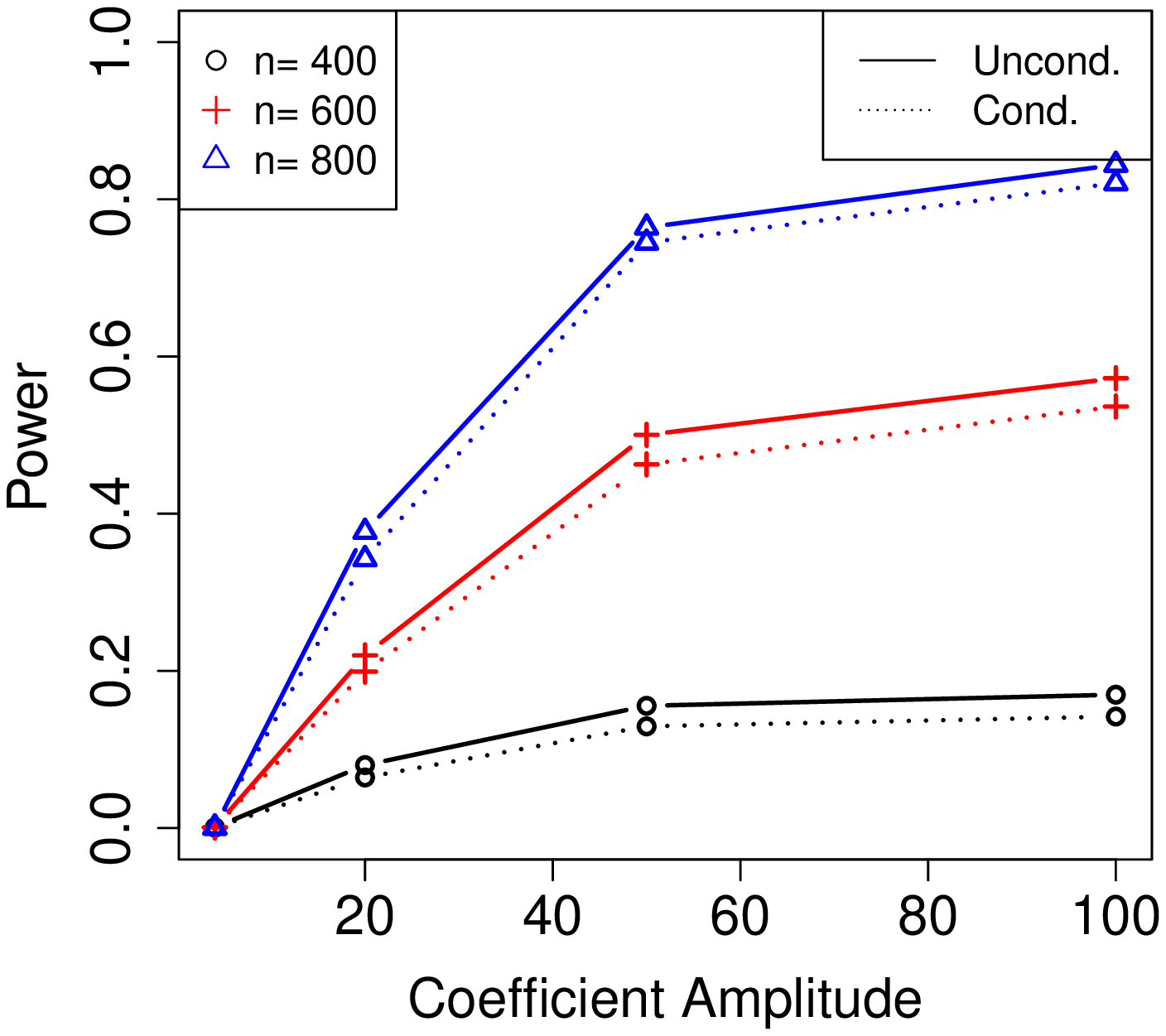} \label{fig:Bin_MC}
  }
  ~
    \subfloat[
    ]{     \includegraphics[width=0.45\linewidth]{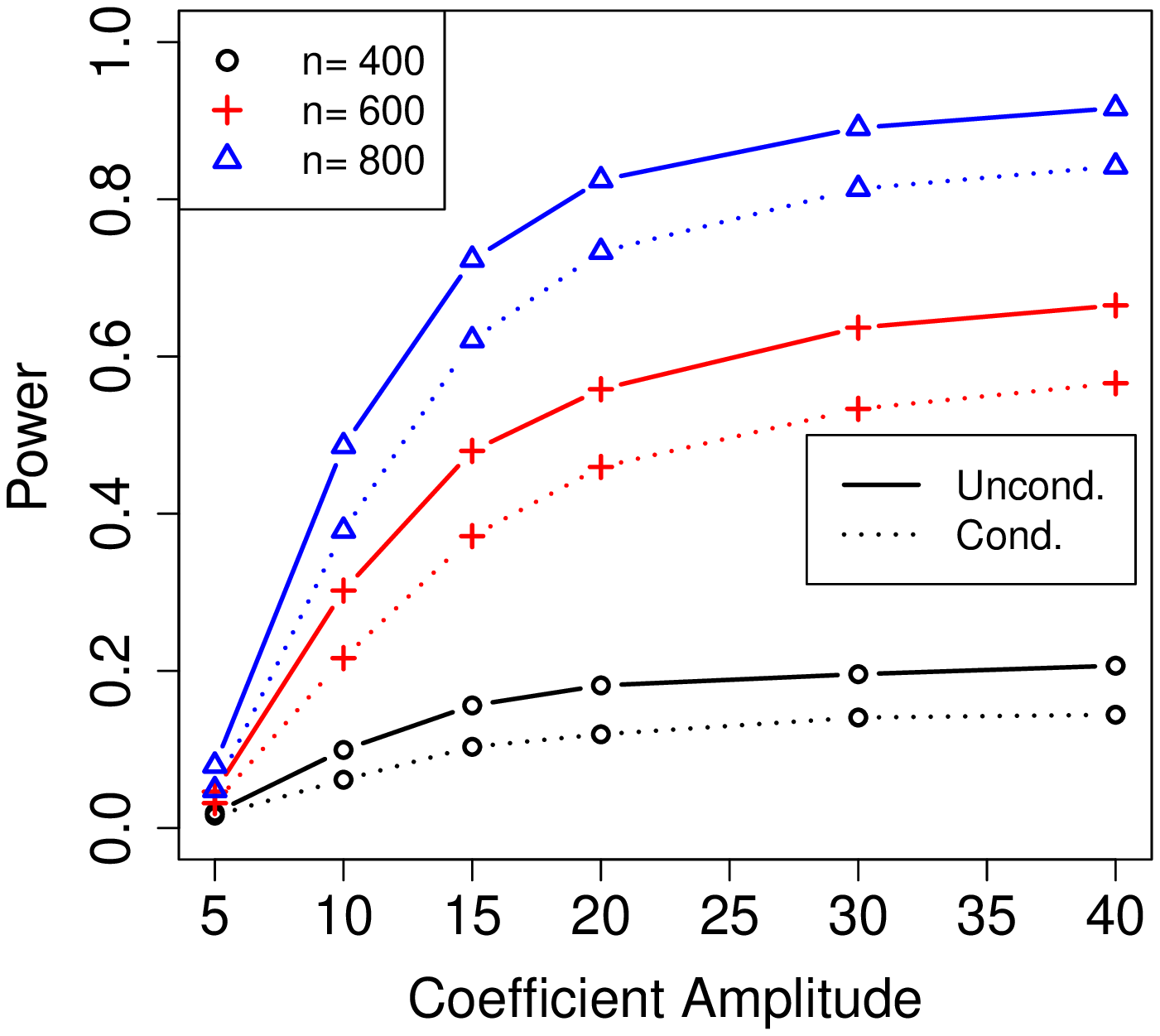} \label{fig:Bin_DGM}}
  \caption{(Logistic regression version of Figure~\ref{fig:dgm-global}) Power curves of conditional and unconditional knockoffs with a range of $n$ for (a) a Markov chain of length $p=1000$ and (b) an Ising model of size $32\times 32$. Standard errors are all below $0.006$.}.\end{figure}
  

\subsection{Varying the Sparsity and Magnitude of the Regression Coefficients}\label{app:vary}
The following simulations reproduce Figure~\ref*{fig:ggm1} but with varying sparsities and magnitudes. Specifically, the sparsity level $k$ varies between $30$, $60$, and $90$, and the nonzero entries are randomly sampled from Unif$(1,2)$. The message from these experiments is the same as those in the main paper, that is, the power of conditional knockoffs is almost the same as that of unconditional knockoffs even though it does not know the exact distribution of $\bX$.
\begin{figure}[H]\centering
\subfloat {     \includegraphics[width=0.45\linewidth]{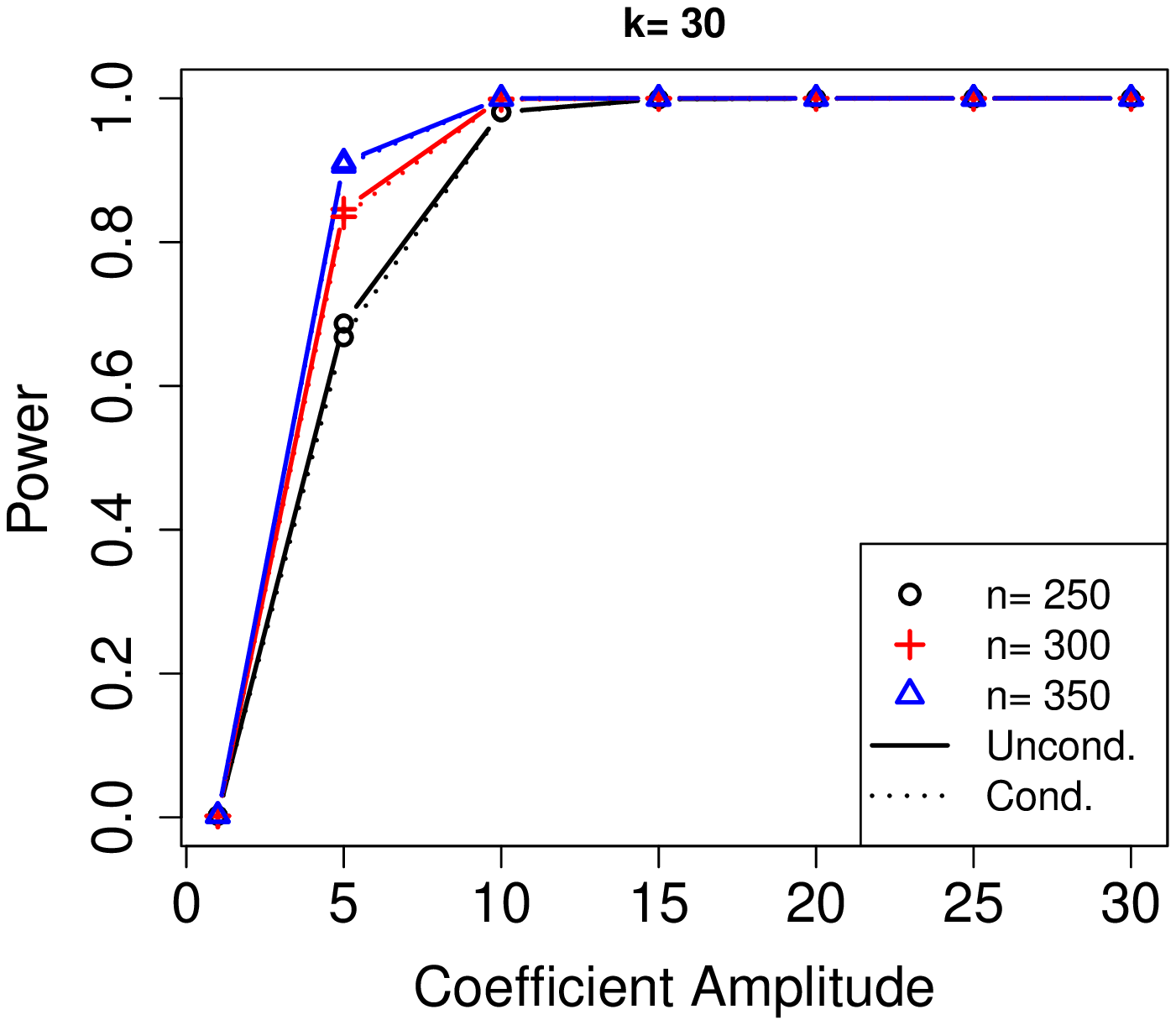} } 
\subfloat {     \includegraphics[width=0.45\linewidth]{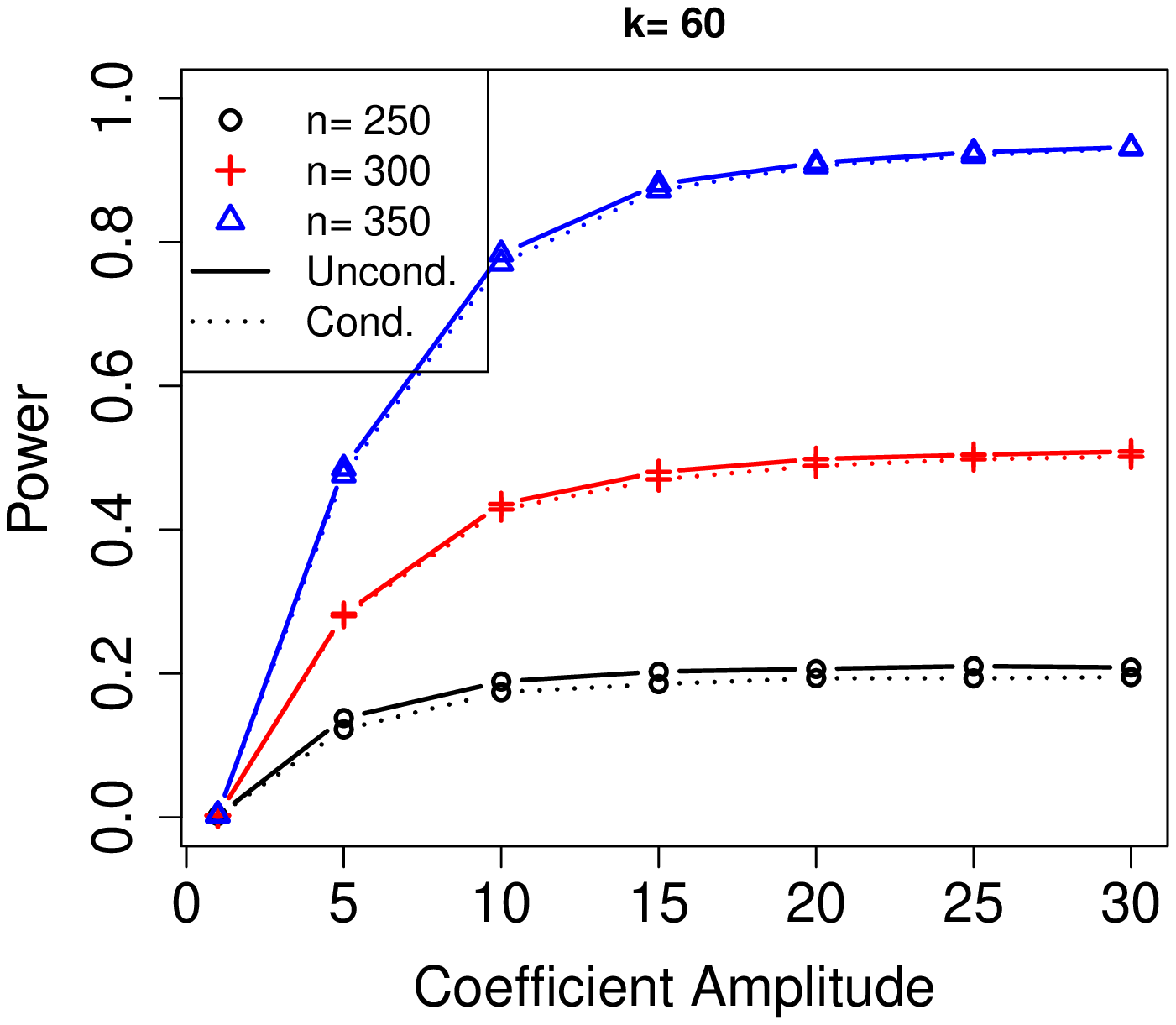} }\\
\subfloat {     \includegraphics[width=0.45\linewidth]{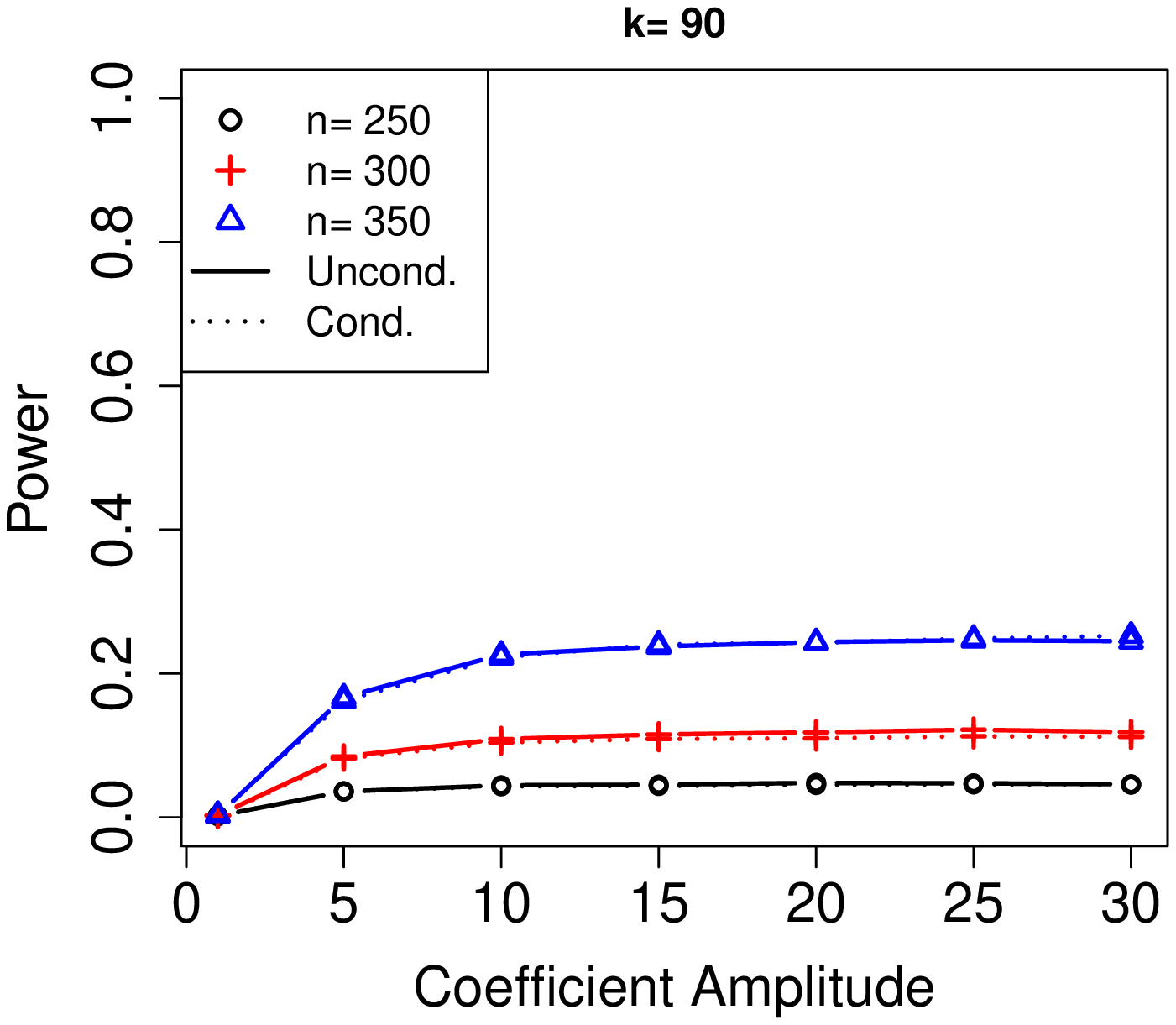} }
\caption{Reproducing Figure~2(a) of the paper with different $\bb$. Power curves of conditional and unconditional knockoffs for $p=2000$ and a range of $n$ for an AR$(1)$ model.  }\label{fig:rep-ar1-various-beta}
 \end{figure}

\rev{

 \subsection{Power of Different Sufficient Statistics}\label{app:simu-different-ss}
 We provide the following experiment to examine the power performance of conditional knockoffs that are generated using different sufficient statistics. It confirms our intuition that conditioning on more generally leads to lower power. 

Specifically, for a Gaussian graphical model, we have run Algorithm~\ref{alg:datasplitting} for a sequence of nested sufficient statistics to see how this choice affects the power. In the following simulation, $X$ is sampled from an AR$(1)$ distribution with autocorrelation coefficient 0.3, and models of (nonstationary) AR($q$) with various $q\geq 1$ are used to model $X$, i.e., each model assume a banded precision matrix with bandwidth $q$, and we increase $q$ beyond 1 to study the effect of more conditioning. Since the models are nested, all of them lead to valid conditional knockoffs. As $q$ grows, the graphical model gets denser and the sufficient statistic conditioned on in Algorithm~\ref{alg:datasplitting} contains more elements (and always contains all the elements of the sufficient statistic conditioned on for all smaller $q$), which can be done by choosing two increasing sequences of blocking sets for the two split data folds and making sure that these two sequences never intersect with each other. Thus, we expect to see some loss of power when $q$ increases. We chose $n=400$, $p=2000$, and the algorithmic parameter $n'$ to be set to $160$, and produced results for a range of $Y\mid X$'s linear model coefficient amplitudes and for $q$ ranging from 1--30; see Figure~\ref{fig:rep-ars}. Although a larger value of $q$ indeed lowers the power, the loss is relatively small in this example despite conditional knockoffs with $q=30$ conditioning on far more than with $q=1$.

\begin{figure}[h]\centering
\includegraphics[width=0.45\linewidth]{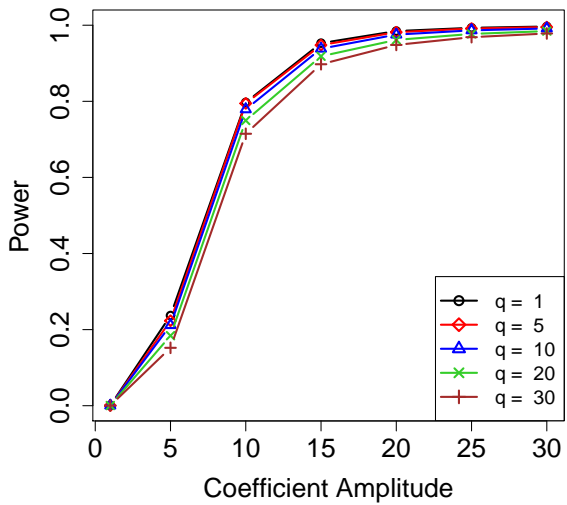}
    \caption{Power curves of conditional knockoffs for AR($q$) models with $p=2000$  as coefficient amplitude varies for various $q$. The nominal FDR level is $0.2$. Standard errors are all below 0.007.}  \label{fig:rep-ars}
 \end{figure}  

\subsection{Gaussian Graphical Models with Unknown Edge Supersets}\label{app:simu-unknown-graph}
The conditional knockoff generation in  Section~\ref{sec:eg-ggm} requires knowing a superset of the true edge set of the Gaussian graphical model. 
We present the following experiment to preliminarily examine the idea mentioned in  Remark~\ref{rem:unknown-graph} for when such a superset is not known a priori and one has to estimate the edge set (or its superset) using the data. 


Suppose the true covariance matrix $\bS$ is a rescaled (to have diagonal entries equal to 1) version of $\bS^{(0)}$, where $(\bS^{(0)})_{j,k}^{-1} = \bs{1}_{j=k} - \frac{1}{7}\bs{1}_{1\leq |j-k|\leq 3}$. In other words, every node in the true graph is connected to its 6 nearest neighbors. In the following simulations, we set $p=400$ and $n=200$. 

We can estimate the edge set $E$ by the nonzero entries $\hat{E}$ of the estimated precision matrix using the \textit{graphical Lasso} \citep{JF-TH-RT:2008}, which is implemented via the \texttt{R} package \texttt{huge}. The tuning parameter of the graphical Lasso is selected by the standard method \textit{StARS} \citep{HL-ea:2010}. 
Once $\hat{E}$ is computed, we can then construct conditional knockoffs as if $\hat{E}$ were given. The blocking set used in our algorithms is obtained by Algorithm~\ref{alg:sparse-gaussian-or} with input $n'=80$. We additionally consider the case where a set of $n_u=1,600$ unlabeled data points is available and is used together with the labeled covariates to estimate the edge set. 

The FDR and power curves are shown in Figure~\ref{fig:rep-FDR-power-estimateE}. ``Label Cond.'' refers to conditional knockoffs generated with $\hat{E}$ estimated using only labeled data, and ``Unlabel. Cond.'' refers to conditional knockoffs with $\hat{E}$ estimated additionally with the unlabeled data. Both methods control the FDR and in fact are conservative. One might attribute the FDR control to some over-conservative choice of graphical Lasso tuning parameter, but in fact $\hat{E}$ estimated with just the labeled data, although it tended to find a larger graph than the truth (its maximal degree was often above 20), also missed around 40 true edges on average. Unsurprisingly, the use of unlabeled data improves the power by improving the estimate of the edge set, with much fewer false negative and false positive edges.
\begin{figure}[h]\centering
\subfloat{\includegraphics[width=0.45\linewidth]{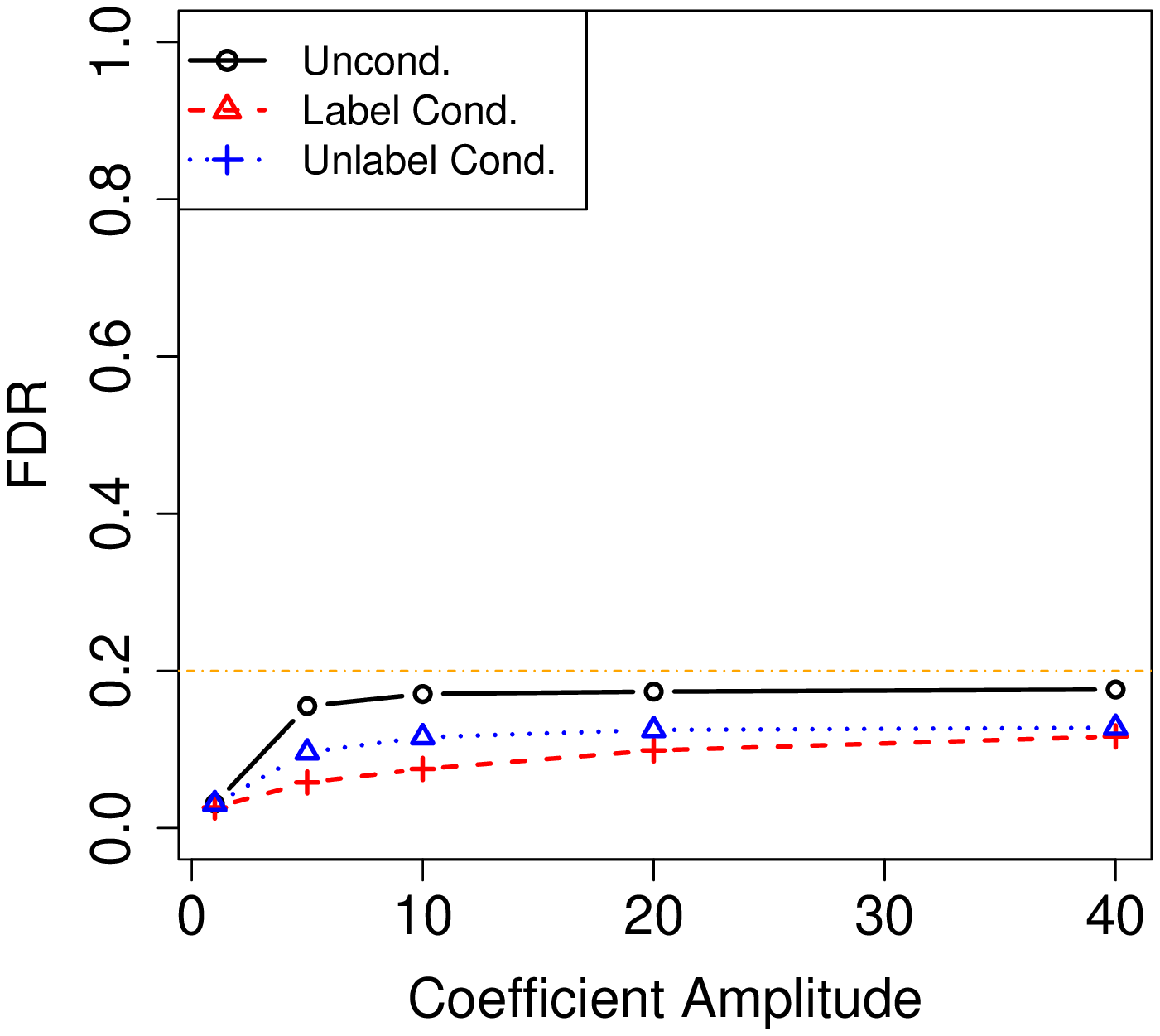} }
\subfloat{\includegraphics[width=0.45\linewidth]{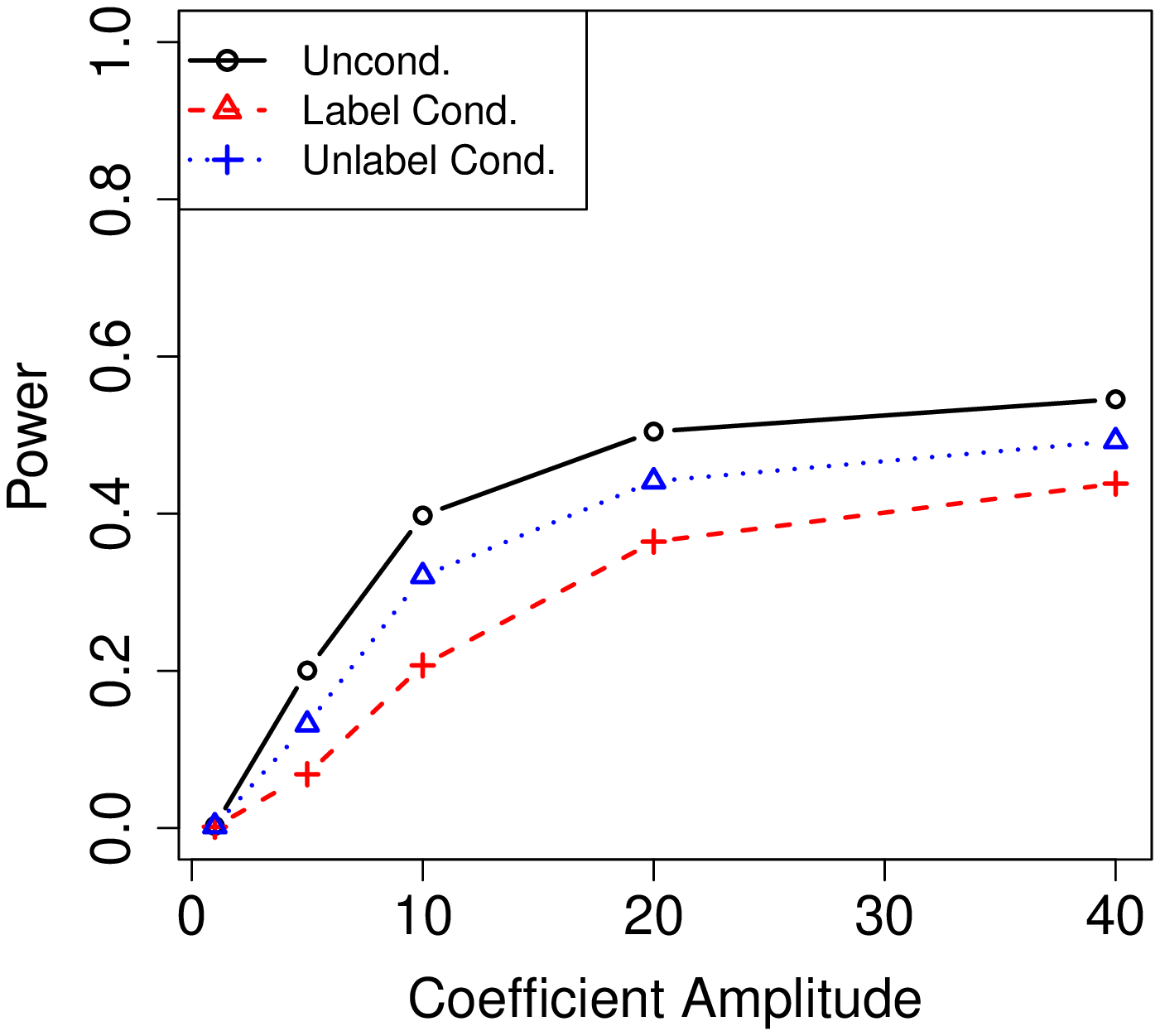} }
    \caption{FDR and Power curves of unconditional knockoffs and (approximate) conditional knockoffs using an estimated edge set either from purely labeled or from both labeled and unlabeled data.  $p=400$ and $n=200$. Standard errors are all below 0.007.}  \label{fig:rep-FDR-power-estimateE}
 \end{figure}

 \subsection{Robustness to Model Misspecification}\label{app:simu-discrete}
The current paper focuses on the cases where the models for the covariates are known and well-specified. In practice, practitioners may not know what the true model is. Here we provide an experiment to examine the robustness of  Gaussian conditional knockoffs ($n>2p$). 
The following robustness experiment constructs a set of distributions that approximate a multivariate Gaussian by discretizing it at different resolutions by varying a parameter $K$. 

We first generate $X^{(0)}\sim \N(0, \bS)$ where $\bS_{i,j}=0.3^{|i-j|}$, and then discretize each coordinate as follows
\[
X_{j} = \frac{ \lfloor X^{(0)}_{j} \times K + 1/2 \rfloor }{K}, \quad j=1,\dots, p. 
\]
In other words, $X^{(0)}_{j}$ is rounded to the nearest $\frac{1}{K}$-grid value. Since $|X_{j}-X^{(0)}_{j}| \leq \frac{1}{K}$, the larger $K$, the closer $X$ is to a multivariate Gaussian vector, and indeed as $K\rightarrow \infty$,  $X\rightarrow X^{(0)}$ and becomes multivariate Gaussian. However, for small $K$, the distribution is very far from Gaussian.
For conditional knockoffs, we pretend that $X$ is drawn from a multivariate Gaussian distribution and directly apply Algorithm 3.1. To get a baseline for power (since changing $K$ not only affects the model misspecification, but also changes the nature of the data-generating distribution and thus the power of any procedure), we also generate exactly-valid unconditional knockoffs for $X$ by discretizing an unconditional knockoff of $X^{(0)}$ with the same $K$ (of course this procedure would be impossible in practice, since $X^{(0)}$ is unobserved).

We fix $p=1,000$, linear model coefficient amplitude at $4$, vary $n\in\{2001,3000,4000\}$ and vary $K\in\{1/2, 1, 2, 3, \infty\}$. The other details of the experiment are the same as in Figure 1a of the paper, where the response is drawn from $Y_{i}\mid X_i\sim N(X_{i}\tp \bb/ \sqrt{n},1)$.  The result is shown in Figure~\ref{fig:rep-discrete}. 

Note that $K=1/2$ produces a distribution that is almost entirely concentrated on just three values $\{-2,0,2\}$, making it extremely non-Gaussian, yet the FDR is controlled quite well for all values of $n$ at this $K$ value and all others. The power difference between conditional knockoffs and unconditional knockoffs is also quite insensitive to $K$ and, as seen in all other simulations, quite small for all $n$ except when $n\approx 2p$ (in the $n\approx 2p$ setting the power gap is substantial, although conditional knockoffs still has quite a bit of power). 


\begin{figure}[h]\centering
\subfloat{ \includegraphics[width=0.45\linewidth]{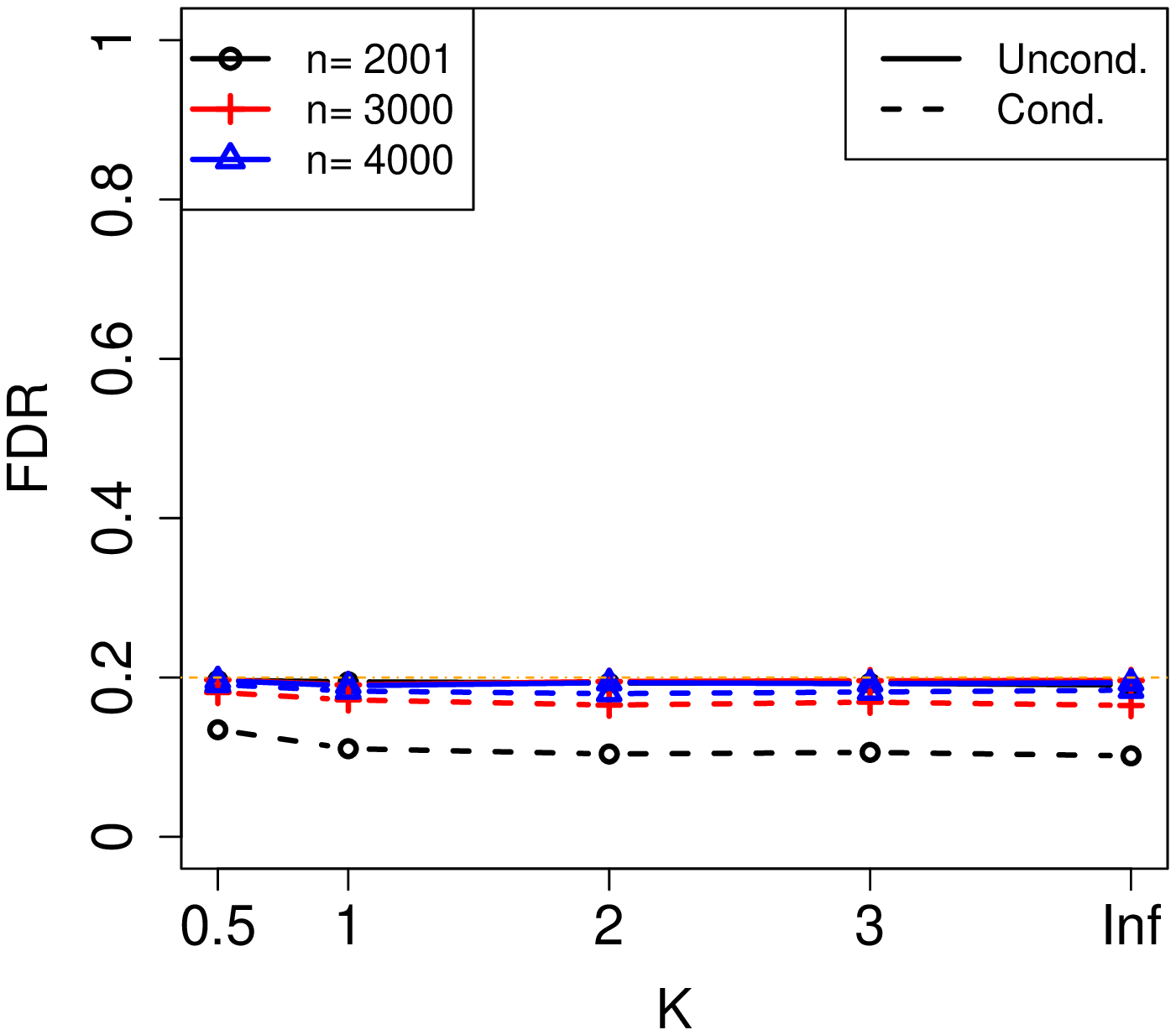}} 
\subfloat { \includegraphics[width=0.45\linewidth]{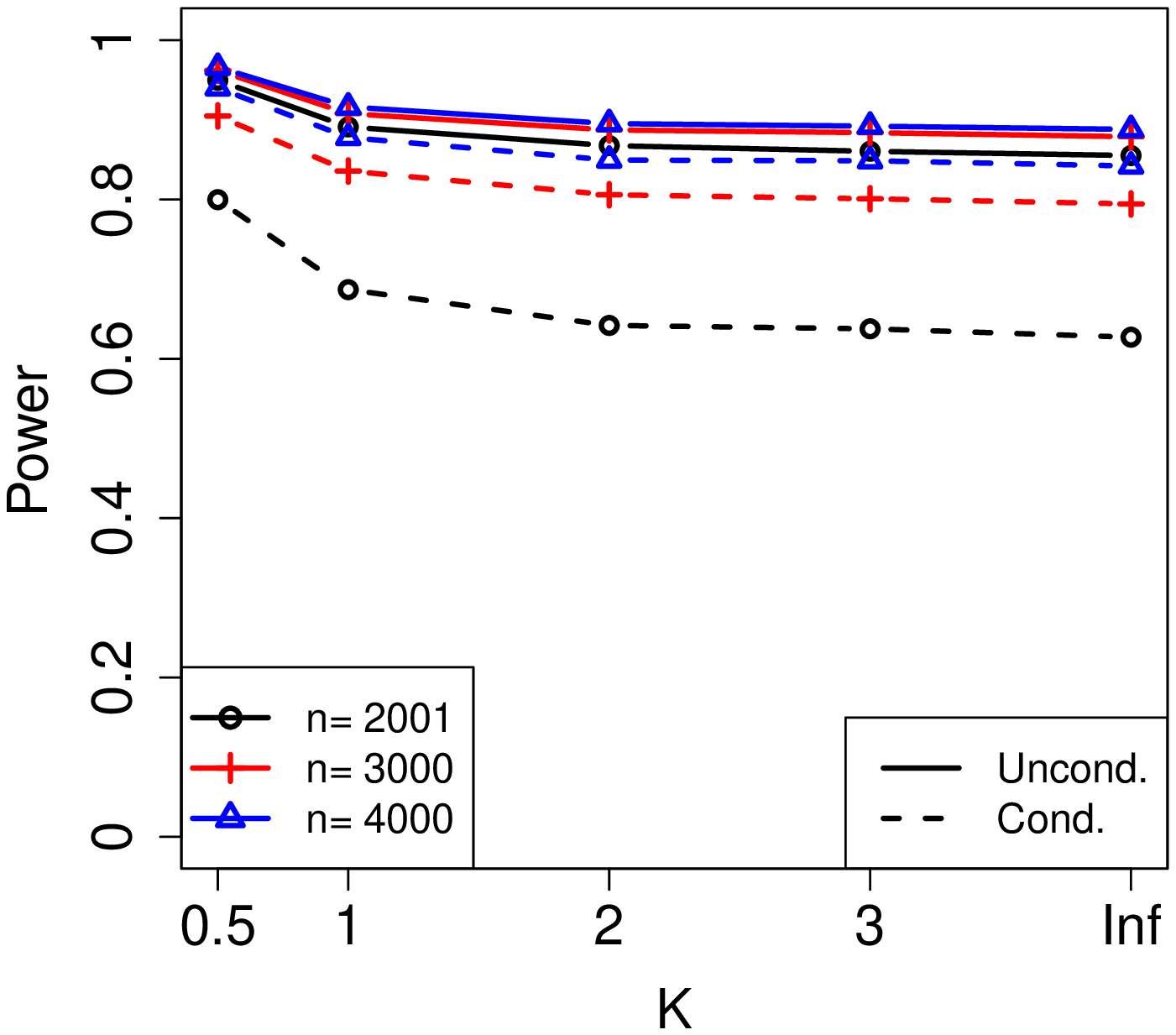}}\\
    \caption{FDR and Power curves of unconditional and conditional knockoffs for a discretized AR($1$) model with $p=1000$ and linear model coefficient amplitude $4$. The nominal FDR level is $0.2$. Standard errors are all below 0.004.}  \label{fig:rep-discrete}
 \end{figure}
 
 }

\end{document}